\documentclass[aps,prl,notitlepage,twocolumn,nofootinbib,superscriptaddress]{revtex4-2}

\usepackage[dvipsnames]{xcolor}
\usepackage{framed}
\definecolor{shadecolor}{rgb}{0.9,0.9,0.9}

\usepackage{mathtools}
\usepackage{amsmath}
\usepackage[shortlabels]{enumitem}

\usepackage{graphicx,epic,eepic,epsfig,amsmath,latexsym,amssymb,verbatim,color}
 
\usepackage{amsfonts}       
\usepackage{nicefrac}       

\usepackage{amsmath}
\usepackage{bbm}

\usepackage{float}
\usepackage{tikz}
\usetikzlibrary{chains}
\usetikzlibrary{fit}
\usetikzlibrary{arrows} 
\usetikzlibrary{decorations} 

\usepackage{epsfig}
\usetikzlibrary{shapes.symbols,patterns} 
\usepackage{pgfplots}

\usepackage[strict]{changepage}
\usepackage{hyperref}
\hypersetup{colorlinks=true,citecolor=blue,linkcolor=blue,filecolor=blue,urlcolor=blue,breaklinks=true}

\usepackage[marginal]{footmisc}
\usepackage{url}
\usepackage{theorem}

\newtheorem{definition}{Definition}
\newtheorem{proposition}{Proposition}
\newtheorem{lemma}[proposition]{Lemma}

\newtheorem{theorem}[proposition]{Theorem}

\newtheorem{corollary}[proposition]{Corollary}

\def\squareforqed{\hbox{\rlap{$\sqcap$}$\sqcup$}}
\def\qed{\ifmmode\squareforqed\else{\unskip\nobreak\hfil
\penalty50\hskip1em\null\nobreak\hfil\squareforqed
\parfillskip=0pt\finalhyphendemerits=0\endgraf}\fi}
\def\endenv{\ifmmode\;\else{\unskip\nobreak\hfil
\penalty50\hskip1em\null\nobreak\hfil\;
\parfillskip=0pt\finalhyphendemerits=0\endgraf}\fi}
\newenvironment{proof}{\noindent \textbf{{Proof~} }}{\hfill $\blacksquare$}

\newcounter{remark}
\newenvironment{remark}[1][]{\refstepcounter{remark}\par\medskip\noindent%
\textbf{Remark~\theremark #1} }{\medskip}

\newcounter{example}

\mathchardef\ordinarycolon\mathcode`\:
\mathcode`\:=\string"8000
\def\vcentcolon{\mathrel{\mathop\ordinarycolon}}
\begingroup \catcode`\:=\active
  \lowercase{\endgroup
  \let :\vcentcolon
  }

\usepackage{cleveref}
\usepackage{graphicx}
\usepackage{xcolor}

\RequirePackage[framemethod=default]{mdframed}
\newmdenv[skipabove=7pt,
skipbelow=7pt,
backgroundcolor=darkblue!15,
innerleftmargin=5pt,
innerrightmargin=5pt,
innertopmargin=5pt,
leftmargin=0cm,
rightmargin=0cm,
innerbottommargin=5pt,
linewidth=1pt]{tBox}

\newmdenv[skipabove=7pt,
skipbelow=7pt,
backgroundcolor=red!15,
innerleftmargin=5pt,
innerrightmargin=5pt,
innertopmargin=5pt,
leftmargin=0cm,
rightmargin=0cm,
innerbottommargin=5pt,
linewidth=1pt]{rBox}

\newmdenv[skipabove=7pt,
skipbelow=7pt,
backgroundcolor=blue2!25,
innerleftmargin=5pt,
innerrightmargin=5pt,
innertopmargin=5pt,
leftmargin=0cm,
rightmargin=0cm,
innerbottommargin=5pt,
linewidth=1pt]{dBox}
\newmdenv[skipabove=7pt,
skipbelow=7pt,
backgroundcolor=darkkblue!15,
innerleftmargin=5pt,
innerrightmargin=5pt,
innertopmargin=5pt,
leftmargin=0cm,
rightmargin=0cm,
innerbottommargin=5pt,
linewidth=1pt]{sBox}
\definecolor{darkblue}{RGB}{0,76,156}
\definecolor{darkkblue}{RGB}{0,0,153}
\definecolor{blue2}{RGB}{102,178,255}
\definecolor{darkred}{RGB}{195,0,0}

\newcommand{\nc}{\newcommand}
\nc{\rnc}{\renewcommand}
\nc{\lbar}[1]{\overline{#1}}
\nc{\bra}[1]{\langle#1|}
\nc{\ket}[1]{|#1\rangle}
\nc{\ketbra}[2]{|#1\rangle\!\langle#2|}
\nc{\braket}[2]{\langle#1|#2\rangle}

\nc{\dketbra}[2]{|#1\rangle\!\rangle\!\langle\!\langle#2|}
\nc{\dket}[1]{|#1\rangle\!\rangle}
\nc{\dbra}[1]{\langle\!\langle#1|}

\nc{\proj}[1]{| #1\rangle\!\langle #1 |}
\nc{\avg}[1]{\langle#1\rangle}
\nc{\smfrac}[2]{\mbox{$\frac{#1}{#2}$}}
\nc{\tr}{\operatorname{Tr}}
\nc{\ox}{\otimes}
\nc{\dg}{\dagger}
\nc{\dn}{\downarrow}
\nc{\cA}{{\cal A}}
\nc{\cB}{{\cal B}}
\nc{\cC}{{\cal C}}
\nc{\cD}{{\cal D}}
\nc{\cE}{{\cal E}}
\nc{\cF}{{\cal F}}
\nc{\cG}{{\cal G}}
\nc{\cH}{{\cal H}}
\nc{\cI}{{\cal I}}
\nc{\cJ}{{\cal J}}
\nc{\cK}{{\cal K}}
\nc{\cL}{{\cal L}}
\nc{\cM}{{\cal M}}
\nc{\cN}{{\cal N}}
\nc{\cO}{{\cal O}}
\nc{\cP}{{\cal P}}
\nc{\cQ}{{\cal Q}}
\nc{\cR}{{\cal R}}
\nc{\cS}{{\cal S}}
\nc{\cT}{{\cal T}}
\nc{\cU}{{\cal U}}
\nc{\cV}{{\cal V}}
\nc{\cX}{{\cal X}}
\nc{\cY}{{\cal Y}}
\nc{\cZ}{{\cal Z}}
\nc{\cW}{{\cal W}}
\nc{\csupp}{{\operatorname{csupp}}}
\nc{\qsupp}{{\operatorname{qsupp}}}
\nc{\var}{{\operatorname{var}}}
\nc{\rar}{\rightarrow}
\nc{\lrar}{\longrightarrow}
\nc{\polylog}{{\operatorname{polylog}}}
\nc{\wt}{{\operatorname{wt}}}
\nc{\av}[1]{{\left\langle {#1} \right\rangle}}
\nc{\supp}{{\operatorname{supp}}}

\nc{\argmin}{{\operatorname{argmin}}}

\def\a{\alpha}
\def\b{\beta}

\def\x{\xi}

\nc{\RR}{{{\mathbb R}}}
\nc{\CC}{{{\mathbb C}}}
\nc{\FF}{{{\mathbb F}}}
\nc{\NN}{{{\mathbb N}}}
\nc{\ZZ}{{{\mathbb Z}}}
\nc{\PP}{{{\mathbb P}}}
\nc{\QQ}{{{\mathbb Q}}}
\nc{\UU}{{{\mathbb U}}}
\nc{\EE}{{{\mathbb E}}}
\nc{\id}{{\operatorname{id}}}

\nc{\CHSH}{{\operatorname{CHSH}}}

\newcommand{\Op}{\operatorname}
\nc{\be}{\begin{equation}}
\nc{\ee}{{\end{equation}}}
\nc{\bea}{\begin{eqnarray}}
\nc{\eea}{\end{eqnarray}}
\nc{\<}{\langle}
\rnc{\>}{\rangle}
\nc{\rU}{\mbox{U}}

\nc{\ob}[1]{#1}

\nc{\SEP}{{\text{\rm SEP}}}
\nc{\NS}{{\text{\rm NS}}}
\nc{\LOCC}{{\text{\rm LOCC}}}
\nc{\PPT}{{\text{\rm PPT}}}
\nc{\EXT}{{\text{\rm EXT}}}
\nc{\Sym}{{\operatorname{Sym}}}

\nc{\ERLO}{{E_{\text{r,LO}}}}
\nc{\ERLOCC}{{E_{\text{r,LOCC}}}}
\nc{\ERPPT}{{E_{\text{r,PPT}}}}
\nc{\ERLOCCinfty}{{E^{\infty}_{\text{r,LOCC}}}}
\nc{\Aram}{{\operatorname{\sf A}}}

\nc{\cptp}{\operatorname{CPTP}}
\nc{\hptp}{\operatorname{HPTP}}

\usepackage{tikz}
\usepackage{hyperref}
\hypersetup{colorlinks=true,citecolor=blue,linkcolor=blue,filecolor=blue,urlcolor=blue,breaklinks=true}

\makeatletter
\def\grd@save@target#1{%
  \def\grd@target{#1}}
\def\grd@save@start#1{%
  \def\grd@start{#1}}
\tikzset{
  grid with coordinates/.style={
    to path={%
      \pgfextra{%
        \edef\grd@@target{(\tikztotarget)}%
        \tikz@scan@one@point\grd@save@target\grd@@target\relax
        \edef\grd@@start{(\tikztostart)}%
        \tikz@scan@one@point\grd@save@start\grd@@start\relax
        \draw[minor help lines,magenta] (\tikztostart) grid (\tikztotarget);
        \draw[major help lines] (\tikztostart) grid (\tikztotarget);
        \grd@start
        \pgfmathsetmacro{\grd@xa}{\the\pgf@x/1cm}
        \pgfmathsetmacro{\grd@ya}{\the\pgf@y/1cm}
        \grd@target
        \pgfmathsetmacro{\grd@xb}{\the\pgf@x/1cm}
        \pgfmathsetmacro{\grd@yb}{\the\pgf@y/1cm}
        \pgfmathsetmacro{\grd@xc}{\grd@xa + \pgfkeysvalueof{/tikz/grid with coordinates/major step}}
        \pgfmathsetmacro{\grd@yc}{\grd@ya + \pgfkeysvalueof{/tikz/grid with coordinates/major step}}
        \foreach \x in {\grd@xa,\grd@xc,...,\grd@xb}
        \node[anchor=north] at (\x,\grd@ya) {\pgfmathprintnumber{\x}};
        \foreach \y in {\grd@ya,\grd@yc,...,\grd@yb}
        \node[anchor=east] at (\grd@xa,\y) {\pgfmathprintnumber{\y}};
      }
    }
  },
  minor help lines/.style={
    help lines,
    step=\pgfkeysvalueof{/tikz/grid with coordinates/minor step}
  },
  major help lines/.style={
    help lines,
    line width=\pgfkeysvalueof{/tikz/grid with coordinates/major line width},
    step=\pgfkeysvalueof{/tikz/grid with coordinates/major step}
  },
  grid with coordinates/.cd,
  minor step/.initial=.2,
  major step/.initial=1,
  major line width/.initial=2pt,
}
\makeatother

\usepackage{thmtools}
\usepackage{thm-restate}
\usepackage{etoolbox}
\makeatletter
\def\problem@s{}
\newcounter{problems@cnt}

\newcommand{\allproblems}{\problem@s}
\makeatother

\usepackage{multirow,bm,booktabs}
\usepackage{mathrsfs}
\usepackage[utf8]{inputenc}
\usepackage[T1]{fontenc}
\newcommand{\priorcell}[2]{\parbox[t]{#1}{\raggedright #2}}
\allowdisplaybreaks
\begin{document}
\title{Exact Virtual Channel Programming with Vanishing Excess Overhead}
\author{Mingrui Jing}
\affiliation{Thrust of Artificial Intelligence, Information Hub,\\
The Hong Kong University of Science and Technology (Guangzhou), Guangzhou 511453, China}
\affiliation{QudeLeap Research, Shanghai 200030, China}
\author{Mengbo Guo}
\affiliation{Thrust of Artificial Intelligence, Information Hub,\\
The Hong Kong University of Science and Technology (Guangzhou), Guangzhou 511453, China}
\author{Hongshun Yao}
\affiliation{Thrust of Artificial Intelligence, Information Hub,\\
The Hong Kong University of Science and Technology (Guangzhou), Guangzhou 511453, China}
\affiliation{QudeLeap Research, Shanghai 200030, China}
\author{Xin Wang}
\email{felixxinwang@hkust-gz.edu.cn}
\affiliation{Thrust of Artificial Intelligence, Information Hub,\\
The Hong Kong University of Science and Technology (Guangzhou), Guangzhou 511453, China}

\begin{abstract}
A finite-dimensional physical processor cannot exactly program a continuous family of distinct unitary channels. We show that this obstruction becomes quantitative when the target channel is stored in a normalized Choi state and its output observables are reconstructed by sampling physical channels and classically post-processing their measurement outcomes. For arbitrary $d$-dimensional channels, we construct a target-independent exact reconstruction protocol and prove the optimal one-copy sampling overhead, which grows quadratically with system dimension. We further prove the sharp fixed-$d$ law that the excess overhead vanishes inversely with the number of identical Choi programs. The upper bound combines deterministic port-based teleportation with a quasi-decomposition that corrects its depolarizing distortion. The converse maps any low-overhead reconstruction protocol to a physical learner of unknown unitaries and uses local quantum estimation to recover the same leading coefficient. These results recast the universal no-programming obstruction as a quantitative trade-off between quantum program memory and classical sampling, with a leading cost that reflects the locally learnable unitary degrees of freedom.
\end{abstract}

\maketitle
\emph{Introduction.}--- Programmable quantum processing uses a
quantum state to select the operation performed by a fixed
device~\cite{nielsen1997programmable}, allowing dynamics to be stored,
transmitted, and retrieved without redesigning the
hardware~\cite{Vidal2002storing,sedlak2019optimal,sedlak2024storage,Yoshida2026quantum}.
This interface separates the information in the program from the processing
power of the retriever, a resource-theoretic distinction~\cite{Chitambar2019quantum}.
Exact deterministic programming nevertheless faces a quantum obstruction:
program states for distinct unitaries must be orthogonal, so no
finite-dimensional register can exactly encode a continuous unitary
family~\cite{nielsen1997programmable}.

Quasiprobability methods offer a different route: they replace an unavailable
transformation with randomized sampling of physical channels and classical
reweighting of measured outcomes. This technique underlies error
mitigation and general quasiprobability methods~\cite{endo2018practical,Piveteau2022quasiprobability,Gherardini2024quasiprobabilities},
virtual resource distillation and nonlinear information
recovery~\cite{Huggins2021virtual,Yuan2024virtual,Zhao2024retrieving}, virtual
channel transformations~\cite{Regula2021operational,zhu2024reversing}, and
hardware demonstrations~\cite{Zhang2024experimental}.
For programming, its appeal is that both the program state and every sampled
evolution remain physical. The quasiprobability reconstruction enters only after
measurement. Its total weight then directly controls the estimator
variance. Figure~\ref{fig:main_fig} contrasts physical channel implementation
with exact reconstruction of output statistics. It also motivates our central
question: can more copies of a physical program reduce the  sampling
cost of exact reconstruction?

\begin{figure}[h!]
    \centering
    \includegraphics[width=0.98\linewidth]{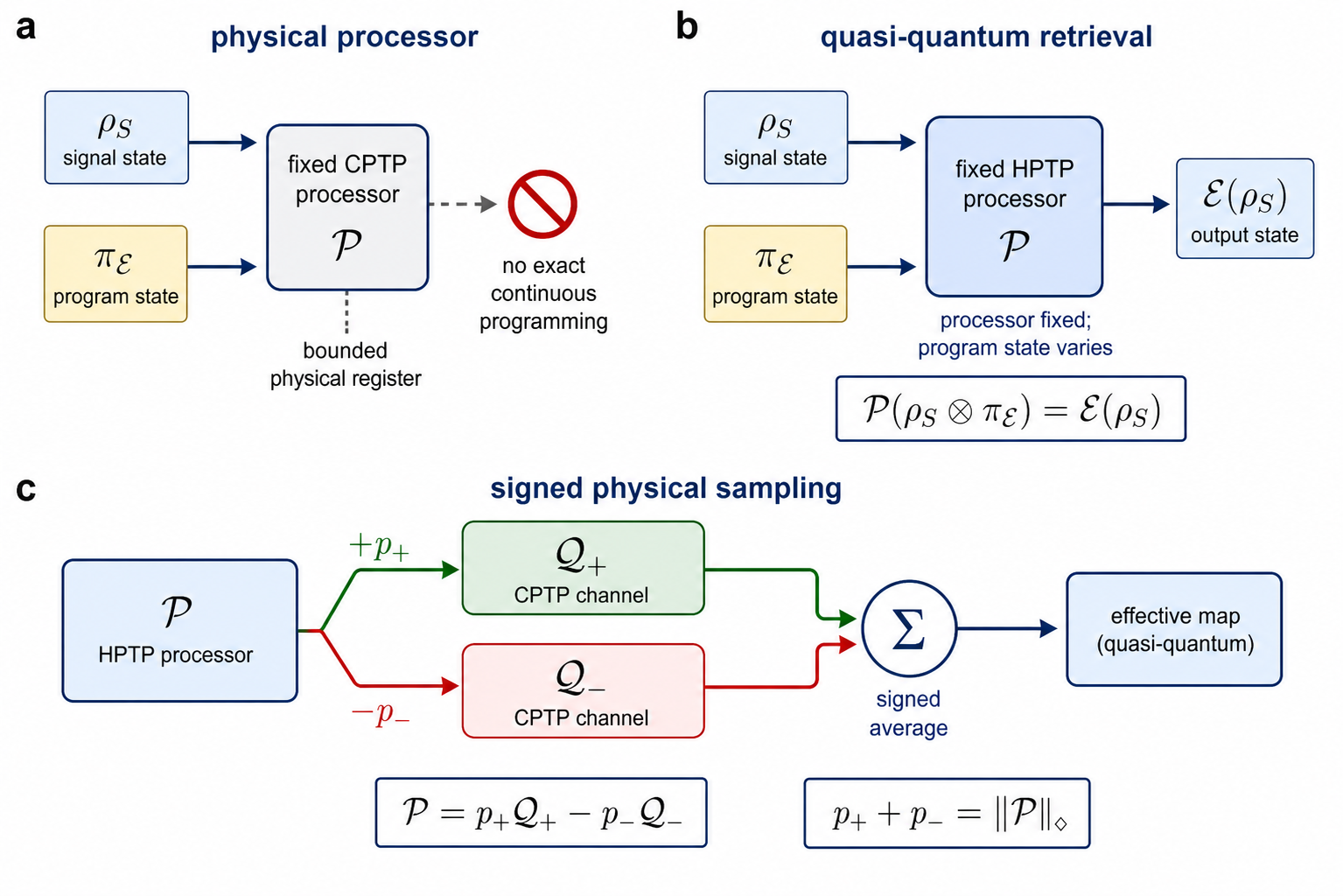}
    \caption{Exact virtual channel programming. (a) A fixed physical processor
    with a finite-dimensional program register cannot exactly realize a
    continuous family of distinct unitary channels. (b) A physical
    normalized Choi state $\pi_\cE$ selects the target channel for a fixed
    linear retrieval rule $\cP$. Its averaged action satisfies
    $\cP(\rho_S\ox\pi_\cE)=\cE(\rho_S)$ at the level of reconstructed
    output observables. (c) Each trial samples one of the physical channels
    in a norm-optimal quasi-decomposition
    $\cP=p_+\cQ_+-p_-\cQ_-$ and classically reweights the measured outcome.
    The total weight
    $p_++p_-=\|\cP\|_\diamond$ quantifies the cost of $\cP$. Minimizing
    this total weight over exact retrievers defines the programming overhead.}
    \label{fig:main_fig}
\end{figure}

\begin{table*}[t]
    \centering
    \caption{Resource scaling for programming all $d$-dimensional
    channels. Physical retrieval minimizes program dimension, whereas exact
    virtual retrieval fixes Choi memory and minimizes signed-sampling overhead
    $\nu$ (worst-case shot factor $\nu^2$). The approximate physical entry is a
    lower bound valid for every $\alpha<(d^2-1)/2$ at fixed
    $d$~\cite{Yang2020optimal,Gschwendtner2021programmabilityof}.}
    \label{tab:resource_axes}
    \vspace{2pt}
    \small
    \setlength{\tabcolsep}{5pt}
    \renewcommand{\arraystretch}{1.12}
    \begin{tabular*}{\textwidth}{@{\extracolsep{\fill}}lccc@{}}
        \toprule
        Retrieval &
        Uniform error &
        Program dimension &
        Quasi-sampling overhead \\
        \midrule
        Physical, exact &
        $\varepsilon=0$ &
        $d_{\rm P}^{\star}(0)=\infty$ &
        $1$ \\
        Physical, approximate &
        $\varepsilon\to0$ &
        $d_{\rm P}^{\star}(\varepsilon)
        =\Omega(\varepsilon^{-\alpha})$ &
        $1$ \\
        \addlinespace[2pt]
        \textbf{Virtual exact, $k=1$} &
        $\varepsilon=0$ &
        $d_{\rm P}^{\rm Choi}=d^2$ &
        $\nu_1=2d^2-3+2/d^2$ \\
        \textbf{Virtual exact, $k\to\infty$} &
        $\varepsilon=0$ &
        $d_{\rm P}^{\rm Choi}=d^{2k}$ &
        $\nu_k=1+(d^2-1)/(2k)+o(k^{-1})$ \\
        \bottomrule
    \end{tabular*}
\end{table*}

Physical processors relax the no-programming obstruction through distinct
compromises. With finite memory, a universal unitary processor can only
approximate its targets
deterministically~\cite{Kubicki2019resource,Yang2020optimal}.
Low-depth brickwork circuits obey a distinct joint large-system
program-cost law for
inverse-polylogarithmic error~\cite{He2026resource}. Exact unitary
retrieval can instead be conditioned on a successful
outcome~\cite{Hillery2002probabilistic,Vidal2002storing,sedlak2019optimal,sedlak2020probabilistic}. Exact deterministic
programming is also possible when the target is restricted, for example to
input-irreducible covariant
channels~\cite{Gschwendtner2021programmabilityof}. Thus approximation
error, success probability, target restriction, observable-specific
inversion~\cite{Zhen2026shadow}, and learning a Lindbladian from its physical
time evolution~\cite{ChenYu2026learning} define different operational tasks. This distinction leads to two complementary resource questions. Let
$d_{\rm P}^{\star}(\varepsilon)$ denote the minimum program dimension of a
deterministic physical retriever with uniform diamond distance error at most
$\varepsilon$. Our setting instead fixes $k$ product normalized Choi programs
and minimizes $\nu_k(\cS)$, the quasiprobability sampling overhead of exact observable
reconstruction. Table~\ref{tab:resource_axes} compares the two resource
scalings for the common target $\Op{CPTP}_d$.

In this Letter, we determine this memory--sampling trade-off for all
$d$-dimensional channels. We use $k$ product normalized Choi programs and one
target-independent protocol, with every trial applying a physical channel.
For one copy, we construct an exact protocol and prove the global optimum
$\nu_1(\Op{CPTP}_d)=2d^2-3+2/d^2$. For $k$ copies, we establish the sharp
fixed-$d$ law
$\nu_k(\Op{CPTP}_d)=1+(d^2-1)/(2k)+o(k^{-1})$. Achievability combines
deterministic port-based teleportation with a quasi-decomposition that corrects its
depolarizing distortion~\cite{Christandl2021asymptotic}. The converse converts
any lower-overhead protocol into a physical learner of unknown
unitaries~\cite{Bisio2010optimal}, linking the coefficient $(d^2-1)/2$ to the
local geometry of channel learning. Exact results for unitary, unital, real,
and covariant families isolate the roles of symmetry and affine structure.
Together, these results establish a quantitative trade-off between quantum
program memory and classical sampling. They show that physical
channel implementation and exact statistical reconstruction are distinct
operational notions of quantum programmability.

\vspace{2mm}
\emph{Exact virtual programming.}--- Exact virtual programming keeps every
program state and sampled operation physical. Exactness is required only of
the reconstructed output statistics. Let $\Op{CPTP}_d$ denote the quantum
channels on a $d$-dimensional system, with $d\geq2$. A target family
$\cS\subseteq\Op{CPTP}_d$ is encoded by physical states $\pi_\cE$. Given
$k$ identical program copies, a fixed linear retrieval rule $\cP$ induces
$\widetilde{\cE}_{\cP}(\cdot):=\cP(\cdot\ox\pi_\cE^{\ox k})$.

Operationally, $\cP$ is evaluated by sampling physical channels and
reweighting their measurement outcomes. A quasi-decomposition
$\cP=p_+\cQ_+-p_-\cQ_-$, with $p_\pm\geq0$ and
$\cQ_\pm\in\Op{CPTP}$, gives such an implementation using only physical
channels~\cite{Piveteau2022quasiprobability}. We call a
Hermiticity-preserving and trace-preserving (HPTP) rule a
\emph{quasi-quantum retriever}. When $\cP$ is CPTP, it is a
\emph{quantum retriever}. In finite dimensions, the minimum of $p_++p_-$
over all quasi-decompositions equals
$\|\cP\|_\diamond$~\cite{Regula2021operational,Jiang2021physical}. We
therefore take the diamond norm as the programming overhead. This
signed-weight construction belongs to the broader lineage of generalized
robustness for quantum states~\cite{Vidal1999robustness,Steiner2003generalized},
quasiprobability costs for quantum operations~\cite{seddon2019quantifying},
and resource theories formulated directly for quantum
channels~\cite{wang2019resource}.

We use the normalized Choi state $\pi_\cE:=J_\cE/d$ as the
program~\cite{wilde2013quantum}.
This encoding also underlies channel retrieval by
port-based teleportation
(PBT)~\cite{ishizaka2008,beigi2011,Studzinski2017port,mozrzymas2018,Christandl2021asymptotic}.
To allow nonzero reconstruction error, we follow the error-tolerant
virtual-process framework~\cite{Takagi2024general}. For $k\geq1$ and
$\varepsilon\geq0$, we define the uniform $\varepsilon$-approximate $k$-copy
programming overhead as
\begin{equation}\label{eq:overhead_def}
\nu_{k,\varepsilon}(\cS)
:=\min_{\cP\in\Op{HPTP}}
\left\{\|\cP\|_\diamond\;\middle|\;
\sup_{\cE\in\cS}\frac{1}{2}
\bigl\|\widetilde{\cE}_{\cP}-\cE\bigr\|_\diamond
\leq\varepsilon\right\}.
\end{equation}
Let $\cP$ be feasible and choose a norm-optimal quasi-decomposition. Write
$w=p_++p_-=\|\cP\|_\diamond$. Trace preservation gives
$p_+-p_-=1$. For a reference-assisted input state $\rho_{RS}$ and program
$\pi_\cE^{\ox k}$, one trial samples $\cI_R\ox\cQ_+$ with probability
$p_+/w$ or $\cI_R\ox\cQ_-$ with probability $p_-/w$. It then measures
a Hermitian observable $O_{RB}$ with $\|O_{RB}\|_\infty\leq1$ and records
$X\in[-1,1]$. The reported value is $Z=wX$ for the positive branch and
$Z=-wX$ for the negative branch. This estimator satisfies
\begin{equation*}
\begin{aligned}
\mathbb{E}[Z]
=\tr\!\left[O_{RB}(\cI_R\ox\widetilde{\cE}_{\cP})(\rho_{RS})\right],
\end{aligned}
\end{equation*}
where $|Z|\leq w, \operatorname{Var}(Z)\leq w^2$.
At the exact endpoint $\varepsilon=0$, the estimator is unbiased for
the target expectation value,
$\mathbb{E}[Z]=\tr[O_{RB}(\cI_R\ox\cE)(\rho_{RS})]$. Throughout, $\log$
denotes the base-two logarithm. Hoeffding's inequality then shows that
$N\geq2w^2\eta^{-2}\log(2/\delta)$ independent trials
achieve additive error $\eta$ with failure probability at most $\delta$.
The protocol reconstructs output observables without implementing $\cE$ as
a reusable physical channel.

Each trial consumes one such input state and all $k$ normalized Choi
programs. The full program register has dimension $d^{2k}$, equivalent to
$2k\log d$ qubit-equivalents. If each fresh program copy requires one target-channel
use, $N$ trials consume $kN$ such uses. The factor $w^2$ accounts only for
quasiprobability sampling. It excludes program preparation, memory lifetime,
retriever implementation, measurement, classical post-processing, and
fault-tolerant costs.

For $\varepsilon>0$, the estimator remains unbiased for the retrieved map,
while $\varepsilon$ controls its bias relative to the target. Additional
trials reduce sampling error but not this approximation bias. Following
Ref.~\cite{jing2025programmable}, a channel family is quantum programmable
when a quantum retriever is feasible. It is quasi-quantum programmable when
feasibility requires quasiprobability sampling of physical channels. In finite dimensions,
$\nu_{k,\varepsilon}(\cS)\geq1$, with equality exactly when quantum retrieval
is feasible. We focus below on $\varepsilon=0$ and write
$\nu_k(\cS):=\nu_{k,0}(\cS)$.

At $\varepsilon=0$, the Choi--Jamio{\l}kowski representation turns
Eq.~\eqref{eq:overhead_def} into an SDP for any finite target set. For the
covariant families below, the one-copy SDP reduces to finite linear programs.
The many-copy SDP admits a block reduction through walled Brauer symmetry.
The exact one-copy overhead also satisfies three resource laws for every
nonempty target family. Enlarging the target family cannot decrease the
overhead. Extending the family to every physical channel in its real affine
span leaves the overhead unchanged. Parallel composition makes the overhead
submultiplicative. These properties follow from feasible-set inclusion, linearity, and tensor
products of optimal retrievers. Proofs are given in Supplemental Material,
Sec.~\ref{sec:property}.

\vspace{2mm}
\emph{Universal one-copy optimum.}--- We first consider the family of
all $d$-dimensional channels. This family is invariant under independent
unitary rotations of the input and output. Averaging over these rotations
places the retriever's Choi operator in a four-dimensional commutant spanned
by projectors onto four joint invariant subspaces. The exact retrieval
conditions then select a unique covariant map.

\begin{theorem}\label{thm:cptp}
Let $\cS=\Op{CPTP}_d$ be the family of all $d$-dimensional quantum
channels. The map $\cP_*$ defined below is the unique covariant exact
retriever for $\cS$. It satisfies
$\cP_*(\rho\otimes\pi_\cE)=\cE(\rho)$ for every state $\rho$ and every
$\cE\in\cS$. Among all exact sampling-and-post-processing protocols in
Eq.~\eqref{eq:overhead_def}, $\cP_*$ attains the global minimum
$\nu_1(\Op{CPTP}_d)=2d^2-3+\frac{2}{d^2}$.
\end{theorem}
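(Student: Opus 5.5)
The plan has two parts. First, reduce the global optimization to covariant retrievers by symmetrization. Second, show that the covariant exact retriever is unique and compute its diamond norm exactly.

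\emph{Symmetrization.} Let the retriever act on $S\otimes A'B'$, where $A'B'$ is the one-copy program register, and output on $B$. The family $\Op{CPTP}_d$ is invariant under $\cE\mapsto \cV\circ\cE\circ\cU^\dagger$ with $\cU(\cdot)=U\cdot U^\dagger$ and $\cV(\cdot)=V\cdot V^\dagger$. Under this map $\pi_\cE\mapsto(\bar U\otimes V)\pi_\cE(\bar U\otimes V)^\dagger$. Given any feasible HPTP $\cP$, I would form the twirl
\begin{equation*}
\bar\cP:=\int dU\,dV\;\cV^\dagger\circ\cP\circ\bigl(\cU\otimes\bar\cU\otimes\cV\bigr).
\end{equation*}
This map is again HPTP and is still exact, because each term retrieves $\cV^\dagger\circ(\cV\circ\cE\circ\cU^\dagger)\circ\cU=\cE$. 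Convexity and unitary invariance of the diamond norm give $\|\bar\cP\|_\diamond\le\|\cP\|_\diamond$. Hence $\nu_1$ is attained on covariant retrievers, and the global lower bound follows once I show the covariant exact retriever is unique.

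\emph{Covariant ansatz and uniqueness.} The Choi operator $J_\cP$ lives on $SA'B'B$ and commutes with $U\otimes\bar U$ on $SA'$ and with $V\otimes\bar V$ on $B'B$. Each commutant is spanned by the maximally entangled projector and its complement. Writing $\Phi_{SA'}$ and $\Phi_{B'B}$ for these projectors, $J_\cP$ is a real combination of the four products $\Phi\otimes\Phi$, $\Phi\otimes\Phi^\perp$, $\Phi^\perp\otimes\Phi$ and $\Phi^\perp\otimes\Phi^\perp$.

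Next I impose the constraints. Trace preservation gives one linear condition. Exactness, $\cP(\rho\otimes\pi_\cE)=\cE(\rho)$, is linear in $\rho\otimes J_\cE$. It therefore suffices to impose it on the affine span of channel Choi operators, namely all $J$ with $\tr_B J=I$. This yields two further independent conditions, after projecting onto the four invariant sectors. Together with trace preservation these should fix all four coefficients, giving uniqueness of $\cP_*$.

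Concretely, I expect $\cP_*$ to be a teleportation-type term plus a correction:
\begin{equation*}
\cP_*(X)=d^3\,\tr_{SA'B'}\bigl[(\Phi_{SA'}\otimes\Phi_{B'B})X\bigr]+\text{(a trace-fixing correction)}.
\end{equation*}
Here the first term reproduces $\cE(\rho)$ via the identity $d\,\tr_{SA'}[\Phi_{SA'}(\rho\otimes J_\cE)]=\cE(\rho^T)^{T}$, up to the appropriate transpose conventions. The correction is proportional to $(\tr X-d^2\tr[\Phi_{SA'}X])\,I/d$, made covariant. I would verify exactness directly on $\rho\otimes J_\cE/d$.

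\emph{Diamond norm.} For a covariant map, the diamond-norm SDP (primal over $Y_\pm\ge0$ with $J_\cP=Y_+-Y_-$, minimizing $\|\tr_B(Y_++Y_-)\|_\infty$) can also be twirled. The SDP then collapses to a linear program in the four sector weights.

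For the upper bound, I would split $J_{\cP_*}$ into its positive and negative sector parts, $Y_\pm$. Because $\tr_B$ of each projector product is proportional to $I$ or to $\Phi_{SA'}$, the norm $\|\tr_B(Y_++Y_-)\|_\infty$ is a maximum of two explicit numbers. I expect this to evaluate to $2d^2-3+2/d^2$, possibly after also optimizing how the sectors are distributed.

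For the matching lower bound, I would exhibit a dual feasible point. This consists of an input state $\rho_{SA'}$ in the commutant, i.e. a mixture of $\Phi$ and $\Phi^\perp$ normalized, and a witness $-I\le W\le I$ built from the four projectors. These should give $\tr[W(\sqrt\rho\otimes I)J_{\cP_*}(\sqrt\rho\otimes I)]$ equal to the same value.

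The main obstacle is this exact diamond-norm evaluation. The sector eigenvalues of $J_{\cP_*}$ have mixed signs, the reduced operator $\tr_B$ mixes sectors with weights $1$ versus $d^2-1$, and the optimal input is likely not maximally entangled. I would first solve the two-parameter dual LP by hand and check it at $d=2$, where the target value is $5.5$, before proving the general formula.
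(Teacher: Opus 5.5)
Your plan is correct and is essentially the paper's proof. You twirl into the commutant of $U\otimes\bar U\otimes V\otimes\bar V$ and solve the linear constraints for the unique $J_{\cP_*}=(d^3-d+\tfrac1d)\,\Phi\otimes\Phi+(\tfrac1d-d)\,\Phi\otimes\Phi^\perp+\tfrac1d\,\Phi^\perp\otimes\Phi+\tfrac1d\,\Phi^\perp\otimes\Phi^\perp$. You then match an upper and a lower bound on its diamond norm.

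The diamond-norm step is easier than you expect and needs no further optimization. The plain sign split of these sector weights already gives $\tr_B(Y_++Y_-)=(2d^2-3+\tfrac{2}{d^2})\,\Phi_{SA'}\otimes I_{B'}+\Phi^\perp_{SA'}\otimes I_{B'}$. The pure witness input $\Phi_{SA'}\otimes\Phi_{B'R}$, which is the paper's test state $\sigma$, attains the same value.

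The one slip is your constraint count. Write $\Omega=d\,\Phi$ and use the basis $I\otimes I$, $I\otimes\Omega$, $\Omega\otimes I$, $\Omega\otimes\Omega$.
\begin{itemize}
\item Trace preservation imposes two conditions, namely the $I$ and $\Omega_{SA'}$ components of $\tr_B J_\cP$.
\item Exactness imposes three. Unital targets fix the $\Omega\otimes\Omega$ coefficient to $d$ and fix one linear combination of the others. A non-unital target is needed to force the $I\otimes\Omega$ coefficient to vanish.
\end{itemize}
Your count of $1+2$ conditions would therefore not determine four coefficients, although the full system does have rank four.
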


\emph{Construction and optimality.} For an input operator $X$ on
$\cH_S\ox\cH_{P_1}\ox\cH_{P_2}$, define
\begin{equation*}
\begin{aligned}
\cM(X)&:=\tr_{SP_1}\!\left[(\Omega_{SP_1}\ox I_{P_2})X\right],\\
\cP_*(X)&:=\left(\frac{\tr X}{d}-\tr[\cM(X)]\right)I_{S'}
    +d\,\cM(X).
\end{aligned}
\end{equation*}
Here $\Omega_{AB}=\dketbra{I_d}{I_d}$, with
$\dket{I_d}:=\sum_{i=0}^{d-1}\ket{i}_A\ket{i}_B$, is the unnormalized
maximally entangled operator on registers $A$ and $B$. We write $\Omega_d$
when the register labels are clear. The contraction $\cM$ joins the input $S$
to the Choi input $P_1$ and retains the Choi output $P_2$. The remaining term
makes $\cP_*$ trace preserving. Substituting $X=\rho_S\ox\pi_\cE$ gives
$\cP_*(\rho_S\ox\pi_\cE)=\cE(\rho_S)$ and proves exact retrieval.

Twirling the physical branches over independent input and output rotations
preserves their total weight. It also restricts the retriever's Choi operator
to the four-dimensional commutant. There, the exact retrieval conditions
uniquely determine $\cP_*$. It admits the following quasi-decomposition
\begin{equation*}
J_{\cP_*}=\mu_+J_{\cQ_+}-\mu_-J_{\cQ_-},
\end{equation*}
with $\mu_+=d^2-1+d^{-2}$ and $\mu_-=d^2-2+d^{-2}$.
The two physical channels have Choi operators
\begin{equation*}
\begin{aligned}
J_{\cQ_+}
&=\frac{I_{d^2}\ox I_{d^2}}{d}
  -\frac{\Omega_d\ox I_{d^2}}{d^2}
  +\frac{\Omega_d\ox\Omega_d}{d},\\
J_{\cQ_-}
&=\frac{I_{d^2}\ox I_{d^2}}{d}
  +\frac{\Omega_d\ox I_{d^2}}{d^2(d^2-1)}
  -\frac{\Omega_d\ox\Omega_d}{d(d^2-1)}.
\end{aligned}
\end{equation*}
Tensor products of $\Phi_d=\Omega_d/d$ and $I-\Phi_d$ define four joint
invariant subspaces. On these subspaces, both Choi operators are positive and
satisfy $\tr_{S'}J_{\cQ_\pm}=I_{SP}$. Thus $\cQ_\pm$ are physical channels,
and $\mu_++\mu_-=2d^2-3+2/d^2$ is achievable. For the lower bound, take the
normalized state $\sigma=\Omega_{SP_1}/d\ox\Omega_{P_2E}/d$. Its output is
\begin{equation*}
(\cP_*\ox\cI_E)(\sigma)
=\left(\frac1d-d\right)\frac{I_{d^2}}d+d\,\Omega_{P_2E},
\end{equation*}
with $\bigl\|(\cP_*\ox\cI_E)(\sigma)\bigr\|_1
=2d^2-3+\frac{2}{d^2}$. This output trace norm lower-bounds
$\|\cP_*\|_\diamond$ and matches the total weight above. Every feasible
retriever twirls to $\cP_*$ without increasing its diamond norm. This proves
global optimality, but not uniqueness outside the covariant sector.
\hfill$\square$

Theorem~\ref{thm:cptp} gives exact one-copy reconstruction for every
$d$-dimensional channel. Its norm-optimal quasi-decomposition
produces unbiased estimates of output observables with overhead that scales
quadratically with $d$. This reconstruction
is statistical and does not provide a reusable physical channel.
Channel--state duality stores all linear channel information without making a
Choi state physically executable. Interestingly, the protocol is a sampling-based one-copy
analogue of teleportation~\cite{Bennett1993teleporting}. The many-copy result
below is also closely related to the same Choi-port idea through deterministic port-based
teleportation~\cite{ishizaka2008,Christandl2021asymptotic}.

\vspace{2mm}
\emph{Many-copy trade-off.}---A $k$-copy trial consumes $k$ identical
normalized Choi programs under the resource convention above. Discarding one
program gives $\nu_{k+1}(\cS)\leq\nu_k(\cS)$ for every nonempty $\cS$.
Additional copies therefore cannot increase the overhead. The remaining
question is whether the overhead approaches its lower bound of one and at what
rate. For all $d$-dimensional channels, this approach follows the sharp
$1/k$ law stated below.

\begin{theorem}\label{thm:kcopy_main}
Fix an integer $d\geq2$ and use the product program
$\pi_\cE^{\ox k}$ in the exact model of Eq.~\eqref{eq:overhead_def}. For each
$k$, the retriever may depend on $d$ and $k$ but remains independent of the
target $\cE$. Then
\begin{equation*}
\nu_k(\Op{CPTP}_d)
=1+\frac{d^2-1}{2k}+o(k^{-1}),
\quad k\to\infty.
\end{equation*}
Here $k$ runs through the positive integers with $d$ fixed. The
remainder need not be uniform in $d$.
\end{theorem}

\emph{Achievability and converse.} Deterministic PBT with $k$ Choi programs
realizes $\cE\circ\cD_{\eta_{k,d}}$, where
$\cD_\eta(X):=\eta X+(1-\eta)\tr(X)I_d/d$ and $\eta_{k,d}$ is the PBT
shrinkage factor. At fixed $d$,
$1-\eta_{k,d}=d^2/(4k)+o(k^{-1})$~\cite{Christandl2021asymptotic}.
Before PBT, sample a physical branch from a quasi-decomposition of
$\cD_{\eta_{k,d}}^{-1}$ and reweight the measured outcome. This removes the
distortion exactly. The norm identity
$\|\cD_\eta^{-1}\|_\diamond=1+2(1-d^{-2})(\eta^{-1}-1)$ gives
$\limsup_{k\to\infty}k[\nu_k(\Op{CPTP}_d)-1]\leq(d^2-1)/2$.

For the converse, take any exact retriever $\cP$ and a norm-optimal
quasi-decomposition into physical channels
$\cP=p_+\cQ_+-p_-\cQ_-$, where $p_+-p_-=1$ and
$2p_-=\|\cP\|_\diamond-1$~\cite{Regula2021operational}. For the unitary
channel $\cU(\rho):=U\rho U^\dagger$, its program induces
$\cQ_\pm^U(\rho):=\cQ_\pm(\rho\ox\pi_\cU^{\ox k})$. Exact retrieval gives
\begin{equation*}
\cQ_+^U-\cU=p_-\bigl(\cQ_-^U-\cQ_+^U\bigr),
\quad
\|\cQ_+^U-\cU\|_\diamond\leq\|\cP\|_\diamond-1.
\end{equation*}
Let $R_k^{\rm av}$ denote the minimum Haar-averaged gate infidelity
over target-independent physical learners supplied with these $k$ programs.
The positive branch is admissible, so
$R_k^{\rm av}\leq d[\|\cP\|_\diamond-1]/[2(d+1)]$.
A fixed Schur transform and target-independent channels establish statistical
equivalence with the representation memory used in optimal unitary learning.
The corresponding local-estimation bound gives
$\liminf_{k\to\infty}kR_k^{\rm av}\geq d(d^2-1)/[4(d+1)]$
~\cite{Bisio2010optimal,BraunsteinCaves1994statistical,GillLevit1995applications}.
Combining the risk bounds and minimizing over exact retrievers gives
$\liminf_{k\to\infty}k[\nu_k(\Op{CPTP}_d)-1]\geq(d^2-1)/2$.
The Supplemental Material provides the memory conversion and
local-estimation details.

The leading coefficient has a local geometric interpretation.
Unitary channels are locally parameterized by $\Op{SU}(d)$ modulo its finite
center and therefore have $d^2-1$ identifiable directions. The coefficient is
half this local dimension. Because the converse uses only the unitary
subfamily, this local dimension already fixes the leading universal lower
bound. This identification is specific to the present programming model and is not a
general Lie-group law. For qubits, the theorem becomes
$\nu_k(\Op{CPTP}_2)=1+3/(2k)+o(k^{-1})$.

The quantity $\nu_k^2$ is the worst-case quasiprobability shot factor
for independent trials, not an end-to-end cost. It excludes program
preparation, memory lifetime, retriever implementation, measurement,
classical post-processing, and fault-tolerant costs. Since each trial consumes
$k$ programs, the product of the per-trial program count and worst-case shot
factor obeys
$k\nu_k^2=k+d^2-1+o(1)$. This relation does not establish fewer
target-channel queries or an end-to-end speedup. Equivalently, let
$d_{\rm P}^{\rm Choi}=d^{2k}$ denote the prescribed product-program
dimension. The theorem gives
$\nu_k-1=(d^2-1)\log d/\log d_{\rm P}^{\rm Choi}
+o((\log d_{\rm P}^{\rm Choi})^{-1})$. The excess overhead therefore scales
inversely with the logarithm of the memory dimension in this fixed encoding.

Probabilistic retrieval gives a one-way comparison when the memory is
held fixed. Suppose a retriever uses the same product-Choi ensemble and
succeeds uniformly with probability $q_k$, independent of the input and
target. Then $\nu_k(\cS)\leq2/q_k-1$. This bound permits a scaling comparison
when $q_k$ approaches one. Its converse does not follow, so the two frameworks
are not operationally equivalent. In particular, success probabilities
optimized over different program memories cannot be inserted into this
same-Choi bound~\cite{sedlak2019optimal}.

\vspace{2mm}
\emph{Further exact one-copy families.}---Restricted target families
show how symmetry and affine structure determine the exact one-copy
overhead. Unitary channels are the standard benchmark for programmable gate
arrays~\cite{nielsen1997programmable,Vidal2002storing,sedlak2019optimal}.
\begin{proposition}\label{prop:main_unitary_onecopy}
For $d\geq2$, the exact one-copy overhead for all $d$-dimensional unitary
channels is $\nu_1(\Op{Ad}_{\Op{SU}(d)})=d^2-1$.
\end{proposition}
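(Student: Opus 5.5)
By the affine-span resource law stated earlier (proved in Sec.~\ref{sec:property}), I would first replace the target family. Extending it to every physical channel in its real affine span does not change $\nu_1$. The affine span of $\{\Op{Ad}_U:U\in\Op{SU}(d)\}$ consists of the unital trace-preserving maps: Choi operators $J$ with $\tr_{\rm out}J=\tr_{\rm in}J=I_d$. Its physical part is the unital channels. So it suffices to compute $\nu_1$ for unital channels.

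On the program register, this means exactness is required only on operators $\pi$ whose two marginals are both $I_d/d$. Compared with Theorem~\ref{thm:cptp}, the retriever's action on the components of $\pi$ with nontrivial $P_2$-marginal is no longer constrained. This freedom is what should lower the overhead from $2d^2-3+2/d^2$ to $d^2-1$.

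Next I would twirl exactly as in the proof of Theorem~\ref{thm:cptp}. The unital family is invariant under $\cE\mapsto\cV\circ\cE\circ\cW$ for unitaries $V,W$. Averaging any feasible HPTP retriever over the induced action (input $W$ on $S,P_1$; output $V$ on $P_2,S'$) preserves feasibility and does not increase the diamond norm. The Choi operator of the twirled retriever then lies in the same four-dimensional commutant, spanned by $\Phi_d\ox\Phi_d$, $\Phi_d\ox(I-\Phi_d)$, $(I-\Phi_d)\ox\Phi_d$ and $(I-\Phi_d)\ox(I-\Phi_d)$ on $(SP_1)\ox(P_2S')$. I would impose two sets of conditions:
\begin{itemize}
\item trace preservation, $\tr_{S'}J_\cP=I_{SP}$;
\item exactness, $\cP(\rho\ox\pi_\cE)=\cE(\rho)$, but only for unital $\cE$.
\end{itemize}
I expect these to leave a one-parameter (possibly two-parameter) affine family, for instance $\cP_t=\cP_*+t\,\cK$. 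Here $\cK$ annihilates every input whose $P_2$-marginal is maximally mixed and is trace-annihilating, so $\cK$ acts only through $\tr_{SP_1}X-\tr X\,I/d$ on $P_2$.

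For achievability, I would choose $t$ to cancel the $-\Omega_d\ox I_{d^2}$-type term of $J_{\cP_*}$. I would then write $J_{\cP_t}=\mu_+J_{\cQ_+}-\mu_-J_{\cQ_-}$ using the four invariant projectors, and check that the two branches satisfy $J_{\cQ_\pm}\geq0$ and $\tr_{S'}J_{\cQ_\pm}=I$. On each projector this is only a sign check of scalars, so the task reduces to a small linear program in $(\mu_\pm,t)$. The target is $\mu_++\mu_-=d^2-1$, for example $\mu_+=d^2/2$ and $\mu_-=d^2/2-1$.

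For the converse, the twirl reduces the diamond norm of any covariant $\cP_t$ to the dual of that small LP. A matching lower bound then needs a single test input, as in Theorem~\ref{thm:cptp}. I would try $\sigma=\Omega_{SP_1}/d\ox\Omega_{P_2E}/d$ first, or a one-parameter mixture of it with $\Omega_{SP_1}/d\ox I/d^2$, and minimize $\max_\sigma\|(\cP_t\ox\cI_E)(\sigma)\|_1$ over $t$. This should give $d^2-1$.

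The main obstacle is this last step. I must show that no value of the free parameter beats $d^2-1$, which is a min--max over $t$ and the test inputs. If the simple witness is not tight for every $t$, I would instead solve the dual SDP directly in the commutant. There a dual witness $Y$ is a combination of the four projectors with $\|Y\|_\infty\leq1$ after partial-trace normalization, and weak duality closes the bound.
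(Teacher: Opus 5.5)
Your reduction is correct and agrees with the paper's. After the $\Op{SU}(d)\times\Op{SU}(d)$ twirl, trace preservation and exactness on unital channels leave exactly the family you predicted. (The paper imposes exactness on unitaries through $W_1,W_2$, which is equivalent.) That family is $J_{\cP_\beta}=\frac{1-\beta}{d}I\ox I+\beta\,I\ox\Omega_d-\Omega_d\ox I+d\,\Omega_d\ox\Omega_d$, with $\beta=0$ giving $\cP_*$.

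\textbf{The converse has a genuine gap.} Write $J_{\cP_\beta}=\Phi_d\ox K_0+(I-\Phi_d)\ox K_1$ on $(SP_1)\ox(P_2S')$. The map is then classically controlled by the measurement $\{\Phi_d,I-\Phi_d\}$ on $SP_1$, so $\|\cP_\beta\|_\diamond=\max\{\|K_0\|_1,\|K_1\|_1\}/d$. Your witness $\sigma$, and any mixture of it with $\Omega_{SP_1}/d\ox I/d^2$, lies entirely in the $\Phi_d$ block and can only certify $\|K_0\|_1/d$. Explicitly, $(\cP_\beta\ox\cI_E)(\sigma)=\frac{1-d^2-\beta}{d^2}I+(d^2+\beta)\Phi_d$. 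Its trace norm equals $1$ for every $\beta\in[-\frac{d^4-d^2+1}{d^2-1},\,1-d^2]$; at $\beta=1-d^2$ the output is exactly $\Phi_d$. So your min--max over $t$ returns the trivial bound $1$, not $d^2-1$. In Theorem~\ref{thm:cptp} the parameter is fixed at $\beta=0$ and only the $\Phi_d$ block matters. Here, at those $\beta$, the cost has moved to the other block; for example $\|K_1\|_1/d=2d^2-3$ at $\beta=1-d^2$.

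\textbf{How to close it.} Add a second witness in the $I-\Phi_d$ block, e.g.\ $\psi_{SP_1}\ox\Phi_{P_2E}$ with $\psi$ a pure state orthogonal to $\dket{I_d}$. Its output norm is $\|K_1\|_1/d=\bigl[|1+(d^2-1)\beta|+(d^2-1)|1-\beta|\bigr]/d^2$. The two bounds cross at $\beta^*=1-\frac{d^4}{2(d^2-1)}$ with common value $d^2-1$. The second witness gives at least $d^2-1$ for all $\beta\le\beta^*$, and $\sigma$ does so for all $\beta\ge\beta^*$. The paper instead closes the bound with an explicit dual LP certificate $\lambda'$. That is your stated fallback, but you did not construct it.

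\textbf{Achievability also needs correcting.} Your rule for fixing $t$ cannot work. $\cK$ has Choi operator $I\ox(\Omega_d-I/d)$, so the $\Omega_d\ox I$ coefficient is pinned to $-1$ for every $t$ and cannot be cancelled. The correct choice is the balancing value $\beta^*$. There $J_{\cP_{\beta^*}}$ has eigenvalues
\begin{itemize}
\item $\frac{d^3}{2}$ on $\Phi_d\ox\Phi_d$;
\item $-\frac{d(d^2-2)}{2(d^2-1)}$ on $\Phi_d\ox(I-\Phi_d)$;
\item $-\frac{d(d^2-2)}{2}$ on $(I-\Phi_d)\ox\Phi_d$;
\item $\frac{d^3}{2(d^2-1)}$ on $(I-\Phi_d)\ox(I-\Phi_d)$.
\end{itemize}
Its positive and negative parts give $p_+=d^2/2$ and $p_-=d^2/2-1$. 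This is your target split and coincides with the paper's primal point $z'$.
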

Independent input--output covariance reduces the optimization to a four-sector
linear program. A matching primal--dual pair certifies the stated optimum.

Let $\cT(d)$ be the full unital family. It is the \emph{real affine} closure
of the unitary family~\cite{mendl2009unital}. The affine-invariance property
established above therefore gives $\nu_1(\cT(d))=d^2-1$. This conclusion does
not rely on a convex-mixture representation. The unital-family protocol covers
random-unitary targets encoded by their Choi programs. This family-level
statement does not identify any particular random circuit as an optimal
retriever or imply that it forms a unitary design~\cite{Brandao2016local}.

Although the restriction below is basis dependent, it is conceptually
adjacent to the operational distinction between real and complex quantum
theory~\cite{Renou2021real} and to resource theories that quantify
imaginarity~\cite{HickeyGour2018imaginarity,Wu2021operational,Wu2021resource}.
For comparison, let $\cL^\mathbb{R}$ denote channels whose Choi matrices are
real in the fixed computational basis. Orthogonal covariance reduces their
optimization to three invariant sectors. Together with the controlled-map
diamond-norm identity proved in the Supplemental Material, a KKT analysis of
the resulting minimax problem yields
$\nu_1(\cL^\mathbb{R})=\frac{2}{3}d^2+O(1)$. The restriction to real channels
therefore lowers the leading coefficient from $1$ to $2/3$ while preserving
the quadratic one-copy scaling.

Input--output covariance also reduces the finite-copy SDP. Its ambient
dimension grows exponentially, but both positive Choi variables lie in a
group commutant. Regrouping the tensor factors yields the mixed sectors
$U\ox(U^*)^{\ox k}$ and $V^{\ox k}\ox V^*$ for independent representations
$U$ and $V$. Mixed Schur--Weyl duality identifies their commutants with
represented images of $B_{1,k}(d)$ and $B_{k,1}(d)$~\cite{BCHLLS94}.
The isotypic decomposition replaces global positivity with semidefinite blocks
on multiplicity spaces. Trace preservation and programming constraints remain
linear. Finite-dimensional diagram relations can further reduce these
images~\cite{DipperDotyStoll2014}. Enforcing the programming constraints on a
basis of the finite-dimensional Choi-power span preserves equivalence with the
unreduced problem. The block reduction is used only for finite-copy numerical calculations, whose
results are reported in the Supplemental Material and do not enter the
analytic proof of either universal theorem.

\vspace{2mm}
\emph{Concluding remarks.}---Quasiprobability sampling turns the universal
no-programming obstruction into a quantitative memory--sampling trade-off.
With one normalized Choi program, the exact universal optimum is
$\nu_1(\Op{CPTP}_d)=2d^2-3+2/d^2$. For $k$ identical programs in product
form, the overhead approaches its lower bound as
$\nu_k(\Op{CPTP}_d)=1+\frac{d^2-1}{2k}+o(k^{-1})$ at fixed dimension.
The coefficient $(d^2-1)/2$ is fixed by the locally identifiable unitary
directions, connecting this resource law to the geometry of channel learning.

Operationally, each trial applies only a physical channel, while classical
reweighting reconstructs output observables exactly in expectation. The
protocol does not yield a reusable implementation of the target channel, and
$\nu_k^2$ captures only its worst-case sampling factor. A complete resource
account should also include program preparation, memory lifetime, and
retriever complexity. It remains open whether correlated program memories or
more general many-copy post-processing can improve finite-copy trade-offs. It
is also unknown whether probabilistic retrieval and exact virtual programming
can be compared sharply under a common memory constraint. These questions
point toward a resource theory that treats quantum memory, probabilistic
success, and classical sampling on the same operational footing.

\emph{Acknowledgments.}---This work was partially supported by the National Natural Science Foundation of China (Grant No.92576114, 12447107), the Guangdong Provincial Quantum Science Strategic Initiative (Grant No.~GDZX2403008, GDZX2503001), and the Guangdong Provincial Key Lab of Integrated Communication, Sensing and Computation for Ubiquitous Internet of Things (Grant No.~2023B1212010007).

\emph{Statement on AI use.}---OpenAI ChatGPT 5.6 assisted with language
editing and with developing and presenting technical arguments in the
many-copy analysis. OpenAI ChatGPT was also used to generate the
conceptual illustration in Fig.~\ref{fig:main_fig}; the authors designed its
scientific content and verified the final image. The authors supplied the problem formulation, proof
strategy, source material, and revision criteria. They independently
verified all mathematical statements, references, and final wording and
take full responsibility for the manuscript.

\emph{Data availability.}---The numerical data and custom code
supporting this work are publicly accessible in the GitHub repository of
Ref.~\cite{QuAIR2026UniversalProgrammingCodes}. The repository documents
the finite channel ensembles, solver settings, and numerical residual
checks used for the reduced-SDP calculations.

\bibliography{ref}

@misc{wildeQuantumFisherInformation2025,
  title = {Quantum {{Fisher}} Information Matrices from {{Rényi}} Relative Entropies},
  author = {Wilde, Mark M.},
  year = {2025},
  eprint = {2510.02218},
  eprinttype = {arXiv},
  eprintclass = {quant-ph},
  doi = {10.48550/arXiv.2510.02218},
  url = {http://arxiv.org/abs/2510.02218},
  urldate = {2025-11-06},
  pubstate = {prepublished}
}

@article{fujiwara1995quantum,
  title={Quantum Fisher metric and estimation for pure state models},
  author={Fujiwara, Akio and Nagaoka, Hiroshi},
  journal={Physics Letters A},
  volume={201},
  number={2-3},
  pages={119--124},
  year={1995},
  publisher={Elsevier}
}

@article{Chitambar2019quantum,
  title = {Quantum resource theories},
  author = {Chitambar, Eric and Gour, Gilad},
  journal = {Rev. Mod. Phys.},
  volume = {91},
  issue = {2},
  pages = {025001},
  numpages = {48},
  year = {2019},
  month = {Apr},
  publisher = {American Physical Society},
  doi = {10.1103/RevModPhys.91.025001},
  url = {https://link.aps.org/doi/10.1103/RevModPhys.91.025001}
}

@article{ballester2004estimation,
  title={Estimation of unitary quantum operations},
  author={Ballester, Manuel A},
  journal={Physical Review A},
  volume={69},
  number={2},
  pages={022303},
  year={2004},
  publisher={APS},
  doi={10.1103/PhysRevA.69.022303},
  eprint={quant-ph/0305104},
  archivePrefix={arXiv},
  primaryClass={quant-ph}
}

@article{wang2019resource,
  title={Resource theory of asymmetric distinguishability for quantum channels},
  author={Wang, Xin and Wilde, Mark M},
  journal={Physical Review Research},
  volume={1},
  number={3},
  pages={033169},
  year={2019},
  publisher={APS}
}

@article{Bennett1993teleporting,
  title={Teleporting an unknown quantum state via dual classical and Einstein-Podolsky-Rosen channels},
  author={Bennett, Charles H and Brassard, Gilles and Cr{\'e}peau, Claude and Jozsa, Richard and Peres, Asher and Wootters, William K},
  journal={Physical review letters},
  volume={70},
  number={13},
  pages={1895--1899},
  year={1993},
  publisher={APS},
  doi={10.1103/PhysRevLett.70.1895}
}

@article{Vidal1999robustness,
  title={Robustness of entanglement},
  author={Vidal, G. and Tarrach, R.},
  journal={Physical Review A},
  volume={59},
  number={1},
  pages={141--155},
  year={1999},
  publisher={American Physical Society}
}

@article{Steiner2003generalized,
  title={Generalized robustness of entanglement},
  author={Steiner, M.},
  journal={Physical Review A},
  volume={67},
  number={5},
  pages={054305},
  year={2003},
  publisher={American Physical Society}
}

@article{knill2000theory,
  title={Theory of quantum error correction for general noise},
  author={Knill, Emanuel and Laflamme, Raymond and Viola, Lorenza},
  journal={Physical Review Letters},
  volume={84},
  number={11},
  pages={2525--2528},
  year={2000},
  publisher={APS},
  doi={10.1103/PhysRevLett.84.2525},
  eprint={quant-ph/9908066},
  archivePrefix={arXiv},
  primaryClass={quant-ph}
}

@article{Gschwendtner2021programmabilityof,
  title = {Programmability of covariant quantum channels},
  author = {Gschwendtner, Martina and Bluhm, Andreas and Winter, Andreas},
  journal = {{Quantum}},
  issn = {2521-327X},
  publisher = {{Verein zur F{\"{o}}rderung des Open Access Publizierens in den Quantenwissenschaften}},
  volume = {5},
  pages = {488},
  month = jun,
  year = {2021},
  doi = {10.22331/q-2021-06-29-488},
  eprint = {2012.00717},
  archivePrefix = {arXiv},
  primaryClass = {quant-ph}
}

@article{Piveteau2022quasiprobability,
  title={Quasiprobability decompositions with reduced sampling overhead},
  author={Piveteau, Christophe and Sutter, David and Woerner, Stefan},
  journal={npj Quantum Information},
  volume={8},
  number={1},
  pages={12},
  year={2022},
  publisher={Nature Publishing Group UK London}
}

@article{Huggins2021virtual,
  title={Virtual Distillation for Quantum Error Mitigation},
  author={Huggins, William J. and McArdle, Sam and O'Brien, Thomas E. and Lee, Joonho and Rubin, Nicholas C. and Boixo, Sergio and Whaley, K. Birgitta and Babbush, Ryan and McClean, Jarrod R.},
  journal={Physical Review X},
  volume={11},
  number={4},
  pages={041036},
  year={2021},
  publisher={American Physical Society},
  doi={10.1103/PhysRevX.11.041036},
  url={https://doi.org/10.1103/PhysRevX.11.041036}
}

@article{Yuan2024virtual,
  title={Virtual Quantum Resource Distillation},
  author={Yuan, Xiao and Regula, Bartosz and Takagi, Ryuji and Gu, Mile},
  journal={Physical Review Letters},
  volume={132},
  number={5},
  pages={050203},
  year={2024},
  publisher={American Physical Society},
  doi={10.1103/PhysRevLett.132.050203},
  url={https://doi.org/10.1103/PhysRevLett.132.050203}
}

@article{Takagi2024general,
  title={Virtual Quantum Resource Distillation: General Framework and Applications},
  author={Takagi, Ryuji and Yuan, Xiao and Regula, Bartosz and Gu, Mile},
  journal={Physical Review A},
  volume={109},
  number={2},
  pages={022403},
  year={2024},
  publisher={American Physical Society},
  doi={10.1103/PhysRevA.109.022403},
  url={https://doi.org/10.1103/PhysRevA.109.022403}
}

@article{Zhang2024experimental,
  title={Experimental Virtual Distillation of Entanglement and Coherence},
  author={Zhang, Ting and Zhang, Yukun and Liu, Lu and Fang, Xiao-Xu and Zhang, Qian-Xi and Yuan, Xiao and Lu, He},
  journal={Physical Review Letters},
  volume={132},
  number={18},
  pages={180201},
  year={2024},
  publisher={American Physical Society},
  doi={10.1103/PhysRevLett.132.180201},
  url={https://doi.org/10.1103/PhysRevLett.132.180201}
}

@article{seddon2019quantifying,
  title = {Quantifying Magic for Multi-Qubit Operations},
  author = {Seddon, James R. and Campbell, Earl T.},
  year = 2019,
  month = jul,
  journal = {Proceedings of the Royal Society A: Mathematical, Physical and Engineering Sciences},
  volume = {475},
  number = {2227},
  pages = {20190251},
  publisher = {Royal Society}
}

@article{endo2018practical,
  title={Practical quantum error mitigation for near-future applications},
  author={Endo, Suguru and Benjamin, Simon C and Li, Ying},
  journal={Physical Review X},
  volume={8},
  number={3},
  pages={031027},
  year={2018},
  publisher={APS}
}

@article{chiribella2008quantum,
  title={Quantum circuit architecture},
  author={Chiribella, Giulio and D'Ariano, G Mauro and Perinotti, Paolo},
  journal={Physical review letters},
  volume={101},
  number={6},
  pages={060401},
  year={2008},
  publisher={APS},
  doi={10.1103/PhysRevLett.101.060401}
}

@article{zhu2024reversing,
  title={Reversing unknown quantum processes via virtual combs for channels with limited information},
  author={Zhu, Chengkai and Mo, Yin and Chen, Yu-Ao and Wang, Xin},
  journal={Physical Review Letters},
  volume={133},
  number={3},
  pages={030801},
  year={2024},
  publisher={APS}
}

@article{Gherardini2024quasiprobabilities,
  title={Quasiprobabilities in quantum thermodynamics and many-body systems},
  author={Gherardini, Stefano and De Chiara, Gabriele},
  journal={PRX Quantum},
  volume={5},
  number={3},
  pages={030201},
  year={2024},
  publisher={APS}
}

@article{Zhao2024retrieving,
  title={Retrieving nonlinear features from noisy quantum states},
  author={Zhao, Benchi and Jing, Mingrui and Zhang, Lei and Zhao, Xuanqiang and Chen, Yu-Ao and Wang, Kun and Wang, Xin},
  journal={PRX Quantum},
  volume={5},
  number={2},
  pages={020357},
  year={2024},
  publisher={APS}
}

@article{Yoshida2026quantum,
  title={Quantum advantage in storage and retrieval of isometry channels},
  author={Yoshida, Satoshi and Miyazaki, Jisho and Murao, Mio},
  journal={Physical Review Letters},
  volume={136},
  number={19},
  pages={190601},
  year={2026},
  publisher={American Physical Society},
  doi={10.1103/fdvq-9m8m},
  eprint={2507.10784},
  archivePrefix={arXiv},
  primaryClass={quant-ph}
}

@article{Yoshida2026erratum,
  title={Erratum: Quantum advantage in storage and retrieval of isometry channels [{Phys. Rev. Lett.} {136}, {190601} ({2026})]},
  author={Yoshida, Satoshi and Miyazaki, Jisho and Murao, Mio},
  journal={Physical Review Letters},
  year={2026},
  note={Accepted 21 August 2026; version of record pending publication},
  publisher={American Physical Society},
  doi={10.1103/34df-l36k}
}

@article{Jiang2021physical,
  title={Physical implementability of linear maps and its application in error mitigation},
  author={Jiang, Jiaqing and Wang, Kun and Wang, Xin},
  journal={Quantum},
  volume={5},
  pages={600},
  year={2021},
  doi={10.22331/q-2021-12-07-600},
  eprint={2012.10959},
  archivePrefix={arXiv},
  primaryClass={quant-ph},
  publisher={Verein zur F{\"o}rderung des Open Access Publizierens in den Quantenwissenschaften}
}

@article{Regula2021operational,
  title={Operational applications of the diamond norm and related measures in quantifying the non-physicality of quantum maps},
  author={Regula, Bartosz and Takagi, Ryuji and Gu, Mile},
  journal={Quantum},
  volume={5},
  pages={522},
  year={2021},
  doi={10.22331/q-2021-08-09-522},
  eprint={2102.07773},
  archivePrefix={arXiv},
  primaryClass={quant-ph},
  publisher={Verein zur F{\"o}rderung des Open Access Publizierens in den Quantenwissenschaften}
}

@article{leditzky2018approaches,
  title={Approaches for approximate additivity of the Holevo information of quantum channels},
  author={Leditzky, Felix and Kaur, Eneet and Datta, Nilanjana and Wilde, Mark M},
  journal={Physical Review A},
  volume={97},
  number={1},
  pages={012332},
  year={2018},
  publisher={APS},
  doi={10.1103/PhysRevA.97.012332}
}

@article{tomamichel2016strong,
  title={Strong converse rates for quantum communication},
  author={Tomamichel, Marco and Wilde, Mark M and Winter, Andreas},
  journal={IEEE Transactions on Information Theory},
  volume={63},
  number={1},
  pages={715--727},
  year={2017},
  publisher={IEEE},
  doi={10.1109/TIT.2016.2615847}
}

@book{boyd2004convex,
  title={Convex optimization},
  author={Boyd, Stephen and Vandenberghe, Lieven},
  year={2004},
  publisher={Cambridge university press}
}

@article{mendl2009unital,
  title={Unital quantum channels--convex structure and revivals of Birkhoff's theorem},
  author={Mendl, Christian B and Wolf, Michael M},
  journal={Communications in Mathematical Physics},
  volume={289},
  number={3},
  pages={1057--1086},
  year={2009},
  publisher={Springer}
}

@article{Brandao2016local,
  title={Local random quantum circuits are approximate polynomial-designs},
  author={Brand{\~a}o, Fernando G. S. L. and Harrow, Aram W. and Horodecki, Micha{\l}},
  journal={Communications in Mathematical Physics},
  volume={346},
  number={2},
  pages={397--434},
  year={2016},
  publisher={Springer},
  doi={10.1007/s00220-016-2706-8},
  eprint={1208.0692},
  archivePrefix={arXiv},
  primaryClass={quant-ph}
}

@book{wilde2013quantum,
  title={Quantum information theory},
  author={Wilde, Mark M},
  year={2013},
  publisher={Cambridge university press}
}

@article{nielsen1997programmable,
  title={Programmable quantum gate arrays},
  author={Nielsen, Michael A and Chuang, Isaac L},
  journal={Physical Review Letters},
  volume={79},
  number={2},
  pages={321--324},
  year={1997},
  publisher={APS},
  doi={10.1103/PhysRevLett.79.321},
  eprint={quant-ph/9703032},
  archivePrefix={arXiv},
  primaryClass={quant-ph}
}

@article{Vidal2002storing,
  title={Storing quantum dynamics in quantum states: A stochastic programmable gate},
  author={Vidal, G. and Masanes, L. and Cirac, J. I.},
  journal={Physical Review Letters},
  volume={88},
  number={4},
  pages={047905},
  year={2002},
  doi={10.1103/PhysRevLett.88.047905}
}

@article{Hillery2002probabilistic,
  title={Probabilistic implementation of universal quantum processors},
  author={Hillery, M. and Buzek, V. and Ziman, M.},
  journal={Physical Review A},
  volume={65},
  number={2},
  pages={022301},
  year={2002},
  doi={10.1103/PhysRevA.65.022301}
}

@article{Renou2021real,
  title={Quantum theory based on real numbers can be experimentally falsified},
  author={Renou, Marc-Olivier and Trillo, David and Weilenmann, Mirjam and Le, Thinh P. and Tavakoli, Armin and Gisin, Nicolas and Ac{\'i}n, Antonio and Navascu{\'e}s, Miguel},
  journal={Nature},
  volume={600},
  number={7890},
  pages={625--629},
  year={2021},
  doi={10.1038/s41586-021-04160-4}
}

@article{HickeyGour2018imaginarity,
  title={Quantifying the imaginarity of quantum mechanics},
  author={Hickey, Alexander and Gour, Gilad},
  journal={Journal of Physics A: Mathematical and Theoretical},
  volume={51},
  number={41},
  pages={414009},
  year={2018},
  doi={10.1088/1751-8121/aabe9c}
}

@article{Wu2021operational,
  title={Operational resource theory of imaginarity},
  author={Wu, Kang-Da and Kondra, Tulja Varun and Rana, Swapan and Scandolo, Carlo Maria and Xiang, Guo-Yong and Li, Chuan-Feng and Guo, Guang-Can and Streltsov, Alexander},
  journal={Physical Review Letters},
  volume={126},
  number={9},
  pages={090401},
  year={2021},
  doi={10.1103/PhysRevLett.126.090401}
}

@article{Wu2021resource,
  title={Resource theory of imaginarity: Quantification and state conversion},
  author={Wu, Kang-Da and Kondra, Tulja Varun and Rana, Swapan and Scandolo, Carlo Maria and Xiang, Guo-Yong and Li, Chuan-Feng and Guo, Guang-Can and Streltsov, Alexander},
  journal={Physical Review A},
  volume={103},
  number={3},
  pages={032401},
  year={2021},
  doi={10.1103/PhysRevA.103.032401}
}

@article{jing2025programmable,
  title={Programmable Open Quantum Systems},
  author={Jing, Mingrui and Guo, Mengbo and Zhu, Lin and Yao, Hongshun and Wang, Xin},
  journal={Physical Review Letters},
  volume={137},
  number={4},
  pages={040403},
  year={2026},
  month={jul},
  publisher={American Physical Society},
  doi={10.1103/yqlr-2dhr},
  url={https://doi.org/10.1103/yqlr-2dhr},
  eprint={2512.08279},
  archivePrefix={arXiv},
  primaryClass={quant-ph}
}

@article{ChenYu2026learning,
  title={Learning Arbitrary Lindbladians from Time Evolution},
  author={Chen, Zhili and Yu, Zhan},
  journal={arXiv preprint arXiv:2607.28610},
  year={2026},
  month={jul},
  eprint={2607.28610},
  archivePrefix={arXiv},
  primaryClass={quant-ph},
  url={https://arxiv.org/abs/2607.28610}
}

@article{Zhen2026shadow,
  title={Structure, Optimality, and Symmetry in Shadow Unitary Inversion},
  author={Zhen, Guocheng and Chen, Yu-Ao and Jing, Mingrui and Xie, Jingu and Wang, Xin and Chen, Ranyiliu},
  journal={Communications Physics},
  year={2026},
  month={may},
  doi={10.1038/s42005-026-02690-9},
  url={https://doi.org/10.1038/s42005-026-02690-9},
  eprint={2510.24880},
  archivePrefix={arXiv},
  primaryClass={quant-ph}
}

@article{He2026resource,
  title={Resource quantification for programming low-depth quantum circuits},
  author={He, Entong and Yang, Yuxiang},
  journal={Quantum},
  volume={10},
  pages={2166},
  year={2026},
  month={jul},
  doi={10.22331/q-2026-07-20-2166},
  url={https://doi.org/10.22331/q-2026-07-20-2166},
  eprint={2509.09642},
  archivePrefix={arXiv},
  primaryClass={quant-ph}
}

@article{Kubicki2019resource,
  title={Resource Quantification for the No-Programing Theorem},
  author={Kubicki, Aleksander M. and Palazuelos, Carlos and P{\'e}rez-Garc{\'i}a, David},
  journal={Physical Review Letters},
  volume={122},
  number={8},
  pages={080505},
  year={2019},
  publisher={American Physical Society},
  doi={10.1103/PhysRevLett.122.080505}
}

@article{Yang2020optimal,
  title={Optimal Universal Programming of Unitary Gates},
  author={Yang, Yuxiang and Renner, Renato and Chiribella, Giulio},
  journal={Physical Review Letters},
  volume={125},
  number={21},
  pages={210501},
  year={2020},
  publisher={American Physical Society},
  doi={10.1103/PhysRevLett.125.210501},
  eprint={2007.10363},
  archivePrefix={arXiv},
  primaryClass={quant-ph}
}

@article{Banchi2020convex,
  title={Convex optimization of programmable quantum computers},
  author={Banchi, Leonardo and Pereira, Jason and Lloyd, Seth and Pirandola, Stefano},
  journal={npj Quantum Information},
  volume={6},
  pages={42},
  year={2020},
  publisher={Springer Nature},
  doi={10.1038/s41534-020-0268-2}
}

@article{sedlak2019optimal,
  title={Optimal probabilistic storage and retrieval of unitary channels},
  author={Sedl{\'a}k, Michal and Bisio, Alessandro and Ziman, M{\'a}rio},
  journal={Physical Review Letters},
  volume={122},
  number={17},
  pages={170502},
  year={2019},
  publisher={American Physical Society},
  doi={10.1103/PhysRevLett.122.170502},
  eprint={1809.04552},
  archivePrefix={arXiv},
  primaryClass={quant-ph}
}

@article{sedlak2024storage,
  title={Storage and retrieval of two unknown unitary channels},
  author={Sedl{\'a}k, Michal and St{\'a}rek, Robert and Horov{\'a}, Nikola and Mi{\v{c}}uda, Michal and Fiur{\'a}{\v{s}}ek, Jaromir and Bisio, Alessandro},
  journal={arXiv preprint arXiv:2410.23376},
  year={2024}
}

@article{sedlak2020probabilistic,
  title={Probabilistic storage and retrieval of qubit phase gates},
  author={Sedl{\'a}k, Michal and Ziman, M{\'a}rio},
  journal={Physical Review A},
  volume={102},
  number={3},
  pages={032618},
  year={2020},
  publisher={APS}
}

@article{BCHLLS94,
  title={Tensor product representations of general linear groups and their connections with {B}rauer algebras},
  author={Benkart, Georgia and Chakrabarti, Manish and Halverson, Thomas and Leduc, Robert and Lee, Chanyoung and Stroomer, Jeffrey},
  journal={Journal of Algebra},
  volume={166},
  number={3},
  pages={529--567},
  year={1994},
  publisher={Elsevier},
  doi={10.1006/jabr.1994.1166}
}

@article{DipperDotyStoll2014,
  title={The quantized walled {B}rauer algebra and mixed tensor space},
  author={Dipper, Richard and Doty, Stephen and Stoll, Friederike},
  journal={Algebras and Representation Theory},
  volume={17},
  number={2},
  pages={675--701},
  year={2014},
  doi={10.1007/s10468-013-9414-2},
  eprint={0806.0264},
  archivePrefix={arXiv},
  primaryClass={math.QA},
  publisher={Springer}
}

@phdthesis{stroomer1991,
  title={Combinatorics and the Representation Theory of {GL(r,C)} and {Sp(2r,C)}},
  author={Stroomer, Jeffrey Dean},
  year={1991},
  school={University of Wisconsin--Madison}
}

@article{brauer1937,
  title={On algebras which are connected with the semisimple continuous groups},
  author={Brauer, Richard},
  journal={Annals of Mathematics},
  volume={38},
  number={4},
  pages={857--872},
  year={1937}
}

@article{CDDM05,
  title={On the blocks of the walled {B}rauer algebra},
  author={Cox, Anton and De Visscher, Maud and Doty, Stephen and Martin, Paul},
  journal={Journal of Algebra},
  volume={320},
  number={1},
  pages={169--212},
  year={2008},
  publisher={Elsevier}
}

@article{ishizaka2008,
  title={Asymptotic teleportation scheme as a universal programmable quantum processor},
  author={Ishizaka, Satoshi and Hiroshima, Tohya},
  journal={Physical Review Letters},
  volume={101},
  number={24},
  pages={240501},
  year={2008},
  publisher={APS},
  doi={10.1103/PhysRevLett.101.240501}
}

@article{beigi2011,
  title={Simplified instantaneous non-local quantum computation with applications to position-based cryptography},
  author={Beigi, Salman and K{\"o}nig, Robert},
  journal={New Journal of Physics},
  volume={13},
  number={9},
  pages={093036},
  year={2011},
  publisher={IOP Publishing}
}

@article{mozrzymas2018,
  title={Optimal port-based teleportation},
  author={Mozrzymas, Marek and Studzi{\'n}ski, Micha{\l} and Strelchuk, Sergii and Horodecki, Micha{\l}},
  journal={New Journal of Physics},
  volume={20},
  number={5},
  pages={053006},
  year={2018},
  publisher={IOP Publishing}
}

@article{Christandl2021asymptotic,
  title={Asymptotic performance of port-based teleportation},
  author={Christandl, Matthias and Leditzky, Felix and Majenz, Christian and Smith, Graeme and Speelman, Florian and Walter, Michael},
  journal={Communications in Mathematical Physics},
  volume={381},
  pages={379--451},
  year={2021},
  doi={10.1007/s00220-020-03884-0}
}

@article{Studzinski2017port,
  title={Port-based teleportation in arbitrary dimension},
  author={Studzi{\'n}ski, Micha{\l} and Strelchuk, Sergii and Mozrzymas, Marek and Horodecki, Micha{\l}},
  journal={Scientific Reports},
  volume={7},
  pages={10871},
  year={2017},
  doi={10.1038/s41598-017-10051-4}
}

@article{Kretschmann2008continuity,
  title={The information-disturbance tradeoff and the continuity of Stinespring's representation},
  author={Kretschmann, Dennis and Schlingemann, Dirk and Werner, Reinhard F.},
  journal={IEEE Transactions on Information Theory},
  volume={54},
  number={4},
  pages={1708--1717},
  year={2008},
  doi={10.1109/TIT.2008.917696}
}

@article{Nielsen2002average,
  title={A simple formula for the average gate fidelity of a quantum dynamical operation},
  author={Nielsen, Michael A.},
  journal={Physics Letters A},
  volume={303},
  number={4},
  pages={249--252},
  year={2002},
  doi={10.1016/S0375-9601(02)01272-0},
  eprint={quant-ph/0205035},
  archivePrefix={arXiv}
}

@article{Bisio2010optimal,
  title={Optimal quantum learning of a unitary transformation},
  author={Bisio, Alessandro and Chiribella, Giulio and D'Ariano, Giacomo Mauro and Facchini, Stefano and Perinotti, Paolo},
  journal={Physical Review A},
  volume={81},
  number={3},
  pages={032324},
  year={2010},
  doi={10.1103/PhysRevA.81.032324},
  eprint={0903.0543},
  archivePrefix={arXiv},
  primaryClass={quant-ph}
}

@article{BraunsteinCaves1994statistical,
  title={Statistical distance and the geometry of quantum states},
  author={Braunstein, Samuel L. and Caves, Carlton M.},
  journal={Physical Review Letters},
  volume={72},
  number={22},
  pages={3439--3443},
  year={1994},
  doi={10.1103/PhysRevLett.72.3439}
}

@article{GillLevit1995applications,
  title={Applications of the van {T}rees inequality: A {B}ayesian {C}ram{\'e}r--{R}ao bound},
  author={Gill, Richard D. and Levit, Boris Y.},
  journal={Bernoulli},
  volume={1},
  number={1--2},
  pages={59--79},
  year={1995},
  doi={10.2307/3318681}
}

@article{Matsumoto2002pure,
  title={A new approach to the {C}ram{\'e}r--{R}ao-type bound of the pure-state model},
  author={Matsumoto, Keiji},
  journal={Journal of Physics A: Mathematical and General},
  volume={35},
  number={13},
  pages={3111--3123},
  year={2002},
  doi={10.1088/0305-4470/35/13/307},
  eprint={quant-ph/9711008},
  archivePrefix={arXiv}
}

@article{Yang2019attaining,
  title={Attaining the ultimate precision limit in quantum state estimation},
  author={Yang, Yuxiang and Chiribella, Giulio and Hayashi, Masahito},
  journal={Communications in Mathematical Physics},
  volume={368},
  number={1},
  pages={223--293},
  year={2019},
  doi={10.1007/s00220-019-03433-4},
  eprint={1802.07587},
  archivePrefix={arXiv},
  primaryClass={quant-ph}
}

@misc{QuAIR2026UniversalProgrammingCodes,
  author={{QuAIR}},
  title={{UniversalProgramming-Codes}},
  year={2026},
  howpublished={GitHub repository},
  url={https://github.com/QuAIR/UniversalProgramming-Codes},
  note={Commit 55e15c8, accessed 8 August 2026}
}

\clearpage
\appendix
\onecolumngrid
\setcounter{secnumdepth}{1}
\makeatletter
\@removefromreset{equation}{section}
\makeatother
\setcounter{equation}{0}
\renewcommand{\theequation}{S\arabic{equation}}
\renewcommand{\theHequation}{S\arabic{equation}}
\setcounter{proposition}{0}
\renewcommand{\theproposition}{S\arabic{proposition}}
\setcounter{definition}{0}
\renewcommand{\thedefinition}{S\arabic{definition}}
\setcounter{remark}{0}
\renewcommand{\theremark}{S\arabic{remark}}
\setcounter{figure}{0}
\renewcommand{\thefigure}{S\arabic{figure}}
\setcounter{table}{0}
\renewcommand{\thetable}{S\arabic{table}}
\renewcommand{\theHproposition}{S\arabic{proposition}}
\renewcommand{\theHdefinition}{S\arabic{definition}}
\providecommand{\theHremark}{}
\renewcommand{\theHremark}{S\arabic{remark}}
\renewcommand{\theHfigure}{S\arabic{figure}}
\renewcommand{\theHtable}{S\arabic{table}}

\begin{center}
\large{\textbf{Supplemental Material for
``Exact Virtual Channel Programming with Vanishing Excess Overhead''}}
\end{center}

This Supplemental Material is organized by proof dependency rather than by
appearance in the main text. Appendix~\ref{appendix:notation} fixes notation,
normalization conventions, and the one-copy programming SDP used throughout.
Appendix~\ref{sec:covariance_reduction} collects the symmetry reductions
used by the closed-form calculations. The one-copy
evaluations for all quantum channels, unitary channels, and real channels are
proved in Appendices~\ref{sec:cptp_appendix}, \ref{sec:unitary_appendix}, and
\ref{sec:real_channel}. Appendix~\ref{sec:family_comparison} records
supporting comparisons across prior programming frameworks and restricted
target families.
Appendix~\ref{sec:property} then collects structural
properties of the overhead, including invariance, stability under processing,
affine invariance, and tensor-product behavior. Appendix~\ref{sec:kcopy}
treats the many-copy setting and its reduced SDP.

\section{Notation and the one-copy programming SDP}\label{appendix:notation}

This appendix fixes the notation, normalization conventions, and the
one-copy SDP formulation used in the proofs below. Let $\cH_d$ denote a
$d$-dimensional Hilbert space. The
signal and program systems are labeled $S$ and $P$, with Hilbert spaces
$\cH_S$ and $\cH_P$ of dimensions $d_S$ and $d_P$, and $S'$ denotes the
signal output. The program register splits as
$\cH_P=\cH_{P_1}\ox\cH_{P_2}$, where $P_1$ carries the input side of the
programmed channel and $P_2$ its output side, so that
$\cH_{P_1}\simeq\cH_S$ and $\cH_{P_2}\simeq\cH_{S'}$. The space of bounded
linear operators on $\cH_d$ is $\cB(\cH_d)$, and its positive-semidefinite
cone is denoted $\operatorname{Pos}(\cH_d)$. For $A\in\cB(\cH_d)$, $\overline{A}$ is the entrywise
complex conjugate in the computational basis, $A^T$ the transpose, and
$A^\dagger=\overline{A}^T=\overline{A^T}$ the adjoint. A density operator
$\rho\in\operatorname{Pos}(\cH_d)$ satisfies $\tr\rho=1$. The set of such operators is
$\cD(\cH_d)$, and a pure state $\psi=\ketbra{\psi}{\psi}$ is the rank-one
case with $\ket{\psi}\in\CC^d$. The trace norm is
$\|A\|_1=\tr\sqrt{A^\dagger A}$.

Maximally entangled states enter through two related objects, kept
typographically distinct to avoid normalization ambiguities. The
unnormalized maximally entangled vector is
$\dket{I_d}:=\sum_{i=0}^{d-1}\ket{ii}$, with associated rank-one
positive operator
\begin{equation*}
    \Omega_d:=\dketbra{I_d}{I_d}=\sum_{i,j=0}^{d-1}\ketbra{ii}{jj},
    \qquad \Omega_d^2=d\,\Omega_d,\quad \tr\Omega_d=d.
\end{equation*}
The normalized maximally entangled state is $\Phi_d:=\Omega_d/d$, so that
$\tr\Phi_d=1$. The operator $\Omega_d$, carrying register subscripts when
needed, is the object that appears in Choi constructions and commutant
expansions, whereas $\Phi_d$ is reserved for the normalized state. The
flip operator on $\cH_d\ox\cH_d$ is written $\mathbb{F}$ and acts as
$\mathbb{F}\ket{ij}=\ket{ji}$.

Linear maps from $\cB(\cH_A)$ to $\cB(\cH_B)$ are denoted $\cN$ or
$\cN_{A\to B}$, and $\mathscr{L}(\cH_A\to\cH_B)$ is the space of all such
maps. Its Hermitian-preserving trace-preserving and completely-positive
trace-preserving subsets are $\Op{HPTP}(\cH_A\to\cH_B)$ and
$\Op{CPTP}(\cH_A\to\cH_B)$, so that
$\Op{CPTP}(\cH_A\to\cH_B)\subseteq\Op{HPTP}(\cH_A\to\cH_B)\subseteq
\mathscr{L}(\cH_A\to\cH_B)$. The shorthand $\Op{CPTP}(A\to B)$ is used
where the spaces are clear, and $\cI$ denotes the identity map. For a
unitary $U$ on $\cH_A$, $\cU:=\Op{Ad}_U$ denotes the corresponding
unitary channel, $\cU(X)=UXU^\dagger$, and
$\cU^\dagger:=\Op{Ad}_{U^\dagger}$ denotes its inverse. For a unitary representation
$g\mapsto U_g$, we write $\cU_g:=\Op{Ad}_{U_g}$ when the underlying
Hilbert space is specified locally. The Choi
operator of $\cE\in\mathscr{L}(\cH_A\to\cH_B)$, with
$d:=\dim\cH_A$, is
\begin{equation*}
    J_\cE:=(\cI_A\ox\cE)(\Omega_d)
    =\sum_{i,j=0}^{d-1}\ketbra{i}{j}\ox\cE(\ketbra{i}{j})
    \in\cB(\cH_A\ox\cH_B),
\end{equation*}
and the normalized Choi state, which serves as the program state of the
target channel, is $\pi_\cE:=J_\cE/d$.
For a quantum channel $\Lambda$ on $\cH_d$, its entanglement fidelity
with respect to the maximally mixed input is
\begin{equation*}
F_{\rm e}(\Lambda)
:=\tr\!\left[\Phi_d(\cI\ox\Lambda)(\Phi_d)\right].
\end{equation*}
If $\cM$ is a quantum channel and $\cU=\Op{Ad}_U$ is a unitary
channel on the same system, their average gate fidelity is
\begin{equation}\label{eq:average_gate_fidelity}
\begin{aligned}
F_{\rm av}(\cM,\cU)
&:=\int
\bra{\psi}U^\dagger\cM(\ketbra{\psi}{\psi})U\ket{\psi}\,d\psi\\
&=\frac{d\,F_{\rm e}(\Op{Ad}_{U^\dagger}\circ\cM)+1}{d+1},
\end{aligned}
\end{equation}
where $d\psi$ is the normalized unitarily invariant measure on pure
states. The second equality is Nielsen's identity~\cite[Eq.~(3)]{Nielsen2002average}.
For $\cE\in\mathscr{L}(\cH_A\to\cH_B)$ and
$\cF\in\mathscr{L}(\cH_B\to\cH_C)$, composition is represented by the
\emph{link product}~\cite{chiribella2008quantum},
\begin{equation*}
J_\cE\star J_\cF := \tr_B\!\Big[\big(J_\cE^{T_B}\ox I_C\big)
\big(I_A\ox J_\cF\big)\Big] = J_{\cF\circ\cE},
\end{equation*}
where $T_B$ and $\tr_B$ act on $\cH_B$. The link product is commutative
and associative whenever the labeled contractions are compatible.
The quasi-quantum retriever is denoted $\cP$, and the HPTP recovery map in
Appendix~\ref{sec:property} is denoted $\cR$. As in the main
text, we use
$\Op{CPTP}_d:=\Op{CPTP}(\cH_d\to\cH_d)$ for the family of all
channels on $\cH_d$.

\begin{definition}[Approximate and exact programming overheads]
\label{def:koverhead}
Let
$\varnothing\ne\cS\subseteq\Op{CPTP}(\cH_A\to\cH_B)$,
$k\in\mathbb{N}$ with $k\geq1$, and
$\varepsilon\geq0$. Set $d_A:=\dim\cH_A$. For the normalized Choi programs
$\pi_\cE=J_\cE/d_A$, the uniform $\varepsilon$-approximate $k$-copy
quasi-quantum programming overhead is
\begin{equation}\label{eq:supp_approx_overhead}
\nu_{k,\varepsilon}(\cS)
:=\min_{\cP\in\Op{HPTP}}
\left\{\|\cP\|_\diamond\;\middle|\;
\sup_{\cE\in\cS}\frac{1}{2}
\left\|\cP(\,\cdot\ox\pi_\cE^{\ox k})-\cE\right\|_\diamond
\leq\varepsilon\right\},
\end{equation}
where
$\cP:\cB\!\left(\cH_A\ox(\cH_A\ox\cH_B)^{\ox k}\right)
\to\cB(\cH_B)$ is independent of
$\cE$. At the exact endpoint, the error constraint in
Eq.~\eqref{eq:supp_approx_overhead} is equivalent to equality of the
retrieved and target maps. We therefore define
\begin{equation}\label{eq:supp_exact_overhead}
\begin{aligned}
\nu_k(\cS)
&:=\nu_{k,0}(\cS)\\
&=\min_{\cP\in\Op{HPTP}}
\bigl\{\|\cP\|_\diamond:\,
\cP(\rho\ox\pi_\cE^{\ox k})=\cE(\rho),\,
\forall\rho\in\cD(\cH_A),\ \forall\cE\in\cS\bigr\},
\end{aligned}
\end{equation}
\end{definition}

For the all-channel family, these quantities are denoted
$\nu_{k,\varepsilon}(\Op{CPTP}_d)$ and
$\nu_k(\Op{CPTP}_d)=\nu_{k,0}(\Op{CPTP}_d)$.

\paragraph{Estimator induced by a quasi-decomposition.}
Fix a feasible retriever $\cP$ and a diamond-norm-optimal
quasi-decomposition into quantum channels~\cite[Theorem~3]{Regula2021operational}
\begin{equation*}
\cP=p_+\cQ_+-p_-\cQ_-,
\qquad
w:=p_++p_-=\|\cP\|_\diamond,
\end{equation*}
where $\cQ_\pm$ are quantum channels and $p_\pm\geq0$. Trace
preservation implies $p_+-p_-=1$, so the branch probabilities below sum
to one. Given any reference-assisted input state $\rho_{RA}$ and the
program $\pi_\cE^{\ox k}$, draw $s=+1$ with probability $p_+/w$ or
$s=-1$ with probability $p_-/w$. Apply $\cI_R\ox\cQ_s$ to
$\rho_{RA}\ox\pi_\cE^{\ox k}$ and measure a Hermitian observable
$O_{RB}$ with $\|O_{RB}\|_\infty\leq1$, obtaining an eigenvalue outcome
$X\in[-1,1]$. Write
$\sigma_s:=(\cI_R\ox\cQ_s)(\rho_{RA}\ox\pi_\cE^{\ox k})$. For
$Z:=swX$, these conditional output states give
\begin{equation*}
\begin{aligned}
\mathbb E[Z]
&=p_+\tr(O_{RB}\sigma_+)-p_-\tr(O_{RB}\sigma_-)\\
&=\tr\!\left[
O_{RB}(\cI_R\ox\cP)(\rho_{RA}\ox\pi_\cE^{\ox k})
\right].
\end{aligned}
\end{equation*}
Thus $Z$ is unbiased for the retrieved-map observable. At the
exact endpoint its expectation is
$\tr[O_{RB}(\cI_R\ox\cE)(\rho_{RA})]$. Pointwise,
$|Z|\leq w$, and hence
\begin{equation*}
\operatorname{Var}(Z)
\leq\mathbb E[Z^2]
\leq w^2.
\end{equation*}

For $N$ independent trials with fresh branch draws and fresh program
inputs, each trial consuming all $k$ Choi programs, Hoeffding's
inequality for variables in $[-w,w]$ yields
\begin{equation*}
\Pr\!\left\{
\left|\overline Z_N-\mathbb E[Z]\right|\geq\eta
\right\}
\leq
2\exp\!\left(-\frac{N\eta^2}{2w^2}\right).
\end{equation*}
Therefore, for $\eta>0$ and $0<\delta<1$, it is sufficient to take
\begin{equation*}
N\geq\frac{2w^2}{\eta^2}\log\frac{2}{\delta}.
\end{equation*}

For a feasible approximate retriever, define
$\Delta_\cE:=\cP(\,\cdot\ox\pi_\cE^{\ox k})-\cE$. The constraint in
Eq.~\eqref{eq:supp_approx_overhead} gives
$\|\Delta_\cE(\rho_A)\|_1\leq2\varepsilon$. Write this output as $Y$.
It is Hermitian and traceless because both maps preserve
trace. Thus a binary effect $0\leq M\leq I$ and a general Hermitian
observable $\|O\|_\infty\leq1$ obey
\begin{equation*}
|\tr(MY)|\leq\frac{1}{2}\|Y\|_1\leq\varepsilon,
\qquad
|\tr(OY)|\leq\|Y\|_1\leq2\varepsilon.
\end{equation*}
Consequently, with
probability at least $1-\delta$, the total error relative to the target is
at most $\eta+\varepsilon$ for a binary-event probability and at most
$\eta+2\varepsilon$ for a general norm-one observable. Increasing $N$
reduces $\eta$ and leaves the approximation bias $\varepsilon$ unchanged.

For completeness, exact one-copy retrieval is feasible also when
$d_A\ne d_B$. Set $d_B:=\dim\cH_B$ and, for an operator $X$ on the
signal and one program register, define
\begin{equation*}
\begin{aligned}
\cM(X)
&:=\tr_{S P_1}\!\left[
(\Omega_{S P_1}\ox I_{P_2})X\right],\\
\cP_{A\to B}(X)
&:=d_A\cM(X)
+\frac{\tr X-d_A\tr[\cM(X)]}{d_B}\,I_B .
\end{aligned}
\end{equation*}
This map is HPTP and satisfies
$\cP_{A\to B}(\rho\ox J_\cE/d_A)=\cE(\rho)$ for every
$\cE\in\Op{CPTP}(\cH_A\to\cH_B)$.
The minima in Eqs.~\eqref{eq:supp_approx_overhead} and
\eqref{eq:supp_exact_overhead} are attained in finite dimensions.
Indeed, discarding the extra program copies turns this universal
one-copy retriever into a feasible $k$-copy point, the constraints are
closed, and diamond-norm sublevel sets are compact. Every HPTP map has
diamond norm at least one. If its diamond norm equals one, its
normalized Choi operator is Hermitian with trace one and trace norm at
most one. Its trace norm is therefore exactly one, so the Choi operator
is positive and the map is a quantum channel. Hence
$\nu_{k,\varepsilon}(\cS)\geq1$, with equality if and only if a quantum
retriever meets the same tolerance. Moreover, for
$\varepsilon\geq1$, a fixed quantum retriever that ignores the program
is feasible because the half-diamond distance between any two quantum
channels is at most one. Thus $\nu_{k,\varepsilon}(\cS)=1$ in this
regime, and the nontrivial approximate range is
$0\leq\varepsilon<1$.

For any nonempty one-copy channel set
$\cS\subseteq\Op{CPTP}(\cH_S\to\cH_{S'})$, the following SDP gives the
exact overhead.
\begin{equation}\label{eq:onecopy_SDP}
\begin{aligned}
&\underline{\textbf{Primal program}}\\
    \nu_1(\cS)&=\min\;p_++p_-\\
    {\rm s.t.}\;\;& J_{\cP}:=J_+-J_-,\\
    &\tr_P[J_{\cP}(I_S\ox J_{\cE}^{T}\ox I_{S'})] = d_S J_{\cE},\, \forall \cE \in \cS,\\
    &J_+\geq 0,\,\tr_{S'}[J_+] =p_+ I_{SP},\\
    &J_-\geq 0,\,\tr_{S'}[J_-] =p_- I_{SP}.
\end{aligned}
\end{equation}
Here $p_{\pm}$ are the quasi-decomposition coefficients attached to the
positive semidefinite variables $J_{\pm}$. No separate constraint on
$p_+-p_-$ is needed. Indeed, fix any $\cE\in\cS$ and trace the
programming equality over $S'$. Using $\tr J_\cE=d_S$,
$\tr_{S'}J_\cE=I_S$, and
$\tr_{S'}J_\cP=(p_+-p_-)I_{SP}$ gives
\begin{equation*}
\begin{aligned}
d_S(p_+-p_-)I_S
&=\tr_{P,S'}\!\left[J_\cP
  (I_S\ox J_\cE^T\ox I_{S'})\right]\\
&=d_S\,\tr_{S'}J_\cE=d_S I_S.
\end{aligned}
\end{equation*}
Hence $p_+-p_-=1$, and $J_\cP$ is automatically trace preserving.
The three closed-form
one-copy proofs below use this formulation together with symmetry
reductions of $J_{\cP}$.

\section{Covariance and SDP reduction tools}\label{sec:covariance_reduction}

We use two symmetry reductions throughout the one-copy proofs. The
first evaluates the diamond norm of a covariant map from its Choi
operator. The second restricts the programming SDP to a commutant
algebra. Both reductions are independent of the particular channel
family being programmed.

The following lemma is the only diamond-norm fact needed below. It
adapts the purification and symmetrization argument for covariant
channels in Refs.~\cite{leditzky2018approaches,tomamichel2016strong} to
Hermiticity-preserving maps.
\begin{lemma}[Diamond norm reduction for covariant maps]\label{lem:covariant_diamond_reduction}
    Let $\cE_{A\rightarrow B}$ be a Hermiticity-preserving linear map
    that is jointly covariant
    with respect to unitary representations
    $g\mapsto U_g$ on $\cH_A$ and $g\mapsto V_g$ on $\cH_B$, where
    $G$ is a finite group. Write
    $\cU_g:=\Op{Ad}_{U_g}$ on $\cB(\cH_A)$ and
    $\cV_g:=\Op{Ad}_{V_g}$ on $\cB(\cH_B)$. Joint covariance means
    \begin{equation*}
        \cV_g^\dagger\circ\cE_{A\rightarrow B}\circ\cU_g
        =\cE_{A\rightarrow B},
        \qquad g\in G.
    \end{equation*}
    Set $d_A:=\dim\cH_A$. If $\{U_g\}_{g\in G}$ is a one-design, so that
    \begin{equation*}
        \frac{1}{|G|}\sum_{g\in G}U_g XU_g^\dagger
        =\frac{\tr X}{d_A}I_A
    \end{equation*}
    for every $X\in\cB(\cH_A)$, then
    \begin{equation}
        \|\cE_{A\rightarrow B}\|_\diamond
        =
        \|(\cI_R\otimes \cE_{A\rightarrow B})(\Phi_{RA})\|_1,
    \end{equation}
    where $\Phi_{RA}$ is the normalized maximally entangled state.
\end{lemma}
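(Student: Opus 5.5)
The plan is to prove the two inequalities separately; the lower bound is immediate. The diamond norm of a Hermiticity-preserving map is a supremum of $\|(\cI_R\ox\cE)(\rho_{RA})\|_1$ over states $\rho_{RA}$ with $\dim R=d_A$. Since $\Phi_{RA}$ is one such state,
\[
\|\cE\|_\diamond\geq\|(\cI_R\ox\cE)(\Phi_{RA})\|_1 .
\]

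For the upper bound we first reduce to pure inputs. By convexity of the trace norm, the supremum is attained on pure states $\ket{\psi}_{RA}$. Each such state can be written as $\ket{\psi}=(X_R\ox I_A)\dket{I_{d}}/\sqrt{d}$ with $\tr(X^\dagger X)=d$. Let $\omega_A=\tr_R\psi$, so that $\omega^T$ matches the marginal of $R$ up to the basis identification. Then
\[
(\cI_R\ox\cE)(\psi)=(X\ox I)(\cI\ox\cE)(\Phi)(X^\dagger\ox I).
\]
It is convenient to use the polar form $\ket{\psi}=(\sqrt{d\,\omega^{T}}\,W\ox I)\ket{\Phi}$ for a unitary $W$ on $R$. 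The unitary $W$ does not affect the trace norm and can be dropped. Hence
\[
\|(\cI\ox\cE)(\psi)\|_1=\|(\sqrt{d\omega^T}\ox I)\,J\,(\sqrt{d\omega^T}\ox I)\|_1,\qquad J:=(\cI\ox\cE)(\Phi).
\]

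Next we bring in covariance. Joint covariance and the transpose trick $(U\ox I)\ket{\Phi}=(I\ox U^T)\ket{\Phi}$ give
\[
(\overline{U_g}\ox V_g)\,J\,(\overline{U_g}\ox V_g)^\dagger=J .
\]
Define $f(\omega):=\|(\sqrt{d\omega^T}\ox I)J(\sqrt{d\omega^T}\ox I)\|_1$. Conjugating by the unitary $\overline{U_g}\ox V_g$ leaves the trace norm unchanged. Using the invariance of $J$, this yields $f(\omega)=f(U_g\omega U_g^\dagger)$ for every $g$. So $f$ is invariant on the group orbit.

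The key remaining step, and the main obstacle, is to show that $f$ is concave in $\omega$. Granting concavity, the one-design property gives
\[
\frac1{|G|}\sum_g U_g\omega U_g^\dagger=I/d ,
\]
and therefore
\[
f(\omega)=\frac1{|G|}\sum_g f(U_g\omega U_g^\dagger)\leq f(I/d)=\|J\|_1 ,
\]
which is the claimed upper bound. To establish concavity, I would use the variational form of the trace norm, $\|Y\|_1=\max_{\|M\|_\infty\leq1}\tr(MY)$, which is valid for Hermitian $Y$ with $M$ Hermitian. One could then try to express $f(\omega)$ through a perspective/operator-geometric-mean type function of $\omega$, as in Refs.~\cite{leditzky2018approaches,tomamichel2016strong}. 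However, the cleanest route avoids concavity and instead uses the SDP characterization of the diamond norm for Hermiticity-preserving maps. There, $\|\cE\|_\diamond$ is the minimum of
\[
\tfrac12\bigl(\|\tr_B Y_0\|_\infty+\|\tr_B Y_1\|_\infty\bigr)
\]
over $Y_0,Y_1\geq0$ satisfying
\[
\begin{pmatrix}Y_0&-J_\cE\\-J_\cE&Y_1\end{pmatrix}\geq0 .
\]
We take the Jordan decomposition $J_\cE=J_+-J_-$, where $J_\cE$ is the unnormalized Choi operator, and set $Y_0=Y_1=J_++J_-=|J_\cE|$, which is feasible. This bounds the diamond norm by $\|\tr_B|J_\cE|\|_\infty$.

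It remains to evaluate $\|\tr_B|J_\cE|\|_\infty$ using covariance. The invariance of $J_\cE$ under $\overline{U_g}\ox V_g$ passes to $|J_\cE|$, by uniqueness of the Jordan decomposition. Consequently $\tr_B|J_\cE|$ commutes with every $\overline{U_g}$. Because $\{U_g\}$ is a one-design, the twirl over $\overline{U_g}$ is the completely depolarizing map. Applying that twirl and using the commutation just noted, this forces $\tr_B|J_\cE|$ to equal its own twirl, namely
\[
\frac{\tr|J_\cE|}{d}\,I .
\]
Hence
\[
\|\cE\|_\diamond\leq\|J_\cE\|_1/d=\|(\cI\ox\cE)(\Phi)\|_1 .
\]
I expect this SDP-based argument to be the one to write up, with the positivity check of the block matrix (via $\pm J_\cE\leq|J_\cE|$) and the Jordan-decomposition invariance as the only delicate points.
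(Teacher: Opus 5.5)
Your proposal is correct, but the argument you commit to differs from the paper's. That argument is the SDP route; the concavity route is never completed and can be dropped, together with the polar-form preliminaries.

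The paper works on the input side. It takes an optimal pure input $\phi_{RA}$ and coherently superposes its rotated copies $(I_R\ox U_g)\ket{\phi}$ with a flag register $C$. Dephasing $C$ and covariance then show that the symmetrized input is still optimal. The one-design property makes its $A$-marginal $I_A/d_A$, and the conclusion follows because every purification of $I_A/d_A$ is related to $\Phi_{RA}$ by a reference isometry.

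You work on the dual side instead. The certificate $Y_0=Y_1=|J_\cE|$ is feasible, since the block matrix equals $\begin{pmatrix}1&-1\\-1&1\end{pmatrix}\ox J_+ + \begin{pmatrix}1&1\\1&1\end{pmatrix}\ox J_-$. It gives the general bound $\|\cE\|_\diamond\le\|\tr_B|J_\cE|\|_\infty$ for every Hermiticity-preserving map. Covariance plus the one-design property then flattens $\tr_B|J_\cE|$ to $(\tr|J_\cE|/d_A)\,I$, so this bound meets the trivial lower bound from inputting $\Phi_{RA}$. One step deserves an explicit line: $\{\overline{U_g}\}$ is a one-design because $\frac{1}{|G|}\sum_g\overline{U_g}X\overline{U_g}^\dagger=\overline{\frac{1}{|G|}\sum_g U_g\overline{X}U_g^\dagger}$.

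What your route buys:
\begin{itemize}
\item It needs only weak duality. Equivalently, it follows from the elementary estimate obtained by splitting $\cE$ into completely positive parts with Choi operators $J_\pm$ and restricting to positive inputs.
\item It avoids the attainment, pure-state reduction, and purification steps.
\item It yields a reusable sufficient condition for equality, namely $\tr_B|J_\cE|\propto I$.
\end{itemize}

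What the paper's route buys: it stays within input-state optimization, directly exhibits $\Phi_{RA}$ as an optimizer, and mirrors the standard symmetrization argument for covariant channels.
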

\begin{proof}
    For a Hermiticity-preserving map between finite-dimensional spaces,
    the diamond norm is attained on a pure state with a reference
    $R\simeq A$. To see this, Hermitian dilation first restricts the
    optimization to Hermitian inputs. Jordan decomposition and the
    triangle inequality then reduce it to a density operator. Convexity
    permits a pure optimizer, and Schmidt compression restricts its
    reference dimension to $d_A$. Thus
    \begin{equation*}
        \|\cE\|_\diamond
        =\max_{\phi_{RA}\ {\rm pure}}
        \|(\cI_R\ox\cE)(\phi_{RA})\|_1.
    \end{equation*}
    Existence follows from compactness of the pure-state set. Let
    $\phi_{RA}=\ketbra{\phi}{\phi}_{RA}$ attain the maximum and let
    $\rho_A=\tr_R\phi_{RA}$. Introduce a register $C$ with basis
    $\{\ket{g}\}_{g\in G}$ and define
    \begin{equation*}
        \ket{\bar\phi}_{CRA}
        :=\frac{1}{\sqrt{|G|}}\sum_{g\in G}
        \ket{g}_C\ox(I_R\ox U_g)\ket{\phi}_{RA}.
    \end{equation*}
    Write $\bar\phi_{CRA}:=\ketbra{\bar\phi}{\bar\phi}_{CRA}$. Its
    marginal on $A$ is
    $\bar\rho_A=|G|^{-1}\sum_g\cU_g(\rho_A)$. Dephasing $C$ after
    applying $\cI_{CR}\ox\cE$ and using trace-norm contractivity on
    Hermitian operators gives
    \begin{equation*}
        \|(\cI_{CR}\ox\cE)(\bar\phi_{CRA})\|_1
        \geq \frac{1}{|G|}\sum_{g\in G}
        \|(\cI_R\ox\cE\circ\cU_g)(\phi_{RA})\|_1.
    \end{equation*}
    Covariance gives $\cE\circ\cU_g=\cV_g\circ\cE$, so every term on
    the right equals $\|\cE\|_\diamond$. Stability of the diamond norm
    under larger reference systems bounds the left-hand side by the same
    value. Hence $\bar\phi_{CRA}$ is also optimal. The one-design condition gives
    $\bar\rho_A=I_A/d_A$. Every purification of this state is related to
    $\Phi_{RA}$ by an isometry on the reference, which preserves the
    output trace norm. The claimed identity follows.
\end{proof}

To state the symmetry reduction without referring to a particular
channel family, we use the following covariance convention.

\begin{definition}[Bipartite $G$-covariance]\label{def:G_covariance}
    Let $G$ be a compact group, and let $g\mapsto U_g$ and
    $g\mapsto V_g$ be continuous unitary representations on the input
    and output Hilbert spaces, respectively. Write
    $\cU_g:=\Op{Ad}_{U_g}$ and $\cV_g:=\Op{Ad}_{V_g}$.
    A channel set $\cS\subseteq\Op{CPTP}_d$ is
    \emph{$(G,U,V)$-covariant} if
    \begin{equation*}
        \cV_g\circ\cE\circ\cU_g^\dagger
        \;\in\;\cS,
        \qquad \forall\,g\in G,\ \forall\,\cE\in\cS.
    \end{equation*}
    The representation pair $(U,V)$ is called
    \emph{self-conjugate} if for every $g\in G$ there exists
    $h\in G$ such that $U_h=U_g^T$ and $V_h=V_g^T$
    up to phases.
    The corresponding representation on
    $\cH_{\mathrm{tot}}:=\cH_S\ox\cH_{P_1}\ox\cH_{P_2}\ox\cH_{S'}$ is
    \begin{equation}\label{eq:induced_rep}
        \varrho_G(g)
        \;:=\;
        (U_g)_S\;\ox\;(U_g)^*_{P_1}\;\ox\;
        (V_g)_{P_2}\;\ox\;(V_g)^*_{S'}.
    \end{equation}
\end{definition}

Several channel families used later fit this convention. Examples include
the full set $\Op{CPTP}_d$ with independent input and output rotations, the
unitary family $\Op{Ad}_{\Op{SU}(d)}$ with independent left and right
rotations, the unital family $\cT(d)$~\cite{mendl2009unital}, and the real-channel
family $\cL^{\mathbb{R}}$. The central structural result is that every
$G$-covariant channel set with self-conjugate representations admits an
optimal quasi-decomposition whose two positive Choi operators lie in
the commutant of~\eqref{eq:induced_rep}.

\begin{theorem}[$G$-symmetrization]\label{thm:G_symmetrisation}
    Let $\cS$ be $(G,U,V)$-covariant with self-conjugate
    representations and let $d\mu(g)$ denote
    normalized Haar measure on $G$. Define the commutant
    \begin{equation*}
        \mathfrak{C}_G
        \;:=\;
        \bigl\{X\in\cB(\cH_{\mathrm{tot}})
        \;:\;[\varrho_G(g),X]=0,\;\forall\,g\in G\bigr\}.
    \end{equation*}
    Then the following statements hold.
    \begin{enumerate}[(i)]
    \item The programming SDP~\eqref{eq:onecopy_SDP} admits an optimal
          quasi-decomposition $J_{\cP_*}=J_{+,*}-J_{-,*}$ with
          $J_{+,*},J_{-,*}\in\mathfrak{C}_G$.
    \item If $n:=\dim_{\CC}\mathfrak{C}_G$, each of the two Hermitian
          variables $J_{+,*}$ and $J_{-,*}$ is specified by $n$ real
          coefficients in the Hermitian part of the commutant.
    \end{enumerate}
\end{theorem}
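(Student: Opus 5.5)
The plan is the standard twirling argument. Start from any optimal solution $(J_+,J_-,p_+,p_-)$ of the primal SDP~\eqref{eq:onecopy_SDP}; an optimum exists because the minimum is attained in finite dimensions. Define the group average
\begin{equation*}
\mathcal{T}(X):=\int_G \varrho_G(g)\,X\,\varrho_G(g)^\dagger\,d\mu(g),
\end{equation*}
and set $J_{\pm,*}:=\mathcal{T}(J_\pm)$, keeping $p_{\pm}$ unchanged. By invariance of Haar measure, $\mathcal{T}(X)\in\mathfrak{C}_G$ for every $X$. Moreover $\mathcal{T}$ is a unital completely positive map, so $J_{\pm,*}\geq0$. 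The objective $p_++p_-$ is untouched. Hence it suffices to show that $(J_{+,*},J_{-,*},p_+,p_-)$ is still feasible, and that gives (i).

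The trace-preservation constraints are straightforward. The factor $(V_g)^*_{S'}$ acts only on $S'$, so partial trace over $S'$ kills it by cyclicity. We are left with $\tr_{S'}\mathcal{T}(J_\pm)$ equal to the average of $W_g\,(p_\pm I_{SP})\,W_g^\dagger$, where $W_g:=U_g\ox U_g^*\ox V_g$. This equals $p_\pm I_{SP}$.

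The programming constraint is the main step. For each fixed $g$, I will show that the conjugated variable $J_\cP^{(g)}:=\varrho_G(g)J_\cP\varrho_G(g)^\dagger$ satisfies the programming equality for every $\cE\in\cS$. Averaging over $g$ then finishes, since the constraint is linear in $J_\cP$. Operationally, $J_\cP^{(g)}$ is the Choi operator of a retriever $\cP^{(g)}$ that does three things: it pre-applies $U_g$-type rotations to the signal and program registers, runs $\cP$, and post-applies a $V_g$-type rotation to the output.

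Concretely, the aim is the identity
\begin{equation*}
\cP^{(g)}(\rho\ox\pi_\cE)=\cV_{g'}^{\dagger}\circ\cP\bigl(\cU_{g'}(\rho)\ox\pi_{\cE'}\bigr),
\qquad \cE':=\cV_{g'}\circ\cE\circ\cU_{g'}^\dagger,
\end{equation*}
for a suitable $g'$. Covariance puts $\cE'$ in $\cS$, so the right-hand side equals $\cV_{g'}^\dagger\circ\cE'\circ\cU_{g'}(\rho)=\cE(\rho)$.

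I expect the obstacle to be the bookkeeping of transposes and conjugates in the Choi convention. Two facts are involved. First, a conjugation $X\mapsto A X A^\dagger$ on an input leg of a Choi operator corresponds to pre-composing with $\Op{Ad}_{A^T}$ (or its conjugate). Second, the constraint pairs $J_\cP$ with $J_\cE^T$. Tracking these shows that $\varrho_G(g)$ acting on $S,P_1,P_2,S'$ induces rotations by $U_g^T$ and $V_g^T$, up to conjugation, rather than by $U_g$ and $V_g$. This is exactly where the self-conjugacy hypothesis enters. It supplies $h$ with $U_h=U_g^T$ and $V_h=V_g^T$ up to phases. Phases cancel under $\Op{Ad}$, so taking $g'=h$ (or $h^{-1}$) makes the rotated program $\pi_{\cE'}$ the Choi state of a member of $\cS$.

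I will verify the direction by direct index computation using $(A\ox I)\dket{I_d}=(I\ox A^T)\dket{I_d}$. The sign and inverse choices can all be absorbed because $\cS$ is closed under the full group.

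Statement (ii) is then linear algebra. $\mathfrak{C}_G$ is a $*$-closed subalgebra, since $\varrho_G$ is unitary, of complex dimension $n$. Its Hermitian elements form a real vector space of dimension $n$: the map $X\mapsto(X+X^\dagger)/2$ and multiplication by $i$ split $\mathfrak{C}_G$ as $H\oplus iH$ over $\RR$. Choosing a Hermitian basis of $\mathfrak{C}_G$, such as the sector projectors or Hermitian combinations of commutant elements, each of $J_{\pm,*}$ is specified by $n$ real coefficients.
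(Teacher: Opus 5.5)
Your proposal is correct and takes essentially the same route as the paper. You twirl $J_+$ and $J_-$ separately by $\varrho_G$ with $p_\pm$ fixed and check positivity and the $S'$-marginal directly. Each conjugate $\varrho_G(g)J_\cP\varrho_G(g)^\dagger$ is feasible because it programs the twisted channel $\Op{Ad}_{V_g^T}\circ\cE\circ\Op{Ad}_{U_g^*}$, which lies in $\cS$ by self-conjugacy and covariance (your $g'=h$), and part (ii) is the same Hermitian-basis count. The only difference is presentation: you phrase the key step as pre- and post-rotations of the retriever, whereas the paper splits $\varrho_G(g)=W_{SS'}\ox R_P$ and works directly in the Choi contraction.
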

\begin{proof}
    Fix an optimal quasi-decomposition $J_{\cP}=J_+-J_-$ with
    $J_\pm\geq 0$, $\tr_{S'}[J_\pm]=p_\pm I_{SP}$, and
    $p_++p_-=\nu_1(\cS)$. For each $g\in G$ set
    \begin{equation*}
        J_\pm^g \;:=\; \varrho_G(g)\,J_\pm\,\varrho_G(g)^\dagger.
    \end{equation*}
    Because $\varrho_G(g)$ is unitary, positivity and trace preservation
    carry over as follows.
    \begin{equation*}
        J_\pm^g\geq 0,\qquad
        \tr_{S'}[J_\pm^g]
        = \varrho_{\mathrm{in}}(g)\,\tr_{S'}[J_\pm]\,\varrho_{\mathrm{in}}(g)^\dagger
        = p_\pm\,I_{SP},
    \end{equation*}
    where $\varrho_{\mathrm{in}}(g):=U_g\ox U_g^*\ox V_g$
    is the restriction of $\varrho_G(g)$ to $\cH_S\ox\cH_P$, and the second
    equality uses unitarity together with
    $\tr_{S'}[(\cI_{SP}\ox (V_g^*)_{S'})X(\cI_{SP}\ox (V_g^T)_{S'})]
    =\tr_{S'}[X]$ for all $X$.

    Set $J_\cP^g:=J_+^g-J_-^g=\varrho_G(g)J_\cP\varrho_G(g)^\dagger$.
    For any $\cE\in\cS$, write
    $\varrho_G(g)=W_{SS'}\ox R_P$ with $W_{SS'}:=(U_g)_S\ox(V_g^*)_{S'}$
    and $R_P:=(U_g^*)_{P_1}\ox(V_g)_{P_2}$. Pulling
    $W_{SS'}$ through the partial trace over $P$ and using the
    cyclic property on $P$ yields
    \begin{equation*}
        \tr_P\!\bigl[J_\cP^g\,(I_S\ox J_\cE^T\ox I_{S'})\bigr]
        \;=\;
        W_{SS'}\;\tr_P\!\bigl[J_\cP\,(I_S\ox R_P^\dagger
        J_\cE^T R_P\ox I_{S'})\bigr]\;W_{SS'}^\dagger.
    \end{equation*}
    Define $\hat\cE$ by
    $J_{\hat\cE}:=(U_g^\dagger\ox V_g^T)\,J_\cE\,(U_g\ox V_g^*)$
    on $P_1\ox P_2$. Transposing confirms
    $J_{\hat\cE}^T=R_P^\dagger J_\cE^T R_P$. By the Choi
    transformation rule,
    $\hat\cE=\Op{Ad}_{V_g^T}\circ\cE\circ\Op{Ad}_{U_g^*}$.
    Self-conjugacy of the representations guarantees
    $\hat\cE\in\cS$ because for every $g$ there exists
    $h\in G$ with $U_h=U_g^T$ and $V_h=V_g^T$
    (up to phases absorbed by $\Op{Ad}$), so
    $\hat\cE=\cV_h\circ\cE\circ\cU_h^\dagger\in\cS$,
    since $\cU_h^\dagger=\Op{Ad}_{U_h^\dagger}
    =\Op{Ad}_{(U_g^T)^\dagger}=\Op{Ad}_{U_g^*}$.
    The feasibility of $J_\cP$ for $\hat\cE$ then gives
    \begin{equation*}
        \tr_P\!\bigl[J_\cP^g\,(I_S\ox J_\cE^T\ox I_{S'})\bigr]
        \;=\;
        d\;W_{SS'}\,(J_{\hat\cE})_{SS'}\,W_{SS'}^\dagger.
    \end{equation*}
    The outer conjugation collapses to the following expression.
    $W_{SS'}\,(J_{\hat\cE})_{SS'}\,W_{SS'}^\dagger
    =(U_gU_g^\dagger)_S\ox(V_g^*V_g^T)_{S'}\;J_\cE\;
    (U_gU_g^\dagger)_S\ox(V_g^*V_g^T)_{S'}=J_\cE$,
    since $U_gU_g^\dagger=I$ and
    $V_g^*V_g^T=V_g^*(V_g^*)^\dagger=I$.
    Hence $J_\cP^g$ is feasible for $\cS$ with overhead
    $p_++p_-=\nu_1(\cS)$.

    Averaging over $G$ gives the Haar-symmetrized operators
    \begin{equation*}
        \bar{J}_\pm
        \;:=\;\int_G J_\pm^g\,d\mu(g)
        \;\geq 0,\qquad
        \bar{J}_\cP:=\bar{J}_+-\bar{J}_-,
    \end{equation*}
    with $\tr_{S'}[\bar{J}_\pm]=p_\pm I_{SP}$ (linearity of
    integration and the trace-preservation identity above) and
    $[\varrho_G(g),\bar{J}_\pm]=0$ for all $g$ (by construction).
    The feasibility constraint is linear in $J_\cP$ and holds for every
    $J_\cP^g$. It therefore survives Haar averaging, making $\bar{J}_\cP$
    feasible for $\cS$ with $\bar{p}_++\bar{p}_-=p_++p_-=\nu_1(\cS)$.
    Setting $J_{\pm,*}:=\bar{J}_\pm$ proves~(i).

    The Hermitian part $\mathfrak C_G^{\rm sa}$ has real dimension
    $n=\dim_{\CC}\mathfrak C_G$. Choose a real orthonormal Hermitian
    basis $(e_1,\dots,e_n)$ of $\mathfrak C_G^{\rm sa}$ and write
    $J_{\pm,*}=\sum_{\alpha=1}^n x_{\pm,\alpha}e_\alpha$ with
    $x_{\pm,\alpha}\in\RR$. This proves~(ii), while retaining the two
    positive variables required by the SDP.
\end{proof}

The closed-form appendices below apply this symmetrization directly in
each channel family, keeping the representation-theoretic reduction local
to the corresponding proof.


\section{One-copy programming protocol for all quantum channels}\label{sec:cptp_appendix}

This appendix proves the all-channel programming theorem. The general
one-copy SDP is Eq.~\eqref{eq:onecopy_SDP}. In the present case, the
$\Op{SU}(d)\times\Op{SU}(d)$ symmetry fixes the feasible retriever
essentially uniquely. The invariance needed for the covariance reduction
is the following elementary Choi-state fact.

\begin{lemma}\label{lem:unitary_invariant_in_cptp}
    Let $\cE\in\Op{CPTP}(\cH_{S} \rightarrow \cH_{S'})$ with $\cH_{S}\simeq \cH_{S'}$ and $\dim[\cH_{S}] = d$. Then for any $U,V\in \Op{SU}(d)$, the twisted Choi operator $(U\ox V)J_{\cE}(U\ox V)^{\dagger}$ is the Choi operator of some channel in $\Op{CPTP}(\cH_{S} \rightarrow \cH_{S'})$.
\end{lemma}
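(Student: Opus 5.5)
The plan is to exhibit the channel explicitly rather than argue abstractly. Given $U,V\in\Op{SU}(d)$, I would take the candidate
\begin{equation*}
\cE'(X):=V\,\cE\!\left(\bar U^{\dagger} X\,\bar U\right)V^\dagger
=\Op{Ad}_V\circ\cE\circ\Op{Ad}_{U^{T}}(X),
\end{equation*}
where $\bar U$ is the entrywise conjugate, so that $\bar U^\dagger=U^T$. This map is a composition of three quantum channels, namely two unitary channels and $\cE$. It is therefore completely positive and trace preserving, and it lies in $\Op{CPTP}(\cH_S\to\cH_{S'})$. What remains is to verify that $J_{\cE'}=(U\ox V)J_\cE(U\ox V)^\dagger$.

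The key step is the transpose trick $(A\ox I)\dket{I_d}=(I\ox A^T)\dket{I_d}$. I would first compute the Choi operator of the pre-composition:
\begin{equation*}
J_{\cE\circ\Op{Ad}_W}=(\cI\ox\cE)\bigl((I\ox W)\Omega_d(I\ox W^\dagger)\bigr)
=(W^T\ox I)J_\cE(\bar W\ox I).
\end{equation*}
Choosing $W=U^T$ gives $W^T=U$ and $\bar W=U^\dagger$, so this equals $(U\ox I)J_\cE(U^\dagger\ox I)$. Post-composition by $\Op{Ad}_V$ acts only on the output factor and contributes $(I\ox V)\,\cdot\,(I\ox V^\dagger)$. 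Combining the two conjugations gives the claimed identity.

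I do not expect a real obstacle here; the only care needed is in the conjugation and transposition bookkeeping. The pre-composition unitary must be $U^T$, not $U^\dagger$, and any mistake there would flip a transpose. As a sanity check, I would also note directly that $(U\ox V)J_\cE(U\ox V)^\dagger\ge0$. Its partial trace over $S'$ is $U\,\tr_{S'}J_\cE\,U^\dagger=UU^\dagger=I_S$. By the Choi correspondence, these two facts alone already show it is the Choi operator of a CPTP map, which gives an independent one-line confirmation. The restriction to $\Op{SU}(d)$ plays no role, since the argument works for all of $\mathrm{U}(d)$.
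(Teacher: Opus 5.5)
Your proof is correct, but your main argument differs from the paper's. The paper's proof is exactly your closing ``sanity check'': it notes that unitary conjugation preserves $J_\cE\ge0$ and that $\tr_{S'}\bigl[(U\ox V)J_\cE(U\ox V)^\dagger\bigr]=U\,\tr_{S'}(J_\cE)\,U^\dagger=I_S$. It then relies on the Choi correspondence and never names the channel.

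Your primary route instead exhibits the channel $\Op{Ad}_V\circ\cE\circ\Op{Ad}_{U^T}$ and verifies its Choi operator with the transpose trick. With the paper's convention $J_\cE=(\cI\ox\cE)(\Omega_d)$, the identity $J_{\cE\circ\Op{Ad}_W}=(W^T\ox I)J_\cE(\bar W\ox I)$ and the choice $W=U^T$ are right.

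The trade-off is this. The abstract route is shorter and needs no transposition bookkeeping. The explicit route identifies the twisted channel, which the paper needs, and derives separately, in the proof of the $G$-symmetrization theorem. There it writes $\hat\cE=\Op{Ad}_{V_g^T}\circ\cE\circ\Op{Ad}_{U_g^*}$ for the Choi operator $(U_g^\dagger\ox V_g^T)J_\cE(U_g\ox V_g^*)$. Your formula reproduces this under $U\to U_g^\dagger$, $V\to V_g^T$.

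Your remark that only unitarity, and not $\det=1$, is used applies equally to both arguments.
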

\begin{proof}
    Unitary conjugation preserves positivity, so $J_{\cE}\ge 0$ yields $(U\ox V)J_{\cE}(U\ox V)^{\dagger}\ge 0$. Trace preservation follows directly from
    \begin{equation*}
        \tr_{S'}\bigl((U\ox V)J_{\cE}(U\ox V)^{\dagger}\bigr) = U\tr_{S'}(J_{\cE})U^{\dagger} = U I_S U^{\dagger} = I_S.
    \end{equation*}
\end{proof}

\begin{theorem}\label{thm:hptp_protocol_channels}
    Let $\cS = \Op{CPTP}(\cH_d \rightarrow \cH_d)$ and set $\pi_{\cE} = J_{\cE} / d$. The HPTP map
    \begin{equation*}
        \cP(X) = \left(\frac{\tr X}{d} - \tr(\cM(X)) \right) I_{d} + d\,\cM(X),
        \qquad X\in \cB(\cH_{S} \ox \cH_P),
    \end{equation*}
    is an exact universal retriever and satisfies $\cP(\rho_S\ox\pi_{\cE})=\cE(\rho_S)$ for every $\cE\in\cS$. Here $\cM(X):=\tr_{SP_1}((\Omega_{SP_1} \ox I_{P_2})X)$ and $\Omega_{SP_1}$ is the unnormalized maximally entangled projector on $\cH_{SP_1}$. Moreover, $\cP$ is optimal and
    \begin{equation*}
        \nu_1(\cS) = 2d^2 - 3 + \tfrac{2}{d^2}.
    \end{equation*}
\end{theorem}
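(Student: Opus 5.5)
I would prove the theorem in three stages: exact retrieval, an upper bound from an explicit quasi-decomposition, and a matching lower bound. The lower bound rests on Theorem~\ref{thm:G_symmetrisation} together with Lemma~\ref{lem:covariant_diamond_reduction}.

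\emph{Exactness.} This is a direct contraction. For $X=\rho_S\ox J_\cE/d$, the identity $\tr_{SP_1}[(\Omega_{SP_1}\ox I)(\rho_S\ox J_\cE)]=\sum_{ij}\bra{i}\rho\ket{j}\cE(\ketbra{i}{j})$ is the transpose trick. Writing $\rho=\sum\rho_{ij}\ketbra{i}{j}$, the contraction with $\Omega$ pairs $\ket{i}_S$ with $\ket{i}_{P_1}$, which gives $\cM(\rho\ox J_\cE)=\cE(\rho^{T\,T})=\cE(\rho)$ after tracking transposes. Hence $d\,\cM(\rho\ox\pi_\cE)=\cE(\rho)$. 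Its trace is $1$, so the correction term $(\tr X/d-\tr\cM(X))I$ vanishes, since $\tr X/d=1/d$ and $\tr\cM=1/d$. Hermiticity preservation and trace preservation are immediate from the formula.

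\emph{Upper bound.} I would write down the Choi operator of $\cP$ in terms of $\Omega_{SP_1}$, $\Omega_{P_2S'}$ and identities: $J_\cP=d\,\Omega_{SP_1}\ox\Omega_{P_2S'}+(I/d-\Omega_{SP_1}\ox I_{P_2})\ox I_{S'}$. I would then verify the decomposition $J_\cP=\mu_+J_{\cQ_+}-\mu_-J_{\cQ_-}$ with the stated $\mu_\pm$ and $J_{\cQ_\pm}$. To do this, I would expand in the four orthogonal projectors $\Phi\ox\Phi$, $\Phi\ox(I-\Phi)$, $(I-\Phi)\ox\Phi$, $(I-\Phi)\ox(I-\Phi)$ on $(SP_1)(P_2S')$. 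Positivity of $J_{\cQ_\pm}$ reduces to checking four scalar eigenvalues. The condition $\tr_{S'}J_{\cQ_\pm}=I_{SP}$ uses $\tr_{S'}\Omega_{P_2S'}=I_{P_2}$. This gives $\nu_1\le\mu_++\mu_-=2d^2-3+2/d^2$.

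\emph{Lower bound.} The family $\CPTP_d$ is covariant under independent $\Op{SU}(d)\times\Op{SU}(d)$ input/output rotations (Lemma~\ref{lem:unitary_invariant_in_cptp}), and this pair is self-conjugate. By Theorem~\ref{thm:G_symmetrisation}, some optimal $J_\cP$ lies in the commutant of $U\ox U^*\ox V\ox V^*$. That commutant is spanned by $\{I,\Omega_{SP_1}\}\ox\{I,\Omega_{P_2S'}\}$, so it is four dimensional. Imposing the programming constraint for all $\cE$ forces the coefficients of $\Omega\ox\Omega$ and $\Omega\ox I$; because $J_\cE$ ranges over a set spanning the affine space of trace-one Choi operators, both terms are determined. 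Trace preservation then fixes the rest, so the covariant feasible point is unique and equals $\cP$. Since it is optimal, $\nu_1=\|\cP\|_\diamond$.

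To evaluate $\|\cP\|_\diamond$, I would use a finite unitary one-design subgroup (for instance the Heisenberg--Weyl group on each factor) together with Lemma~\ref{lem:covariant_diamond_reduction}. This gives $\|\cP\|_\diamond=\|J_\cP\|_1/d^2$. From the projector expansion, the eigenvalues of $J_\cP/d^2$ are determined, and summing their absolute values with multiplicities $1$, $d^2-1$, $d^2-1$, $(d^2-1)^2$ gives $2d^2-3+2/d^2$. Equivalently, the paper's test state $\sigma$ achieves this value. I expect the main care to lie in the uniqueness/twirling step: one must check that the constraints really pin all four coefficients, and that the diamond norm of the twirled map equals the optimum. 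The latter holds because twirling preserves $p_++p_-$ while the diamond norm is at least the Choi trace norm.
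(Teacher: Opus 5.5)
You have the exactness check, the explicit quasi-decomposition, and the twirling-plus-uniqueness step essentially as in the paper. The genuine gap is your evaluation of $\|\cP\|_\diamond$ through Lemma~\ref{lem:covariant_diamond_reduction}.

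\textbf{Why the lemma does not apply.} The lemma requires the input representation to be a one-design on the full input space $\cH_S\ox\cH_{P_1}\ox\cH_{P_2}$, of dimension $d^3$. Because $\cM$ contracts $S$ with $P_1$ through $\Omega_{SP_1}$, $\cP$ is not covariant under independent rotations of $S$ and $P_1$, so ``Heisenberg--Weyl on each factor'' is not a symmetry. The symmetries you do have are of the form $U\ox U^*\ox V$. Every $U\ox U^*$ fixes $\dket{I_d}$, so twirling leaves $\Phi_{SP_1}\ox I_{P_2}/d$ invariant instead of sending it to $I/d^3$.

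\textbf{Why the value is wrong.} Your own projector expansion gives
$J_\cP=(d^3-d+\tfrac1d)P_0\ox P_0+(\tfrac1d-d)P_0\ox P_1+\tfrac1d P_1\ox P_0+\tfrac1d P_1\ox P_1$.
Hence $\|J_\cP\|_1=(3d^4-4d^2+2)/d$. The maximally entangled input on all of $SP_1P_2$ therefore gives only $\|J_\cP\|_1/d^3=3-4/d^2+2/d^4$; with your normalization it gives $3d-4/d+2/d^3$. At $d=2$ these are $17/8$ and $17/4$, not $11/2$. In fact no covariance group of $\cP$ can be a one-design, since the lemma would then force $\|\cP\|_\diamond=3-4/d^2+2/d^4<2d^2-3+2/d^2$. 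So your ``equivalently'' is false: the Choi-state value is a strictly weaker lower bound.

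\textbf{How to close the argument.} Use the test state you mention in passing, $\sigma=\Phi_{SP_1}\ox\Phi_{P_2E}$. It places $SP_1$ in the sector where $\cM$ has full weight and entangles only $P_2$ with the reference. Its output $(\cP\ox\cI_E)(\sigma)=(d^{-2}-1)I_{S'E}+d\,\Omega_{S'E}$ has one eigenvalue $d^2-1+d^{-2}$ and $d^2-1$ eigenvalues $d^{-2}-1$. Its trace norm is therefore $2d^2-3+2/d^2$, matching your explicit decomposition. Drop the appeal to Lemma~\ref{lem:covariant_diamond_reduction}; with this witness your proof becomes exactly the paper's.

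\textbf{Two small corrections.}
\begin{enumerate}[(i)]
\item The twirled map attains the optimum not because ``the diamond norm is at least the Choi trace norm.'' By uniqueness, the twirled quasi-decomposition of weight $\nu_1$ is a decomposition of $\cP$ itself, so $\|\cP\|_\diamond\le\nu_1\le\|\cP\|_\diamond$.
\item The programming constraint pins the coefficient of $\Omega_{SP_1}\ox\Omega_{P_2S'}$ and that of $I_{SP_1}\ox\Omega_{P_2S'}$; the latter needs a non-unital channel. It also fixes the combination $d\alpha+\eta$, not the coefficient of $\Omega_{SP_1}\ox I$ on its own. Trace preservation then fixes the remaining coefficients.
\end{enumerate}
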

\begin{proof}
    Notice that Lemma~\ref{lem:unitary_invariant_in_cptp} and
    Theorem~\ref{thm:G_symmetrisation}, applied with
    $G=\Op{SU}(d)\times\Op{SU}(d)$, allow every feasible
    quasi-decomposition in Eq.~\eqref{eq:onecopy_SDP} to be Haar-averaged
    into the commutant of $U\ox U^*\ox V\ox V^*$. Since
    $U\ox U^*$ decomposes into the trivial and adjoint sectors,
    Schur's lemma gives
    \begin{equation*}
        J_{\cP} = \a\, I_{d^2}\ox I_{d^2} + \b\, I_{d^2}\ox \Omega_d + \eta\, \Omega_d\ox I_{d^2} + \delta\, \Omega_d\ox \Omega_d,
    \end{equation*}
    for scalars $\a,\b,\eta,\delta$, where $\Omega_d=\dketbra{I_d}{I_d}$.

    To prove the uniqueness, we re-derive the linear constraints using the above expression.
    Trace preservation $\tr_{S'}J_{\cP}=I_{SP}$ is equivalent to
    \begin{equation}\label{eq:trace_preserving_condition_from_decomp}
        (d\alpha + \beta) I_{d^2}\ox I_d + (d\eta + \delta)\Omega_d \ox I_d = I_{d^2}\ox I_d \;\Rightarrow\;
        \begin{cases}
            d\alpha + \beta = 1,\\
            d\eta + \delta = 0.
        \end{cases}
    \end{equation}
    For every quantum channel $\cE$, the four basis terms satisfy
    \begin{equation*}
    \begin{cases}
        \tr_P[(I_{d^2}\ox I_{d^2})(I_S\ox J_{\cE}^T \ox I_{S'})] = d\,I_{SS'},\\
        \tr_P[(I_{d^2}\ox \Omega_d)(I_S\ox J_{\cE}^T \ox I_{S'})] = I_{S}\ox \tr_{P_1}(J_{\cE}),\\
        \tr_P[(\Omega_d\ox I_{d^2})(I_S\ox J_{\cE}^T \ox I_{S'})] = I_{SS'},\\
        \tr_P[(\Omega_d\ox \Omega_d)(I_S\ox J_{\cE}^T \ox I_{S'})] = J_{\cE}.
    \end{cases}
    \end{equation*}
    Substitution into the exact-programming constraint gives
    \begin{equation*}
        (d\alpha + \eta) I_{SS'} + \beta\, I_S \ox \tr_{P_1}(J_{\cE}) + \delta\, J_{\cE} = d\,J_{\cE},\qquad \forall\,\cE\in\Op{CPTP}_d.
    \end{equation*}
    Restricting this identity to unital channels, for which
    $\tr_{P_1}(J_{\cE})=I_{S'}$, and using the fact that their Choi
    operators are not confined to the identity direction, yields
    \begin{equation*}
        \delta=d,\qquad d\alpha+\beta+\eta=0.
    \end{equation*}
    Combining with~\eqref{eq:trace_preserving_condition_from_decomp}
    gives $\eta=-1$ and $\delta=d$. The remaining condition is
    \begin{equation*}
        \beta\bigl(I_S\ox\tr_{P_1}(J_{\cE})-I_{SS'}\bigr)=0,
        \qquad \forall\,\cE\in\Op{CPTP}_d.
    \end{equation*}
    Since the all-channel family contains non-unital channels,
    $\beta=0$, and therefore $\alpha=1/d$. Thus the commutant
    constraints have a unique solution,
    \begin{equation*}
        J_{\cP} = \tfrac{1}{d}\,I_{d^2} \ox I_{d^2} - \Omega_d\ox I_{d^2} + d\,\Omega_d \ox \Omega_d.
    \end{equation*}

    Now we can derive the retriever action acting on the principal system.
    Inserting this $J_\cP$ into
    $\cP(X)=\tr_{SP}[J_\cP(X^T\ox I_{S'})]$ gives, for every
    $X\in\cB(\cH_S\ox\cH_P)$,
    \begin{equation}\label{eq:map_form_retrieval_arbitrary_channel}
        \cP(X) = \Bigl(\tfrac{\tr X}{d} - \tr(\cM(X))\Bigr) I_{d} + d\,\cM(X),
    \end{equation}
    where
    $\cM(X):=\tr_{SP_1}[(\Omega_{SP_1}\ox I_{P_2})X]$.
    This expression is linear. It is Hermiticity preserving because
    partial-trace cyclicity on $SP_1$ gives
    $\cM(X)^\dagger=\cM(X)$ whenever $X=X^\dagger$, and it is trace
    preserving since
    \begin{equation*}
        \tr[\cP(X)]
        =d\left(\frac{\tr X}{d}-\tr[\cM(X)]\right)
        +d\tr[\cM(X)]
        =\tr X.
    \end{equation*}
    For a density operator, $\tr X=1$. The defining constraint in
    Eq.~\eqref{eq:onecopy_SDP} then implies
    $\cP(\rho_S\ox J_{\cE}/d)=\cE(\rho_S)$ for every quantum channel
    $\cE$.

    \emph{Lower bound.}
    Evaluate the diamond norm on
    $\sigma=\Omega_{SP_1}/d\ox\Omega_{P_2E}/d$. Then
    \begin{equation*}
        (\cP\ox\cI_E)(\sigma) = \Bigl(\tfrac{1}{d}-d\Bigr)\frac{I_{d^2}}{d} + d\,\Omega_{P_2E},
    \end{equation*}
    and hence
    \begin{equation*}
        \|\cP\|_\diamond
        \ge
        \left|\tfrac{1}{d}\bigl(\tfrac{1}{d}-d\bigr)+d^2\right|
        +(d^2-1)\left|\tfrac{1}{d}\bigl(\tfrac{1}{d}-d\bigr)\right|
        =
        2d^2-3+\tfrac{2}{d^2}.
    \end{equation*}

    \emph{Upper bound and optimality.}
    Theorem~3 of Ref.~\cite{Regula2021operational} gives, for any
    trace-preserving Hermiticity-preserving map,
    \begin{equation}
        \|\cP\|_\diamond = \min\bigl\{\mu_+ + \mu_- \;\big|\; J_{\cP}=\mu_+J_{\cQ_+}-\mu_-J_{\cQ_-},\;
        \cQ_\pm\in\Op{CPTP}(\cH_S\ox\cH_P\to\cH_{S'}),\; \mu_\pm\ge 0\bigr\}.
    \end{equation}
    Set
    \begin{equation*}
        \mu_+=d^2-1+\frac{1}{d^2},\qquad
        \mu_-=d^2-2+\frac{1}{d^2},
    \end{equation*}
    and
    \begin{equation}
        \begin{cases}
            J_{\cQ_+} = \tfrac{1}{d} I_{d^2}\ox I_{d^2} - \tfrac{1}{d^2}\Omega_d\ox I_{d^2} + \tfrac{1}{d}\Omega_d\ox\Omega_d,\\[2pt]
            J_{\cQ_-} = \tfrac{1}{d} I_{d^2}\ox I_{d^2} + \tfrac{1}{d^2(d^2-1)}\Omega_d\ox I_{d^2} - \tfrac{1}{d(d^2-1)}\Omega_d\ox\Omega_d.
        \end{cases}
    \end{equation}
    Direct substitution gives
    $J_\cP=\mu_+J_{\cQ_+}-\mu_-J_{\cQ_-}$. Let
    $P_0:=\Omega_d/d$ and $P_1:=I_{d^2}-P_0$. In the sector basis
    $\{P_i\ox P_j\}_{i,j=0,1}$, the nonzero coefficients of
    $J_{\cQ_+}$ are
    \begin{equation*}
        d\;P_0\ox P_0+\frac{1}{d}P_1\ox P_0+\frac{1}{d}P_1\ox P_1,
    \end{equation*}
    and those of $J_{\cQ_-}$ are
    \begin{equation*}
        \frac{d}{d^2-1}P_0\ox P_1
        +\frac{1}{d}P_1\ox P_0
        +\frac{1}{d}P_1\ox P_1.
    \end{equation*}
    Thus $J_{\cQ_\pm}\ge0$. Their partial traces satisfy
    $\tr_{S'}J_{\cQ_\pm}=I_{SP}$, so
    $\cQ_\pm\in\Op{CPTP}(\cH_S\ox\cH_P\to\cH_{S'})$. Therefore
    \begin{equation*}
        \|\cP\|_\diamond \le \mu_++\mu_- = 2d^2-3+\tfrac{2}{d^2}.
    \end{equation*}
    The lower bound matches this quasi-decomposition, so
    $\|\cP\|_\diamond=2d^2-3+2/d^2$. Finally, take any feasible
    quasi-decomposition of a retriever for all quantum channels. Haar averaging as
    above produces a commutant-feasible quasi-decomposition with the same
    overhead, and the commutant constraints force its retriever Choi
    operator to be the unique $J_\cP$ derived above. Hence no feasible
    exact programming protocol can have overhead below
    $\|\cP\|_\diamond$, proving optimality and the stated value of
    $\nu_1(\cS)$.
\end{proof}

\section{One-copy programming protocols for unitary channels}\label{sec:unitary_appendix}

Programming a symmetry-group action is a central instance of
programmable quantum computing. This appendix treats
$\cS=\Op{Ad}_{\{U\}}$, where $\Op{Ad}$ denotes adjoint action by a
unitary set. The exact one-copy overhead
$\nu_1(\Op{Ad}_{\Op{SU}(d)})=d^2-1$ follows by reducing the
symmetrized programming SDP to a finite linear program.

Let $\cS=\Op{SU}(d)$ and let $d\ge2$. After
$G$-symmetrization, with $G=\Op{SU}(d)\times\Op{SU}(d)$, both
positive parts $J_+$ and $J_-$ may be chosen in the commutant of
$U\otimes U^*\otimes V\otimes V^*$. The performance constraints for
unitary programming are expressed through
\begin{equation}\label{sdp:primal_prob_vir_sym_J1J2}
\begin{aligned}
&\underline{\textbf{Primal Program}}\\
    \nu_1(\Op{Ad}_{\cS})=\min\;&p_++p_-\\
    {\rm s.t.}\;\;& J_{\cP}:=J_+-J_-,\\
    &\tr[J_{\cP}W_1]=d, \; \tr[J_{\cP}W_2]=d,\\
    &[J_{\pm},U \otimes U^*\otimes V\otimes V^*]=0,\,\forall U,V\in\Op{SU}(d),\\
    &J_+\geq 0,\,\tr_{S'}[J_+] =p_+ I_{SP},\\
    &J_-\geq 0,\,\tr_{S'}[J_-] =p_- I_{SP}.
\end{aligned}
\end{equation}
Here
\begin{equation*}
    W_1:=\int_{\Op{SU}(d)}
    \frac{\dketbra{U}{U}_{P}}{d}\otimes I_{SS'}\,dU,
    \qquad
    W_2:=\int_{\Op{SU}(d)}
    \frac{\dketbra{U\ox U^*}{U\ox U^*}_{SPS'}}{d^2}\,dU,
\end{equation*}
where $\dket{U}:=(I\ox U)\dket{I_d}$ is unnormalized.

\begin{lemma}\label{lem:LP_pga_SU_one_copy}
    For $d>1$ and $\cS = \Op{SU}(d)$, the minimal overhead for programming arbitrary unitary operations in $\Op{Ad}_{\Op{SU}(d)}$ is computed by the following linear program. In standard form, the primal LP is
    \[
    \begin{aligned}
    &\underline{\textbf{Linear Program}}\\
        \nu_1(\Op{Ad}_{\cS})=\min_{z\in\RR^8, z\geq 0}\;&c^T z \\
        {\rm s.t.}\;\;& Az = b,
    \end{aligned}
    \]
    and the dual LP reads,
    \[
    \begin{aligned}
    &\underline{\textbf{Dual Program}}\\
        \nu_1(\Op{Ad}_{\cS})=\max_{\lambda\in\RR^4}\;& -b^T \lambda \\
    {\rm s.t.}\;\;& A^T\lambda + c \geq 0.
    \end{aligned}
    \]
    The fixed constraint matrix $A$ and vectors $b$ and $c$ are
    \[
    A =
    \begin{pmatrix}
        \tfrac{1}{d^2} & \tfrac{d^2-1}{d^2} & \tfrac{d^2-1}{d^2} & \tfrac{(d^2-1)^2}{d^2} & -\tfrac{1}{d^2} & -\tfrac{d^2-1}{d^2} & -\tfrac{d^2-1}{d^2} & -\tfrac{(d^2-1)^2}{d^2}\\[2pt]
        \tfrac{1}{d^2} & 0 & 0 & \tfrac{d^2-1}{d^2} & -\tfrac{1}{d^2} & 0 & 0 & -\tfrac{d^2-1}{d^2}\\[2pt]
        \tfrac{1}{d} & \tfrac{d^2-1}{d} & -\tfrac{1}{d} & -\tfrac{d^2-1}{d} & 0 & 0 & 0 & 0\\[2pt]
        0 & 0 & 0 & 0 & \tfrac{1}{d} & \tfrac{d^2-1}{d} & -\tfrac{1}{d} & -\tfrac{d^2-1}{d}
    \end{pmatrix}, \;
    b =
    \begin{pmatrix}
        d \\
        d \\
        0 \\
        0
    \end{pmatrix}, \;
    c =
    \begin{pmatrix}
        0\\
        0\\
        1/d\\
        (d^2-1)/d\\
        0\\
        0\\
        1/d\\
        (d^2-1)/d
    \end{pmatrix}.
    \]
\end{lemma}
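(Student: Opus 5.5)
The plan is to derive the LP directly from the symmetrized primal program~\eqref{sdp:primal_prob_vir_sym_J1J2}. By Theorem~\ref{thm:G_symmetrisation} with $G=\Op{SU}(d)\times\Op{SU}(d)$, both $J_+$ and $J_-$ can be taken in the commutant of $U\ox U^*\ox V\ox V^*$. Here we must regroup the registers as $(SP_1)\ox(P_2S')$, with $U$ acting on $S,P_1$ and $V$ on $P_2,S'$. Using the sector projectors $P_0=\Omega_d/d$ and $P_1=I_{d^2}-P_0$ on each pair, Schur's lemma gives
\begin{equation*}
J_\pm=\sum_{i,j\in\{0,1\}} x^{\pm}_{ij}\,P_i\ox P_j .
\end{equation*}
Positivity of $J_\pm$ is then exactly $x^\pm_{ij}\ge0$, which yields eight nonnegative variables. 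I would order them as $z=(x^+_{00},x^+_{01},x^+_{10},x^+_{11},x^-_{00},\dots,x^-_{11})$.

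Next I would translate each constraint into a linear equation in $z$.

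\textbf{Trace preservation.} Compute $\tr_{S'}(P_i\ox P_j)$. One has $\tr_{S'}P_0=I_{P_2}/d$ and $\tr_{S'}P_1=(d-1/d)I_{P_2}$, since $P_1$ has trace $d^2-1$ and the identity-on-$P_2$ marginal. The condition $\tr_{S'}J_\pm=p_\pm I_{SP}$ becomes two requirements on each sign branch. First, the coefficient multiplying $P_0$ and $P_1$ on $SP_1$ must be equal. Second, the common value must be $p_\pm$. This gives
\begin{equation*}
\tfrac1d x_{00}+\tfrac{d^2-1}{d}x_{01}=\tfrac1d x_{10}+\tfrac{d^2-1}{d}x_{11}=p_\pm .
\end{equation*}
The first equality in each branch gives rows three and four of $A$, with $b_3=b_4=0$. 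The second lets us write $p_\pm=\tfrac1d x_{10}+\tfrac{d^2-1}{d}x_{11}$, which is exactly the cost vector $c$.

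\textbf{Programming constraints.} I would evaluate $W_1$ and $W_2$ by Haar integration. For $W_1$, note that $\int \dketbra{U}{U}\,dU=I_{d^2}/d$ on $P$, and tensor this with $I_{SS'}$. Then $\tr[(P_i\ox P_j)W_1]$ reduces to products of traces: $\tr[P_i\ox P_j]/d^2$ up to the normalization. This produces row one, with entries $\tr P_i\tr P_j/d^2$ and the signs $\pm$ for $J_+-J_-$. For $W_2$, the operator $\dket{U\ox U^*}$ on $SPS'$ factorizes, after regrouping, as $\dket{I}_{SP_1}$-type vectors twisted by $U$. The integral therefore lies in the commutant and pairs with $P_0\ox P_0$ and $P_1\ox P_1$ only. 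I would compute its overlaps explicitly using the Weingarten second moment $\int U\ox U^*\ox U^*\ox U\,dU$. This should produce the entries $1/d^2$ and $(d^2-1)/d^2$ of row two. Normalization is fixed by requiring the retriever of Theorem~\ref{thm:hptp_protocol_channels} to satisfy both equalities with right-hand side $d$.

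Finally I would argue equivalence. Every feasible point of the SDP twirls to a commutant point with the same cost, and for those points the $W_1,W_2$ conditions are equivalent to exact programming. The latter holds because, after twirling over the unitary family, the programming identity only needs to hold in its invariant components. The dual LP then follows by standard LP duality, with strong duality holding since the primal is feasible and bounded below by~$1$.

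\textbf{Main obstacle.} The hardest step is the $W_2$ computation and justifying that $\tr[J_\cP W_1]=\tr[J_\cP W_2]=d$ captures all programming constraints for covariant $J_\cP$. I expect the Weingarten cross terms to cancel because only the $P_0\ox P_0$ overlap survives the antisymmetric combinations. I would verify this at $d=2$ first to fix the constants.
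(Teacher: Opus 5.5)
Your reduction follows the paper: twirl each sign into the commutant of $U\ox U^*\ox V\ox V^*$, keep four nonnegative sector coefficients per sign, then translate constraint by constraint. Working directly with sector eigenvalues only skips the paper's intermediate basis $\{I\ox I,I\ox\Phi_d,\Phi_d\ox I,\Phi_d\ox\Phi_d\}$ and its change of variables $x=\mathbb{S}v$. Your trace-preservation rows, cost vector, $W_1$ row and strong-duality remark are correct. The last point actually justifies writing the dual optimum as $\nu_1$, which the paper's lemma proof leaves implicit.

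The gap is the step you flag as the main obstacle and then leave as an expectation. You never show that, for commutant $J_\pm$, rows one and two equal to $d$ are equivalent to $\cP(\rho\ox\pi_\cU)=\cU(\rho)$ for every $U\in\Op{SU}(d)$. That equivalence is what makes the LP equal $\nu_1$ rather than merely bound it. ``The programming identity only needs to hold in its invariant components'' is not an argument for it. Your accompanying expectation about $W_2$ is also off. The cross sectors drop out because the trivial and adjoint components of $U\ox U^*$ are inequivalent (Schur orthogonality, which is how the paper does it, with no Weingarten calculus needed). However, both diagonal overlaps survive, with weights $1/d^2$ on $\Phi_d\ox\Phi_d$ and $(d^2-1)/d^2$ on $Q_d\ox Q_d$; it is not only the $P_0\ox P_0$ term.

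The missing step is a short contraction, and it makes the $W_2$ integral unnecessary, since the lemma as stated never mentions $W_1,W_2$. Use $\tr J_U=d$, $\tr_{P_1}J_U=\tr_{P_2}J_U=I_d$, and $\tr_{P_1P_2}[(\Omega_{SP_1}\ox\Omega_{P_2S'})(I_S\ox J_U^T\ox I_{S'})]=J_U$. Write $Q_d:=I-\Phi_d$, so your $P_0\ox P_0,\dots,P_1\ox P_1$ are $\Phi_d\ox\Phi_d,\dots,Q_d\ox Q_d$. The contraction $\tr_P[(\cdot)(I_S\ox J_U^T\ox I_{S'})]$ of these four sectors gives:
\begin{itemize}
\item $J_U/d^2$ for $\Phi_d\ox\Phi_d$;
\item $I_{SS'}/d-J_U/d^2$ for each of the two mixed sectors;
\item $(d-2/d)I_{SS'}+J_U/d^2$ for $Q_d\ox Q_d$.
\end{itemize}
Hence, with $y:=x^+-x^-$, the programming contraction $\tr_P[J_\cP(I_S\ox J_U^T\ox I_{S'})]$, which must equal $dJ_U$, is $aI_{SS'}+bJ_U$ with $U$-independent scalars
\[
a=\frac{y_{01}+y_{10}}{d}+\Bigl(d-\frac{2}{d}\Bigr)y_{11},\qquad b=\frac{y_{00}-y_{01}-y_{10}+y_{11}}{d^2}.
\]
A direct check shows that rows one and two of $A$, applied to $z$, are exactly $da+b$ and $a/d+b$. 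Since $d\neq 1/d$, both equal $d$ if and only if $a=0$ and $b=d$. Because $I_{SS'}$ and the rank-one $J_U$ are linearly independent, $a=0$, $b=d$ is exactly exact programming of every unitary.

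This proves both directions at once and also derives the right-hand sides $b_1=b_2=d$. You therefore need neither the consistency check against the retriever of Theorem~\ref{thm:hptp_protocol_channels} nor a $d=2$ test to fix constants. The paper reaches the same conclusion in its $q=v-w$ coordinates: it solves rows one and two together with the $\Phi_d\ox I$ trace condition to get $q_3=-d$ and $q_4=d^3$, then substitutes into the programming contraction.
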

\begin{proof}
    Put $Q_d:=I_{d^2}-\Phi_d$. With the normalizations in
    \eqref{sdp:primal_prob_vir_sym_J1J2}, Haar integration gives
    \begin{equation*}
        W_1=\frac{I_{SPS'}}{d^2},\qquad
        W_2=\frac{Q_d\ox Q_d}{d^2(d^2-1)}+
        \frac{\Phi_d\ox\Phi_d}{d^2},
    \end{equation*}
    where the first tensor factor is on $SP_1$ and the second one on
    $P_2S'$. The first identity follows from
    $\int\dketbra{U}{U}\,dU=I_P/d$. For the second, regroup the four
    registers as $(SP_1):(P_2S')$ and write $R(U):=U\ox U^*$. With
    the convention $\dket{A}=(I\ox A)\dket{I}$, its vectorization in
    the physical order $[S,P_1,P_2,S']$ is
    \begin{equation*}
        \dket{R(U)}
        =\sum_{i,j,k,\ell}U_{ki}\,\overline{U_{\ell j}}\,
        \ket{i}_S\ket{j}_{P_1}\ket{k}_{P_2}\ket{\ell}_{S'}.
    \end{equation*}
    Thus this vector is precisely the regrouped
    $\dket{U\ox U^*}$ appearing in $W_2$. The representation is
    multiplicity-free and decomposes as
    \begin{equation*}
        R\simeq \mathbf{1}\oplus\mathrm{Ad},\qquad
        P_0=\Phi_d,\quad P_1=Q_d,\qquad
        m_0=1,\quad m_1=d^2-1,
    \end{equation*}
    where $P_\mu$ projects onto the corresponding irreducible sector
    and $m_\mu=\tr P_\mu$. In orthonormal bases adapted to these two
    sectors, Schur orthogonality reads
    \begin{equation*}
        \int_{\Op{SU}(d)}
        [R_\mu(U)]_{ab}\,\overline{[R_\nu(U)]_{cd}}\,dU
        =\delta_{\mu\nu}\frac{\delta_{ac}\delta_{bd}}{m_\mu},
        \qquad \mu,\nu\in\{0,1\}.
    \end{equation*}
    Vectorizing this identity therefore gives
    \begin{equation*}
        \int_{\Op{SU}(d)}\dketbra{R(U)}{R(U)}\,dU
        =(P_0^T)_{SP_1}\ox(P_0)_{P_2S'}
        +\frac{(P_1^T)_{SP_1}\ox(P_1)_{P_2S'}}{d^2-1}.
    \end{equation*}
    The transpose on the input projector comes from the vectorization
    convention. In the computational basis, $P_0=\Phi_d$ and
    $P_1=Q_d$ are real symmetric, so it may be dropped. Dividing by the
    factor $d^2$ in the definition of $W_2$ yields the displayed formula.

    The commutant condition restricts the two positive parts to
    \begin{equation*}
    \begin{aligned}
        J_+&=v_1I\ox I+v_2I\ox\Phi_d+v_3\Phi_d\ox I+v_4\Phi_d\ox\Phi_d,\\
        J_-&=w_1I\ox I+w_2I\ox\Phi_d+w_3\Phi_d\ox I+w_4\Phi_d\ox\Phi_d.
    \end{aligned}
    \end{equation*}
    In the sector order
    $\Phi_d\ox\Phi_d,\Phi_d\ox Q_d,Q_d\ox\Phi_d,Q_d\ox Q_d$,
    positivity is equivalent to $\mathbb{S}v\ge0$ and
    $\mathbb{S}w\ge0$. The two contractions with $W_1,W_2$ give
    $c_1^T(v-w)=d$ and $c_2^T(v-w)=d$, the partial-trace condition for
    $J_\pm$ gives the objective $c_3^T(v+w)$, and vanishing of the
    $\Phi_d\ox I$ partial-trace component gives $c_4^Tv=c_4^Tw=0$.
    Hence the SDP reduces to
\begin{equation}
\begin{aligned}
&\underline{\textbf{Linear Program}}\\
    \nu_1(\Op{Ad}_{\Op{SU}(d)})=\min_{v,w\in\RR^4}\;&c_3^T (v+w)\\
    {\rm s.t.}\;\;& c_1^T(v - w) = d,\\
    &c_2^T (v - w) = d,\\
    & \mathbb{S}v \geq 0, \, \mathbb{S}w \geq 0,\\
    & c_4^T v = 0, \; c_4^T w = 0
\end{aligned}
\end{equation}
where the fixed vectors $\bm{c}_{1,2,3,4}$ and matrix $\mathbb{S}$ are
\begin{equation}
\begin{aligned}
    c_1 &= (d^2, 1, 1, 1/d^2)^T, \\ c_2 &= (1, 1/d^2, 1/d^2, 1/d^2)^T,\\
    c_3 &= (d, 1/d, 0, 0)^T, \\
    c_4 &= (0, 0, d, 1/d)^T, \\
    \mathbb{S} &=
    \begin{pmatrix}
        1 & 1 & 1 & 1\\
        1 & 0 & 1 & 0\\
        1 & 1 & 0 & 0\\
        1 & 0 & 0 & 0
    \end{pmatrix}.
\end{aligned}
\end{equation}
To see that these scalar constraints are exactly the unitary-programming
constraints, set $q:=v-w$. They imply
$c_4^Tq=0$, $c_1^Tq=d$, and $c_2^Tq=d$, hence
\begin{equation*}
    q_3=-d,\qquad q_4=d^3,\qquad d^2q_1+q_2=d,\qquad
    dq_1+\frac{q_2+q_3}{d}=0.
\end{equation*}
For every unitary $U$, a direct contraction of the Choi action gives
\begin{equation*}
    \tr_P[J_\cP(I_S\ox J_U^T\ox I_{S'})]
    =\Bigl(dq_1+\frac{q_2+q_3}{d}\Bigr)I_{SS'}
    +\frac{q_4}{d^2}J_U
    =dJ_U,
\end{equation*}
which is precisely exact programming on $\Op{Ad}_{\Op{SU}(d)}$.

Introduce the change of variables $x := \mathbb{S}v$,
$y := \mathbb{S}w$, and $z := (x,y)^T \in \RR^8$. Since
$c_j^T v = c_j^T \mathbb{S}^{-1} x = \tilde c_j^T x$ with
$\tilde c_j := \mathbb{S}^{-T} c_j$, the linear program takes the
\textit{standard form}
\begin{equation}
\begin{aligned}
&\underline{\textbf{Linear Program}}\\
    \nu_1(\Op{Ad}_{\Op{SU}(d)})=\min_{z\in\RR^8}\;&c^T z \\
    {\rm s.t.}\;\;& Az = b,\\
    & z \geq 0,
\end{aligned}
\end{equation}
where
\begin{equation}
    A =
    \begin{pmatrix}
        \tilde c_1^T & -\tilde c_1^T\\
        \tilde c_2^T & -\tilde c_2^T\\
        \tilde c_4^T & 0\\
        0 & \tilde c_4^T
    \end{pmatrix}, \;
    b =
    \begin{pmatrix}
        d \\
        d \\
        0 \\
        0
    \end{pmatrix}, \;
    c =
    \begin{pmatrix}
        \tilde c_3\\
        \tilde c_3
    \end{pmatrix}.
\end{equation}
Because $\mathbb{S}$ is symmetric, $\tilde c_j = \mathbb{S}^{-1} c_j$, and direct computation yields
\begin{equation}
\begin{aligned}
    \tilde c_1 &= \bigl(\tfrac{1}{d^2},\, \tfrac{d^2-1}{d^2},\, \tfrac{d^2-1}{d^2},\, \tfrac{(d^2-1)^2}{d^2}\bigr)^T, \\
    \tilde c_2 &= \bigl(\tfrac{1}{d^2},\, 0,\, 0,\, \tfrac{d^2-1}{d^2}\bigr)^T,\\
    \tilde c_3 &= \bigl(0,\, 0,\, \tfrac{1}{d},\, \tfrac{d^2-1}{d}\bigr)^T, \\
    \tilde c_4 &= \bigl(\tfrac{1}{d},\, \tfrac{d^2-1}{d},\, -\tfrac{1}{d},\, -\tfrac{d^2-1}{d}\bigr)^T,
\end{aligned}
\end{equation}
yielding the displayed $A$, $b$, $c$.
\end{proof}

\begin{theorem}\label{thm:unitary_appendix}
    Let $d\ge2$ and $\cS:=\Op{SU}(d)$. Then
    $\nu_1(\Op{Ad}_{\cS})=d^2-1$.
\end{theorem}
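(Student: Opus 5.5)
The plan is to work entirely inside the linear program of Lemma~\ref{lem:LP_pga_SU_one_copy}. That lemma already encodes the $G$-symmetrization of Theorem~\ref{thm:G_symmetrisation} with $G=\Op{SU}(d)\times\Op{SU}(d)$, the four-sector commutant, and the exact unitary-programming constraints. It therefore suffices to exhibit two certificates: a primal point $z\ge0$ with $Az=b$ and $c^Tz=d^2-1$, and a dual point $\lambda\in\RR^4$ with $A^T\lambda+c\ge0$ and $-b^T\lambda=d^2-1$. Weak duality for the LP then gives $\nu_1(\Op{Ad}_{\Op{SU}(d)})=d^2-1$.

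\emph{Primal certificate.} I would work in the original variables $v,w$ of the lemma, writing $q=v-w$. As computed there, the equality constraints force
\begin{equation*}
q_3=-d,\qquad q_4=d^3,\qquad d^2q_1+q_2=d,
\end{equation*}
so only one free parameter remains in $q$, together with the splitting into $v$ and $w$. The objective $c_3^T(v+w)=d(v_1+w_1)+(v_2+w_2)/d$ depends only on the trivial-input sector coefficients, while $c_4^Tv=c_4^Tw=0$ ties $v_4=-d^2v_3$ and $w_4=-d^2w_3$. Since $q_4=d^3>0$ and $q_3=-d<0$, I would put the positive $\Omega\ox\Omega$ weight into $v$ and the negative part into $w$. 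Then I would choose $v_1,v_2,w_1,w_2$ minimally so that the four sector eigenvalues $\mathbb{S}v$ and $\mathbb{S}w$ are nonnegative. I expect, as in Theorem~\ref{thm:hptp_protocol_channels}, that $J_+$ is supported on $\Phi\ox\Phi$ together with the sectors with $Q_d$ on $SP_1$, and $J_-$ on $\Phi\ox Q_d$ plus the $Q_d$ sectors. Pinning the free parameter by making as many sector eigenvalues vanish as possible should give $p_+=d^2/2$ and $p_-=d^2/2-1$ (or a similar split with $p_+-p_-=1$), total $d^2-1$. I would then verify this directly by substituting into $Az=b$.

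\emph{Dual certificate.} For the dual I would use complementary slackness with this primal point. Requiring $(A^T\lambda+c)_i=0$ on every coordinate where $z_i>0$ gives a linear system for $\lambda\in\RR^4$, which should have a unique solution. I would then check the remaining inequalities $(A^T\lambda+c)_i\ge0$ on the zero coordinates, and compute $-b^T\lambda=-d(\lambda_1+\lambda_2)=d^2-1$. Because $A$ has a block structure (rows three and four act on $x$ and $y$ separately, rows one and two with opposite signs), the dual constraints pair up as $\tilde c_1\lambda_1+\tilde c_2\lambda_2+\tilde c_4\lambda_3+\tilde c_3\ge0$ and $-\tilde c_1\lambda_1-\tilde c_2\lambda_2+\tilde c_4\lambda_4+\tilde c_3\ge0$. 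These reduce to eight scalar inequalities in $d$.

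\emph{Main obstacle.} The hard step is getting the right support pattern. The free parameter in $q$ and the $v/w$ splitting admit several vertices, and the dual inequalities must hold uniformly in $d\ge2$. If my first guess of active sectors yields a dual vector violating an inequality (for instance on the $Q_d\ox\Phi_d$ coordinate), I would enumerate the other vertices of the small polytope, since there are only $\binom{8}{4}$ candidate bases. I would select the one whose complementary dual is feasible, checking the $d=2$ case numerically first to fix the pattern before doing the general algebra.
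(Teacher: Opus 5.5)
Your plan is the paper's proof: it works in the LP of Lemma~\ref{lem:LP_pga_SU_one_copy}, exhibits the primal point $z'=(\tfrac{d^3}{2},0,0,\tfrac{d^3}{2(d^2-1)},0,\tfrac{d(d^2-2)}{2(d^2-1)},\tfrac{d(d^2-2)}{2},0)^T$ (so $p_+=d^2/2$ and $p_-=d^2/2-1$, as you predicted) and the dual point $\lambda'=(\tfrac1d,-d,\tfrac{d^2-1}{d^2},\tfrac1{d^2})^T$, both of value $d^2-1$, and concludes by weak duality. Complementary slackness with $\lambda'$ forces every optimum to put $J_+$ only on $\Phi_d\ox\Phi_d$ and $Q_d\ox Q_d$, and $J_-$ only on $\Phi_d\ox Q_d$ and $Q_d\ox\Phi_d$. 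Your CPTP-inspired first guess, which places weight of $J_+$ on $Q_d\ox\Phi_d$ and of $J_-$ on $Q_d\ox Q_d$, therefore cannot be optimal, but your fallback vertex enumeration finds exactly $z'$.
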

\begin{proof}
    Lemma~\ref{lem:LP_pga_SU_one_copy} gives a primal LP and its dual.
    For any primal-feasible $z$ and dual-feasible $\lambda$, weak duality
    gives $-b^T\lambda\le c^Tz$. It is therefore enough to exhibit a
    feasible pair with common value $d^2-1$.

    The dual program reads
    \begin{equation}
        \max_{\lambda\in\RR^4}\; -b^T\lambda \quad \text{s.t. }\; A^T\lambda + c \geq 0.
    \end{equation}
    \emph{Dual feasibility (lower bound).} Set $\lambda' = \bigl(\tfrac{1}{d},\; -d,\; \tfrac{d^2-1}{d^2},\; \tfrac{1}{d^2}\bigr)^T$. For this choice,
    \begin{equation*}
        A^T\lambda' + c = \Bigl(0,\; \tfrac{d^2-1}{d},\; \tfrac{1}{d},\; 0,\; \tfrac{1}{d},\; 0,\; 0,\; \tfrac{d^2-1}{d}\Bigr)^T \geq 0,
    \end{equation*}
    confirming dual feasibility, and the objective
    $-b^T\lambda' = d^2-1$ gives a lower bound on the primal optimum.

    \emph{Primal feasibility (upper bound).} Set
    \begin{equation*}
        z' = \Bigl(\tfrac{d^3}{2},\; 0,\; 0,\; \tfrac{d^3}{2(d^2-1)},\; 0,\; \tfrac{d(d^2-2)}{2(d^2-1)},\; \tfrac{d(d^2-2)}{2},\; 0\Bigr)^T \geq 0.
    \end{equation*}
    Substitution gives $Az' = (d,d,0,0)^T = b$, so $z'$ is primal-feasible, with objective $c^Tz' = d^2-1$.

    The primal certificate has the same value, $c^Tz'=d^2-1$. Hence the
    primal and dual bounds coincide, proving
    $\nu_1(\Op{Ad}_{\cS}) = d^2-1$.
\end{proof}

\section{One-copy programming protocols for real channels}\label{sec:real_channel}

Write $\cL^\mathbb{R}$ for the set of all real channels, i.e., channels
whose Choi matrices are real in the fixed computational basis. The proof
uses orthogonal covariance under $\Op{O}(d)\times\Op{O}(d)$ for all
$d\ge2$. The corresponding two-copy commutant is larger than the unitary
commutant and is generated, on each paired tensor factor, by
$I$, the swap $\mathbb{F}$, and the unnormalized maximally entangled
operator $\Omega_d=\dketbra{I_d}{I_d}$.

\begin{lemma}\label{lem:O_invariant_in_real_channels}
    Let $\cE\in\cL^\mathbb{R}(\cH_{S} \rightarrow \cH_{S'})$ with $\cH_{S}\simeq \cH_{S'}$ and $\dim[\cH_{S}] = d$. Then for any $U,V\in \Op{O}(d)$, the twisted Choi operator $(U\ox V)J_{\cE}(U\ox V)^{T}$ is the Choi matrix of some channel in $\cL^\mathbb{R}(\cH_{S}\rightarrow \cH_{S'})$.
\end{lemma}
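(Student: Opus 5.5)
The plan mirrors the proof of Lemma~\ref{lem:unitary_invariant_in_cptp}, with real orthogonal matrices replacing special unitaries. For $U,V\in\Op{O}(d)$ we have $U^T=U^{-1}=U^\dagger$, since the entries are real. Hence $(U\ox V)^T=(U\ox V)^\dagger$, and the twisted operator $J':=(U\ox V)J_\cE(U\ox V)^T$ is a unitary conjugation of $J_\cE$. Three properties must be checked in turn: positivity, the trace-preserving partial-trace condition, and reality of the entries in the computational basis.

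\emph{Positivity.} Since $J_\cE\geq0$ and $U\ox V$ is unitary, $J'=(U\ox V)J_\cE(U\ox V)^\dagger\geq0$.

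\emph{Trace preservation.} Write $J'=(U\ox I)\bigl[(I\ox V)J_\cE(I\ox V)^\dagger\bigr](U\ox I)^\dagger$. The partial trace over the output register $S'$ is invariant under conjugation by $I\ox V$, and it commutes with conjugation by $U\ox I$. Therefore
\begin{equation*}
\tr_{S'}J'=U\,\tr_{S'}(J_\cE)\,U^T=UI_SU^T=I_S .
\end{equation*}
By the Choi correspondence, $J'$ is thus the Choi operator of a CPTP map $\cE'$. Concretely, $\cE'=\Op{Ad}_V\circ\cE\circ\Op{Ad}_{U^T}$, which one can confirm by inserting $U\ox I$ into $\dket{I_d}$ and using the transpose trick $(U\ox I)\dket{I_d}=(I\ox U^T)\dket{I_d}$.

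\emph{Reality.} This is the only step beyond the unitary case, and it is immediate. $J_\cE$ has real entries by hypothesis, and $U\ox V$ and $(U\ox V)^T$ have real entries because $U,V$ are real. A product of real matrices is real, so $\overline{J'}=J'$ entrywise, and hence $\cE'\in\cL^\mathbb{R}$. I expect no real obstacle here. The main point is to use the transpose rather than the adjoint in the twist, so that reality is manifestly preserved; for orthogonal matrices the two coincide, so no phase or conjugation issues arise, unlike a general unitary twist, which would leave the real family.
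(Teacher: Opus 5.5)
Your proof is correct and follows the paper's argument. Positivity comes from conjugating $J_\cE$ by the real, hence unitary, $U\ox V$; reality comes from the real entries of $U$, $V$, and $J_\cE$; and trace preservation comes from $\tr_{S'}J'=U\,\tr_{S'}(J_\cE)\,U^T=UU^T=I_S$. Your explicit identification $\cE'=\Op{Ad}_V\circ\cE\circ\Op{Ad}_{U^T}$ is also correct, and is a small extra that the paper's proof of this lemma does not spell out.
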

\begin{proof}
    Positivity is preserved because $J_\cE\ge 0$ forces $(U\ox V)J_\cE(U\ox V)^T\ge 0$. The entries remain real because $U,V$ are real orthogonal matrices. Trace preservation follows from
    \begin{equation*}
        \tr_{S'}\bigl((U\ox V)J_{\cE}(U\ox V)^{T}\bigr) = U\tr_{S'}(J_{\cE})U^{T} = UU^{T} = I_S.
    \end{equation*}
\end{proof}

By the $G$-symmetrization of Theorem~\ref{thm:G_symmetrisation}, applied
with $G=\Op{O}(d)\times\Op{O}(d)$ through
Lemma~\ref{lem:O_invariant_in_real_channels}, an optimal $J_{\cP}$ may be chosen
to commute with every $U\ox U\ox V\ox V$ for
$U,V\in\Op{O}(d)$. The Brauer commutant of the defining orthogonal
representation on two copies is $\Op{span}\{I,\mathbb{F},\Omega_d\}$.
Assigning the first factor to $SP_1$ and the second to $P_2S'$, we obtain
the following form for the retriever Choi operator.
\begin{equation*}
    J_{\cP} = \sum_{a,b\in\{I,\mathbb{F},\Omega_d\}} \a_{ab}\, a\ox b
    = \begin{aligned}[t]
        &\a_{11}\,I\ox I + \a_{12}\,I\ox \mathbb{F} + \a_{13}\,I\ox\Omega_d\\
        +\;&\a_{21}\,\mathbb{F}\ox I + \a_{22}\,\mathbb{F}\ox \mathbb{F} + \a_{23}\,\mathbb{F}\ox\Omega_d\\
        +\;&\a_{31}\,\Omega_d\ox I + \a_{32}\,\Omega_d\ox \mathbb{F} + \a_{33}\,\Omega_d\ox\Omega_d,
    \end{aligned}
\end{equation*}
for real coefficients $\a_{ij}$. Trace preservation gives, by comparing
the coefficients of $I\ox I$, $\mathbb{F}\ox I$, and $\Omega_d\ox I$
after tracing out $S'$,
\begin{equation}\label{eq:trace_preserving_condition_from_decomp_real}
\begin{cases}
    d\a_{11} + \a_{12}+\a_{13} = 1,\\
    d\a_{21} + \a_{22}+\a_{23} = 0, \\
    d\a_{31} + \a_{32}+\a_{33} = 0.
\end{cases}
\end{equation}

The feasibility constraint is next imposed on $J_{\cP}$. For $\cE\in\cL^\mathbb{R}$, the Choi matrix $J_\cE$ is Hermitian and real, hence real symmetric. In particular, $J_\cE^T = J_\cE$, and $\tr_{P_1} J_\cE$ is again real symmetric. These facts are used freely in what follows. Partial traces of $a\ox b$ against $I_S\ox J_\cE^T\ox I_{S'}$ collapse to five distinct combinations,
\begin{equation*}
\begin{cases}
    \tr_P[(I\ox I)(I_S\ox J_{\cE}^T\ox I_{S'})] = dI_{SS'},\\
    \tr_P[(I\ox \Omega_d)(I_S\ox J_{\cE}^T\ox I_{S'})] = \tr_P[(I\ox \mathbb{F})(I_S\ox J_{\cE}^T\ox I_{S'})] = I_S\ox\tr_{P_1} J_\cE,\\
    \tr_P[(\Omega_d\ox I)(I_S\ox J_{\cE}^T\ox I_{S'})] = \tr_P[(\mathbb{F}\ox I)(I_S\ox J_{\cE}^T\ox I_{S'})] = I_{SS'},\\
    \tr_P[(\Omega_d\ox\Omega_d)(I_S\ox J_{\cE}^T\ox I_{S'})] = \tr_P[(\mathbb{F}\ox \mathbb{F})(I_S\ox J_{\cE}^T\ox I_{S'})] = J_\cE,\\
    \tr_P[(\mathbb{F}\ox\Omega_d)(I_S\ox J_{\cE}^T\ox I_{S'})] = \tr_P[(\Omega_d\ox \mathbb{F})(I_S\ox J_{\cE}^T\ox I_{S'})] = J_\cE^{T_S}.
\end{cases}
\end{equation*}
and substitution into the feasibility constraint of~\eqref{eq:onecopy_SDP} gives
\begin{equation*}
    (d\a_{11} + \a_{21}+\a_{31}) I_{SS'} + (\a_{12}+\a_{13})I_S \ox \tr_{P_1}(J_{\cE}) + (\a_{22}+\a_{33}) J_{\cE}+(\a_{23}+\a_{32})J_{\cE}^{T_S} = dJ_{\cE},
\end{equation*}
for all $\cE\in\cL^\mathbb{R}$.

Testing this identity on the subclass of real unital channels, where $\tr_{P_1} J_\cE = I_{S'}$, simplifies it to
\begin{equation*}
    (d\a_{11}+\a_{21}+\a_{31}+\a_{12}+\a_{13})I_{SS'} + (\a_{22}+\a_{33}) J_\cE + (\a_{23}+\a_{32}) J_\cE^{T_S} = dJ_\cE.
\end{equation*}
The coefficients of $J_\cE$ and $J_\cE^{T_S}$ can be separated by
local perturbations around the completely depolarizing real unital
channel $J_0=I_{P_1P_2}/d$. If $X$ is real symmetric and satisfies
$\tr_{P_1}X=\tr_{P_2}X=0$, then $J_0+\epsilon X$ is the Choi matrix of a
real unital channel for sufficiently small real $\epsilon$. Taking
$X_+=R\ox R$ with $R$ real symmetric traceless gives
$X_+^{T_{P_1}}=X_+$, whereas $X_-=K\ox K$ with $K$ real antisymmetric
gives $X_-^{T_{P_1}}=-X_-$. The linear term in $\epsilon$ therefore
forces
\begin{equation*}
\begin{cases}
    d\a_{11}+\a_{21}+\a_{31}+\a_{12}+\a_{13} = 0,\\
    \a_{22}+\a_{33} = d,\\
    \a_{23}+\a_{32} = 0.
\end{cases}
\end{equation*}
where the scalar equation follows by substituting $J_0$. Combining this
with~\eqref{eq:trace_preserving_condition_from_decomp_real} yields
$\a_{21}+\a_{31}=-1$.

It remains to remove the coefficient multiplying
$I_S\ox\tr_{P_1}J_\cE$. The reset channel
$J_\cE=I_{P_1}\ox\ketbra{0}{0}_{P_2}$ is real and satisfies
$\tr_{P_1}J_\cE=d\ketbra{0}{0}\ne I_{S'}$. Hence the residual identity
forces $\a_{12}+\a_{13}=0$. Equation
\eqref{eq:trace_preserving_condition_from_decomp_real} then gives
$\a_{11}=1/d$.

All constraints assembled, three free parameters $\a_{12}, \a_{21}, \a_{22}$ remain, and
\begin{equation*}
\begin{aligned}
    J_{\cP} =\;& \tfrac{1}{d}\,I\ox I + \a_{12}\,I\ox \mathbb{F} - \a_{12}\,I\ox\Omega_d\\
    +\;& \a_{21}\,\mathbb{F}\ox I + \a_{22}\,\mathbb{F}\ox \mathbb{F} - (d\a_{21}+\a_{22})\,\mathbb{F}\ox\Omega_d\\
    -\;& (1+\a_{21})\,\Omega_d\ox I + (d\a_{21}+\a_{22})\,\Omega_d\ox \mathbb{F} + (d-\a_{22})\,\Omega_d\ox\Omega_d.
\end{aligned}
\end{equation*}
Thus the remaining optimization is $\min_{\a_{12},\a_{21},\a_{22}}\|\cP\|_\diamond$.

\begin{theorem}\label{thm:real_appendix}
    The optimal overhead for programming all real channels in dimension $d\ge 2$ is
    $\nu_1(\cL^\mathbb{R})=\frac{2d^4-d^2-2d+4}{3d^2}$.
\end{theorem}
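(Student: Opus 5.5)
The plan is to turn $\min_{\a_{12},\a_{21},\a_{22}}\|\cP\|_\diamond$ over the three-parameter family derived above into an explicit piecewise-linear convex minimization. We solve that problem with a matching primal point and a subgradient (dual) certificate.

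\emph{Step 1: this minimization gives the full optimum.} Theorem~\ref{thm:G_symmetrisation}, with $G=\Op{O}(d)\times\Op{O}(d)$ and Lemma~\ref{lem:O_invariant_in_real_channels}, shows that an optimal quasi-decomposition may be taken in the commutant. Its difference $J_\cP$ therefore lies in the displayed family. Conversely, by the Regula--Takagi--Gu identity, the diamond norm of any member of the family equals its minimal quasi-decomposition weight. Hence $\nu_1(\cL^\RR)$ is the minimum of $\|\cP\|_\diamond$ over $(\a_{12},\a_{21},\a_{22})$.

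\emph{Step 2: reduce the diamond norm to a trace norm.} Lemma~\ref{lem:covariant_diamond_reduction} needs a finite group, so we replace $\Op{O}(d)$ by its finite subgroup of signed permutation matrices. On the input $SP_1\ox P_2$ the relevant action is $U\ox U\ox V$. For signed permutations this action is a unitary one-design on $\cH_{SP}$: the pair $(U\ox U)$ averaged over signed permutations alone is not a one-design on $SP_1$, since $\Omega_d$ survives. I would therefore check this point carefully.

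If the one-design property fails, the fallback is to keep the full $\Op{O}(d)\times\Op{O}(d)$ twirl. The optimal input state can be averaged into the commutant of $U\ox U\ox V$ on $SP$. Its marginal is then a combination $\rho_{SP}=(x\,\Phi_d+y\,(I-\Phi_d)/(d^2-1))\ox I/d$, so the diamond norm becomes a one-parameter maximization over $x\in[0,1]$ of $\|(\cI\ox\cP)(\psi_x)\|_1$. In either case we obtain an expression whose spectrum is computable in the Brauer algebra.

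\emph{Step 3: compute the spectrum.} On each paired factor, $\Op{span}\{I,\mathbb{F},\Omega_d\}$ acts on three sectors:
\begin{itemize}
\item symmetric traceless, of dimension $(d+2)(d-1)/2$, with $(I,\mathbb{F},\Omega_d)\mapsto(1,1,0)$;
\item antisymmetric, of dimension $d(d-1)/2$, with $(1,-1,0)$;
\item trivial $\Phi_d$, of dimension $1$, with $(1,1,d)$.
\end{itemize}
Thus $J_\cP$ is diagonal on nine sectors with eigenvalues $\lambda_{xy}=\sum_{a,b}\a_{ab}e_x(a)e_y(b)$, which are affine in the three free parameters. The same holds for the output of the possibly $x$-weighted input. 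The objective becomes a sum $\sum_{x,y}m_{xy}|\lambda_{xy}|$ with multiplicities $m_{xy}$, normalized by $d^2$ (or a max over $x$ of such sums).

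\emph{Step 4: solve the convex problem.} This is the main obstacle. I would first guess the sign pattern of the nine eigenvalues at the optimum, using the $\Op{SU}(d)$ solution of Theorem~\ref{thm:hptp_protocol_channels} embedded at $\a_{12}=\a_{21}=\a_{22}=0$ as a starting point together with small-$d$ numerics. I expect several eigenvalues to vanish at the optimum, which is where the $2/3$ leading coefficient arises by balancing symmetric and antisymmetric multiplicities.

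Fixing that pattern gives a linear program in three variables. The KKT/subgradient condition requires multipliers $s_{xy}\in[-1,1]$ with $s_{xy}=\mathrm{sgn}\lambda_{xy}$ on nonzero eigenvalues such that $\sum m_{xy}s_{xy}\,\partial\lambda_{xy}/\partial\a=0$. These multipliers form the dual certificate, and they are equivalently a witness input/observable achieving the lower bound. Exhibiting the primal point together with feasible multipliers and evaluating the objective should yield $\frac{2d^4-d^2-2d+4}{3d^2}$. The case $d=2$, where the antisymmetric sector has dimension one, is checked separately.
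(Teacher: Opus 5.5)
Your Step~2 is where the argument breaks, and neither branch repairs it. You are right that Lemma~\ref{lem:covariant_diamond_reduction} cannot be applied to $\cP$ itself. For every orthogonal $W$, $W\ox W$ fixes $\Omega_d$ on $SP_1$, so no subgroup acts as a one-design on $SP_1P_2$.

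Write $E_0,E_1,E_2$ for the three sectors on $SP_1$, $\lambda_{\mu\nu}$ for the eigenvalue of $J_\cP$ on $E_\mu\ox E_\nu$, and $N_\mu:=\sum_\nu\dim E_\nu\,|\lambda_{\mu\nu}|$.
\begin{itemize}
\item \emph{The nine-sector sum.} It is the normalized-Choi value $d^{-3}\|J_\cP\|_1=\sum_\mu(\dim E_\mu/d^2)\,N_\mu/d$. This weighted average only lower-bounds $\|\cP\|_\diamond$. Its minimum over $(\a_{12},\a_{21},\a_{22})$ is strictly below $\frac{2d^4-d^2-2d+4}{3d^2}$: at the optimizer $\a^\star$ of the true problem, no choice of subgradients makes $\sum_\mu\dim E_\mu\,\partial N_\mu$ vanish.
\item \emph{The fallback.} It uses the wrong commutant. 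For $U\in\Op{O}(d)$, the $U\ox U$-invariant operators on $SP_1$ are $\operatorname{span}\{I,\mathbb{F},\Omega_d\}$, the same three-sector algebra you use in Step~3. The invariant marginals are therefore $\sum_\mu w_\mu P_{E_\mu}/\dim E_\mu\ox I/d$, with $(w_0,w_1,w_2)$ ranging over a two-dimensional simplex. Your family $x\Phi_d+y(I-\Phi_d)/(d^2-1)$ is the unitary-group ansatz, which ties the weights of $E_1$ and $E_2$ to their dimensions. On it the output norm is $\max\{N_0,(\dim E_1\,N_1+\dim E_2\,N_2)/(d^2-1)\}/d$, again only a lower bound on $\|\cP\|_\diamond$.
\end{itemize}
As it happens, the optimal multipliers can be chosen with $\omega_1:\omega_2=\dim E_1:\dim E_2$, so minimizing your restricted objective returns the right number. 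But this proves only a lower bound on $\nu_1(\cL^\mathbb{R})$. The achievability half, $\|\cP_{\a^\star}\|_\diamond\le\frac{2d^4-d^2-2d+4}{3d^2}$, is not established.

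The missing idea is the classical-control structure the paper uses. The operators $I,\mathbb{F},\Omega_d$ on $SP_1$ commute and share a real eigenbasis $\{\ket{e_i}\}$. Hence $\tr[O_s\ketbra{e_i}{e_j}]=0$ for $i\ne j$, where $O_s\in\{I,\mathbb{F},\Omega_d\}$ has eigenvalue $\lambda_{s\mu}$ on $E_\mu$. So $\cP$ pinches $SP_1$ and acts as $X\mapsto\sum_\mu\sum_{\ket{e_i}\in E_\mu}\cQ_\mu(\bra{e_i}X\ket{e_i})$. Here $\cQ_\mu\colon P_2\to S'$ is the HPTP map with Choi operator $\sum_{s,t}r_{st}\lambda_{s\mu}O_t$.

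Pinching contractivity, together with inputs supported on a single $E_\mu$, gives $\|\cP\|_\diamond=\max_\mu\|\cQ_\mu\|_\diamond$. Lemma~\ref{lem:covariant_diamond_reduction} does apply to each $\cQ_\mu$: its input is the single qudit $P_2$, on which signed permutations form a one-design. Hence $\|\cQ_\mu\|_\diamond=N_\mu/d$. Equivalently, your fallback is repaired by the three-weight marginal: pinching makes the output norm $\sum_\mu w_\mu N_\mu/d$, which is maximized at a vertex.

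The problem is therefore the minimax $\frac1d\min_{\a}\max\{N_0,N_1,N_2\}$, not a single trace norm. Its optimum equalizes $N_0=N_1=N_2$, with $\lambda_{01}=0$ supplying the kink. For $d\ge3$ it is certified by the KKT multipliers $\omega=\bigl(d^2+2,(d+2)(d-1),d(d-1)\bigr)/(3d^2)$ and subgradient parameter $s=-(d^2-2)/(d^2+2)$. At $d=2$ the extra zero $\lambda_{20}=0$ must also be handled.

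Your proposal stops at ``should yield'' without producing $\a^\star$ or these multipliers. That certificate is the substance of the stated value.
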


\begin{proof}
    The three-parameter family $J_{\cP}(\a_{12},\a_{21},\a_{22})$ derived above is built from tensor products of the commuting operators $I$, $\mathbb{F}$ (SWAP) and $\Omega_d$ (the unnormalized maximally entangled operator). These operators share a common eigenbasis. The programming problem therefore decomposes along that basis, and the diamond norm reduces to three trace-norm contributions. The resulting optimization is a convex three-variable problem amenable to KKT analysis.

    We first diagonalize $I$, $\mathbb{F}$ and $\Omega_d$ on
    $\cH_d\ox\cH_d$. They commute pairwise and admit a common
    \emph{real} orthonormal eigenbasis. Grouping basis vectors by their
    joint eigenvalues produces three mutually orthogonal real subspaces.

    The first is the one-dimensional span of the maximally entangled state,
    \begin{equation*}
        E_0 := \Op{span}\{\dket{I_d}\},\qquad \dim E_0 = 1.
    \end{equation*}
    The second, the symmetric traceless subspace $E_1$, is spanned by the diagonal traceless vectors
    \begin{equation*}
        \ket{e_k} := \tfrac{1}{\sqrt{k(k-1)}}\Bigl(\textstyle\sum_{i=1}^{k-1}\ket{ii} - (k-1)\ket{kk}\Bigr),\qquad k=2,\dots,d,
    \end{equation*}
    together with the symmetrized off-diagonal states $\ket{e_{ij}^+} := \tfrac{1}{\sqrt{2}}(\ket{ij}+\ket{ji})$, where $1\le i<j\le d$. A direct count gives $\dim E_1 = (d+2)(d-1)/2$.
    The third subspace is the antisymmetric one,
    \begin{equation*}
        E_2 := \Op{span}\bigl\{\ket{e_{ij}^-} := \tfrac{1}{\sqrt{2}}(\ket{ij}-\ket{ji}) : 1\le i<j\le d\bigr\},\qquad \dim E_2 = d(d-1)/2.
    \end{equation*}
    These three subspaces together exhaust $\cH_d\ox\cH_d$, since $1 + (d+2)(d-1)/2 + d(d-1)/2 = d^2$.

    On each $E_\mu$, the operators $I$, $\mathbb{F}$ and $\Omega_d$ act as scalars, and the nine resulting eigenvalues appear in Table~\ref{tab:eigen}. They follow immediately from the definitions. The operator $I$ is the identity throughout. The operator $\mathbb{F}$ has eigenvalue $+1$ on symmetric states ($E_0\oplus E_1$) and $-1$ on antisymmetric states ($E_2$). The operator $\Omega_d=\dketbra{I_d}{I_d}$ projects onto $\dket{I_d}$, vanishes on its orthogonal complement, and has eigenvalue $d$ on $E_0$.

    \begin{table}[h]
    \centering
    \renewcommand{\arraystretch}{1.15}
    \begin{tabular}{lccc}
    \toprule
           & $E_0$ & $E_1$              & $E_2$            \\
           & (MES) & (sym.\ traceless)  & (antisymmetric)  \\
    \midrule
    $\dim$ & $1$   & $(d+2)(d-1)/2$     & $d(d-1)/2$       \\
    \midrule
    $I$    & $1$   & $1$                & $1$              \\
    $\mathbb{F}$ & $1$   & $1$                & $-1$             \\
    $\Omega_d$ & $d$   & $0$                & $0$              \\
    \bottomrule
    \end{tabular}
    \caption{Dimensions and joint eigenvalues of the identity $I$, the swap $\mathbb{F}$, and the unnormalized maximally entangled operator $\Omega_d=\dketbra{I_d}{I_d}$ on the three common eigenspaces of $\cH_d\ox\cH_d$.}
    \label{tab:eigen}
    \end{table}

    In this eigenbasis, the programming map $\cP$
    decomposes into a classical mixture of three reduced
    Hermitian-preserving maps, one for each eigenspace. These reduced
    maps need not be completely positive and are therefore not physical
    channels. Set $O_1,O_2,O_3 := I,\mathbb{F},\Omega_d$ and let $\lambda_{s\mu}$ denote the scalar by which $O_s$ acts on $E_\mu$, as read off from Table~\ref{tab:eigen}. Expand the Choi operator as $J_{\cP} = \sum_{s,t} r_{st}\,O_s\ox O_t$ and a generic input as $\rho_{SP_1P_2}=\sum_{ij}\alpha_{ij}\ketbra{e_i}{e_j}\ox\tau_{ij}$ in the joint eigenbasis $\{\ket{e_i}\}$ of the $SP_1$ system. The Choi formula then gives
    \begin{equation}
    \begin{aligned}
        \cP(\rho_{SP_1P_2})
        &= \tr_{SP_1P_2}\bigl[J_{\cP}(\rho_{SP_1P_2}^T\ox I_{S'})\bigr]\\
        &= \sum_{ij}\sum_{st} r_{st}\,\tr[O_s\,\alpha_{ij}\ketbra{e_i}{e_j}]\,\tr_{P_2}\bigl[O_t(\tau_{ij}^T\ox I_{S'})\bigr]\\
        &\overset{(*)}{=} \sum_{\mu=0}^2 \sum_{\ket{e_i}\in E_\mu}\alpha_{ii}\,\tr_{P_2}\Bigl[\sum_{st} r_{st}\,\lambda_{s\mu}\,O_t(\tau_{ii}^T\ox I_{S'})\Bigr].
    \end{aligned}
    \end{equation}
    Step $(*)$ is pivotal because the $\ket{e_i}$ are simultaneous eigenvectors of every $O_s$. This gives $\tr(O_s\ketbra{e_i}{e_j}) = \lambda_{si}\delta_{ij}$, so only the diagonal $i=j$ contributions survive. Moreover, $\lambda_{si}$ depends on $\ket{e_i}$ only through the subspace $E_\mu$ that contains it, which reduces the inner sum to a function of $\mu$.

    Thus $\cP$ is a classically controlled HPTP map. Regarding the
    diagonal weights $\{\alpha_{ii}\}$ as a probability distribution on
    $\{0,1,\dots,d^2-1\}$ grouped by eigenspace, the programming map acts as
    \begin{equation*}
        \cP(\rho_{SP_1P_2})
        \;=\; \sum_{\mu=0}^{2}\,
               \underbrace{\Bigl(\sum_{\ket{e_i}\in E_\mu}\alpha_{ii}\Bigr)}_{\text{prob.\ of control outcome }\mu}\;\cdot\;
               \cQ_\mu\bigl(\bar\tau_\mu\bigr),
    \end{equation*}
    where $\bar\tau_\mu$ is the conditional average of $\tau_{ii}$ over $\ket{e_i}\in E_\mu$, and $\cQ_\mu$ is the reduced HPTP map with Choi operator
    \begin{equation*}
        J_{\cQ_\mu} \;=\; \sum_{s,t=1}^3 r_{st}\,\lambda_{s\mu}\,O_t,\qquad \mu=0,1,2.
    \end{equation*}
    In effect, $SP_1$ plays the role of a classical controller that detects the eigenspace containing the input and then routes the remaining qudit $P_2$ through the corresponding reduced HPTP map. Evaluating the sums against Table~\ref{tab:eigen} produces the three Choi operators
    \begin{equation}
        \begin{aligned}
            J_{\cQ_0}=& \bigl(\tfrac{1}{d}+\a_{21}-d-d\a_{21}\bigr)I+\bigl(\a_{12}+\a_{22}+d^2\a_{21}+d\a_{22}\bigr)\mathbb{F} +\bigl(-\a_{12}-d\a_{21}-\a_{22}+d^2-d\a_{22}\bigr)\Omega_d,\\
            J_{\cQ_1}=& \bigl(\tfrac{1}{d}+\a_{21}\bigr)I+\bigl(\a_{12}+\a_{22}\bigr)\mathbb{F}+\bigl(-\a_{12}-d\a_{21}-\a_{22}\bigr)\Omega_d,   \\
            J_{\cQ_2}=& \bigl(\tfrac{1}{d}-\a_{21}\bigr)I+ \bigl(\a_{12}-\a_{22}\bigr)\mathbb{F}+\bigl(-\a_{12}+d\a_{21}+\a_{22}\bigr)\Omega_d.
        \end{aligned}
    \end{equation}

    Let $H_d\subset\Op{O}(d)$ be the finite signed-permutation subgroup
    \begin{equation*}
        H_d:=\left\{DP\ \middle|\ D=\operatorname{diag}(s_1,\ldots,s_d),
        \ s_i\in\{\pm1\},\ P\text{ is a permutation matrix}\right\}.
    \end{equation*}
    Averaging over the diagonal signs removes every off-diagonal matrix
    element, and averaging over permutations equalizes the diagonal
    elements. Hence
    \begin{equation*}
        \frac{1}{|H_d|}\sum_{W\in H_d}WXW^\dagger
        =\frac{\tr X}{d}I_d
    \end{equation*}
    for every $X\in\cB(\cH_d)$, so the defining representation is an
    exact one-design. Each reduced map $\cQ_\mu$ inherits the orthogonal
    covariance of $\cP$ and is therefore covariant under $H_d$.
    Lemma~\ref{lem:covariant_diamond_reduction} expresses its diamond
    norm as the following trace norm of its Choi operator.
    \begin{equation}
        \|\cQ_\mu\|_\diamond = \|(\cI_R\ox\cQ_\mu)(\Phi_{RP_2})\|_1 = \frac{1}{d}\|J_{\cQ_\mu}\|_1.
    \end{equation}
    The trace norm admits a very explicit form. Because $J_{\cQ_\mu}$ is a linear combination of the commuting operators $I,\mathbb{F},\Omega_d$, it is diagonal in the joint eigenbasis, and its trace norm reduces to a weighted sum of absolute eigenvalues, with weights given by the subspace dimensions. Denoting by $\Lambda_{\mu\nu}$ the eigenvalue of $J_{\cQ_\mu}$ on $E_\nu$,
    \begin{equation}\label{eq:Nmu_definition}
        N_\mu \;:=\; \|J_{\cQ_\mu}\|_1
        \;=\; \dim E_0\,|\Lambda_{\mu 0}| \;+\; \dim E_1\,|\Lambda_{\mu 1}| \;+\; \dim E_2\,|\Lambda_{\mu 2}|.
    \end{equation}
    Reading the coefficients off the three reduced Choi operators and multiplying by the eigenvalues of Table~\ref{tab:eigen} produces the nine entries
    \begin{equation}
        \begin{aligned}
            \Lambda_{00}=&\bigl(\tfrac{1}{d}-d+d^3\bigr)+(1-d)\a_{12}+(1-d)\a_{21}+(1-d^2)\a_{22}, \\
            \Lambda_{01}=&\bigl(\tfrac{1}{d}-d\bigr)+\a_{12}+(1-d+d^2)\a_{21}+(1+d)\a_{22}, \\
            \Lambda_{02}=&\bigl(\tfrac{1}{d}-d\bigr)-\a_{12}+(1-d-d^2)\a_{21}-(1+d)\a_{22}, \\
            \Lambda_{10}=&\tfrac{1}{d}+(1-d)\a_{12}+(1-d^2)\a_{21}+(1-d)\a_{22}, \\
            \Lambda_{11}=&\tfrac{1}{d}+\a_{12}+\a_{21}+\a_{22}, \\
            \Lambda_{12}=&\tfrac{1}{d}-\a_{12}+\a_{21}-\a_{22}, \\
            \Lambda_{20}=&\tfrac{1}{d}+(1-d)\a_{12}+(d^2-1)\a_{21}+(d-1)\a_{22}, \\
            \Lambda_{21}=&\tfrac{1}{d}+\a_{12}-\a_{21}-\a_{22}, \\
            \Lambda_{22}=&\tfrac{1}{d}-\a_{12}-\a_{21}+\a_{22}.
        \end{aligned}
    \end{equation}

    It remains to pass from the three reduced maps to the full
    programming map. Let $X$ be an arbitrary operator on
    $RSP_1P_2$ with $\|X\|_1\le1$, and expand it in the common
    eigenbasis of $SP_1$ as
    $X=\sum_{ij}\ketbra{e_i}{e_j}_{SP_1}\ox X_{ij}$. The action of
    $\cP$ only sees the diagonal control blocks shown below.
    \begin{equation}
    \begin{aligned}
        (I_R\ox\cP)(X)
        =\sum_{\mu=0}^2\sum_{\ket{e_i}\in E_\mu}
        (I_R\ox\cQ_\mu)(X_{ii}).
    \end{aligned}
    \end{equation}
    Therefore
    \begin{equation*}
    \begin{aligned}
        \|(I_R\ox\cP)(X)\|_1
        &\le \sum_{\mu=0}^2\sum_{\ket{e_i}\in E_\mu}
        \|\cQ_\mu\|_\diamond\,\|X_{ii}\|_1\\
        &\le \max_\mu\|\cQ_\mu\|_\diamond
        \sum_i\|X_{ii}\|_1.
    \end{aligned}
    \end{equation*}
    If $\Delta$ denotes pinching in the basis $\{\ket{e_i}\}$ of the
    $SP_1$ control system, then
    $\sum_i\|X_{ii}\|_1=\|\Delta(X)\|_1\le\|X\|_1\le1$. Hence
    $\|\cP\|_\diamond\le\max_\mu\|\cQ_\mu\|_\diamond$. The reverse
    inequality is obtained by choosing the input supported on a single
    eigenspace $E_\mu$ and on a state attaining the diamond norm of
    $\cQ_\mu$. Thus
    \begin{equation*}
        \|\cP\|_\diamond=\max_\mu\|\cQ_\mu\|_\diamond
        =\frac{1}{d}\max\{N_0,N_1,N_2\}.
    \end{equation*}

    Combining the two reductions, the programming overhead becomes the finite-dimensional minimax
    \begin{equation}
        \nu_1(\cL^\mathbb{R}) \;=\; \frac{1}{d}\inf_{\a_{12},\a_{21},\a_{22}}\,\max\{N_0,N_1,N_2\}.
    \end{equation}
    Each $N_\mu$ is a sum of absolute values of affine functions of
    $(\a_{12},\a_{21},\a_{22})$, so the problem is convex after
    introducing an epigraph variable $t$ with constraints
    $N_\mu(\a)\le t$. The KKT conditions~\cite{boyd2004convex} are
    sufficient for optimality, and take the form
    \begin{equation}\label{eq:core_cond2real}
        \omega_\mu\ge0,\qquad
        \sum_{\mu=0}^2\omega_\mu=1,\qquad
        \omega_\mu(N_\mu(\a)-t)=0,\qquad
        0\in\sum_{\mu=0}^2\omega_\mu\,\partial N_\mu(\a),
    \end{equation}
    where the subgradient is taken with respect to
    $\a=(\a_{12},\a_{21},\a_{22})$.

    Consider the point
    \begin{equation}\label{eq:real_opt_point}
        \a_{12}^\star=-\frac{1}{d(d-1)},\quad
        \a_{21}^\star=\frac{-2d^3+d^2+2d-2}{3d^2(d-1)},\quad
        \a_{22}^\star=\frac{2d^4-2d^3+2}{3d^2(d-1)},
    \end{equation}
    for which direct substitution gives
    $N_0(\a^\star)=N_1(\a^\star)=N_2(\a^\star)$. The corresponding
    overhead is
    \begin{equation*}
        \|\cP\|_\diamond \;=\; \frac{N_1}{d} \;=\; \frac{2d^4-d^2-2d+4}{3d^2}.
    \end{equation*}
    It remains only to certify the subgradient condition in
    \eqref{eq:core_cond2real}. Inserting~\eqref{eq:real_opt_point} into
    the nine eigenvalues $\Lambda_{\mu\nu}$ gives
    \begin{equation}
        \begin{aligned}
            \Lambda_{00}=&\tfrac{2+d(d^3+d-1)}{3d}\quad \Lambda_{01}=0\quad\Lambda_{02}=\tfrac{4-2d(d^2+d-1)}{3d^2}
            \\
            \Lambda_{10}=&\tfrac{2}{d}+d-1 \quad
            \Lambda_{11}=\tfrac{2d(d(d-2)+2)-4}{3d(d-1)} \quad
            \Lambda_{12}=\tfrac{4-2d(d^2+d-1)}{3d^2} \\
            \Lambda_{20}=&\tfrac{2}{d}-d+1 \quad
            \Lambda_{21}=\tfrac{2}{3}\bigl(\tfrac{4}{d}-\tfrac{2}{d-1}-d+1\bigr)\quad
            \Lambda_{22}=\tfrac{2(d^4+d^2-d+2)}{3d^2(d-1)}
        \end{aligned}
    \end{equation}
    For $d\ge 3$ all signs are fixed. The eigenvalues $\Lambda_{00}$, $\Lambda_{10}$, $\Lambda_{11}$, and $\Lambda_{22}$ are positive. The eigenvalues $\Lambda_{02}$, $\Lambda_{12}$, $\Lambda_{20}$, and $\Lambda_{21}$ are negative, while $\Lambda_{01}$ is the only eigenvalue that vanishes at the stationary point. Away from that point each $|\Lambda_{\mu\nu}|$ is smooth, and the gradients of $N_\mu$ are linear combinations of the rows of the coefficient matrix of the $\Lambda$'s with signs determined by those inequalities. At the stationary point itself, the kink in $|\Lambda_{01}|$ contributes a subgradient component $s\in[-1,1]$ times the affine coefficient of $\Lambda_{01}$, where $s$ interpolates between left- and right-derivatives. Evaluating explicitly,
    \begin{equation}
    \begin{aligned}
        \nabla N_0 &= \tfrac{d-1}{2}\begin{pmatrix} s(d+2)+d-2\\ s(d^3+d^2-d+2)+(d^3+d^2-d-2)\\ (d+1)\bigl[\,s(d+2)+d-2\,\bigr]\end{pmatrix},\\
        \nabla N_1 &= d(d-1)(1,-1,1)^T,\qquad
        \nabla N_2 = d(d-1)(-1,-1,1)^T.
    \end{aligned}
    \end{equation}
    The stationarity equation
    $\omega_0\nabla N_0 + \omega_1\nabla N_1 + \omega_2\nabla N_2 = 0$ is
    solved by
    \begin{equation}
        s=-\frac{d^2-2}{d^2+2},\qquad
        \omega_0=\frac{d^2+2}{3d^2},\qquad
        \omega_1=\frac{(d+2)(d-1)}{3d^2},\qquad
        \omega_2=\frac{d(d-1)}{3d^2}.
    \end{equation}
    For every $d\ge 3$, these values satisfy
    $\omega_\mu\ge0$, $\sum_\mu\omega_\mu=1$, and $s\in[-1,1]$. Since all
    three $N_\mu$ are active at $\a^\star$, all KKT conditions hold.

    In the exceptional dimension $d=2$, both $\Lambda_{01}$ and $\Lambda_{20}$ vanish at the stationary point, producing two kinks. Let $r\in[-1,1]$ denote the subgradient parameter attached to $|\Lambda_{20}|$. One admissible certificate is
    \begin{equation}
        s=-\frac{3}{7},\qquad r=0,\qquad
        \omega_0=\frac{7}{15},\qquad
        \omega_1=\frac{2}{5},\qquad
        \omega_2=\frac{2}{15}.
    \end{equation}
    In this case one may take
    $g_0=\frac{1}{7}(-6,10,-18)^T$, $g_1=(2,-2,2)^T$, and
    $g_2=(-3,1,3)^T$ as subgradients of $N_0,N_1,N_2$, respectively.
    The weighted sum
    $\omega_0g_0+\omega_1g_1+\omega_2g_2$ vanishes, and the multipliers
    are nonnegative and sum to one. Thus the KKT certificate also applies
    at $d=2$, giving
    $(2d^4-d^2-2d+4)/(3d^2)\big|_{d=2} = 7/3$.
\end{proof}


\section{Supporting resource comparisons}\label{sec:family_comparison}

Main-text Table~\ref{tab:resource_axes} contrasts program dimension
with exact sampling overhead. Table~\ref{tab:prior_art_supp} separately surveys
representative physical, probabilistic, and virtual programming approaches.
The comparison uses diamond-distance error; its control of physical
channel dilations is quantified by continuity bounds for Stinespring
representations~\cite{Kretschmann2008continuity}.

\begin{table*}[t]
    \centering
    \caption{Separation from representative prior approaches.
    ``Exact'' denotes physical channel retrieval in the physical rows and
    exact observable reconstruction in the virtual rows. The listed
    optimality statements concern different resource variables and should
    not be compared numerically.}
    \label{tab:prior_art_supp}
    \vspace{2pt}
    \footnotesize
    \setlength{\tabcolsep}{2pt}
    \renewcommand{\arraystretch}{1.10}
    \begin{tabular*}{\textwidth}{@{\extracolsep{\fill}}llllll@{}}
        \toprule
        \priorcell{0.135\textwidth}{Approach} &
        \priorcell{0.115\textwidth}{Task} &
        \priorcell{0.175\textwidth}{Program encoding} &
        \priorcell{0.125\textwidth}{Retriever} &
        \priorcell{0.155\textwidth}{Guarantee} &
        \priorcell{0.215\textwidth}{Proven optimality} \\
        \midrule
        \priorcell{0.135\textwidth}{Universal physical gates~\cite{nielsen1997programmable,Kubicki2019resource,Yang2020optimal}} &
        \priorcell{0.115\textwidth}{All unitaries} &
        \priorcell{0.175\textwidth}{General optimized program} &
        \priorcell{0.125\textwidth}{Deterministic CPTP} &
        \priorcell{0.155\textwidth}{Exact impossible at finite dimension; $\varepsilon$-approximate} &
        \priorcell{0.215\textwidth}{Program-size bounds and an asymptotically optimal universal protocol} \\
        \priorcell{0.135\textwidth}{Probabilistic storage and retrieval~\cite{Vidal2002storing,sedlak2019optimal}} &
        \priorcell{0.115\textwidth}{All unitaries} &
        \priorcell{0.175\textwidth}{Memory generated from target uses} &
        \priorcell{0.125\textwidth}{Probabilistic physical comb} &
        \priorcell{0.155\textwidth}{Exact on success} &
        \priorcell{0.215\textwidth}{Optimal retrieval success probability} \\
        \priorcell{0.135\textwidth}{PBT and optimized programs~\cite{ishizaka2008,Christandl2021asymptotic,Banchi2020convex,Yoshida2026quantum,Yoshida2026erratum}} &
        \priorcell{0.115\textwidth}{Channels or isometries} &
        \priorcell{0.175\textwidth}{Choi/PBT resources or query-generated memory} &
        \priorcell{0.125\textwidth}{Deterministic CPTP} &
        \priorcell{0.155\textwidth}{Approximate physical retrieval} &
        \priorcell{0.215\textwidth}{PBT bounds, fixed-processor optimization, or asymptotic program cost} \\
        \priorcell{0.135\textwidth}{Covariant programming~\cite{Gschwendtner2021programmabilityof}} &
        \priorcell{0.115\textwidth}{Group-covariant channels} &
        \priorcell{0.175\textwidth}{Symmetry-compressed Choi program} &
        \priorcell{0.125\textwidth}{Deterministic CPTP} &
        \priorcell{0.155\textwidth}{Exact for input-irreducible actions} &
        \priorcell{0.215\textwidth}{Minimum program dimension} \\
        \priorcell{0.135\textwidth}{Virtual-map frameworks~\cite{Regula2021operational,Jiang2021physical,Takagi2024general}} &
        \priorcell{0.115\textwidth}{Specified map or resource task} &
        \priorcell{0.175\textwidth}{Task dependent} &
        \priorcell{0.125\textwidth}{Quasi-decomposition over physical operations} &
        \priorcell{0.155\textwidth}{Exact statistics or approximate resource conversion} &
        \priorcell{0.215\textwidth}{Optimal cost for the specified map or task} \\
        \priorcell{0.135\textwidth}{Programmable open systems~\cite{jing2025programmable}} &
        \priorcell{0.115\textwidth}{Lindbladian semigroups} &
        \priorcell{0.175\textwidth}{Time-varying $\pi_t$} &
        \priorcell{0.125\textwidth}{CPTP or HPTP} &
        \priorcell{0.155\textwidth}{Family dependent} &
        \priorcell{0.215\textwidth}{Structural laws and family-specific constructions} \\
        \priorcell{0.135\textwidth}{This work} &
        \priorcell{0.115\textwidth}{All $\Op{CPTP}_d$} &
        \priorcell{0.175\textwidth}{Fixed $\pi_\cE^{\ox k}$} &
        \priorcell{0.125\textwidth}{Target-independent HPTP} &
        \priorcell{0.155\textwidth}{Exact observable reconstruction} &
        \priorcell{0.215\textwidth}{Exact $\nu_1$ and sharp fixed-$d$ asymptotic law for $\nu_k$} \\
        \bottomrule
    \end{tabular*}
\end{table*}

For completeness, Table~\ref{tab:family_overheads_supp} compares the
scalings obtained here with representative restricted open-system
constructions from Ref.~\cite{jing2025programmable}. This supporting table
concerns how target restrictions change the prescribed program and quasiprobability
overhead, rather than the optimized physical-memory dimension used in
Table~\ref{tab:resource_axes}.

\begin{table}[h]
    \centering
    \caption{Program states and quasiprobability overheads for the channel
    families studied here and representative open-system dynamics from
    Ref.~\cite{jing2025programmable}. Here $\cD_p$ denotes the
    $d$-dimensional depolarising family with $p\in[0,1]$, and $K$ counts
    vertices of the invariant Pauli-channel polytope. The Hamiltonian entry
    is an achievable upper bound, not an optimum.}
    \label{tab:family_overheads_supp}
    \vspace{2pt}
    \setlength{\tabcolsep}{3pt}
    \renewcommand{\arraystretch}{1.18}
    \begin{tabular*}{\columnwidth}{@{\extracolsep{\fill}}lll@{}}
        \toprule
        Family $\cS$ & Program state & Overhead \\
        \midrule
        All $d$-dim channels & Choi program $\pi_\cE$ & $\Theta(d^2)$ \\[2pt]
        Unitary/unital channels & Choi program $\pi_\cE$ & $\Theta(d^2)$ \\[2pt]
        Real channels & Choi program $\pi_\cE$ & $\Theta(d^2)$ \\[2pt]
        Hamiltonian dynamics & $\cO(d)$-dim $\pi_t$ & $\cO(d)$ \\[2pt]
        Fully dissipative Pauli & $\mathcal{O}(K)$-dim $\pi_t$ & $\cO(1)$ \\[2pt]
        Depolarising family $\cD_p$ & $\mathcal{O}(1)$-dim $\pi_p$ & $\cO(1)$ \\
        \bottomrule
    \end{tabular*}
\end{table}

For the last three rows, the reported overhead is $2^\gamma$, where
$\gamma$ is the logarithmic quasiprobability cost of
Ref.~\cite{jing2025programmable}. The first three rows are
the exact results proved here, the Hamiltonian row is only an achievable
bound, and the two listed dissipative constructions have unit overhead.

\section{Properties of the programming overhead}\label{sec:property}
This appendix establishes structural properties of the one-copy programming
overhead $\nu_1$. Throughout,
$\varnothing\ne\cS\subseteq\Op{CPTP}(\cH_A\to\cH_B)$. We identify the signal systems
$S$ and $S'$ with $A$ and $B$, respectively. The program systems satisfy
$\cH_{P_1}\simeq\cH_A$ and $\cH_{P_2}\simeq\cH_B$, and
$d_A:=\dim\cH_A$. The arguments use the normalized Choi program state
$\pi_\cE=J_\cE/d_A$ and the quasi-quantum retriever convention fixed in
Appendix~\ref{appendix:notation}.
In particular, Lemmas~\ref{lem:set_monotonicity},
\ref{lem:aff_invariance}, and \ref{lem:tensor_submultiplicativity}
prove the structural laws stated in the main text.

We first record set monotonicity. Enlarging the target channel set cannot
reduce the optimal overhead.

\begin{lemma}[Set monotonicity]\label{lem:set_monotonicity}
    If $\varnothing\ne\cS_1\subseteq\cS_2\subseteq\Op{CPTP}(A\to B)$,
    then $\nu_1(\cS_1)\le\nu_1(\cS_2)$.
\end{lemma}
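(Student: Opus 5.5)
The argument is a feasible-set inclusion. I would compare the optimization problems in Eq.~\eqref{eq:supp_exact_overhead} for $k=1$ (equivalently the SDP~\eqref{eq:onecopy_SDP}) for the two families $\cS_1$ and $\cS_2$. The key observation is that the objective $\|\cP\|_\diamond$ does not depend on the target family. The family enters only through the constraints $\cP(\rho\ox\pi_\cE)=\cE(\rho)$, imposed for every $\cE$ in the family.

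First, let $\cP$ be any HPTP map feasible for $\cS_2$, meaning $\cP(\rho\ox\pi_\cE)=\cE(\rho)$ for all states $\rho$ and all $\cE\in\cS_2$. Since $\cS_1\subseteq\cS_2$, the same identity holds for all $\cE\in\cS_1$, so $\cP$ is feasible for $\cS_1$. This needs one small check: the program state $\pi_\cE=J_\cE/d_A$ depends only on $\cE$ and not on the ambient family. Hence the constraints attached to $\cS_1$ are literally a subset of those attached to $\cS_2$.

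Second, the feasible set for $\cS_2$ is nonempty, because the universal retriever $\cP_{A\to B}$ from Appendix~\ref{appendix:notation} works for every channel. The minimum defining $\nu_1(\cS_2)$ is also attained, as noted there. Taking an optimal $\cP_2$ for $\cS_2$, it is feasible for $\cS_1$, and so
\begin{equation*}
\nu_1(\cS_1)\le\|\cP_2\|_\diamond=\nu_1(\cS_2).
\end{equation*}
If one prefers not to invoke attainment, the same conclusion follows by taking the infimum over the feasible set of $\cS_2$, which is contained in that of $\cS_1$.

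I do not expect a real obstacle here. The only point requiring care is the bookkeeping above: both minima must be taken over the same space of HPTP maps with the same program encoding, so that the inclusion of constraint sets translates directly into the reverse inclusion of feasible sets.
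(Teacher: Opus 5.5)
Your proposal is correct and follows the paper's proof: both note that every HPTP retriever feasible for $\cS_2$ is feasible for $\cS_1$, so minimizing the same family-independent objective $\|\cP\|_\diamond$ over the larger feasible set cannot give a larger value. Your remarks on nonemptiness (via the universal one-copy retriever) and on attainment are harmless additions that the paper leaves implicit.
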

\begin{proof}
    Write $\mathfrak{F}(\cS)$ for the set of HPTP retrievers that program
    every channel in $\cS$. From $\cS_1\subseteq\cS_2$ it follows that
    $\mathfrak{F}(\cS_2)\subseteq\mathfrak{F}(\cS_1)$, since a retriever
    feasible for every $\cE\in\cS_2$ is feasible for every
    $\cE\in\cS_1$.
    Minimizing the diamond norm over a larger feasible set can only
    decrease or preserve the optimum, giving
    $\nu_1(\cS_1)\le\nu_1(\cS_2)$.
\end{proof}

Beyond monotonicity, $\nu_1$ is invariant under complex conjugation or
transposition of every target channel in a fixed computational basis. For
$\cE(\rho)=\sum_j K_j\rho K_j^\dagger$, define the conjugate channel by
$\bar{\cE}(\rho):=\sum_j\overline{K_j}\rho\overline{K_j}^\dagger$ and the
transpose channel by $\cE^T(\rho):=(\cE(\rho^T))^T$. Both are physical
quantum channels. We also set $\bar{\cS}:=\{\bar{\cE}:\cE\in\cS\}$ and
$\cS^T:=\{\cE^T:\cE\in\cS\}$.

\begin{lemma}[Conjugation-transpose symmetry]\label{lem:conj_transpose}
    $\nu_1(\cS)=\nu_1(\bar{\cS})=\nu_1(\cS^T)$.
\end{lemma}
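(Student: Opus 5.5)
The plan is to build, from any feasible retriever for $\cS$, a feasible retriever for $\bar\cS$ (respectively $\cS^T$) with the same diamond norm. Applying the same construction in the reverse direction then gives equality of the two overheads. Both statements should follow from one observation. Complex conjugation $X\mapsto\overline{X}$ and transposition $X\mapsto X^T$ in the fixed computational basis are antilinear and linear trace-norm isometries, respectively. They commute with tensor products and partial traces, and they act on Choi operators in a controlled way.

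\emph{Conjugation.} For $\cE$ with Kraus operators $K_j$ one has $J_{\bar\cE}=\overline{J_\cE}$, because $\Omega_d$ is real. Hence $\pi_{\bar\cE}=\overline{\pi_\cE}$. Given a feasible HPTP retriever $\cP$ for $\cS$, define $\bar\cP(X):=\overline{\cP(\overline{X})}$. This map is linear, HPTP, and has Choi operator $\overline{J_\cP}$. It then retrieves $\bar\cE$:
\begin{equation*}
\bar\cP(\rho\ox\pi_{\bar\cE})=\overline{\cP(\bar\rho\ox\pi_\cE)}=\overline{\cE(\bar\rho)}=\bar\cE(\rho),
\end{equation*}
where the last equality is checked directly on Kraus operators. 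Its diamond norm is unchanged. For any input $Y_{RA}$, conjugating the whole bipartite operator gives $(\cI_R\ox\bar\cP)(Y)=\overline{(\cI_R\ox\cP)(\overline Y)}$, and entrywise conjugation preserves the trace norm. So $\|\bar\cP\|_\diamond=\|\cP\|_\diamond$ and $\nu_1(\bar\cS)\le\nu_1(\cS)$. Since $\bar{\bar\cS}=\cS$, the reverse inequality follows.

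\emph{Transpose.} Here I would compute $J_{\cE^T}$. Writing $\cE^T(\rho)=\sum_j \overline{K_j}^{\,\dagger}\!{}^{\dagger}\ldots$ is messy, so instead I would note that $\cE^T=\theta\circ\cE\circ\theta$ with $\theta$ the transpose map. Then
\begin{equation*}
J_{\cE^T}=\sum_{ij}\ketbra ij\ox\cE(\ketbra ji)^T=(J_\cE)^{T_{P_2}}\text{ with input indices swapped},
\end{equation*}
which equals the full transpose $J_\cE^T$ up to swapping $i,j$ on $P_1$. Concretely, $J_{\cE^T}=J_\cE^{T}$, because the full transpose acts as $\ketbra ij\ox Y\mapsto\ketbra ji\ox Y^T$. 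Since $J_\cE$ is Hermitian, $J_\cE^T=\overline{J_\cE}$. Thus $\pi_{\cE^T}=\overline{\pi_\cE}=\pi_{\bar\cE}$, and the program states of $\cS^T$ and $\bar\cS$ coincide. The channels themselves differ: $\cE^T(\rho)=\overline{\cE(\bar\rho)}^{\,\dagger\ldots}$. In fact, for Hermitian $\rho$ we have $\rho^T=\bar\rho$ and $\cE(\rho)^T=\overline{\cE(\rho)}$, so $\cE^T$ and $\bar\cE$ agree on Hermitian inputs and hence, by linearity, on all inputs. So $\cS^T=\bar\cS$ as sets, and the transpose statement reduces to the conjugation case.

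The main obstacle is bookkeeping rather than conceptual. One must check carefully that $\bar\cP$ is linear and not antilinear (it is, as a composition of two antilinear maps), and that the diamond-norm isometry argument is legitimate with a reference system, that is, that conjugation is taken on $R$ as well. The identification $\cE^T=\bar\cE$ should also be verified on a non-Hermitian basis element $\ketbra ij$ to avoid sign or ordering slips. If that identification turned out to fail, I would instead set $\cP^T(X):=\cP(X^T)^T$ and run the same isometry argument, using $\pi_{\cE^T}=\pi_\cE^T$ entrywise.
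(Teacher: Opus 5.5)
Your proposal is correct and follows the paper's proof. It uses the same retriever $\bar\cP(X)=\overline{\cP(\overline X)}$, the same trace-norm isometry argument with conjugation applied to the reference as well, and the same reduction of the transpose case via $\cS^T=\bar\cS$. The only difference is that you check $\cE^T=\bar\cE$ by agreement on Hermitian inputs plus linearity (and consistently via $J_{\cE^T}=J_\cE^T=\overline{J_\cE}$), where the paper uses a one-line Kraus computation; you should also delete the contradictory aside that ``the channels themselves differ.''
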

\begin{proof}
    Conjugation and transposition of a channel coincide. The following
    Kraus-level computation establishes this equality.
    \begin{equation*}
    \cE^T(\rho) = \Big(\sum_j K_j\rho^T K_j^\dagger\Big)^T
    = \sum_j \overline{K_j}\rho\overline{K_j}^\dagger = \bar{\cE}(\rho),
    \end{equation*}
    using $K_j^T=\overline{K_j}^\dagger$ and
    $(K_j^\dagger)^T=\overline{K_j}$. Hence $\cS^T=\bar{\cS}$, and it suffices
    to prove $\nu_1(\bar{\cS})=\nu_1(\cS)$. Conjugating every Kraus
    operator of $\cE$ conjugates its Choi operator as well,
    $J_{\bar{\cE}}=\sum_{ij}\ketbra{i}{j}\ox\bar{\cE}(\ketbra{i}{j})=\overline{J_\cE}$,
    so that $\pi_{\bar{\cE}}=\overline{\pi_\cE}$. For any feasible HPTP
    retriever $\cP$ for $\cS$, set
    $\bar{\cP}(\cdot):=\overline{\cP(\overline{\cdot})}$, which is again
    HPTP. For every $\cE\in\cS$,
    \begin{equation*}
    \bar{\cP}(\rho\ox\overline{\pi_\cE})
    = \overline{\cP(\overline{\rho}\ox\pi_\cE)}
    = \overline{\cE(\overline{\rho})}
    = \bar{\cE}(\rho),
    \end{equation*}
    making $\bar{\cP}$ feasible for $\bar{\cS}$. Since complex conjugation
    is an isometry in trace norm,
    \begin{equation*}
    \|(\bar{\cP}\ox\cI_R)(X_{SPR})\|_1
    = \|\overline{(\cP\ox\cI_R)(\overline{X}_{SPR})}\|_1
    = \|(\cP\ox\cI_R)(\overline{X}_{SPR})\|_1,
    \end{equation*}
    whence $\|\bar{\cP}\|_\diamond=\|\cP\|_\diamond$ and
    $\nu_1(\bar{\cS})\le\nu_1(\cS)$. Applying the same construction to
    $\bar{\cS}$ yields the reverse inequality and establishes equality.
\end{proof}

We next quantify the effect of a common post-processing channel. A map
$\cF\in\Op{CPTP}(B\to B)$ is
\emph{invertible} when its inverse exists as a linear map on
$\cB(\cH_B)$. In that situation, $\cF^{-1}$ is automatically Hermiticity-
and trace-preserving, though typically not completely positive. Write
$\cF\circ\cS:=\{\cF\circ\cE:\cE\in\cS\}$ and
$\cS\circ\cV:=\{\cE\circ\cV:\cE\in\cS\}$.

\begin{theorem}[Post-processing stability]\label{thm:post_stability}
    Let $\cF\in\Op{CPTP}(B\to B)$ be invertible. Then
    \begin{equation*}
        \frac{\nu_1(\cS)}{\|\cF^{-1}\|_\diamond}
        \;\leq\; \nu_1(\cF\circ\cS)
        \;\leq\; \nu_1(\cS)\cdot\|\cF^{-1}\|_\diamond.
    \end{equation*}
    When $\cF=\cU$ is a unitary channel, $\|\cF^{-1}\|_\diamond=1$ and
    the two bounds collapse to $\nu_1(\cU\circ\cS)=\nu_1(\cS)$.
\end{theorem}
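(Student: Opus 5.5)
The upper bound follows by composing an optimal retriever for $\cS$ with $\cF$ and transforming the program input, and the lower bound follows by applying the same construction with $\cF^{-1}$ to the family $\cF\circ\cS$. Let $\cP$ attain $\nu_1(\cS)$. The Choi state of $\cF\circ\cE$ is $\pi_{\cF\circ\cE}=(\cI_{P_1}\ox\cF)(\pi_\cE)$. Since $\cF$ is invertible, so is $\cI_{P_1}\ox\cF$, with inverse $\cI_{P_1}\ox\cF^{-1}$, which is HPTP. We define
\begin{equation*}
\cP'(X):=\cF\circ\cP\circ\bigl(\cI_S\ox\cI_{P_1}\ox\cF^{-1}\bigr)(X).
\end{equation*}
Then $\cP'(\rho\ox\pi_{\cF\circ\cE})=\cF(\cP(\rho\ox\pi_\cE))=\cF\circ\cE(\rho)$, so $\cP'$ is feasible for $\cF\circ\cS$. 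It is HPTP because it is a composition of HPTP maps.

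For the norm, we use submultiplicativity of the diamond norm under composition and its stability under tensoring with identities, $\|\cI\ox\cF^{-1}\|_\diamond=\|\cF^{-1}\|_\diamond$. Because $\cF$ is CPTP, $\|\cF\|_\diamond=1$. Together these give $\|\cP'\|_\diamond\le\|\cP\|_\diamond\|\cF^{-1}\|_\diamond$, which yields the upper bound. A variant that avoids putting $\cF^{-1}$ on the program does not work directly, because the target program is $\pi_{\cF\circ\cE}$ and not $\pi_\cE$. This is why the inverse must act on $P_2$, and it is the only place the invertibility hypothesis enters.

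For the lower bound, we start from an optimal retriever $\cP''$ for $\cF\circ\cS$. We then build the retriever $\cF^{-1}\circ\cP''\circ(\cI\ox\cI\ox\cF)$ for $\cS$. The pre-processing map $\cI\ox\cF$ is CPTP with diamond norm one, and the outer map $\cF^{-1}$ contributes $\|\cF^{-1}\|_\diamond$. Feasibility follows by the same computation, now using $(\cI\ox\cF)(\pi_\cE)=\pi_{\cF\circ\cE}$. This gives $\nu_1(\cS)\le\nu_1(\cF\circ\cS)\,\|\cF^{-1}\|_\diamond$, and rearranging proves the left inequality.

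For a unitary channel $\cF=\cU$, the inverse $\cU^\dagger$ is itself a channel, so $\|\cF^{-1}\|_\diamond=1$. The two bounds then coincide and give $\nu_1(\cU\circ\cS)=\nu_1(\cS)$.

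The main step needing care is the identity $\pi_{\cF\circ\cE}=(\cI\ox\cF)\pi_\cE$ together with the action of the inverse on the program's output register. These are routine but must be matched to the register labels $P_1,P_2$ fixed in Appendix~\ref{appendix:notation}. No symmetry argument is needed.
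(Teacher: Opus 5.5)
Your proposal is correct and follows the paper's proof essentially step for step: the upper bound uses the same retriever $\cF\circ\cP^*\circ(\cI_S\ox\cI_{P_1}\ox\cF^{-1})$ with $\|\cF\|_\diamond=1$ and submultiplicativity, and the lower bound uses the same mirror construction $\cF^{-1}\circ\cP'\circ(\cI_S\ox\cI_{P_1}\ox\cF)$. One aside is slightly off: you say the program-side inverse is the only place invertibility enters, but the lower bound also uses $\cF^{-1}$ on the output. This does not affect the argument.
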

\begin{proof}
    Post-composition with $\cF$ acts on the Choi state as
    $\pi_{\cF\circ\cE}=(\cI_{P_1}\ox\cF_{P_2})(\pi_\cE)$. This identity
    gives feasible retrievers in both directions.

    \emph{Upper bound.} Let $\cP^*$ be an optimal retriever for $\cS$, so
    that $\|\cP^*\|_\diamond=\nu_1(\cS)$. Define
    \begin{equation*}
        \cP' \;:=\; \cF_{S'}\circ\cP^*_{SP\to S'}\circ
        (\cI_S\ox\cI_{P_1}\ox\cF^{-1}_{P_2}),
    \end{equation*}
    where $\cF^{-1}_{P_2}$ recovers the original program state and
    $\cF_{S'}$ applies the required post-processing. For any
    $\cE\in\cS$,
    \begin{equation*}
    \cP'(\rho\ox\pi_{\cF\circ\cE})
    = \cF\!\left(\cP^*\!\left(\rho\ox(\cI\ox\cF^{-1})(\cI\ox\cF)(\pi_\cE)\right)\right)
    = \cF(\cP^*(\rho\ox\pi_\cE))
    = \cF(\cE(\rho)),
    \end{equation*}
    so $\cP'$ is feasible for $\cF\circ\cS$. Combining
    $\|\cF\|_\diamond=1$ with multiplicativity of the diamond norm under
    tensor product and sub-multiplicativity under
    composition~\cite{Jiang2021physical},
    \begin{equation*}
    \|\cP'\|_\diamond
    \;\leq\; \|\cF\|_\diamond\,\|\cP^*\|_\diamond\,\|\cF^{-1}\|_\diamond
    = \nu_1(\cS)\cdot\|\cF^{-1}\|_\diamond.
    \end{equation*}

    \emph{Lower bound.} For any feasible retriever $\cP'$ for
    $\cF\circ\cS$, the mirror construction
    \begin{equation*}
        \widetilde{\cP} \;:=\; \cF^{-1}_{S'}\circ\cP'_{SP\to S'}\circ
        (\cI_S\ox\cI_{P_1}\ox\cF_{P_2})
    \end{equation*}
    satisfies
    $\widetilde{\cP}(\rho\ox\pi_\cE)=\cF^{-1}(\cF(\cE(\rho)))=\cE(\rho)$
    and is therefore feasible for $\cS$. The same diamond-norm estimate
    gives $\nu_1(\cS)\leq\|\widetilde{\cP}\|_\diamond
    \leq\|\cF^{-1}\|_\diamond\,\|\cP'\|_\diamond$, and minimizing over
    $\cP'$ yields
    $\nu_1(\cS)\leq\|\cF^{-1}\|_\diamond\,\nu_1(\cF\circ\cS)$. For $\cF=\cU$,
    $\cF^{-1}=\Op{Ad}_{U^\dagger}\in\Op{CPTP}(B\to B)$ has unit
    diamond norm.
\end{proof}

Pre-processing acts instead on the input half of the Choi state. For a
unitary channel $\cV=\Op{Ad}_V$, this action remains unitary and gives an
exact invariance.

\begin{theorem}[Unitary pre-processing invariance]\label{thm:pre_stability}
    Let $\cV=\Op{Ad}_V$ be a unitary channel on $A$. Then
    $\nu_1(\cS\circ\cV)=\nu_1(\cS)$.
\end{theorem}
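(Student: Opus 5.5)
The plan mirrors the proof of Theorem~\ref{thm:post_stability}, but the correction now acts on the input half of the Choi state rather than the output half. The first step is to record the Choi identity for pre-composition with a unitary. For $\cE\circ\cV$ one has
\begin{equation*}
J_{\cE\circ\cV}=(\cI\ox\cE\circ\cV)(\Omega_d)=(\cI\ox\cE)\bigl((I\ox V)\Omega_d(I\ox V^\dagger)\bigr).
\end{equation*}
Using the transpose trick $(I\ox V)\dket{I_d}=(V^T\ox I)\dket{I_d}$, this becomes
\begin{equation*}
\pi_{\cE\circ\cV}=(\Op{Ad}_{V^T}\ox\cI_{P_2})(\pi_\cE).
\end{equation*}
Thus the program transforms by the unitary channel $\Op{Ad}_{V^T}$ on $P_1$, and this channel is invertible with a CPTP inverse $\Op{Ad}_{\overline V}$.

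Next comes the upper bound. Let $\cP^*$ be an optimal retriever for $\cS$, so $\|\cP^*\|_\diamond=\nu_1(\cS)$. Define
\begin{equation*}
\cP':=\cP^*\circ\bigl(\cV_S\ox\Op{Ad}_{\overline V,P_1}\ox\cI_{P_2}\bigr).
\end{equation*}
For $\cE\in\cS$ this gives $\cP'(\rho\ox\pi_{\cE\circ\cV})=\cP^*(\cV(\rho)\ox\pi_\cE)=\cE(\cV(\rho))$, so $\cP'$ is feasible for $\cS\circ\cV$. The map $\cP'$ is HPTP because it is $\cP^*$ composed with a unitary channel. By submultiplicativity of the diamond norm under composition, together with unit diamond norm of unitary channels (the same facts already used in Theorem~\ref{thm:post_stability}), $\|\cP'\|_\diamond\le\nu_1(\cS)$.

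For the reverse inequality, start from any feasible $\cP'$ for $\cS\circ\cV$ and set
\begin{equation*}
\widetilde\cP:=\cP'\circ\bigl(\Op{Ad}_{V^\dagger,S}\ox\Op{Ad}_{V^T,P_1}\ox\cI_{P_2}\bigr).
\end{equation*}
Then $\widetilde\cP(\rho\ox\pi_\cE)=\cP'(V^\dagger\rho V\ox\pi_{\cE\circ\cV})=\cE\circ\cV(V^\dagger\rho V)=\cE(\rho)$. Hence $\widetilde\cP$ is feasible for $\cS$ and $\|\widetilde\cP\|_\diamond\le\|\cP'\|_\diamond$. Minimizing over $\cP'$ gives $\nu_1(\cS)\le\nu_1(\cS\circ\cV)$, and combining the two bounds yields the equality.

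The only delicate step is the first one: the transpose convention in moving $V$ from the output leg to the input leg of $\Omega_d$, and in particular whether $P_1$ is rotated by $V^T$ or by $\overline V$. I will verify this directly in components, $\sum_{ij}\ketbra{i}{j}\ox V\ketbra{i}{j}V^\dagger$, before building the correction maps. Either way, the correcting operation is a unitary channel, so the norm bookkeeping is unaffected.
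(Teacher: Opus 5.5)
Your proposal is correct and follows the paper's proof essentially step for step: the same Choi identity $\pi_{\cE\circ\cV}=((\Op{Ad}_{V^T})_{P_1}\ox\cI_{P_2})(\pi_\cE)$, the same corrected retriever $\cP^*\circ(\cV_S\ox(\Op{Ad}_{\overline V})_{P_1}\ox\cI_{P_2})$, and the same unit-diamond-norm bookkeeping. Your explicit reverse map $\widetilde\cP$ is what the paper gets by applying the construction to $\Op{Ad}_{V^\dagger}$ (since $\overline{V^\dagger}=V^T$), so nothing is missing.
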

\begin{proof}
    The program states obey
    \begin{equation*}
        \pi_{\cE\circ\cV}
        =\bigl((\Op{Ad}_{V^T})_{P_1}\ox\cI_{P_2}\bigr)(\pi_\cE).
    \end{equation*}
    Let $\cP^*$ be an optimal retriever for $\cS$ and define
    \begin{equation*}
        \cP':=\cP^*\circ
        \bigl(\cV_S\ox(\Op{Ad}_{\overline V})_{P_1}\ox\cI_{P_2}\bigr).
    \end{equation*}
    Since $\Op{Ad}_{\overline V}$ is the inverse of
    $\Op{Ad}_{V^T}$, for every $\cE\in\cS$,
    \begin{equation*}
        \cP'(\rho\ox\pi_{\cE\circ\cV})
        =\cP^*(\cV(\rho)\ox\pi_\cE)
        =(\cE\circ\cV)(\rho).
    \end{equation*}
    Thus $\cP'$ is feasible for $\cS\circ\cV$. Composition with the
    unitary input channel preserves the diamond norm, so
    $\nu_1(\cS\circ\cV)\leq\nu_1(\cS)$. Applying the same construction to
    $\Op{Ad}_{V^\dagger}$ gives the reverse inequality.
\end{proof}

Combining unitary pre-processing invariance with the unitary case of
Theorem~\ref{thm:post_stability} gives invariance under fixed unitary
channels on both sides.
\begin{corollary}[Unitary invariance]\label{cor:unitary_invariance}
    Let $\cU,\cV$ be unitary channels. Then
    $\nu_1(\cU\circ\cS\circ\cV)=\nu_1(\cS)$.
\end{corollary}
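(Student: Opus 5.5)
The corollary follows by chaining the two preceding invariance results, so no new construction is needed. I would first fix the convention for the set notation, namely $\cU\circ\cS\circ\cV=\{\cU\circ\cE\circ\cV:\cE\in\cS\}$. This set is nonempty and consists of physical channels in $\Op{CPTP}(A\to B)$, so $\nu_1$ is defined on it and on every intermediate set used below.

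The first step is to apply the unitary case of Theorem~\ref{thm:post_stability} with $\cF=\cU$ to the target family $\cS':=\cS\circ\cV$. The theorem only requires $\cS'$ to be a nonempty set of channels, which it is. A unitary channel is invertible with $\cU^{-1}=\Op{Ad}_{U^\dagger}$, and $\|\cU^{-1}\|_\diamond=1$. The two bounds of Theorem~\ref{thm:post_stability} therefore collapse to
\begin{equation*}
\nu_1(\cU\circ\cS')=\nu_1(\cS').
\end{equation*}
Associativity of composition gives $\cU\circ\cS'=\cU\circ\cS\circ\cV$, so this reads $\nu_1(\cU\circ\cS\circ\cV)=\nu_1(\cS\circ\cV)$.

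The second step is to apply Theorem~\ref{thm:pre_stability} to $\cS$ with the unitary $\cV=\Op{Ad}_V$, which gives $\nu_1(\cS\circ\cV)=\nu_1(\cS)$. Chaining the two equalities proves the claim.

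No step is a real obstacle. The only point to check is that both theorems are stated for arbitrary nonempty target sets, so that Theorem~\ref{thm:post_stability} may be applied to the already pre-processed family $\cS\circ\cV$. It also helps that the order of the two steps does not matter: applying Theorem~\ref{thm:pre_stability} to the family $\cU\circ\cS$ first, and then Theorem~\ref{thm:post_stability} to $\cS$, gives the same result.
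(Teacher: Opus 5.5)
Your proposal is correct and matches the paper's argument, which likewise obtains the corollary by chaining the unitary case of Theorem~\ref{thm:post_stability} (post-processing by $\cU$, where $\|\cU^{-1}\|_\diamond=1$) with Theorem~\ref{thm:pre_stability} (unitary pre-processing by $\cV$). Your check that both results hold for arbitrary nonempty target sets is the only point needed, and your remark that the two steps can be applied in either order is also correct.
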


\begin{remark}
    Quasi-decompositions of HPTP
    maps~\cite{Jiang2021physical,Regula2021operational} give
    $\|\cF^{-1}\|_\diamond$ a direct operational meaning,
    \begin{equation*}
    \|\cF^{-1}\|_\diamond
    = \min\{\alpha_++\alpha_- \,:\,
    \cF^{-1}=\alpha_+\cQ_+-\alpha_-\cQ_-,\;
    \cQ_\pm\in\Op{CPTP}(B\to B),\;\alpha_\pm\geq 0\},
    \end{equation*}
    which is the minimum sampling overhead for simulating the non-physical
    map $\cF^{-1}$ on a physical device.
    Theorem~\ref{thm:post_stability} therefore bounds the change in
    programming overhead by the same inverse-map sampling cost.
\end{remark}

The processing bounds compare related target sets. A complementary lower
bound measures how much a feasible retriever must amplify the
distinguishability of Choi program states.

\begin{theorem}[Distinguishability amplification lower bound]\label{thm:distinguishability_lb}
    For any $\cS\subseteq\Op{CPTP}(A\to B)$,
    \begin{equation*}
    \nu_1(\cS) \;\geq\;
    \max\left\{1,\;
    \sup_{\substack{\cE_1,\cE_2\in\cS\\ \cE_1\neq\cE_2}}
    \frac{\|\cE_1-\cE_2\|_\diamond}
    {\|\pi_{\cE_1}-\pi_{\cE_2}\|_1}\right\},
    \end{equation*}
    where the supremum over an empty set is defined as zero.
\end{theorem}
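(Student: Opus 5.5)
The plan is to take an optimal feasible retriever $\cP$ for $\cS$, with $\|\cP\|_\diamond=\nu_1(\cS)$; the minimum is attained by Appendix~\ref{appendix:notation}. We then compare input and output distances for two programs. The bound $\nu_1(\cS)\geq1$ was already established in Appendix~\ref{appendix:notation}, since every HPTP map has diamond norm at least one. So it suffices to show that for every pair $\cE_1\neq\cE_2$ in $\cS$,
\begin{equation*}
\|\cE_1-\cE_2\|_\diamond\leq\|\cP\|_\diamond\,\|\pi_{\cE_1}-\pi_{\cE_2}\|_1 .
\end{equation*}
Note that $\pi_{\cE_1}\neq\pi_{\cE_2}$ whenever $\cE_1\neq\cE_2$, because the Choi map is injective. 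Hence the ratio is well defined, and taking the supremum gives the claim.

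For the key step, fix a reference $R$ and a state $\rho_{RS}$ that attains $\|\cE_1-\cE_2\|_\diamond$. Exact retrieval, extended by linearity to $\cI_R\ox\cP$, gives
\begin{equation*}
(\cI_R\ox\cE_1-\cI_R\ox\cE_2)(\rho_{RS})
=(\cI_R\ox\cP)\bigl(\rho_{RS}\ox(\pi_{\cE_1}-\pi_{\cE_2})\bigr).
\end{equation*}
This identity needs the feasibility condition to hold for inputs entangled with $R$, not just product density operators. That follows because $\cP(\cdot\ox\pi_\cE)=\cE$ holds on all of $\cD(\cH_A)$, hence on $\cB(\cH_A)$ by linearity. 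Tensoring with $\cI_R$ then extends it to operators on $RS$.

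Finally, we apply the definition of the diamond norm to the Hermitian operator $X:=\rho_{RS}\ox(\pi_{\cE_1}-\pi_{\cE_2})$ on $R\ox S\ox P$. This gives $\|(\cI_R\ox\cP)(X)\|_1\leq\|\cP\|_\diamond\|X\|_1$, and $\|X\|_1=\|\rho_{RS}\|_1\|\pi_{\cE_1}-\pi_{\cE_2}\|_1=\|\pi_{\cE_1}-\pi_{\cE_2}\|_1$ by multiplicativity of the trace norm on tensor products. Combining these yields the displayed inequality.

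The only delicate point is that the diamond norm is defined on normalized inputs while $X$ is a traceless operator. Since $\|\cI_R\ox\cP\|_{1\to1}\leq\|\cP\|_\diamond$ on all operators, with $R$ of dimension at least that of $SP$ (or by stability of the diamond norm under enlarging the reference), this causes no difficulty. I would state this explicitly and expect no further obstacle.
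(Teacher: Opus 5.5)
Your proposal is correct and follows essentially the same argument as the paper. Both proofs extend the programming identity by linearity to $(\cP\ox\cI_R)(X_{SR}\ox\Delta)=((\cE_1-\cE_2)\ox\cI_R)(X_{SR})$, then bound the trace norm using $\|\cP\ox\cI_R\|_{1\to1}\leq\|\cP\|_\diamond$ and multiplicativity of the trace norm. The only cosmetic difference is that the paper takes a supremum over unit-trace-norm $X_{SR}$ rather than fixing an optimizer, so it does not need attainment.
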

\begin{proof}
    Fix distinct $\cE_1,\cE_2\in\cS$ and any feasible HPTP retriever
    $\cP$ for $\cS$,
    and set $\Delta:=\pi_{\cE_1}-\pi_{\cE_2}$. Linearity of $\cP$ extends
    the programming condition from the individual $\pi_{\cE_i}$ to
    $\Delta$. After tensoring with an auxiliary register $R$, every
    $X_{SR}\in\cB(\cH_S\ox\cH_R)$ satisfies
    \begin{equation*}
    (\cP_{SP\to S'}\ox\cI_R)(X_{SR}\ox\Delta_P)
    = ((\cE_1-\cE_2)\ox\cI_R)(X_{SR}).
    \end{equation*}
    Trace norms then combine with $\|A\ox B\|_1=\|A\|_1\|B\|_1$ and
    $\|\cP\ox\cI_R\|_{1\to 1}\leq\|\cP\|_\diamond$ to yield
    \begin{equation*}
    \|((\cE_1-\cE_2)\ox\cI_R)(X_{SR})\|_1
    \;\leq\;
    \|\cP\|_\diamond\cdot\|X_{SR}\|_1\cdot\|\Delta\|_1.
    \end{equation*}
    Taking the supremum over unit-trace-norm $X_{SR}$ and over all
    $\cH_R$ on the left-hand side gives
    $\|\cE_1-\cE_2\|_\diamond\leq\|\cP\|_\diamond\cdot\|\Delta\|_1$, and
    minimizing over $\cP$ followed by the stated supremum gives the
    ratio bound. Every HPTP retriever has diamond norm at least one,
    which completes the proof.
\end{proof}

\begin{remark}
    (\textbf{Faithfulness}) Each ratio in the supremum is at least one.
    Evaluating the diamond norm on the normalized maximally entangled
    state on $R\ox A$ gives
    $\|\cE_1-\cE_2\|_\diamond
    \geq\|(\cI_R\ox(\cE_1-\cE_2))(\Phi_{d_A})\|_1
    =\|\pi_{\cE_1}-\pi_{\cE_2}\|_1$. The ratio bound becomes stronger
    than $\nu_1(\cS)\geq 1$ precisely when the
    diamond distance between channels strictly exceeds the trace distance
    between their Choi program states. In this sense any feasible
    retriever must amplify distinguishability from the program register
    back to the corresponding operational distinguishability of the
    target channels. The closed-form lower bounds used elsewhere are proved
    by symmetry reduction and explicit primal--dual certificates, rather
    than by this distinguishability estimate alone.
\end{remark}

Theorem~\ref{thm:post_stability} assumes invertibility on the full output
operator space. A useful substitute requires a common HPTP recovery map
only for the transformed program states.

\begin{definition}[Program-state recovery overhead]\label{def:recovery_overhead}
    For $\cF\in\Op{CPTP}(B\to B)$ and $\cS\subseteq\Op{CPTP}(A\to B)$, the
    \emph{program-state recovery overhead} is
    \begin{equation*}
    \kappa_\cS(\cF) := \inf\left\{\|\cR\|_\diamond \,:\,
    \cR\in\Op{HPTP}(B\to B),\;
    (\cI_{P_1}\ox\cR_{P_2})\circ(\cI_{P_1}\ox\cF_{P_2})(\pi_\cE)=\pi_\cE
    \;\forall\,\cE\in\cS\right\}.
    \end{equation*}
    We set $\kappa_\cS(\cF):=+\infty$ if no such $\cR$ exists.
\end{definition}

\begin{proposition}\label{prop:recovery_bound}
    For any $\cF\in\Op{CPTP}(B\to B)$, not necessarily invertible,
    \begin{equation*}
    \nu_1(\cF\circ\cS) \;\leq\; \nu_1(\cS)\cdot\kappa_\cS(\cF).
    \end{equation*}
    When $\cF$ is invertible, $\cR=\cF^{-1}$ is feasible, so
    $\kappa_\cS(\cF)\leq\|\cF^{-1}\|_\diamond$ and the upper bound in
    Theorem~\ref{thm:post_stability} is recovered.
\end{proposition}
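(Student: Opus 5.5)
The plan mirrors the upper-bound half of Theorem~\ref{thm:post_stability}, with the inverse $\cF^{-1}$ replaced by a recovery map that only needs to work on the transformed program states. If $\kappa_\cS(\cF)=+\infty$ there is nothing to prove, so assume it is finite.

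\textbf{Step 1: feasible recovery maps.} Fix $\epsilon>0$ and choose $\cR\in\Op{HPTP}(B\to B)$ feasible for Definition~\ref{def:recovery_overhead} with $\|\cR\|_\diamond\le\kappa_\cS(\cF)+\epsilon$. Alternatively, the infimum is attained, because the feasible set is closed and diamond-norm sublevel sets are compact in finite dimensions, as argued in Appendix~\ref{appendix:notation}. Let $\cP^*$ be an optimal retriever for $\cS$, which exists by the same compactness remark.

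\textbf{Step 2: the candidate retriever.} Define
\[
\cP':=\cF_{S'}\circ\cP^*\circ(\cI_S\ox\cI_{P_1}\ox\cR_{P_2}).
\]
As noted in the proof of Theorem~\ref{thm:post_stability}, the program states satisfy $\pi_{\cF\circ\cE}=(\cI_{P_1}\ox\cF_{P_2})(\pi_\cE)$. The recovery condition then gives $(\cI\ox\cR)(\pi_{\cF\circ\cE})=\pi_\cE$ for every $\cE\in\cS$. Hence
\[
\cP'(\rho\ox\pi_{\cF\circ\cE})=\cF(\cP^*(\rho\ox\pi_\cE))=\cF(\cE(\rho)).
\]
This holds for all $\rho$, so $\cP'$ is feasible for $\cF\circ\cS$. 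Moreover, $\cP'$ is HPTP as a composition of HPTP maps.

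\textbf{Step 3: norm estimate and the invertible case.} Submultiplicativity of the diamond norm under composition, together with $\|\cI\ox\cR\|_\diamond=\|\cR\|_\diamond$ and $\|\cF\|_\diamond=1$, gives
\[
\nu_1(\cF\circ\cS)\le\|\cP'\|_\diamond\le\nu_1(\cS)\,(\kappa_\cS(\cF)+\epsilon).
\]
Letting $\epsilon\to0$ yields the claimed bound. For the final remark, if $\cF$ is invertible then $\cF^{-1}$ is HPTP and recovers every program state exactly. It is therefore feasible for $\kappa_\cS(\cF)$, so $\kappa_\cS(\cF)\le\|\cF^{-1}\|_\diamond$, and substituting recovers the upper bound of Theorem~\ref{thm:post_stability}.

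The only step needing care is the identity $\|\cI_{SP_1}\ox\cR_{P_2}\|_\diamond=\|\cR\|_\diamond$. It follows from the stability of the diamond norm under tensoring with the identity map, so the argument presents no real obstacle.
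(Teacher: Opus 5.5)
Your proposal is correct and follows the paper's proof essentially verbatim: you take an $\epsilon$-optimal recovery map $\cR$, form $\cF_{S'}\circ\cP^*\circ(\cI_S\ox\cI_{P_1}\ox\cR_{P_2})$, check feasibility through $\pi_{\cF\circ\cE}=(\cI_{P_1}\ox\cF_{P_2})(\pi_\cE)$, apply the same diamond-norm estimate, and let $\epsilon\to0$. Your added observation that the infimum defining $\kappa_\cS(\cF)$ is attained by compactness is valid but not needed.
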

\begin{proof}
    The claim is immediate if $\kappa_\cS(\cF)=+\infty$. Otherwise, fix
    $\varepsilon>0$ and choose a feasible $\cR_\varepsilon$ such that
    $\|\cR_\varepsilon\|_\diamond\leq\kappa_\cS(\cF)+\varepsilon$.
    Let $\cP^*$ be an optimal retriever for $\cS$. Replacing $\cF^{-1}$
    by $\cR_\varepsilon$ in the construction of
    Theorem~\ref{thm:post_stability} yields
    \begin{equation*}
    \cP_\varepsilon := \cF_{S'}\circ\cP^*_{SP\to S'}
    \circ(\cI_S\ox\cI_{P_1}\ox(\cR_\varepsilon)_{P_2}).
    \end{equation*}
    The defining property of $\cR_\varepsilon$,
    $(\cI\ox\cR_\varepsilon)(\cI\ox\cF)(\pi_\cE)=\pi_\cE$ for every
    $\cE\in\cS$,
    gives
    \begin{equation*}
    \cP_\varepsilon(\rho\ox\pi_{\cF\circ\cE})
    = \cF\!\left(\cP^*(\rho\ox\pi_\cE)\right)
    = \cF(\cE(\rho)),
    \end{equation*}
    so $\cP_\varepsilon$ is feasible for $\cF\circ\cS$. The same
    diamond-norm estimate gives
    \begin{equation*}
    \nu_1(\cF\circ\cS)
    \leq\nu_1(\cS)\bigl(\kappa_\cS(\cF)+\varepsilon\bigr).
    \end{equation*}
    Letting $\varepsilon\to0$ proves the claim.
\end{proof}

\begin{remark}
    The feasibility condition
    $(\cI\ox\cR)(\cI\ox\cF)(\pi_\cE)=\pi_\cE$ requires recovery on the
    specified program states and hence on their linear span. It does not
    require a global inverse for $\cF$. This restricted recovery condition
    is analogous in form, but not equivalent, to exact code-state
    recovery in quantum error correction~\cite{knill2000theory}.
\end{remark}

The recovery result concerns transformations of a fixed target set. We
next ask whether adding physical affine combinations changes its
programming overhead.

\begin{definition}[Affine extension]\label{def:affine_extension}
    For $\varnothing\ne\cS\subseteq\Op{CPTP}(A\to B)$, the \emph{affine extension}
    inside $\Op{CPTP}(A\to B)$ is
    \begin{equation*}
    \cA[\cS] = \Op{CPTP}(A\to B)\cap
    \Big\{\cE' = \sum_{i=1}^n\lambda_i\cE_i
    \;\Big|\; \lambda_i\in\mathbb{R},\;\sum_i\lambda_i=1,\;
    \cE_i\in\cS,\; n\in\mathbb{N}\Big\}.
    \end{equation*}
\end{definition}

\begin{lemma}[Affine invariance]\label{lem:aff_invariance}
    $\nu_1(\cA[\cS])=\nu_1(\cS)$.
\end{lemma}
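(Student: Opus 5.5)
The plan is to prove the two inequalities separately. Since $\cS\subseteq\cA[\cS]$ (take $n=1$, $\lambda_1=1$), Lemma~\ref{lem:set_monotonicity} gives $\nu_1(\cS)\le\nu_1(\cA[\cS])$ at once. For the reverse inequality, the idea is that every retriever feasible for $\cS$ is already feasible for $\cA[\cS]$. Then the feasible sets in Eq.~\eqref{eq:supp_exact_overhead} coincide, because the other inclusion of feasible sets is again monotonicity. The two minima therefore agree.

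For the reverse direction, fix any HPTP retriever $\cP$ with $\cP(\rho\ox\pi_{\cE})=\cE(\rho)$ for all $\rho\in\cD(\cH_A)$ and $\cE\in\cS$. Take $\cE'\in\cA[\cS]$, written as $\cE'=\sum_i\lambda_i\cE_i$ with real $\lambda_i$ summing to one and $\cE_i\in\cS$. The Choi map $\cE\mapsto J_\cE$ is linear, so $\pi_{\cE'}=\sum_i\lambda_i\pi_{\cE_i}$. This is a physical state because $\cE'$ is CPTP by the definition of $\cA[\cS]$, even though some $\lambda_i$ may be negative. Linearity of $\cP$ in its program argument then gives
\begin{equation*}
\cP(\rho\ox\pi_{\cE'})=\sum_i\lambda_i\cP(\rho\ox\pi_{\cE_i})=\sum_i\lambda_i\cE_i(\rho)=\cE'(\rho).
\end{equation*}
Hence $\cP$ is feasible for $\cA[\cS]$ with the same diamond norm. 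Choosing $\cP$ optimal for $\cS$ yields $\nu_1(\cA[\cS])\le\nu_1(\cS)$.

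No step is a genuine obstacle. The only point needing care is that the feasibility condition is affine in the program: it is stated for density operators $\rho$, and the map $\pi\mapsto\cP(\rho\ox\pi)$ is linear for each fixed $\rho$. Exactness is therefore preserved under real (not merely convex) combinations. Moreover, the condition $\sum_i\lambda_i=1$ is not needed for this linearity. It is automatic once $\cE'$ is trace preserving, and it ensures that $\cE'$ is a legitimate normalized target. Attainment of the minima was already established in Appendix~\ref{appendix:notation}, so both sides are minima and the equality follows.
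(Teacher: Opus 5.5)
Your proposal is correct and follows the paper's proof: the paper likewise uses linearity of the Choi map and of $\cP$ to show that any retriever feasible for $\cS$ also programs every $\cE'\in\cA[\cS]$ at the same diamond norm, which gives $\nu_1(\cA[\cS])\le\nu_1(\cS)$. It then obtains the reverse inequality from $\cS\subseteq\cA[\cS]$ and Lemma~\ref{lem:set_monotonicity}, exactly as you do.
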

\begin{proof}
    Write $\cE'=\sum_i\lambda_i\cE_i\in\cA[\cS]$ with $\cE_i\in\cS$ and
    $\sum_i\lambda_i=1$. Linearity of the Choi map gives
    $J_{\cE'}=\sum_i\lambda_iJ_{\cE_i}$. Membership in the channel set
    ensures $J_{\cE'}\geq 0$.

    For any optimal retriever $\cP$ for $\cS$, linearity gives
    \begin{equation*}
    \cP\!\left(\rho\ox\pi_{\cE'}\right)
    = \sum_i\lambda_i\cP\!\left(\rho\ox\pi_{\cE_i}\right)
    = \sum_i\lambda_i\cE_i(\rho) = \cE'(\rho),
    \end{equation*}
    so $\cP$ programs $\cE'$ as well, at unchanged overhead
    $\nu_1(\cS)$. Hence $\nu_1(\cA[\cS])\leq\nu_1(\cS)$, and the
    reverse inequality is a consequence of $\cS\subseteq\cA[\cS]$ and
    Lemma~\ref{lem:set_monotonicity}.
\end{proof}

Affine invariance extends the overhead from a target set to its physical
affine closure without increasing it. Every unital channel is a real
affine combination of unitary channels, giving the following consequence.

\begin{corollary}[Unital channels from affine invariance]\label{cor:unital_affine_invariance}
    Let $\cT(d)$ be the set of unital channels acting on $\cH_d$. Then
    \begin{equation*}
        \nu_1(\cT(d))=\nu_1(\Op{Ad}_{\Op{SU}(d)}).
    \end{equation*}
\end{corollary}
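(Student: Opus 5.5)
The plan is to deduce the corollary from affine invariance (Lemma~\ref{lem:aff_invariance}) by showing that $\cT(d)$ is exactly the affine extension $\cA[\Op{Ad}_{\Op{SU}(d)}]$ of Definition~\ref{def:affine_extension}. The inclusion $\cA[\Op{Ad}_{\Op{SU}(d)}]\subseteq\cT(d)$ is immediate. Any real affine combination $\sum_i\lambda_i\cU_i$ with $\sum_i\lambda_i=1$ maps $I$ to $\sum_i\lambda_i I=I$. Hence every channel in the affine extension is unital. Note also that $\Op{Ad}_{\Op{U}(d)}=\Op{Ad}_{\Op{SU}(d)}$, since global phases are absorbed by $\Op{Ad}$, so there is no distinction between unitary and special-unitary targets.

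For the reverse inclusion, I would work at the level of Choi operators. A linear map $\cE$ is unital and trace preserving if and only if $\tr_{P_2}J_\cE=I$ and $\tr_{P_1}J_\cE=I$. Consider the real affine hull of $\{J_U:U\in\Op{SU}(d)\}$, where $J_U=\dketbra{U}{U}$. It lies inside the affine space
\[
\mathcal{K}:=\{X=X^\dagger:\tr_{P_1}X=\tr_{P_2}X=I\}.
\]
It therefore suffices to show that the linear span of the differences $J_U-J_V$ equals the Hermitian subspace
\[
\mathcal{K}_0:=\{X=X^\dagger:\tr_{P_1}X=\tr_{P_2}X=0\}.
\]
Write $X\in\mathcal{K}_0$ as a real combination of products of Hermitian basis elements. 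The condition on $\mathcal{K}_0$ then says exactly that $X$ has no component along $I\ox A$ or $A\ox I$ for traceless $A$, and no $I\ox I$ component. So $\mathcal{K}_0$ is the span of $A\ox B$ with $A,B$ traceless Hermitian, a space of real dimension $(d^2-1)^2$. This is the key step, and the main obstacle.

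To carry it out, I would use the $\Op{SU}(d)\times\Op{SU}(d)$ symmetry. The span $W$ of $\{J_U-J_{I}\}$ is invariant under $X\mapsto(V^T\ox W')X(V^T\ox W')^\dagger$, because $J_{W'UV}$ is obtained from $J_U$ this way. $W$ lies in $\mathcal{K}_0$. Over $\CC$, the space $\mathcal{K}_0$ carries the representation $\mathrm{Ad}\boxtimes\mathrm{Ad}$ of $\Op{SU}(d)\times\Op{SU}(d)$, which is irreducible. It therefore suffices to show $W\neq0$, which holds because $J_U\neq J_I$ for nonscalar $U$. One subtlety is passing from the complexified span to the real span of Hermitian operators. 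Since $W$ is spanned by Hermitian operators, $W_\CC$ is invariant, nonzero, and inside the irreducible $(\mathcal{K}_0)_\CC$, so $W_\CC=(\mathcal{K}_0)_\CC$. Taking Hermitian parts then gives $W=\mathcal{K}_0$. This irreducibility route is the content of the Mendl--Wolf result cited in the main text~\cite{mendl2009unital}; if preferred, one could simply invoke that reference instead.

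Given the reverse inclusion, every unital channel $\cE$ has $J_\cE-J_I\in\mathcal{K}_0=W$. So $J_\cE$ is a finite real affine combination of unitary Choi operators, and hence $\cE\in\cA[\Op{Ad}_{\Op{SU}(d)}]$. Therefore $\cT(d)=\cA[\Op{Ad}_{\Op{SU}(d)}]$, and Lemma~\ref{lem:aff_invariance} yields
\[
\nu_1(\cT(d))=\nu_1(\Op{Ad}_{\Op{SU}(d)}).
\]
Combined with Theorem~\ref{thm:unitary_appendix}, this gives the value $d^2-1$.
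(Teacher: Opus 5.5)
Your proof is correct, and its skeleton matches the paper's: identify $\cT(d)$ with $\cA[\Op{Ad}_{\Op{SU}(d)}]$ and apply Lemma~\ref{lem:aff_invariance}. The difference lies in the key input.

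The paper gets $\cT(d)\subseteq\cA[\Op{Ad}_{\Op{SU}(d)}]$ by citing Theorem~1 of Mendl--Wolf, dropping global phases to land in $\Op{SU}(d)$. It leaves the easy reverse inclusion implicit. You instead prove the affine-hull statement directly. The real span $W$ of the differences $J_U-J_I$ is a nonzero subspace of the Hermitian operators with vanishing marginals. It is invariant under local unitary conjugations. The complexification $\mathfrak{sl}_d\ox\mathfrak{sl}_d$ of that space carries the irreducible representation $\mathrm{Ad}\boxtimes\mathrm{Ad}$ of $\Op{SU}(d)\times\Op{SU}(d)$. Irreducibility, together with your Hermitian-part argument for passing from $W_\CC$ back to $W$, then forces $W$ to be the whole space.

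Your route has three advantages. It makes the corollary independent of the external reference. It reuses the same input--output covariance that drives the paper's one-copy calculations. It also yields a little more: the real affine hull of unitary Choi operators is the full set of unital HPTP Choi operators, and at most $(d^2-1)^2+1$ unitaries are needed in each decomposition. The paper's route buys brevity.

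Your aside that this irreducibility argument is the content of Mendl--Wolf is an attribution you do not need; the argument stands on its own.
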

\begin{proof}
    By Theorem 1 of Ref.~\cite{mendl2009unital}, every unital channel
    $\cE\in\cT(d)$ admits a real affine decomposition
    into unitary channels. Global phases do not change adjoint channels,
    so the unitary representatives may be chosen in $\Op{SU}(d)$. Thus
    $\cE=\sum_i\lambda_i\cU_i$ with
    $\cU_i\in\Op{Ad}_{\Op{SU}(d)}$,
    $\lambda_i\in\mathbb{R}$, and $\sum_i\lambda_i=1$, so that
    $\cT(d)=\cA[\Op{Ad}_{\Op{SU}(d)}]$. The claim follows from
    Lemma~\ref{lem:aff_invariance}.
\end{proof}

The physical affine closure therefore has the same programming overhead
as its generating target set. We finally consider two target channel sets
that are programmed in parallel.

\begin{definition}[Tensor-product channel sets]\label{def:tensor_product_sets}
    Given nonempty $\cS_i\subseteq\Op{CPTP}(A_i\to B_i)$ for
    $i\in\{1,2\}$, their
    tensor product is
    \begin{equation*}
        \cS_1\ox\cS_2:=\{\cE_1\ox\cE_2 \;|\; \cE_1\in \cS_1,\; \cE_2\in\cS_2\}.
    \end{equation*}
\end{definition}

\begin{lemma}[Submultiplicativity under tensor product]\label{lem:tensor_submultiplicativity}
    For nonempty $\cS_i\subseteq\Op{CPTP}(A_i\to B_i)$ with
    $i\in\{1,2\}$,
    \begin{equation*}
        \nu_1(\cS_1\ox\cS_2)\le\nu_1(\cS_1)\,\nu_1(\cS_2).
    \end{equation*}
\end{lemma}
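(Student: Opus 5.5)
The plan is to tensor two optimal retrievers and check that the product is a feasible retriever for the product family, with diamond norm equal to the product of the two norms. By the attainment argument in Appendix~\ref{appendix:notation}, we may fix optimal HPTP retrievers $\cP_1$ and $\cP_2$ with $\|\cP_i\|_\diamond=\nu_1(\cS_i)$. Here $\cP_i$ acts on $S_iP_{i,1}P_{i,2}$ and outputs $S_i'$.

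\emph{Register regrouping and Choi factorization.} For the product family, the signal system is $S=S_1S_2$ and the program system is $P_1=P_{1,1}P_{2,1}$, $P_2=P_{1,2}P_{2,2}$. Let $\mathsf{W}$ be the fixed unitary permutation of tensor factors that reorders $S_1S_2P_{1,1}P_{2,1}P_{1,2}P_{2,2}$ into $(S_1P_{1,1}P_{1,2})(S_2P_{2,1}P_{2,2})$. Define
\begin{equation*}
\cP:=(\cP_1\ox\cP_2)\circ\Op{Ad}_{\mathsf{W}}.
\end{equation*}
Since $\dket{I_{d_1d_2}}=\dket{I_{d_1}}\ox\dket{I_{d_2}}$ up to this same regrouping, we have
\begin{equation*}
J_{\cE_1\ox\cE_2}=J_{\cE_1}\ox J_{\cE_2}
\end{equation*}
after reordering. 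With $d_A=d_{A_1}d_{A_2}$ this gives $\pi_{\cE_1\ox\cE_2}=\pi_{\cE_1}\ox\pi_{\cE_2}$ under $\mathsf{W}$. The map $\cP$ is clearly HPTP.

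\emph{Feasibility.} For product inputs $\rho_1\ox\rho_2$ we get
\begin{equation*}
\cP(\rho_1\ox\rho_2\ox\pi_{\cE_1\ox\cE_2})=\cP_1(\rho_1\ox\pi_{\cE_1})\ox\cP_2(\rho_2\ox\pi_{\cE_2})=\cE_1(\rho_1)\ox\cE_2(\rho_2).
\end{equation*}
Product states span $\cB(\cH_{A_1}\ox\cH_{A_2})$, and both sides are linear in the signal input. Hence
\begin{equation*}
\cP(\rho\ox\pi_{\cE_1\ox\cE_2})=(\cE_1\ox\cE_2)(\rho)
\end{equation*}
for every density operator $\rho$. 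This is the exact feasibility constraint of Eq.~\eqref{eq:supp_exact_overhead}. This linear-extension step is the only point needing care, because the constraint is stated for all states rather than only product states.

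\emph{Norm bound.} Conjugation by a unitary permutation preserves the diamond norm. The diamond norm is multiplicative under tensor products, as already invoked in Theorem~\ref{thm:post_stability}. Alternatively, one can take norm-optimal quasi-decompositions $\cP_i=p^{(i)}_+\cQ^{(i)}_+-p^{(i)}_-\cQ^{(i)}_-$ and expand their tensor product into four physical branches. The total weight is $(p^{(1)}_++p^{(1)}_-)(p^{(2)}_++p^{(2)}_-)$, and by the Regula–Takagi–Gu characterization this upper-bounds $\|\cP\|_\diamond$. Either route gives $\|\cP\|_\diamond\le\nu_1(\cS_1)\nu_1(\cS_2)$. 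Minimizing over feasible retrievers for $\cS_1\ox\cS_2$ then yields the claim.

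I expect no real obstacle. The main thing to get right is the bookkeeping of the register permutation in the Choi factorization, including the normalization $d_{A_1}d_{A_2}$.
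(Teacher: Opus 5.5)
Your proof is correct and takes essentially the same route as the paper. Both tensor two optimal retrievers with the fixed register permutation, use $\pi_{\cE_1\ox\cE_2}=\pi_{\cE_1}\ox\pi_{\cE_2}$, extend feasibility from product inputs by linearity, and bound the norm by multiplicativity of the diamond norm. Your alternative four-branch quasi-decomposition argument is also valid and needs only the triangle inequality, because only the upper bound $\|\cP\|_\diamond\le\nu_1(\cS_1)\nu_1(\cS_2)$ is required.
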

\begin{proof}
    Up to the canonical permutation between the $A_1B_1A_2B_2$ and
    $A_1A_2B_1B_2$ register orders, the program state factorizes as
    $\pi_{\cE_1\ox\cE_2}=\pi_{\cE_1}\ox\pi_{\cE_2}$.
    Let $\cP_1$ and $\cP_2$ be optimal HPTP retrievers for $\cS_1$ and
    $\cS_2$. Their tensor product, composed with the fixed input-register
    permutation, defines a joint retriever $\cP$. For arbitrary
    $X_i\in\cB(\cH_{A_i})$,
    \begin{equation*}
        (\cP_1\ox\cP_2)
        \bigl((X_1\ox\pi_{\cE_1})\ox(X_2\ox\pi_{\cE_2})\bigr)
        =\cE_1(X_1)\ox\cE_2(X_2).
    \end{equation*}
    Product operators span
    $\cB(\cH_{A_1}\ox\cH_{A_2})$, so linearity extends this identity to
    arbitrary signal inputs. Hence $\cP$ is feasible for
    $\cS_1\ox\cS_2$. The fixed permutation is unitary and has unit
    diamond norm. Multiplicativity under tensor products for HPTP
    maps~\cite{Jiang2021physical} therefore gives
    \begin{equation*}
        \|\cP\|_\diamond \;=\; \|\cP_1\|_\diamond\,\|\cP_2\|_\diamond \;=\; \nu_1(\cS_1)\,\nu_1(\cS_2),
    \end{equation*}
    which upper-bounds $\nu_1(\cS_1\ox\cS_2)$.
\end{proof}

Numerical evidence indicates that the submultiplicative inequality can be
strict. A two-channel qubit instance, its retained rational correction data,
and code reproducing the numerical strict-gap check are publicly
available in the accompanying GitHub
repository~\cite{QuAIR2026UniversalProgrammingCodes}.

\section{Many-copy overhead for universal channel programming}\label{sec:kcopy}

Here the retriever receives several identical Choi program states. Let
$\cH_S\cong\cH_{S'}\cong\CC^d$ be the signal input and
output spaces. Each program copy has space
$\cH_P:=\cH_A\ox\cH_B$, where $A\simeq S$ and $B\simeq S'$.
These are the registers denoted by $P_1$ and $P_2$ in
Appendix~\ref{appendix:notation}. With the Choi convention and
normalization fixed there, we write
\[
J_\cE
:=(\cI_A\ox\cE_{S\to B})(\Omega_{AS})
\in\cB(\cH_A\ox\cH_B),
\qquad
\pi_\cE:=J_\cE/d.
\]

For Theorem~\ref{thm:kcopy_scaling} and
Propositions~\ref{prop:kcopy_upper}--\ref{prop:kcopy_lower}, $d\geq2$
is a fixed integer and $k$ ranges over the positive integers. For each
pair $(d,k)$, $\nu_k(\Op{CPTP}_d)$ optimizes over one HPTP map
$\cP_{k,d}$, which may depend on $d$ and $k$ but not on the target
$\cE$. It receives exactly the product memory $\pi_\cE^{\ox k}$ and
must reproduce $\cE$ for every $\cE\in\Op{CPTP}_d$ and every signal
input; by linearity, this is equality of the retrieved and target maps
on all input operators. No alternative supplied program encoding---whether
correlated, compressed, or otherwise---is optimized over; arbitrary fixed
preprocessing of the prescribed product memory may still be included in
$\cP_{k,d}$. The programming tolerance is fixed at $\varepsilon=0$, and
every $k\to\infty$ limit below is taken at fixed $d$, with constants and
remainders allowed to depend on $d$.

The exact overhead used throughout this appendix is the
$\varepsilon=0$ specialization in
Definition~\ref{def:koverhead}. In particular, its feasibility
condition is
$\cP(\rho\ox\pi_\cE^{\ox k})=\cE(\rho)$ for every input state
$\rho$ and every target channel $\cE$.

The preceding appendices treat the single-copy case $\nu_1(\cS)$. We first
establish the sharp fixed-dimension
$1/k$ law for universal programming and then reduce the finite-$k$
problem using the mixed-tensor commutant. The Choi representation of
$\cP$ gives a direct starting point. As in
Appendix~\ref{appendix:notation},
for $J_\cP\in\cB(\cH_S\ox\cH_P^{\ox k}\ox\cH_{S'})$, the link
product formula~\cite{chiribella2008quantum} gives, for any $\sigma$
on $\cH_S\ox\cH_P^{\ox k}$,
\begin{equation}
\cP(\sigma)=\tr_{S,P^{\ox k}}\bigl[J_\cP\,(\sigma^T\ox I_{S'})\bigr].
\end{equation}
Inserting $\sigma=\rho\ox\pi_\cE^{\ox k}$ and imposing
$\cP(\sigma)=\cE(\rho)$ for every $\rho$ yields the equivalent
condition on $J_\cP$,
\begin{equation}\label{eq:kconstraint}
\tr_{P^{\ox k}}\bigl[J_\cP\,\bigl(I_S\ox(J_\cE^T)^{\ox k}
\ox I_{S'}\bigr)\bigr]
\;=\;d^k\,J_\cE,\qquad\forall\cE\in\cS.
\end{equation}

Decomposing $J_\cP=J_+-J_-$ into a difference of two
positive-semidefinite operators $J_\pm\ge 0$, with
$\tr_{S'}[J_\pm]=p_\pm I_{S,P^{\ox k}}$, casts the
$k$-copy overhead as a semidefinite program on the total space
$\cH_{\mathrm{tot}}:=\cH_S\ox\cH_P^{\ox k}\ox\cH_{S'}$, whose dimension
is $D:=d^{2(k+1)}$.
\begin{equation}\label{eq:kcopy_SDP_unreduced}
\begin{aligned}
    &\underline{\textbf{Primal program ($k$ copies)}}\\
    \nu_k(\cS)&=\min\;p_++p_-\\
    {\rm s.t.}\;\;& J_{\cP}:=J_+-J_-,\\
    &\tr_{P^{\ox k}}[J_{\cP}(I_S\ox(J_{\cE}^{T})^{\ox k}\ox I_{S'})]
      = d^k J_{\cE},\, \forall \cE \in \cS,\\
    &J_+\geq 0,\,\tr_{S'}[J_+] =p_+ I_{S,P^{\ox k}},\\
    &J_-\geq 0,\,\tr_{S'}[J_-] =p_- I_{S,P^{\ox k}}.
\end{aligned}
\end{equation}
The programming equality is the Choi condition~\eqref{eq:kconstraint},
whereas the remaining lines impose positivity and the partial-trace
conditions for the two scaled channel Choi operators. These constraints
already enforce the trace preservation of $J_\cP$. For any $\cE\in\cS$,
tracing the programming equality over $S'$ yields
\begin{equation*}
\begin{aligned}
d^k(p_+-p_-)I_S
&=\tr_{P^{\ox k},S'}\!\left[J_\cP
  \bigl(I_S\ox(J_\cE^T)^{\ox k}\ox I_{S'}\bigr)\right]\\
&=d^k\,\tr_{S'}J_\cE=d^kI_S.
\end{aligned}
\end{equation*}
Thus $p_+-p_-=1$ follows from the displayed SDP and is not an
independent restriction. The identification of the objective $p_++p_-$ with
$\|\cP\|_\diamond$ is the base-norm characterization of the diamond
norm for Hermitian-preserving trace-preserving
maps~\cite[Theorem~3]{Regula2021operational}, invoked again in the
proof of Proposition~\ref{prop:kcopy_lower}. Because $J_\pm$ are $D\times D$ matrices
with $D$ growing exponentially in $k$, even moderate values of $d$ and
$k$ render direct solution infeasible.
The symmetry analysis below reduces~\eqref{eq:kcopy_SDP_unreduced} to an
equivalent program whose variable count and constraint sizes are
determined by the representation theory of the walled Brauer algebra,
rather than by~$D$.


Two structural observations prepare the ground. The programming
constraint is linear and $S_k$-symmetric. The overhead is monotone in the
number of copies. A further group average, deferred until after
the scaling theorem, then completes the reduction of the search space
from the full operator algebra $\cB(\cH_{\mathrm{tot}})$ to the
commutant of $G\times S_k$.

\begin{lemma}[$S_k$-symmetry of the programming SDP]
\label{lem:kcopy_symmetric}
Let $J_\cP=J_+-J_-$ be feasible in
Eq.~\eqref{eq:kcopy_SDP_unreduced}. Then the operators
\[
J_\pm^{\mathrm{sym}}:=\frac{1}{k!}\sum_{\sigma\in S_k}
(I\ox U_\sigma\ox I)\,J_\pm\,(I\ox U_\sigma^\dagger\ox I),
\]
where $U_\sigma$ permutes the $k$ copies of $\cH_P$, form a feasible
quasi-decomposition
$J_{\cP^{\mathrm{sym}}}=J_+^{\mathrm{sym}}-J_-^{\mathrm{sym}}$
with the same $p_\pm$. In particular, an optimal quasi-decomposition may be
chosen $S_k$-invariant.
\end{lemma}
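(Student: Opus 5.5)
The plan is to verify, for each permutation $\sigma\in S_k$ separately, that the conjugated pair $J_\pm^\sigma:=(I\ox U_\sigma\ox I)J_\pm(I\ox U_\sigma^\dagger\ox I)$ is feasible in Eq.~\eqref{eq:kcopy_SDP_unreduced} with the same $p_\pm$. Feasibility for the average then follows because every constraint is either convex or linear. This is the same template as the $G$-symmetrization in Theorem~\ref{thm:G_symmetrisation}, with $S_k$ acting on the program copies in place of the group $G$.

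\emph{Step 1: positivity and partial traces.} Positivity $J_\pm^\sigma\geq0$ is immediate, since unitary conjugation preserves the positive cone. The operator $I_S\ox U_\sigma\ox I_{S'}$ acts trivially on $S'$, so it passes through $\tr_{S'}$. This gives
\[
\tr_{S'}J_\pm^\sigma=(I_S\ox U_\sigma)\,p_\pm I_{S,P^{\ox k}}\,(I_S\ox U_\sigma^\dagger)=p_\pm I_{S,P^{\ox k}}.
\]

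\emph{Step 2: the programming constraint.} This is the only step requiring care, and it is where the product form of the memory enters. Put $Y:=I_S\ox(J_\cE^T)^{\ox k}\ox I_{S'}$. The key fact is that $U_\sigma^\dagger(J_\cE^T)^{\ox k}U_\sigma=(J_\cE^T)^{\ox k}$, since a tensor power of a single operator is invariant under permutation of its factors.

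Write $W:=I_S\ox U_\sigma\ox I_{S'}$. Because $U_\sigma$ acts only on $P^{\ox k}$, cyclicity of the partial trace over $P^{\ox k}$ applies to it, giving
\[
\tr_{P^{\ox k}}\bigl[WJ_\cP W^\dagger Y\bigr]=\tr_{P^{\ox k}}\bigl[J_\cP\,W^\dagger YW\bigr]=\tr_{P^{\ox k}}\bigl[J_\cP Y\bigr]=d^kJ_\cE.
\]
Hence $J_\cP^\sigma=J_+^\sigma-J_-^\sigma$ satisfies Eq.~\eqref{eq:kconstraint} for every $\cE\in\cS$.

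\emph{Step 3: averaging.} I will average Steps 1--2 uniformly over $\sigma\in S_k$. Positivity survives because the positive cone is convex. The partial-trace identities $\tr_{S'}J_\pm^{\mathrm{sym}}=p_\pm I$ survive by linearity, so the coefficients $p_\pm$, and hence the objective $p_++p_-$, are unchanged. The programming equality is linear in $J_\cP$ and therefore also survives.

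Applying this to an optimal quasi-decomposition, which exists because the minimum is attained (Appendix~\ref{appendix:notation}), gives an optimal decomposition with the same value. Invariance of $J_\pm^{\mathrm{sym}}$ under conjugation by each $U_\tau$ follows by reindexing the group sum, $\sigma\mapsto\tau\sigma$.
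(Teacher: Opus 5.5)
Your proof is correct and follows the same route as the paper's. Conjugation by $I\ox U_\sigma\ox I$ preserves positivity and the $\tr_{S'}$ marginals, the programming contraction is unchanged because $(J_\cE^T)^{\ox k}$ is $S_k$-invariant, and all constraints survive the uniform average. Your Step~2 only makes explicit the partial-trace cyclicity over $P^{\ox k}$ that the paper uses implicitly.
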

\begin{proof}
Unitary conjugation and averaging preserve positivity, and
$\tr_{S'}J_\pm^{\mathrm{sym}}=p_\pm I$. The left-hand side
of~\eqref{eq:kconstraint} is a contraction of $J_\cP$ against
$(J_\cE^T)^{\ox k}$, which is $S_k$-invariant. Averaging $J_\cP$
over the $S_k$-conjugation on the program tensor factors therefore
leaves the programming equality unchanged. The objective remains
$p_++p_-$, which proves the claim.
\end{proof}

Whereas symmetrization operates at a fixed number of copies, the
second observation compares the overhead across different values of
$k$.

\begin{lemma}[Copy monotonicity]\label{lem:kcopy_mono}
For every nonempty channel set $\cS$ and $k\ge 1$,
$\nu_{k+1}(\cS)\le\nu_k(\cS)$. Consequently
$\nu_k(\cS)$ is a non-increasing, bounded-below sequence, so the
limit
$\nu_\infty(\cS):=\lim_{k\to\infty}\nu_k(\cS)$ exists and
satisfies $\nu_\infty(\cS)\ge 1$.
\end{lemma}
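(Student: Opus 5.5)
The plan is to prove the inequality by embedding any feasible $k$-copy retriever into the $(k+1)$-copy problem through discarding one program copy. The limit statement then follows from monotone convergence.

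Fix an optimal $k$-copy retriever $\cP_k$, which exists because the minima are attained (Appendix~\ref{appendix:notation}). Define the $(k+1)$-copy map
\begin{equation*}
\cP_{k+1}:=\cP_k\circ\bigl(\cI_S\ox\cI_{P^{\ox k}}\ox\tr_{P_{k+1}}\bigr),
\end{equation*}
which traces out the last program register. This map is HPTP because the partial trace is CPTP. It is target independent. Since $\tr\pi_\cE=1$, for every $\rho$ and every $\cE\in\cS$ we have
\begin{equation*}
\cP_{k+1}(\rho\ox\pi_\cE^{\ox(k+1)})=\cP_k(\rho\ox\pi_\cE^{\ox k})=\cE(\rho),
\end{equation*}
so $\cP_{k+1}$ is feasible for the $(k+1)$-copy problem.

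For the norm, we use submultiplicativity of the diamond norm under composition together with $\|\tr_{P_{k+1}}\|_\diamond=1$, which holds because the partial trace is a channel. This gives $\|\cP_{k+1}\|_\diamond\le\|\cP_k\|_\diamond=\nu_k(\cS)$. Alternatively, at the level of quasi-decompositions, $\cP_k=p_+\cQ_+-p_-\cQ_-$ induces $\cP_{k+1}=p_+(\cQ_+\circ\tr)-p_-(\cQ_-\circ\tr)$ with the same total weight. In Choi form, this is $J_{\pm}\ox I_{P_{k+1}}$, which satisfies the constraints of Eq.~\eqref{eq:kcopy_SDP_unreduced}. Either route gives $\nu_{k+1}(\cS)\le\nu_k(\cS)$.

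Finally, every HPTP map has diamond norm at least one, as shown in Appendix~\ref{appendix:notation}. Hence $\nu_k(\cS)\ge1$ for all $k$. A non-increasing sequence of reals that is bounded below converges, so $\nu_\infty(\cS)$ exists and is at least one. No step is a real obstacle. The only point needing care is checking that the discarded-copy construction keeps the Choi partial-trace normalization $\tr_{S'}[J_\pm\ox I]=p_\pm I$, and this is immediate.
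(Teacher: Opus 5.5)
Your proposal is correct and follows essentially the same route as the paper: it too defines $\widehat\cP_{k+1}(Z):=\cP_k(\tr_{A_{k+1}B_{k+1}}Z)$, uses $\tr\pi_\cE=1$ for feasibility and submultiplicativity of the diamond norm for the bound, and then takes the infimum. The limit follows from the general fact $\nu_{k,\varepsilon}(\cS)\geq1$ together with monotone convergence, exactly as you argue; your Choi-level check with $J_\pm\ox I_{P_{k+1}}$ is a valid equivalent reformulation of the same step.
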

\begin{proof}
Let $\cP_k$ be any feasible $k$-copy retriever and define
$\widehat\cP_{k+1}:\cB(\cH_S\ox\cH_P^{\ox(k+1)})
\to\cB(\cH_{S'})$ by
\[
\widehat\cP_{k+1}(Z)
:=\cP_k\bigl(\tr_{A_{k+1}B_{k+1}}Z\bigr).
\]
As the composition of a physical partial trace and an HPTP map,
$\widehat\cP_{k+1}$ is HPTP, with
$\widehat\cP_{k+1}(\rho\ox\pi_\cE^{\ox(k+1)})
= \cP_k(\rho\ox\pi_\cE^{\ox k})\cdot\tr(\pi_\cE) = \cE(\rho)$.
Submultiplicativity of the diamond norm gives
$\|\widehat\cP_{k+1}\|_\diamond\le\|\cP_k\|_\diamond$.
Taking the infimum over feasible $\cP_k$ proves the claim.
\end{proof}

The following conversion places the comparison with probabilistic
retrieval on the same Choi-program ensemble.

\begin{lemma}[Uniform-success probabilistic retrieval]
\label{lem:probabilistic_to_quasi}
Let $\varnothing\ne\cS\subseteq\Op{CPTP}_d$, let $k\geq1$, and let
\begin{equation*}
\cN_{\rm succ}\colon
\cB(\cH_S\ox\cH_P^{\ox k})\longrightarrow\cB(\cH_{S'})
\end{equation*}
be completely positive and trace nonincreasing. Suppose that a fixed
$q\in(0,1]$, independent of the signal input and target channel,
satisfies
\begin{equation*}
\cN_{\rm succ}(\rho\ox\pi_\cE^{\ox k})=q\,\cE(\rho)
\end{equation*}
for every $\rho\in\cD(\cH_S)$ and every $\cE\in\cS$. Then
$\nu_k(\cS)\leq 2/q-1$.
\end{lemma}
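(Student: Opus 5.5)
The plan is to complete the success branch to a trace-preserving map and then subtract the failure branch, which gives an exact retriever with overhead $2/q-1$. Let $\cN_{\rm fail}:=\cN_{\rm succ}^{\rm c}$ be a completely positive complement, so that $\cN_{\rm succ}+\cN_{\rm fail}$ is trace preserving. Concretely, define
\begin{equation*}
\cN_{\rm fail}(X):=\tr\!\bigl[(I-\cN_{\rm succ}^\dagger(I))X\bigr]\,\frac{I_{S'}}{d}.
\end{equation*}
This map is CP because $\cN_{\rm succ}^\dagger(I)\le I$ (trace nonincreasing). Then $\cQ_+:=\cN_{\rm succ}+\cN_{\rm fail}$ is a quantum channel, and $\cQ_-:=$ any fixed quantum channel. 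Rather than this, I will build the retriever directly:
\begin{equation*}
\cP:=\frac{1}{q}\cN_{\rm succ}-\frac{1}{q}\cN_{\rm fail}+\Bigl(1-\frac1q\Bigr)\cdot 0 ,
\end{equation*}
corrected for trace preservation as follows.

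The key observation is that on the relevant inputs $\tr\cN_{\rm fail}(\rho\ox\pi_\cE^{\ox k})=1-q$, since the success trace is $q$. I therefore set
\begin{equation*}
\cP:=\frac1q\cN_{\rm succ}-\frac{1}{q}\cN_{\rm fail}+\frac{1-q}{q}\,\cT,\qquad \cT(X):=\tr(X)\,\frac{I_{S'}}{d}.
\end{equation*}
I will then verify three things. First, $\cP$ is HPTP: its trace is $\frac1q\tr[\cN_{\rm succ}^\dagger(I)X]-\frac1q\tr[(I-\cN_{\rm succ}^\dagger(I))X]+\frac{1-q}{q}\tr X$, which is not identically $\tr X$. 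So I will instead route failure as a replacement branch. Let $\cQ_+:=\cN_{\rm succ}+\cN_{\rm fail}$, a channel. On program inputs, $\cQ_+(\rho\ox\pi_\cE^{\ox k})=q\cE(\rho)+(1-q)I/d$. Hence
\begin{equation*}
\cP:=\frac1q\cQ_+-\frac{1-q}{q}\cT
\end{equation*}
is HPTP, being an affine combination of channels with coefficients summing to one. It also satisfies $\cP(\rho\ox\pi_\cE^{\ox k})=\cE(\rho)+\frac{1-q}{q}\frac{I}{d}-\frac{1-q}{q}\frac Id=\cE(\rho)$, so it is exact. The point I must check is that $\cN_{\rm fail}$ outputs exactly $\tr[\cdot]\,I/d$ weighted by the failure probability, which matches $\cT$'s output $(1-q)I/d$ on these inputs since $\tr[(I-\cN_{\rm succ}^\dagger(I))(\rho\ox\pi_\cE^{\ox k})]=1-q$.

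Finally, the quasi-decomposition $\cP=p_+\cQ_+-p_-\cT$ with $p_+=1/q$ and $p_-=(1-q)/q$ gives, by the base-norm characterization~\cite[Theorem~3]{Regula2021operational} (or the triangle inequality and unit diamond norm of channels), $\|\cP\|_\diamond\le p_++p_-=(2-q)/q=2/q-1$. Since $\cP$ is feasible in Eq.~\eqref{eq:supp_exact_overhead}, it follows that $\nu_k(\cS)\le 2/q-1$. The only delicate step is choosing the failure completion so that its output on program inputs is target-independent; the depolarizing completion above handles this, since the input-dependence enters only through the failure probability, which is fixed at $1-q$ by hypothesis.
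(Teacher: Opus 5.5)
Your final construction is correct and is exactly the paper's proof with the fixed state $\tau=I_{S'}/d$, giving $\|\cP\|_\diamond\le 2/q-1$. That construction completes $\cN_{\rm succ}$ to a channel $\cQ_+$ using the measure-and-prepare branch $X\mapsto\tr[(I-\cN_{\rm succ}^\dagger(I))X]\,I_{S'}/d$, and then sets $\cP=\frac1q\cQ_+-\frac{1-q}{q}\cT$ with $\cT$ the replacement channel. In a write-up, delete the two abandoned intermediate definitions of $\cP$, since they obscure an otherwise clean argument. The one with $-\frac1q\cN_{\rm fail}$ is not trace preserving, as you noticed.
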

\begin{proof}
Set $\cH_{\rm in}:=\cH_S\ox\cH_P^{\ox k}$ and fix
$\tau\in\cD(\cH_{S'})$. Trace nonincrease gives
\begin{equation*}
F_{\rm fail}
:=I_{\rm in}-\cN_{\rm succ}^\dagger(I_{S'})\geq0.
\end{equation*}
Define
\begin{equation*}
\cN_{\rm fill}(X):=\tr(F_{\rm fail}X)\tau,
\qquad
\cQ_+:=\cN_{\rm succ}+\cN_{\rm fill},
\qquad
\cQ_-(X):=\tr(X)\tau.
\end{equation*}
The map $\cN_{\rm fill}$ is completely positive and
$\cQ_+^\dagger(I_{S'})=\cN_{\rm succ}^\dagger(I_{S'})+F_{\rm fail}
=I_{\rm in}$. Hence $\cQ_+$ and $\cQ_-$ are quantum channels. Define
\begin{equation*}
\cP_q:=\frac{1}{q}\cQ_+-\frac{1-q}{q}\cQ_-.
\end{equation*}
The two coefficients differ by one, so $\cP_q$ is trace preserving and
Hermiticity preserving. On every valid program input
$\sigma=\rho\ox\pi_\cE^{\ox k}$, the assumption gives
$\tr\cN_{\rm succ}(\sigma)=q$ and therefore
$\tr(F_{\rm fail}\sigma)=1-q$. It follows that
$\cQ_+(\sigma)=q\cE(\rho)+(1-q)\tau$ and
$\cQ_-(\sigma)=\tau$, which yields $\cP_q(\sigma)=\cE(\rho)$. This is an
exact quasi-quantum retriever with
\begin{equation*}
\|\cP_q\|_\diamond
\leq\frac{1}{q}+\frac{1-q}{q}
=\frac{2}{q}-1.
\end{equation*}
Minimizing over exact retrievers proves the claim.
\end{proof}

Lemma~\ref{lem:kcopy_mono} guarantees that the universal overhead
converges as $k\to\infty$. The theorem that follows identifies the
sharp rate of that convergence.

\begin{theorem}[Sharp fixed-dimension $k$-copy overhead]\label{thm:kcopy_scaling}
For every fixed integer $d\ge2$, under the scope specified at
the beginning of this section,
\begin{equation}\label{eq:kcopy_sharp_limit}
\lim_{k\to\infty}
k\bigl(\nu_k(\Op{CPTP}_d)-1\bigr)
=\frac{d^2-1}{2}.
\end{equation}
Equivalently,
\begin{equation}\label{eq:kcopy_sharp_scaling}
\nu_k(\Op{CPTP}_d)
=1+\frac{d^2-1}{2k}+o(1/k),
\qquad k\to\infty .
\end{equation}
The little-$o$ term is understood pointwise in the fixed
dimension $d$; no joint or uniform $(d,k)$ limit is claimed.
\end{theorem}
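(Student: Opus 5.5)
The proof splits into matching upper and lower bounds on $k(\nu_k(\Op{CPTP}_d)-1)$. The upper bound is constructive via port-based teleportation; the lower bound reduces any cheap exact retriever to a good learner of unknown unitaries.

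\emph{Achievability.} We first need the standard deterministic PBT protocol with $k$ ports, fed with the Choi programs $\pi_\cE^{\ox k}$ in place of maximally entangled resource pairs. For the optimal (or the pretty-good-measurement) deterministic PBT we use a Choi-port observation: the port outputs $B_1,\dots,B_k$ are obtained by applying $\cE$ to the $A$-halves of maximally entangled pairs. Because $\cE$ acts only on the receiving side, it commutes with Alice's POVM, and the retrieved map is exactly $\cE\circ\Lambda_{k,d}$. Here $\Lambda_{k,d}$ is the PBT channel, which by $U\ox\bar U$ covariance is a depolarizing channel $\cD_{\eta_{k,d}}$ with $1-\eta_{k,d}=d^2/(4k)+o(1/k)$~\cite{Christandl2021asymptotic}.

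We then precompose with $\cD_{\eta}^{-1}$ on the signal $S$. The retriever $\cP_k:=\mathrm{PBT}\circ(\cD_\eta^{-1}\ox\cI)$ is HPTP, exact, and satisfies $\|\cP_k\|_\diamond\le\|\cD_\eta^{-1}\|_\diamond$ by submultiplicativity. It remains to compute $\|\cD_\eta^{-1}\|_\diamond$. The map is unitarily covariant, so Lemma~\ref{lem:covariant_diamond_reduction} (with a finite unitary one-design such as the Pauli/Weyl group) reduces it to the trace norm of the normalized Choi operator $\eta^{-1}\Phi_d-(\eta^{-1}-1)I/d^2$. This equals $1+2(1-d^{-2})(\eta^{-1}-1)$. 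Inserting the PBT asymptotics gives $\limsup_k k(\nu_k-1)\le 2(1-d^{-2})\,d^2/4=(d^2-1)/2$.

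\emph{Converse.} Take an optimal exact $\cP$ with norm-optimal decomposition $p_+\cQ_+-p_-\cQ_-$, where $p_+-p_-=1$ and $2p_-=\|\cP\|_\diamond-1$. For unitary targets, exactness gives $\cU=p_+\cQ_+^U-p_-\cQ_-^U$. Hence $\cQ_+^U$ is a target-independent physical learner from $\pi_\cU^{\ox k}$ with $\cQ_+^U-\cU=p_-(\cQ_-^U-\cQ_+^U)$. Rather than using this diamond-norm estimate, we use the entanglement-fidelity form directly. From $1=F_e(\cU^\dagger\cU)=p_+F_e(\cU^\dagger\cQ_+^U)-p_-F_e(\cU^\dagger\cQ_-^U)\le p_+F_e(\cU^\dagger\cQ_+^U)$ we get $1-F_e(\cU^\dagger\cQ_+^U)\le p_-/p_+\le p_-=(\|\cP\|_\diamond-1)/2$. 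Nielsen's identity then bounds the Haar-average gate infidelity by $d(\nu_k-1)/[2(d+1)]$.

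It therefore suffices to show $\liminf_k k\,R_k^{\rm av}\ge d(d^2-1)/[4(d+1)]$, i.e.\ that the entanglement infidelity of any learner from $k$ Choi copies of $U$ is asymptotically at least $(d^2-1)/(4k)$. For this, note that $\pi_\cU^{\ox k}=(I\ox U)^{\ox k}\Phi^{\ox k}(I\ox U)^{\dagger\ox k}$ is a pure-state model. Via Schur--Weyl duality, $(U)^{\ox k}$ acts on $\bigoplus_\lambda V_\lambda\ox\CC^{m_\lambda}$, and the maximally entangled input has Schmidt weights $d_\lambda m_\lambda/d^k$ with full multiplicity entanglement. This is a (suboptimal but fixed) instance of the memory considered by~\cite{Bisio2010optimal}, so any Choi-based learner is a particular learner there.

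The lower bound then follows from local asymptotic estimation. Parametrize $U=e^{i\theta\cdot G}U_0$ around a point with $d^2-1$ generators. Quantum Fisher information of the pure model is $4k\,\mathrm{Cov}_{\Phi}(G)$, which is $k\cdot 4\tfrac{1}{d}\tr G_aG_b$ to leading order. The infidelity is locally the quadratic form $\frac{1}{d}\tr(\delta G)^2$ minus squared mean. A van Trees / Bayesian Cramér--Rao bound~\cite{GillLevit1995applications} over a smooth prior then yields the average risk $\ge \tfrac{1}{4k}\tr(\text{metric}\cdot F^{-1})\to (d^2-1)/(4k)$. Combining with the learner reduction gives $\liminf k(\nu_k-1)\ge(d^2-1)/2$.

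\emph{Main obstacle.} The converse is the delicate part. Two points need care: making the pure-state multiparameter Cram\'er--Rao/van Trees bound rigorous for learners that output a channel rather than an estimate (measure-and-prepare is not assumed), and handling the $SU(d)$ center and the possibility that the optimal learner uses coherent retrieval. I would first reduce any coherent learner to a covariant measure-and-prepare one via the Bisio et al.\ argument. If that fails, I would bound channel fidelity by the fidelity of the Choi output state, for which a Holevo-type pure-state bound (Matsumoto) applies directly.
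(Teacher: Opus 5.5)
Your proposal follows the paper's proof essentially step for step. For achievability, you feed standard PBT with Choi ports to obtain $\cE\circ\cD_{\eta_{k,d}}$ and then apply a quasi-inverse of the depolarizing shrinkage. For the converse, you use the positive branch of a norm-optimal decomposition as a unitary learner, reduce it via the Schur transform and Bisio et al.\ to a covariant measure-and-prepare estimator, and apply a van Trees bound with Fisher information $4k/d$ per direction against the $1/d$-weighted quadratic loss. Your linear bound $1-F_{\rm e}\le p_-/p_+$ and your one-design evaluation of $\|\cD_t\|_\diamond$ are harmless variants.

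The one point to keep explicit is how the Bisio reduction is applied. As in the paper's fixed-memory lemma, it must be carried out at the fixed Choi weights $p_\lambda=d_\lambda m_\lambda/d^k$, not by invoking Bisio's optimized-memory fidelity. That optimized value has infidelity decaying as $k^{-2}$ and would give no $1/k$ bound. Once you use the fixed-weight form, your Matsumoto fallback is unnecessary.
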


The two directions are proved separately because they rest on
independent arguments. The upper bound is constructive, whereas the
lower bound applies to every exact HPTP retriever.

\begin{proposition}[Upper bound from standard PBT]\label{prop:kcopy_upper}
For every fixed $d\ge2$ and every $0<\zeta<1/2$, as
$k\to\infty$,
\begin{equation}\label{eq:kcopy_upper_statement}
\nu_k(\Op{CPTP}_d)
\le
1+\frac{d^2-1}{2k}
+O_{d,\zeta}(k^{-3/2+\zeta}).
\end{equation}
Here the asymptotic notation means that there exist
$C_{d,\zeta}>0$ and $k_0(d,\zeta)\in\mathbb{N}$ such that
\begin{equation*}
\nu_k(\Op{CPTP}_d)
\le 1+\frac{d^2-1}{2k}
+C_{d,\zeta}k^{-3/2+\zeta},
\qquad k\ge k_0(d,\zeta).
\end{equation*}
No uniformity in $d$ or $\zeta$ is asserted.
In particular,
\begin{equation}\label{eq:kcopy_upper_limsup}
\limsup_{k\to\infty}
k\bigl(\nu_k(\Op{CPTP}_d)-1\bigr)
\le \frac{d^2-1}{2}.
\end{equation}
\end{proposition}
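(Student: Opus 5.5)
The upper bound will come from an explicit retriever: deterministic port-based teleportation with the $k$ Choi programs as resource ports, preceded by a sampled inverse of its depolarizing distortion. The construction has four steps.

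\emph{Step 1: PBT as a physical retriever.} Consider the standard (non-optimized-resource) deterministic PBT protocol of Refs.~\cite{ishizaka2008,Christandl2021asymptotic} with $k$ ports. The sender holds the signal $S$ together with the input halves $A_1,\dots,A_k$ of the ports; the receiver holds $B_1,\dots,B_k$. Using $\pi_\cE^{\ox k}$ in place of $k$ maximally entangled pairs is the same as applying $\cE$ to each port output $B_i$ of the PBT resource $\Phi_d^{\ox k}$. This holds because $\pi_\cE=(\cI\ox\cE)(\Phi_d)$ and $\cE$ commutes with the sender's measurement and with the receiver's port selection, which only traces out the other $B_j$.

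Let $\cT_k$ be the map that performs the pretty-good PBT measurement on $SA_1\cdots A_k$ and outputs port $B_i$ on outcome $i$. Then $\cT_k(\rho\ox\pi_\cE^{\ox k})=\cE(\Lambda_k(\rho))$, where $\Lambda_k$ is the PBT channel on $\Phi_d^{\ox k}$. By the $U\ox\bar U$ symmetry of the standard protocol, $\Lambda_k$ is unitarily covariant, hence depolarizing: $\Lambda_k=\cD_{\eta_{k,d}}$. Since $\cT_k$ is CPTP and independent of $\cE$, it has diamond norm $1$.

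\emph{Step 2: inverting the distortion.} Define $\cP_k:=\cT_k\circ(\cD_{\eta_{k,d}}^{-1}\ox\cI_{P^{\ox k}})$, which is HPTP. It satisfies $\cP_k(\rho\ox\pi_\cE^{\ox k})=\cE(\rho)$ exactly for all $\cE$. By submultiplicativity and tensor multiplicativity of the diamond norm (as used in Theorem~\ref{thm:post_stability}), $\nu_k(\Op{CPTP}_d)\le\|\cD_{\eta}^{-1}\|_\diamond$ with $\eta=\eta_{k,d}$.

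Now compute $\|\cD_\eta^{-1}\|_\diamond$. The inverse $\cD_\eta^{-1}=\eta^{-1}\cI-(\eta^{-1}-1)\cR$, where $\cR(X)=\tr(X)I/d$, is unitarily covariant. By Lemma~\ref{lem:covariant_diamond_reduction} (with a unitary one-design such as the Heisenberg--Weyl group), its diamond norm equals the trace norm of $\eta^{-1}\Phi_d-(\eta^{-1}-1)I/d^2$. That operator has eigenvalues $\eta^{-1}-(\eta^{-1}-1)/d^2$ once and $-(\eta^{-1}-1)/d^2$ with multiplicity $d^2-1$. Summing absolute values gives
\[
\|\cD_\eta^{-1}\|_\diamond=1+2(1-d^{-2})(\eta^{-1}-1).
\]
This requires $\eta>0$, which holds for all large $k$.

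\emph{Step 3: asymptotics.} Insert the fixed-$d$ expansion $1-\eta_{k,d}=d^2/(4k)+O(k^{-3/2+\zeta})$ from Ref.~\cite{Christandl2021asymptotic}, valid for every $\zeta>0$ (their second-order term for standard PBT). Then
\[
\eta^{-1}-1=\frac{1-\eta}{\eta}=\frac{d^2}{4k}+O(k^{-3/2+\zeta}),
\]
where the $O(k^{-2})$ correction from $1/\eta$ is absorbed into the remainder. Therefore $\|\cD_\eta^{-1}\|_\diamond=1+\frac{2(d^2-1)}{d^2}\cdot\frac{d^2}{4k}+O(k^{-3/2+\zeta})=1+\frac{d^2-1}{2k}+O(k^{-3/2+\zeta})$. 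This gives Eq.~\eqref{eq:kcopy_upper_statement}, and the $\limsup$ bound \eqref{eq:kcopy_upper_limsup} follows by multiplying by $k$.

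\emph{Main obstacle.} The delicate point is importing the precise entanglement-fidelity expansion of standard PBT, namely the $d^2/(4k)$ leading term with the stated remainder, from the Christandl et al.\ analysis. It must also be converted correctly into the depolarizing parameter $\eta_{k,d}$ through $F_{\rm e}=\eta+(1-\eta)/d^2$. One must check that this conversion reproduces $1-\eta=d^2/(4k)$ rather than $(d^2-1)/(4k)$ or a similar variant. The precise error exponent depends on which remainder Ref.~\cite{Christandl2021asymptotic} proves, so the $\zeta$ in the statement should be matched to that result.
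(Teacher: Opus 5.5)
Your proposal is correct and follows the paper's proof step for step. The shared steps are: PBT on the Choi ports with $\cE$ commuted through the sender's measurement; covariance forcing $\Lambda_k=\cD_{\eta_{k,d}}$; pre-composition with $\cD_{\eta_{k,d}}^{-1}$; the identity $\|\cD_t\|_\diamond=1+2(1-d^{-2})(t-1)$; and the Christandl et al.\ expansion $F^{\rm std}_d(k)=1-(d^2-1)/(4k)+O_{d,\zeta}(k^{-3/2+\zeta})$. The conversion you flag as the main obstacle is immediate: since $1-F_{\rm e}(\cD_\eta)=(1-d^{-2})(1-\eta)$, that expansion gives exactly your $1-\eta_{k,d}=d^2/(4k)+O_{d,\zeta}(k^{-3/2+\zeta})$.

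The only difference is how you obtain the norm identity: you use the covariant diamond-norm lemma with a Weyl one-design. The paper instead pairs the Choi-state lower bound with the explicit quasi-decomposition $\cD_t=(1+b)\cD_1-b\,\cD_{-1/(d^2-1)}$, where $b=(t-1)(d^2-1)/d^2$. For this proposition only that upper bound is needed.
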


\begin{proof}
Fix the standard deterministic port-based teleportation (PBT) protocol
with $k$ maximally entangled ports and the complete pretty-good
measurement (PGM). Let $A_i$ and $B_i$ be,
respectively, Alice's and Bob's halves of the $i$-th port. Write
\[
\sigma_i
:=
(\Phi_d)_{S A_i}\ox\frac{I_{A_{\ne i}}}{d^{k-1}},
\qquad
\Sigma_{\rm PGM}:=\sum_{i=1}^k\sigma_i ,
\]
where $\Phi_d=\Omega_d/d$ is the normalized maximally entangled state
and $A_{\ne i}$ denotes all of Alice's port registers except $A_i$.
If $\Pi_\Sigma$ is the support projector of $\Sigma_{\rm PGM}$, the
complete PGM is
\[
M_i
:=
\Sigma_{\rm PGM}^{-1/2}\sigma_i\Sigma_{\rm PGM}^{-1/2}
+\frac{I-\Pi_\Sigma}{k},
\qquad i=1,\ldots,k,
\]
where the inverse is taken on the support of $\Sigma_{\rm PGM}$. Indeed,
$\sum_i\Sigma_{\rm PGM}^{-1/2}\sigma_i
\Sigma_{\rm PGM}^{-1/2}=\Pi_\Sigma$, and hence
$\sum_iM_i=I$. The added term is supported on
$\ker\Sigma_{\rm PGM}$, so it is
orthogonal to every $\sigma_i$ and has zero contribution to the PGM
state-discrimination score. Let $F_d^{\rm std}(k)$ denote the
entanglement fidelity of the resulting standard-PBT channel. In the
notation of Christandl \emph{et al.},
their discrimination formulation~\cite[Sec.~3.1, Eq.~(3.1) and the
following text]{Christandl2021asymptotic} gives
\begin{equation}
\frac{1}{k}\sum_{i=1}^k\tr(M_i\sigma_i)
=\frac{d^2}{k}F^{\rm std}_d(k).
\end{equation}
The completion therefore leaves the standard-PBT entanglement fidelity
unchanged. It also preserves the deterministic protocol because every outcome
selects one port, after which Bob retains that port, discards the others,
and applies no correction~\cite[Sec.~3]{Christandl2021asymptotic}.

Let $\cT_{k,d}$ be this quantum teleportation channel, including the
classical port relabeling. The channel induced on the signal by
maximally entangled ports is
\begin{equation}
\Lambda_{k,d}(\rho):=\cT_{k,d}(\rho\ox\Phi_d^{\ox k}),
\end{equation}
and it is $\Op{SU}(d)$-covariant. To see this directly, fix
$V\in\Op{SU}(d)$. Every $\sigma_i$, and therefore
$\Sigma_{\rm PGM}$, $\Pi_\Sigma$, and $M_i$, is invariant under $V$
on $S$ and $\overline V$ on every
$A$-register, where $\overline V$ denotes entrywise complex
conjugation in the basis defining $\Phi_d$. Moreover, each port state
$(\Phi_d)_{A_jB_j}$ is invariant under $\overline V$ on $A_j$ and $V$ on
$B_j$. Moving these conjugations through the measurement and the port
relabeling gives
\[
\Lambda_{k,d}(V\rho V^\dagger)
=V\Lambda_{k,d}(\rho)V^\dagger .
\]
Since
$\cB(\CC^d)=\CC I_d\oplus\{X:\tr X=0\}$ and the traceless summand is
irreducible over $\CC$ under conjugation by $\Op{SU}(d)$, Schur's lemma
applies separately to these two inequivalent summands. Covariance gives
$V\Lambda_{k,d}(I_d)V^\dagger=\Lambda_{k,d}(I_d)$ for every
$V\in\Op{SU}(d)$, so $\Lambda_{k,d}(I_d)$ is proportional to $I_d$.
Trace preservation fixes $\Lambda_{k,d}(I_d)=I_d$. The restriction to
the traceless summand is multiplication by a scalar, and consequently
\begin{equation}
\Lambda_{k,d}=\cD_{\eta_{k,d}},
\qquad
\cD_\eta(X):=\eta X+(1-\eta)\tr(X)\frac{I_d}{d},
\end{equation}
for some scalar $\eta_{k,d}$. Since $\Lambda_{k,d}$ is completely
positive, it is Hermitian-preserving. Applying
$\Lambda_{k,d}(X)=\eta_{k,d}X$ to a nonzero traceless Hermitian $X$
shows that $\eta_{k,d}$ is real. The standard-PBT asymptotic theorem of
Christandl \emph{et al.} applies to this PGM protocol with maximally
entangled resources~\cite[Theorem~1.2]{Christandl2021asymptotic}.
In their convention, $F^{\rm std}_d(k)$ is the entanglement fidelity
obtained by applying the induced channel to one half of the normalized
maximally entangled state. Their theorem states that, for fixed $d$ and
for every $\zeta>0$,
\begin{equation}\label{eq:std_pbt_fidelity}
F^{\rm std}_d(k)
=1-\frac{d^2-1}{4k}
+O_{d,\zeta}(k^{-3/2+\zeta}).
\end{equation}
With the entanglement fidelity defined above,
$F^{\rm std}_d(k)=F_{\rm e}(\Lambda_{k,d})
=F_{\rm e}(\cD_{\eta_{k,d}})$. Since
$F_{\rm e}(\cD_\eta)=\eta+(1-\eta)/d^2$, Eq.~\eqref{eq:std_pbt_fidelity}
gives
\begin{equation}\label{eq:eta_asymptotic}
1-\eta_{k,d}
=\frac{d^2}{4k}
+O_{d,\zeta}(k^{-3/2+\zeta}),
\end{equation}
for every fixed $d$ and every $\zeta>0$.

Now replace each maximally entangled port by the Choi program
state $\pi_\cE=(\cI\ox\cE)(\Phi_d)$. Write
$\cT_{k,d}=\sum_i\cT_i$ for the trace-nonincreasing maps associated
with the $k$ port outcomes specified above. Because there is no
branch-dependent correction, the $i$-th branch only keeps the selected
output port $B_i$ and traces out the other $B$-registers. Trace
preservation implies the following linear identity. If $R$ is an
arbitrary auxiliary register and $B'_j$ is the output of $\cE$ acting
on $B_j$, then every
$Y\in\cB(\cH_R\ox\cH_{B_1}\ox\cdots\ox\cH_{B_k})$ satisfies
\[
\tr_{B'_{\ne i}}
\left[
\left(\cI_R\ox\bigotimes_{j=1}^k\cE_{B_j\to B'_j}\right)(Y)
\right]
=
\left(\cI_R\ox\cE_{B_i\to B'_i}\right)
\!\left(\tr_{B_{\ne i}}Y\right).
\]
The measurement acts only on $S A_1\cdots A_k$, so the channel actions
on the $B$-registers may be commuted through the measurement. After
relabeling $B'_i$ as the output, the preceding identity gives, for
every $X\in\cB(\cH_S)$,
\begin{equation}
\cT_i(X\ox\pi_\cE^{\ox k})
=\cE\bigl(\cT_i(X\ox\Phi_d^{\ox k})\bigr),
\end{equation}
and summing over $i$ yields
\begin{equation}
\cT_{k,d}(X\ox\pi_\cE^{\ox k})
=\cE\!\left(\cD_{\eta_{k,d}}(X)\right).
\end{equation}
Because $0<\zeta<1/2$, the remainder in
Eq.~\eqref{eq:eta_asymptotic} is $o(1/k)$, and hence
\[
1-\eta_{k,d}
=\frac{d^2}{4k}\bigl(1+o(1)\bigr).
\]
It follows that $0<\eta_{k,d}<1$ for all sufficiently large $k$.
Restricting to such $k$ suffices for this asymptotic proposition. Thus
the inverse $\cD_{\eta_{k,d}}^{-1}=\cD_{\eta_{k,d}^{-1}}$ is HPTP, and
\begin{equation}
\widetilde{\cP}_{k,d}
:=
\cT_{k,d}\circ
(\cD_{\eta_{k,d}}^{-1}\ox\cI_{P^{\ox k}})
\end{equation}
is HPTP. For every $X\in\cB(\cH_S)$ and every
$\cE\in\Op{CPTP}_d$,
\begin{align}
\widetilde{\cP}_{k,d}(X\ox\pi_\cE^{\ox k})
&=\cT_{k,d}\!\left(\cD_{\eta_{k,d}}^{-1}(X)
\ox\pi_\cE^{\ox k}\right)\\
&=\cE\!\left((\cD_{\eta_{k,d}}
\circ\cD_{\eta_{k,d}}^{-1})(X)\right)
=\cE(X),
\end{align}
which proves exact programming. Since $\cT_{k,d}$ is a quantum channel,
$\|\cT_{k,d}\|_\diamond=1$. Submultiplicativity and stability of the
diamond norm under tensoring with an identity map give
\begin{equation}
\|\widetilde{\cP}_{k,d}\|_\diamond
\le \|\cT_{k,d}\|_\diamond
\|\cD_{\eta_{k,d}}^{-1}\ox\cI_{P^{\ox k}}\|_\diamond
=\|\cD_{\eta_{k,d}}^{-1}\|_\diamond.
\end{equation}
Therefore,
\begin{equation}\label{eq:kcopy_upper_pre_norm}
\nu_k(\Op{CPTP}_d)
\le
\|\cD_{\eta_{k,d}}^{-1}\|_\diamond .
\end{equation}

It remains to evaluate the norm in
Eq.~\eqref{eq:kcopy_upper_pre_norm} to first order. Put
$t=\eta_{k,d}^{-1}>1$. The inverse depolarizing map is
$\cD_t$. Its diamond norm is exactly
\begin{equation}\label{eq:inverse_depol_norm}
\|\cD_t\|_\diamond
=1+2(1-d^{-2})(t-1).
\end{equation}
Indeed, the lower bound follows by applying $\cI\ox\cD_t$ to
$\Phi_d$. The resulting normalized Choi operator has one eigenvalue
$1+(t-1)(1-d^{-2})$ and $d^2-1$ negative eigenvalues
$-(t-1)/d^2$, so its trace norm is the right-hand side of
\eqref{eq:inverse_depol_norm}. For the reverse inequality, set
$b=(t-1)(d^2-1)/d^2$. Since both $\cD_1$ and
$\cD_{-1/(d^2-1)}$ are depolarizing quantum channels,
\begin{equation}
\cD_t=(1+b)\cD_1-b\,\cD_{-1/(d^2-1)}
\end{equation}
implies $\|\cD_t\|_\diamond\le 1+2b$, which is
Eq.~\eqref{eq:inverse_depol_norm}. Since $\eta_{k,d}\to1$,
Eq.~\eqref{eq:eta_asymptotic} gives
\begin{equation}
\begin{aligned}
t-1
&=\frac{1-\eta_{k,d}}{\eta_{k,d}}\\
&=(1-\eta_{k,d})
+\frac{(1-\eta_{k,d})^2}{\eta_{k,d}}\\
&=\frac{d^2}{4k}
+O_{d,\zeta}(k^{-3/2+\zeta}).
\end{aligned}
\end{equation}
The quadratic term is $O_{d,\zeta}(k^{-2})$ and is absorbed by the
displayed remainder. Substitution into Eq.~\eqref{eq:inverse_depol_norm}
gives
\begin{equation}\label{eq:kcopy_upper_final}
\nu_k(\Op{CPTP}_d)
\le
1+\frac{d^2-1}{2k}
+O_{d,\zeta}(k^{-3/2+\zeta}),
\end{equation}
which proves both claims.
\end{proof}

The lower bound uses the following fixed-memory consequence of the
retrieval theorem of Bisio \emph{et al.}~\cite{Bisio2010optimal}.

\begin{lemma}[Fixed-memory form of the retrieval theorem]
\label{lem:bisio_fixed_memory_reduction}
For $U,\widehat U\in\Op{SU}(d)$, let
$\cU:=\Op{Ad}_U$ and
$\widehat{\cU}:=\Op{Ad}_{\widehat U}$. Define
\begin{equation}\label{eq:kcopy_unitary_loss}
\mathcal L(U,\widehat U)
:=
1-F_{\rm av}(\widehat {\cU},\cU)
=
\frac{d^2-|\tr(U^\dagger\widehat U)|^2}{d(d+1)}.
\end{equation}
Fix $k$ and let $U_\lambda$ denote the irreducible blocks of the
representation $U\mapsto U^{\ox k}$, acting on carrier spaces
$\cH_\lambda$ of dimensions $d_\lambda$. For any probability distribution
$(p_\lambda)_\lambda$, define the canonical memory state on
$\cH_M:=\bigoplus_\lambda(\cH_\lambda\ox\cH_\lambda)$ by
\begin{equation*}
\ket{\phi_U}
:=
\bigoplus_\lambda
\sqrt{\frac{p_\lambda}{d_\lambda}}\,
\dket{U_\lambda},
\qquad
\dket{U_\lambda}:=(U_\lambda\ox I_{d_\lambda})\dket{I_{d_\lambda}}.
\end{equation*}
The conclusion below holds pointwise for every fixed
$(p_\lambda)_\lambda$; no optimization over the memory weights is
taken.
Let $\cG\in\Op{CPTP}(\cH_S\ox\cH_M\to\cH_{S'})$ be any physical
learning channel and set
$\cG^U(\rho):=\cG(\rho\ox\ketbra{\phi_U}{\phi_U})$.
There exists a POVM $M(d\widehat U)$ on $\cH_M$, with outcome
$\widehat U\in\Op{SU}(d)$, such that
\begin{equation}\label{eq:bisio_fixed_memory_reduction}
\int_{\Op{SU}(d)}
\left[1-F_{\rm av}(\cG^U,\cU)\right]\,dU
\ge
\int_{\Op{SU}(d)}r_M(U)\,dU,
\end{equation}
where
\begin{equation*}
r_M(U)
:=
\int \mathcal L(U,\widehat U)\,
\tr\!\left[
M(d\widehat U)\ketbra{\phi_U}{\phi_U}
\right].
\end{equation*}
All Haar measures are normalized.
\end{lemma}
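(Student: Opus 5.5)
The plan is to reproduce the measure-and-prepare argument of Bisio \emph{et al.}~\cite{Bisio2010optimal}: any learning channel that stores the unitary in a memory and later applies it can do no better, on average, than measuring the memory once and applying the estimated gate. I will derive this from the $\Op{SU}(d)$ covariance of the memory family $\{\ket{\phi_U}\}$, proceeding in three steps.

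\emph{Step 1: covariant reduction.} Since the loss $\mathcal L$ is bi-invariant, I may twirl $\cG$. Replace $\cG$ by
\[
\bar\cG := \int \Op{Ad}_{V}\circ\cG\circ\bigl(\Op{Ad}_{V^\dagger}\ox\Op{Ad}_{W_{V,Z}}\bigr)\,dV\,dZ ,
\]
where $W_{V,Z}$ is chosen to map $\ket{\phi_U}$ to $\ket{\phi_{V^\dagger U Z}}$. Concretely, $W_{V,Z}$ acts as $V^\dagger_\lambda$ on the left factor and $Z_\lambda^{T}$ on the right factor, and this map is unitary on $\cH_M$. Using $F_{\rm av}(\Op{Ad}_V\circ\cG^{V^\dagger UZ}\circ\Op{Ad}_{Z^\dagger},\cU)$ and Haar invariance, the Haar-averaged risk is unchanged. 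Hence I may assume $\cG$ is covariant in this sense.

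\emph{Step 2: identify the memory with a Choi-type resource.} Write $\cF_U:=\Op{Ad}_{U_\lambda}$ acting blockwise. Then $\ketbra{\phi_U}{\phi_U}$ is obtained from the fixed state $\ketbra{\phi_I}{\phi_I}$ by applying $\bigoplus_\lambda U_\lambda$ on the left factors. The risk equals
\[
1-\frac{d F_{\rm e}+1}{d+1}, \qquad F_{\rm e}:=\tr\!\left[\Phi_d\,\bigl(\cI\ox\Op{Ad}_{U^\dagger}\circ\cG^U\bigr)(\Phi_d)\right].
\]
Substituting the Choi operator of $\cG$ gives
\[
F_{\rm e}=\tr\!\left[C\,\bigl(\Omega_U\ox\ketbra{\phi_U}{\phi_U}^T\bigr)\right],
\]
where $C$ is the Choi operator of $\cG$ and $\Omega_U$ is the Choi of $\Op{Ad}_{U^\dagger}$ acting on $S'S$, up to normalization.

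\emph{Step 3: the key step.} Following Bisio's proof, I define the estimation POVM by contracting $C$ with $\Omega_{\widehat U}$. Set
\[
M(d\widehat U):=\tfrac{1}{d}\,\tr_{SS'}\!\bigl[C(\Omega_{\widehat U}^{T}\ox I_M)\bigr]^T\,(d+\cdots)\,d\widehat U .
\]
The task is to show that this is a valid POVM, i.e.\ positive and normalized to $I_M$. Positivity follows from $C\ge0$. Normalization requires $\int\Omega_{\widehat U}\,d\widehat U$ to factorize as $I_{SS'}/d$ after covariance, combined with trace preservation $\tr_{S'}C=I_{SM}$. I must then check that, for the covariant $C$, the induced gate fidelity $\tr[\Phi U^\dagger(\cdot)U]$ satisfies
\[
\frac{d F_{\rm e}+1}{d+1}\le\int F_{\rm av}(\widehat\cU,\cU)\,\tr[M(d\widehat U)\phi_U].
\]
This reduces to $F_{\rm e}\le\int|\tr U^\dagger\widehat U|^2/d^2\,\tr[M\phi_U]$.

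This last inequality is the main obstacle. The fidelity $F_{\rm e}$ is linear in $C$, but the measure-and-prepare value involves $\Omega_{\widehat U}$ integrated against $\Omega_U$. The bridge I will try first is Bisio's Schur-lemma argument. Using covariance, I reduce $C$ to a block form on the irreducible components $\lambda$ together with the defining representation. I then show that the optimal covariant $C$ corresponds to a unitary ``retrieval'' whose fidelity can be rewritten as an integral of $|\tr(U^\dagger\widehat U)|^2$ against a covariant seed POVM $\ketbra{\eta}{\eta}$ with $\eta=\bigoplus\sqrt{d_\lambda}\dket{I_\lambda}$. If this linear comparison proves delicate, the fallback is to cite Bisio's retrieval theorem directly for the canonical memory: covariance puts the problem exactly in their setting, and the theorem yields the required POVM.
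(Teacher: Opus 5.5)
The gap is in Step~3, and it is more than an unfinished inequality: the POVM you propose cannot satisfy it.

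Up to transpose conventions, contracting $C$ with the Choi projector of $\widehat U$ gives $\tr[M(d\widehat U)\phi_U]=\dbra{\widehat U}J_{\cG^U}\dket{\widehat U}\,d\widehat U$. In words, you run $\cG$ on one half of a maximally entangled pair and then estimate $\widehat U$ from a single copy of the output Choi state. Positivity and normalization do hold, even without covariance, since $\int\dketbra{\widehat U}{\widehat U}\,d\widehat U=I_{SS'}/d$ and $\tr_{S'}C=I_{SM}$. But this re-estimation destroys most of the fidelity.

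Take $\cG$ to be a covariant measure-and-rotate learner with POVM $N$. Your $M$ then outputs $N$ convolved with the kernel $|\tr(W^\dagger\widehat U)|^2\,dW$. Because $|\tr W|^2=1+\chi_{\rm ad}(W)$ and $\chi_{\rm ad}*\chi_{\rm ad}=\chi_{\rm ad}/(d^2-1)$, the Haar-averaged entanglement fidelity falls from $d^{-2}(1+c)$ to $d^{-2}\bigl(1+c/(d^2-1)\bigr)$. Here $c>0$ as soon as the learner beats random guessing; for instance $c=1$ for the optimal learner at $k=1$. So the inequality you reduce to, $F_{\rm e}\le\int d^{-2}|\tr(U^\dagger\widehat U)|^2\tr[M(d\widehat U)\phi_U]$, is false for the $M$ built from $C$, even when $C$ is covariant.

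Separately, in Step~1 the input twirl must be $\Op{Ad}_{Z^\dagger}$, not $\Op{Ad}_{V^\dagger}$, to match the memory label $V^\dagger UZ$; your own fidelity expression already uses $\Op{Ad}_{Z^\dagger}$.

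What the lemma needs is an upper bound valid for \emph{every} covariant learner, matched by one fixed POVM that is not derived from $C$. The paper obtains it in these steps:
\begin{enumerate}[(i)]
\item Decompose $U\ox U_\lambda^*\simeq\bigoplus_K U_K\ox I_{\cM_K^{(\lambda)}}$, together with the conjugate decomposition on the output side.
\item Use Schur's lemma to write the twirled Choi operator as $\bigoplus_{K,L}I_K\ox I_L\ox R_{KL}$.
\item Extract from trace preservation the bound $\tr R_{KK}^{(\lambda)}\le d_\lambda m_K^{(\lambda)}/d_K$.
\item Write the averaged fidelity as $d^{-2}\sum_K d_K\bra{\alpha_K}R_{KK}\ket{\alpha_K}$ and apply Cauchy--Schwarz, giving $\overline F_{\rm e}\le d^{-2}\sum_K\bigl(\sum_\lambda m_K^{(\lambda)}\sqrt{p_\lambda}\bigr)^2$.
\item Use Schur orthogonality to show that the fixed POVM $\ketbra{\xi_{\widehat U}}{\xi_{\widehat U}}\,d\widehat U$, with $\ket{\xi_{\widehat U}}=\bigoplus_\lambda\sqrt{d_\lambda}\dket{\widehat U_\lambda}$, attains this value exactly.
\end{enumerate}
Your seed $\bigoplus_\lambda\sqrt{d_\lambda}\dket{I_\lambda}$ does point to the right POVM. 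However, the trace bound in (iii) and the Cauchy--Schwarz step in (iv), which are where the inequality actually comes from, are absent from your plan.

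Your fallback of citing Bisio et al.\ directly also needs justification. The lemma is deliberately a fixed-$(p_\lambda)$ statement. You would have to check that their intermediate steps go through without optimizing the memory weights, which is exactly what the paper writes out.
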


\begin{proof}
Fix $(p_\lambda)_\lambda$. Let $U_{\rm f}$ and $V_{\rm f}$ denote two
independent copies of the defining representation. For each $\lambda$,
decompose
\begin{equation*}
U_{\rm f}\ox U_\lambda^*
\simeq\bigoplus_K U_K\ox I_{\cM_K^{(\lambda)}},
\qquad
V_{\rm f}^*\ox V_\lambda
\simeq\bigoplus_L V_L^*\ox I_{\cM_L^{(\lambda)}},
\end{equation*}
where $U_K$ and $V_L^*$ act on carrier spaces $\cH_K$ and $\cH_L$,
respectively. Write $d_K:=\dim\cH_K$ and
$m_K^{(\lambda)}:=\dim\cM_K^{(\lambda)}$. The same multiplicities occur
in the second decomposition because it is the complex-conjugate
counterpart of the first. Set
\begin{equation*}
\mathcal P_{KL}:=
\{\lambda\mid m_K^{(\lambda)}m_L^{(\lambda)}>0\}.
\end{equation*}
Twirling a learning channel under the independent input and output
group actions preserves its average entanglement fidelity.
With the memory weights held fixed, in particular,
\begin{equation*}
\int F_{\rm e}(\Op{Ad}_{U^\dagger}\circ\cG^U)\,dU
=\int F_{\rm e}(\Op{Ad}_{U^\dagger}\circ\cG_{\rm tw}^U)\,dU.
\end{equation*}
Schur's
lemma then decomposes the twirled Choi operator as
\begin{equation*}
J_{\cG_{\rm tw}}\simeq\bigoplus_{K,L} I_K\ox I_L\ox R_{KL},
\end{equation*}
where $R_{KL}\geq0$ acts on
$\bigoplus_{\lambda\in\mathcal P_{KL}}
(\cM_K^{(\lambda)}\ox\cM_L^{(\lambda)})$. Let
$R_{KL}^{(\lambda)}$ be its
compression to the summand indexed by $\lambda$. Trace preservation is
equivalent to the block identities
\begin{equation}\label{eq:bisio_fixed_p_trace_identity}
I_{\cM_L^{(\lambda)}}
=\sum_{K:\,\lambda\in\mathcal P_{KL}}
\frac{d_K}{d_\lambda}
\tr_{\cM_K^{(\lambda)}}
R_{KL}^{(\lambda)}
\end{equation}
for every $L$ and every $\lambda$ with $m_L^{(\lambda)}>0$. Every
summand is positive. Fixing the second index to $K$, taking the trace,
and retaining the term whose first index is also $K$ gives
\begin{equation*}
m_K^{(\lambda)}
=\sum_{K':\,\lambda\in\mathcal P_{K'K}}
\frac{d_{K'}}{d_\lambda}\tr R_{K'K}^{(\lambda)}
\geq\frac{d_K}{d_\lambda}\tr R_{KK}^{(\lambda)}.
\end{equation*}
Hence
\begin{equation}\label{eq:bisio_fixed_p_trace_bound}
\tr R_{KK}^{(\lambda)}
\leq\frac{d_\lambda m_K^{(\lambda)}}{d_K}.
\end{equation}

Writing $\dket{I_m}$ for the unnormalized maximally entangled vector on
a multiplicity space, define
\begin{equation*}
\ket{\alpha_K}
:=\bigoplus_{\lambda\in\mathcal P_{KK}}
\sqrt{\frac{p_\lambda}{d_\lambda}}\,
\dket{I_{m_K^{(\lambda)}}}.
\end{equation*}
Only sectors with equal coupled labels $K=L$ contribute to the
averaged entanglement fidelity; the coherent off-diagonal
$\lambda,\lambda'$ blocks inside each $R_{KK}$ remain included.
The block decomposition therefore gives the exact identity
\begin{equation*}
\overline F_{\rm e}(\cG;p)
:=\int F_{\rm e}(\Op{Ad}_{U^\dagger}\circ\cG^U)\,dU
=\frac{1}{d^2}\sum_K d_K
\bra{\alpha_K}R_{KK}\ket{\alpha_K}.
\end{equation*}
Positivity of $R_{KK}$ gives a Cauchy--Schwarz bound on its
off-diagonal compressions. Together with
$\langle\!\langle I_m|X|I_m\rangle\!\rangle\leq m\tr X$ for $X\geq0$
and Eq.~\eqref{eq:bisio_fixed_p_trace_bound}, this yields
\begin{equation}\label{eq:bisio_fixed_p_block_bound}
\begin{aligned}
\bra{\alpha_K}R_{KK}\ket{\alpha_K}
&\leq
\left(\sum_{\lambda:\,m_K^{(\lambda)}>0}
\sqrt{\frac{p_\lambda}{d_\lambda}}
\sqrt{\langle\!\langle I_{m_K^{(\lambda)}}|
R_{KK}^{(\lambda)}
|I_{m_K^{(\lambda)}}\rangle\!\rangle}
\right)^2\\
\overline F_{\rm e}(\cG;p)
&\leq\frac{1}{d^2}\sum_K
\left(\sum_{\lambda\in\mathcal P_{KK}}
m_K^{(\lambda)}\sqrt{p_\lambda}\right)^2
=:F_{\rm est}(p).
\end{aligned}
\end{equation}

The final expression is attained without changing the probabilities.
For $\widehat U\in\Op{SU}(d)$, define
\begin{equation*}
\ket{\xi_{\widehat U}}
:=\bigoplus_\lambda\sqrt{d_\lambda}\,
\dket{\widehat U_\lambda},
\qquad
M(d\widehat U):=\ketbra{\xi_{\widehat U}}{\xi_{\widehat U}}\,d\widehat U.
\end{equation*}
Schur orthogonality gives $\int M(d\widehat U)=I_{\cH_M}$. Measuring this
POVM and applying $\widehat U$ to the signal defines the physical
measure-and-rotate channel
\begin{equation*}
\cG_{\rm est}(X)
:=\int \widehat U\,
\tr_M\!\left[(I_S\ox
\ketbra{\xi_{\widehat U}}{\xi_{\widehat U}})X\right]
\widehat U^\dagger\,d\widehat U.
\end{equation*}
A second use of Schur orthogonality gives its average entanglement
fidelity as $F_{\rm est}(p)$. Hence every physical learning channel
satisfies
\begin{equation}\label{eq:bisio_fixed_fidelity_bound}
\int F_{\rm e}(\Op{Ad}_{U^\dagger}\circ\cG^U)\,dU
\le
\iint
\frac{|\tr(U^\dagger\widehat U)|^2}{d^2}
\tr\!\left[M(d\widehat U)\ketbra{\phi_U}{\phi_U}\right]dU.
\end{equation}
No supremum or optimization over $(p_\lambda)_\lambda$ has been taken.
The block decomposition, trace bound, and attaining POVM are the
fixed-memory steps underlying Eqs.~(12)--(25) of
Ref.~\cite{Bisio2010optimal}.

Equation~\eqref{eq:average_gate_fidelity} gives
\begin{equation*}
1-F_{\rm av}(\cG^U,\cU)
=\frac{d}{d+1}
\left[1-F_{\rm e}(\Op{Ad}_{U^\dagger}\circ\cG^U)\right]
\end{equation*}
and
\begin{equation*}
\mathcal L(U,\widehat U)
=\frac{d}{d+1}
\left[1-\frac{|\tr(U^\dagger\widehat U)|^2}{d^2}\right].
\end{equation*}
Taking complements in Eq.~\eqref{eq:bisio_fixed_fidelity_bound} and
using $\int M(d\widehat U)=I_{\cH_M}$ proves
Eq.~\eqref{eq:bisio_fixed_memory_reduction}.
\end{proof}

Combining this fixed-memory reduction with a local Bayesian estimate
gives the matching lower bound.

\begin{proposition}[Lower bound from fixed-Choi learning]
\label{prop:kcopy_lower}
For every fixed $d\ge2$,
\begin{equation*}
\liminf_{k\to\infty}
k\bigl(\nu_k(\Op{CPTP}_d)-1\bigr)
\ge \frac{d^2-1}{2}.
\end{equation*}
\end{proposition}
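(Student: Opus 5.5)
The plan follows the converse sketched in the main text. First turn an arbitrary exact retriever into a physical unitary learner whose Haar-averaged infidelity is controlled by the excess overhead. Then lower-bound that infidelity using Lemma~\ref{lem:bisio_fixed_memory_reduction} together with a local Bayesian Cram\'er--Rao estimate.

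\emph{Step 1: from retriever to learner.} Fix an exact retriever $\cP$ for $\Op{CPTP}_d$ with $\|\cP\|_\diamond=\nu_k(\Op{CPTP}_d)$. Take a norm-optimal quasi-decomposition $\cP=p_+\cQ_+-p_-\cQ_-$ (Theorem~3 of Ref.~\cite{Regula2021operational}), so that $p_+-p_-=1$ and $2p_-=\|\cP\|_\diamond-1$.
\begin{itemize}
\item For $U\in\Op{SU}(d)$, exactness gives $\cQ_+^U-\cU=p_-(\cQ_-^U-\cQ_+^U)$.
\item Hence $\tfrac12\|\cQ_+^U-\cU\|_\diamond\le p_-=(\|\cP\|_\diamond-1)/2$.
\item Since $1-F_{\rm e}(\Op{Ad}_{U^\dagger}\circ\cQ_+^U)\le\tfrac12\|\cQ_+^U-\cU\|_1$ on $\Phi_d$, Eq.~\eqref{eq:average_gate_fidelity} yields
\[1-F_{\rm av}(\cQ_+^U,\cU)\le \frac{d}{d+1}\cdot\frac{\|\cP\|_\diamond-1}{2}.\]
\end{itemize}
So $\cQ_+$ is a target-independent physical learner fed $\pi_\cU^{\ox k}$, with Haar risk $R\le d(\nu_k-1)/[2(d+1)]$.

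\emph{Step 2: memory conversion.} Put $\pi_\cU^{\ox k}$ into the form required by Lemma~\ref{lem:bisio_fixed_memory_reduction}. Since $\pi_\cU=\ketbra{\phi}{\phi}$ with $\ket{\phi}=(I\ox U)\dket{I_d}/\sqrt d$, we have $\ket{\phi}^{\ox k}=(I^{\ox k}\ox U^{\ox k})\dket{I_{d^k}}/d^{k/2}$. Apply a fixed Schur transform on the $A$ and $B$ registers. The vector becomes $\bigoplus_\lambda(U_\lambda\ox I)$ acting on maximally entangled vectors, tensored with maximally entangled states on paired $S_k$ multiplicity spaces that do not depend on $U$. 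These $U$-independent factors can be discarded or re-prepared by target-independent channels, and their weights are $p_\lambda=d_\lambda m_\lambda/d^k$. The memory is therefore equivalent, in both directions by CPTP maps, to the canonical state $\ket{\phi_U}$. Lemma~\ref{lem:bisio_fixed_memory_reduction} then bounds the learner's risk from below by the Haar-averaged loss $\int r_M(U)\,dU$ of an estimation POVM on $k$-fold tensor data, i.e.\ on the state family $\ket{\phi_U}$.

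\emph{Step 3: local estimation (the main obstacle).} We need $\liminf_k k\int r_M\,dU\ge d(d^2-1)/[4(d+1)]$. The loss is locally quadratic: writing $\widehat U=Ue^{i\sum\theta_aT_a}$ with $T_a$ orthonormal traceless generators, $\mathcal L\approx\frac{d}{d+1}\cdot\frac{1}{d}\|\theta\|^2$ (with the appropriate normalization).

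The quantum Fisher information of $\ket{\phi_U}$, which is a subspace of $\ket\phi^{\ox k}$, is at most $k$ times the single-copy QFI of $\pi_\cU$. That single-copy QFI is $4\,\mathrm{Var}$ of the generator in the maximally mixed reduced state, giving $4I/d$ times the metric. Apply the Braunstein--Caves bound (real-valued QFI for pure states, so multiparameter SLD issues only help the lower bound) inside a van Trees / Gill--Levit Bayesian inequality with a smooth local prior on shrinking neighbourhoods. Summing over the $d^2-1$ directions should give Bayes risk $\ge \frac{d}{d+1}\frac{d^2-1}{4k}(1+o(1))$, hence $kR\ge d(d^2-1)/[4(d+1)]$. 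Passing from the local Bayes risk to the Haar average requires left-invariance: the twirled optimal learner has constant risk, so the Haar risk equals the risk under any localized prior.

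The delicate points are the normalization constants and uniformity of the quadratic expansion. I would check the constant against the known asymptotic optimal-learning fidelity of Bisio et al. Combining with Step 1 then gives $\frac{d}{2(d+1)}\liminf k(\nu_k-1)\ge\frac{d(d^2-1)}{4(d+1)}$, which is the claim.
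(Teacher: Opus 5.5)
Your proposal is correct and follows essentially the same route as the paper. The quasi-decomposition makes $\cQ_+$ a target-independent learner with Haar risk at most $d(\nu_k-1)/[2(d+1)]$. The Schur-transform memory conversion feeds Lemma~\ref{lem:bisio_fixed_memory_reduction}, and the constant comes from twirling the POVM to constant risk plus a coordinatewise Braunstein--Caves/van Trees bound with QFI at most $4k/d$ and loss Hessian $2/(d+1)$.

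The differences are cosmetic. You let the prior's support shrink with $k$, whereas the paper fixes a radius $r$ and a Hessian slack $\alpha$ and sends $\alpha\downarrow0$ after $k\to\infty$. The points you flag as delicate are the ones the paper handles explicitly: it bounds the QFI uniformly away from the chart center via the integral formula for $U_x^\dagger\partial_aU_x$, and it extends the quadratic loss comparison to distant or central-phase-equivalent estimates via the cutoff $\widehat x=0$ outside $\mathcal N_\rho$.
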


\begin{proof}
Fix a positive integer $k$ and let $\cP$ be any feasible exact
$k$-copy HPTP retriever. The trace-preserving case of Theorem~3 of
Regula \emph{et al.} provides quantum channels $\cQ_\pm$ and
coefficients $p_\pm\ge0$ such that~\cite[Theorem~3 and
Eq.~(22)]{Regula2021operational}
\begin{equation*}
\cP=p_+\cQ_+-p_-\cQ_-,
\qquad
p_++p_-=\|\cP\|_\diamond .
\end{equation*}
Since all three maps are trace preserving, taking the trace gives
$p_+-p_-=1$ and hence
$2p_-=\|\cP\|_\diamond-1$. For $U\in\Op{SU}(d)$, define the
normalized Choi vector
$\ket{p_U}:=d^{-1/2}(I\ox U)\dket{I_d}$, so that
$\pi_{\cU}=\ketbra{p_U}{p_U}$. The channels induced by $\cQ_\pm$ at
this program are
\begin{equation*}
\cQ_{\pm}^U(\rho):=\cQ_{\pm}(\rho\ox\ketbra{p_U}{p_U}^{\ox k}).
\end{equation*}
The state-insertion map
$\rho\mapsto\rho\ox\ketbra{p_U}{p_U}^{\ox k}$ is CPTP. Hence each
$\cQ_\pm^U$ is a channel induced by the same target-independent
physical learner $\cQ_\pm$.
Exact programmability of $\cP$ gives
$\cU=p_+\cQ_+^U-p_-\cQ_-^U$.
Using $p_+-p_-=1$, this is equivalently
\begin{equation*}
\cQ_+^U-\cU
=p_-\bigl(\cQ_-^U-\cQ_+^U\bigr).
\end{equation*}
Since both induced maps are quantum channels,
\begin{equation}\label{eq:kcopy_positive_part_delta}
\|\cQ_+^U-\cU\|_\diamond
\le p_-\bigl(\|\cQ_-^U\|_\diamond+\|\cQ_+^U\|_\diamond\bigr)
=2p_-= \|\cP\|_\diamond - 1.
\end{equation}

The entanglement fidelity of
$\Op{Ad}_{U^\dagger}\circ\cQ_+^U$ is
$\langle p_U|(\cI\ox\cQ_+^U)(\Phi_d)|p_U\rangle$. The pure-state
trace-distance inequality and
Eq.~\eqref{eq:kcopy_positive_part_delta} imply
\begin{equation*}
1-F_{\rm e}(\Op{Ad}_{U^\dagger}\circ\cQ_+^U)
\le
\frac{1}{2}
\|\bigl(\cI\ox(\cQ_+^U-\cU)\bigr)(\Phi_d)\|_1
\le
\frac{1}{2}\|\cQ_+^U-\cU\|_\diamond
\le \frac{1}{2}(\|\cP\|_\diamond - 1).
\end{equation*}
Equation~\eqref{eq:average_gate_fidelity} then gives
\begin{equation*}
1-F_{\rm av}(\cQ_+^U,\cU)
\le
\frac{d}{2(d+1)}(\|\cP\|_\diamond - 1).
\end{equation*}

For a physical learning channel
$\cG\in\Op{CPTP}(\cH_S\ox\cH_P^{\ox k}\to\cH_{S'})$, write
\begin{equation*}
\cG^U(\rho)
:=
\cG(\rho\ox\ketbra{p_U}{p_U}^{\ox k}).
\end{equation*}
Define the optimal Haar-averaged learning risk from this fixed Choi
memory by
\begin{equation}\label{eq:Rav_le_delta}
R_k^{\rm av}
:=
\inf_{\cG\in\Op{CPTP}(\cH_S\ox\cH_P^{\ox k}\to\cH_{S'})}
\int_{\Op{SU}(d)}
\left[1-F_{\rm av}(\cG^U,\cU)\right]\,dU
\le
\frac{d}{2(d+1)}(\|\cP\|_\diamond-1).
\end{equation}

To compare this learning problem with
Lemma~\ref{lem:bisio_fixed_memory_reduction}, choose a fixed
Schur unitary
\begin{equation*}
S_k:(\CC^d)^{\ox k}\longrightarrow
\cK_k:=\bigoplus_\lambda(\cH_\lambda\ox\cM_\lambda)
\end{equation*}
such that
\begin{equation}\label{eq:product_choi_schur_decomposition}
S_kU^{\ox k}S_k^\dagger
=\bigoplus_\lambda U_\lambda\ox I_{m_\lambda},
\qquad m_\lambda:=\dim\cM_\lambda,
\end{equation}
where $U_\lambda$ acts on $\cH_\lambda$ and
$d_\lambda:=\dim\cH_\lambda$. Let $\Pi_k$ flip the two halves of every
Choi pair and reorder the registers as $B^{\ox k}A^{\ox k}$. With the
double-ket convention
$\dket{X}:=(X\ox I)\dket{I}$ used in
Lemma~\ref{lem:bisio_fixed_memory_reduction},
\begin{equation*}
\Pi_k\ket{p_U}^{\ox k}
=d^{-k/2}\dket{U^{\ox k}}_{B^{\ox k}:A^{\ox k}}.
\end{equation*}
On these ordered registers apply
$S_{k,B}\ox\overline{S}_{k,A}$, followed by a fixed regrouping of the
carrier and multiplicity factors in each block; call the resulting
unitary $W_k$. The identity
$(A\ox B)\dket{X}=\dket{AXB^T}$ shows both that
$(S_k\ox\overline{S}_k)\dket{I}=\dket{I}$ and that the transformed
state is supported on the $\lambda=\mu$ diagonal sector of
$\cK_k^B\ox\cK_k^A$. Consequently,
\begin{equation}
W_k\ket{p_U}^{\ox k}
=
d^{-k/2}\bigoplus_\lambda
\dket{U_\lambda}\ox\dket{I_{m_\lambda}}
=
\bigoplus_\lambda
\sqrt{\frac{p_\lambda}{d_\lambda}}\,
\dket{U_\lambda}\ox\ket{\eta_\lambda},
\qquad
p_\lambda:=\frac{d_\lambda m_\lambda}{d^k},
\qquad
\ket{\eta_\lambda}:=\frac{\dket{I_{m_\lambda}}}{\sqrt{m_\lambda}}.
\end{equation}
The dimension identity $\sum_\lambda d_\lambda m_\lambda=d^k$
shows that $\sum_\lambda p_\lambda=1$.
Let $\cH_M$ be the canonical memory space in
Lemma~\ref{lem:bisio_fixed_memory_reduction}, and define the isometry
$V_k$ by
\begin{equation*}
V_k\!\left(\bigoplus_\lambda\ket{\psi_\lambda}\right)
:=
\bigoplus_\lambda\ket{\psi_\lambda}\ox\ket{\eta_\lambda},
\qquad
\ket{\psi_\lambda}\in\cH_\lambda\ox\cH_\lambda .
\end{equation*}
If $\ket{\phi_U}:=\bigoplus_\lambda
\sqrt{p_\lambda/d_\lambda}\,\dket{U_\lambda}$, then
$W_k\ket{p_U}^{\ox k}=V_k\ket{\phi_U}$.
Define channels on the Schur-output space by
\begin{equation*}
\begin{aligned}
\mathsf E_k(X)
&:=V_kXV_k^\dagger,\\
\mathsf D_k(Y)
&:=V_k^\dagger YV_k
+\tr\!\left[(I-V_kV_k^\dagger)Y\right]\tau_0,
\end{aligned}
\end{equation*}
where $\tau_0$ is any fixed state on $\cH_M$. Both maps are
CPTP and $\mathsf D_k\circ\mathsf E_k=\cI_M$. Including
$W_k$, set
\begin{equation*}
\mathsf T_{M\to P}
:=\Op{Ad}_{W_k^\dagger}\circ\mathsf E_k,
\qquad
\mathsf T_{P\to M}
:=\mathsf D_k\circ\Op{Ad}_{W_k}.
\end{equation*}
These target-independent channels satisfy, for every $U$,
\begin{equation*}
\begin{aligned}
\mathsf T_{M\to P}(\ketbra{\phi_U}{\phi_U})
&=\ketbra{p_U}{p_U}^{\ox k},\\
\mathsf T_{P\to M}(\ketbra{p_U}{p_U}^{\ox k})
&=\ketbra{\phi_U}{\phi_U},\\
\mathsf T_{P\to M}\circ\mathsf T_{M\to P}
&=\cI_M.
\end{aligned}
\end{equation*}
For any product-memory learning channel $\cG$, the channel
$\cG\circ(\cI_S\ox\mathsf T_{M\to P})$ is therefore a
canonical-memory learner with identical output channels on every valid
program. Lemma~\ref{lem:bisio_fixed_memory_reduction} applies with the
fixed probabilities $p_\lambda$ above. If $M_{\rm can}(d\widehat U)$
is its POVM on $\cH_M$, then
\begin{equation*}
M_{\rm Choi}(d\widehat U)
:=\mathsf T_{P\to M}^{*}
\bigl(M_{\rm can}(d\widehat U)\bigr)
\end{equation*}
is a POVM on the original product memory $\cH_P^{\ox k}$ with
the same outcome probabilities. Conversely,
$\mathsf T_{M\to P}^{*}$ transports every product-memory POVM to the
canonical memory with the same model statistics. Thus the two POVM
infima agree. For a product-memory POVM, define
\begin{equation*}
r_{M_{\rm Choi}}(U)
=
\int \mathcal L(U,\widehat U)\,
\tr\!\left[
M_{\rm Choi}(d\widehat U)\ketbra{p_U}{p_U}^{\ox k}
\right].
\end{equation*}
Taking the infimum over physical learning channels gives
\begin{equation}\label{eq:Rav_estimation}
R_k^{\rm av}
\ge
\inf_{M_{\rm Choi}}\int_{\Op{SU}(d)}r_{M_{\rm Choi}}(U)\,dU,
\end{equation}
where the infimum is over all POVMs on the original product
Choi memory. Hereafter, we abbreviate $M_{\rm Choi}$ and
$r_{M_{\rm Choi}}$ by $M$ and $r_M$, respectively.

The infimum in Eq.~\eqref{eq:Rav_estimation} is unchanged when
restricted to covariant POVMs. For any POVM $M$ and Borel set
$B\subseteq\Op{SU}(d)$, define its Haar twirl by
\[
\overline M(B)
:=
\int_{\Op{SU}(d)}
(I\ox V^\dagger)^{\ox k}M(VB)(I\ox V)^{\ox k}\,dV.
\]
Here $VB:=\{V\widehat U:\widehat U\in B\}$.
Finite-dimensional Haar integration preserves positivity and
normalization, so $\overline M$ is a POVM. Its pointwise risk is
\begin{equation}\label{eq:kcopy_twirl_risk}
r_{\overline M}(U)
=\int r_M(VU)\,dV
=\int r_M(U')\,dU'.
\end{equation}
The equality follows from
$\mathcal L(U,V^{-1}\widehat U)=\mathcal L(VU,\widehat U)$ and Haar
invariance. Hence $r_{\overline M}(U)$ is independent of $U$ and equals
the Haar-averaged risk of $M$.

A local chart around the identity channel provides the required lower
bound. Let $n_{\rm par}:=d^2-1$, and choose
traceless Hermitian matrices $T_1,\ldots,T_{n_{\rm par}}$ satisfying
$\tr(T_aT_b)=\delta_{ab}$. For
$x=(x_1,\ldots,x_{n_{\rm par}})\in\RR^{n_{\rm par}}$ near $0$, define
\begin{equation*}
H_x:=\sum_{a=1}^{n_{\rm par}} x_aT_a,
\qquad
U_x:=e^{iH_x},
\qquad
\cU_x:=\Op{Ad}_{U_x},
\qquad
\ket{p_x}:=\ket{p_{U_x}}.
\end{equation*}
Here $\partial_a:=\partial/\partial x_a$. We use $\|\cdot\|_{\rm HS}$
for the Hilbert--Schmidt norm of an operator and $\|\cdot\|_2$ for the
Euclidean norm of a coordinate vector.
Let $\mathcal J(x)$ be the one-copy symmetric-logarithmic-derivative
quantum Fisher information matrix in these coordinates. Related
formulations of pure-state quantum Fisher geometry, multiparameter bounds,
and R\'enyi-based QFI matrices are given in
Refs.~\cite{fujiwara1995quantum,Matsumoto2002pure,Yang2019attaining,wildeQuantumFisherInformation2025}.
Its diagonal
entries for this pure-state model are~\cite[Sec.~II.A]{ballester2004estimation}
\begin{equation*}
\mathcal J_{aa}(x)
=4\left(
\langle\partial_a p_x|\partial_a p_x\rangle
-|\langle p_x|\partial_a p_x\rangle|^2
\right).
\end{equation*}
The Choi-state trace identity and $\det U_x=1$ give $\langle p_x|\partial_a p_x\rangle
=(1/d)\tr(U_x^\dagger\partial_aU_x)=0$, whereas
$\langle\partial_a p_x|\partial_a p_x\rangle
=\frac{1}{d}\|U_x^\dagger\partial_aU_x\|_{\rm HS}^2$.
Differentiating the power series term by term with respect to $x_a$ gives
\begin{equation*}
\partial_aU_x
=
i\int_0^1e^{i(1-s)H_x}T_ae^{isH_x}\,ds,
\qquad
U_x^\dagger\partial_aU_x
=i\int_0^1e^{-isH_x}T_ae^{isH_x}\,ds.
\end{equation*}
Unitary conjugation preserves the Hilbert--Schmidt norm and
$\|T_a\|_{\rm HS}=1$. Hence, for every $x$,
\begin{equation}\label{eq:kcopy_uniform_qfi}
\mathcal J_{aa}(x)
=\frac{4}{d}\|U_x^\dagger\partial_aU_x\|_{\rm HS}^2
\le
\frac{4}{d}
\left(\int_0^1\|e^{-isH_x}T_ae^{isH_x}\|_{\rm HS}\,ds\right)^2
=\frac{4}{d}.
\end{equation}
Let $\mathcal J_{aa}^{(k)}(x)$ denote the corresponding diagonal
quantum Fisher information for $\ket{p_x}^{\ox k}$. Additivity on
product states gives
\begin{equation*}
\mathcal J_{aa}^{(k)}(x)=k\mathcal J_{aa}(x)\le\frac{4k}{d}.
\end{equation*}

For fixed $x$, define
$\ell_x(z):=\mathcal L(U_x,U_{x+z})$ for coordinate increments
$z\in\RR^{n_{\rm par}}$. The loss is nonnegative and
$\ell_x(0)=0$, so $\nabla\ell_x(0)=0$. At $x=0$, writing
$H_z:=\sum_a z_aT_a$ gives
\begin{equation*}
\ell_0(z)
=\frac{d^2-|\tr(e^{iH_z})|^2}{d(d+1)}
=\frac{\|z\|_2^2}{d+1}+O(\|z\|_2^3),
\end{equation*}
where tracelessness of the $T_a$ and
$\tr(T_aT_b)=\delta_{ab}$ were used in the expansion. Therefore
\begin{equation}\label{eq:kcopy_loss_hessian}
\nabla^2\ell_0(0)=\frac{2}{d+1}I_{n_{\rm par}},
\end{equation}
where $I_{n_{\rm par}}$ is the identity matrix on
$\RR^{n_{\rm par}}$.

Fix a local slack $0<\alpha<1$, unrelated to the programming
tolerance $\varepsilon$, which is fixed at zero throughout this
section. Continuity of $\nabla^2\ell_x(z)$ and compactness
of the unit sphere give $r_0>0$ such that
\begin{equation}\label{eq:kcopy_loss_hessian_uniform}
v^T\nabla^2\ell_x(z)v
\ge
\frac{2(1-\alpha)}{d+1}\|v\|_2^2
\end{equation}
for all $v\in\RR^{n_{\rm par}}$ whenever
$\|x\|_2,\|z\|_2<r_0$.

The set of unitary channels is the smooth quotient of $\Op{SU}(d)$ by
its finite center, and is therefore an $n_{\rm par}$-dimensional
manifold. The
differential at $x=0$ of
$x\mapsto\cU_x$ sends $h$ to the tangent map
$X\mapsto i[H_h,X]$. Its kernel is zero. If $H_h$ commutes with every
$X$, then $H_h$ is scalar, and tracelessness forces $H_h=0$. The
differential is thus an isomorphism between two
$n_{\rm par}$-dimensional tangent
spaces. The inverse function theorem therefore allows
$0<\rho<r_0/2$ to be chosen so that
$x\mapsto\cU_x$ is a diffeomorphism from $B_\rho$ onto a
neighborhood of the identity channel that is open in the unitary-channel
manifold, where $B_\rho:=\{x:\|x\|_2<\rho\}$.

Let $x,y\in B_\rho$ and set $h:=y-x$. Then
$\|h\|_2<2\rho<r_0$. Taylor's formula with integral remainder, applied
along the segment $t\mapsto th$, gives
\begin{equation*}
\ell_x(h)
=
\int_0^1(1-t)
h^T\nabla^2\ell_x(th)h\,dt.
\end{equation*}
Equation~\eqref{eq:kcopy_loss_hessian_uniform} therefore gives
\begin{equation}\label{eq:kcopy_local_loss_comparison}
\mathcal L(U_x,U_{y})
\ge
\frac{1-\alpha}{d+1}\|y-x\|_2^2,
\qquad x,y\in B_\rho.
\end{equation}

To extend this inequality to every POVM outcome, define
$\mathcal N_\rho:=\{\cU_y:y\in B_\rho\}$. If the estimated
channel belongs to $\mathcal N_\rho$, let $\widehat x=y$ be its unique
coordinate. Otherwise, set $\widehat x=0$. The channel space
$\{\cU:U\in\Op{SU}(d)\}$ is compact, $\mathcal N_\rho$ is an open
neighborhood of the identity channel, and the continuous loss vanishes
only when its two channel arguments coincide. For
$V\in\Op{SU}(d)$, write $\cV:=\Op{Ad}_V$. Consequently,
\begin{equation*}
c_\rho
:=
\min_{\cV\notin\mathcal N_\rho}\mathcal L(I_d,V)
>0.
\end{equation*}
Here $V$ is any unitary representative of the channel $\cV$. By
Eq.~\eqref{eq:kcopy_unitary_loss}, $\mathcal L$ depends on its
arguments only through $|\tr(U^\dagger V)|$, so it is invariant under
central phases and descends to a well-defined function of unitary
channels.
The minimum is attained because the complement of $\mathcal N_\rho$
in the compact channel space is compact and $\mathcal L(I,\cdot)$ is
continuous.
The map from the POVM outcome to $\widehat x$ is Borel measurable because
it is the continuous inverse of the coordinate chart inside
$\mathcal N_\rho$ and is constant outside $\mathcal N_\rho$.
Uniform continuity permits a radius $0<r<\rho$ such that
$\mathcal L(U_x,V)\ge c_\rho/2$ for $x\in B_r$ and
$\cV\notin\mathcal N_\rho$. Decrease $r$ further so that
$(1-\alpha)r^2/(d+1)\le c_\rho/2$. For outcomes inside
$\mathcal N_\rho$, use Eq.~\eqref{eq:kcopy_local_loss_comparison}. For
outcomes outside it, use $\widehat x=0$ and the preceding two bounds.
Thus, for every $x\in B_r$ and every estimate $\widehat U$,
\begin{equation}\label{eq:kcopy_global_loss_comparison}
\mathcal L(U_x,\widehat U)
\ge
\frac{1-\alpha}{d+1}\|\widehat x-x\|_2^2.
\end{equation}

A Bayesian prior now converts the local comparison into an estimation
bound. Let
$w_r(x):=c_r(r^2-\|x\|_2^2)^2$ on $B_r$, where the constant $c_r>0$
normalizes the density. Extend $w_r$ continuously by zero from $B_r$
to the compact closure $\overline B_r$. Its Fisher information in
coordinate $a$ is
\begin{equation}\label{eq:kcopy_prior_fisher}
I_{w_r,a}
:=
\int_{B_r}\frac{(\partial_aw_r(x))^2}{w_r(x)}\,dx
\;=\frac{2(n_{\rm par}+4)}{r^2},
\qquad a=1,\ldots,n_{\rm par},
\end{equation}
where the value follows by radial integration. Both $w_r$ and its
first derivatives vanish on the boundary of $B_r$, and
$I_{w_r,a}$ is finite and independent of $k$.
For a covariant POVM, Eq.~\eqref{eq:kcopy_twirl_risk} makes its
pointwise risk constant. Equations~\eqref{eq:Rav_estimation} and
\eqref{eq:kcopy_twirl_risk} therefore give
\begin{equation}\label{eq:Rav_local_prior}
R_k^{\rm av}
\ge
\inf_{M\ {\rm covariant}}
\int_{B_r}w_r(x)r_M(U_x)\,dx.
\end{equation}
Let $\mathbb{E}_x$ denote expectation over the outcome of $M$ on
$\ketbra{p_x}{p_x}^{\ox k}$. Taking this expectation in
\eqref{eq:kcopy_global_loss_comparison} gives
\begin{equation}\label{eq:kcopy_risk_to_mse}
r_M(U_x)
\ge
\frac{1-\alpha}{d+1}
\mathbb{E}_x\|\widehat x-x\|_2^2,
\qquad x\in B_r.
\end{equation}

Let $\mu(B):=\tr M(B)$ be the finite trace measure of the POVM.
Finite dimensionality gives a positive operator density
$D_M(\widehat U)$ such that
\begin{equation*}
M(d\widehat U)=D_M(\widehat U)\mu(d\widehat U),
\qquad
\tr D_M(\widehat U)=1
\quad\text{for $\mu$-almost every $\widehat U$.}
\end{equation*}
The conditional outcome density at $x$ is
\begin{equation*}
f(\widehat U|x)
:=
\tr\!\left[D_M(\widehat U)\ketbra{p_x}{p_x}^{\ox k}\right].
\end{equation*}
It is normalized with respect to $\mu$ and real analytic in $x$.
Define the $a$th diagonal entry of its classical Fisher information by
\begin{equation*}
I_{M,aa}^{(k)}(x)
:=
\int_{\{f(\widehat U|x)>0\}}
\frac{[\partial_af(\widehat U|x)]^2}{f(\widehat U|x)}
\,\mu(d\widehat U),
\end{equation*}
with the integrand set to zero where $f(\widehat U|x)=0$.
The scalar Braunstein--Caves inequality applied along the coordinate
direction $e_a$, the $a$th standard basis vector of
$\RR^{n_{\rm par}}$, gives~\cite[Eqs.~(17) and~(24)]{BraunsteinCaves1994statistical}
\begin{equation}\label{eq:kcopy_BC_bound}
I_{M,aa}^{(k)}(x)
\le \mathcal J_{aa}^{(k)}(x)
\le\frac{4k}{d}.
\end{equation}
Here $M$ is one collective POVM on all $k$ copies. No product
measurement is assumed.

Boundedness of $D_M(\widehat U)$ and of the state derivatives on
$\overline B_r$ permits differentiation under the outcome integral,
while Eq.~\eqref{eq:kcopy_BC_bound} makes the required Fisher terms
finite. Together with the quadratic boundary zero of $w_r$, these
facts verify Conditions~1--5 of Gill and Levit on $\overline B_r$.
In the notation of their multivariate theorem, take
$\Theta=\overline B_r$, the scalar target $\psi(x)=x_a$,
$B(x)=1$, $C(x)=e_a^T$, and $n=1$, since the outcome of the single
collective $k$-copy POVM is treated as one observation. Their
Theorem~1 then gives the coordinatewise van Trees inequality without
an unbiasedness assumption~\cite[Theorem~1 and
Eqs.~(7)--(8)]{GillLevit1995applications}.
\begin{equation}\label{eq:kcopy_vantrees_scalar}
\int_{B_r}w_r(x)\,
\mathbb{E}_x(\widehat x_a-x_a)^2\,dx
\ge
\left[
I_{w_r,a}
+\int_{B_r}w_r(x)I_{M,aa}^{(k)}(x)\,dx
\right]^{-1}
\ge
\frac{1}{I_{w_r,a}+4k/d}.
\end{equation}
Summing Eq.~\eqref{eq:kcopy_vantrees_scalar} over the
$n_{\rm par}=d^2-1$ coordinates and using
Eq.~\eqref{eq:kcopy_prior_fisher} yields
\begin{equation*}
\int_{B_r}w_r(x)\,
\mathbb{E}_x\|\widehat x-x\|_2^2\,dx
\ge
\frac{n_{\rm par}}{2(n_{\rm par}+4)/r^2+4k/d}.
\end{equation*}
Equations~\eqref{eq:Rav_local_prior} and
\eqref{eq:kcopy_risk_to_mse} therefore imply
\begin{equation}\label{eq:fixed_choi_risk_finite_k}
R_k^{\rm av}
\ge
\frac{1-\alpha}{d+1}
\frac{n_{\rm par}}{2(n_{\rm par}+4)/r^2+4k/d}.
\end{equation}

For every fixed $0<\alpha<1$, the corresponding radius
$r=r(\alpha,d)>0$ and the prior $w_r$ were chosen independently of
$k$. Since
Eq.~\eqref{eq:Rav_le_delta} holds for every feasible retriever
$\cP$, taking the infimum over $\cP$ and using
Eq.~\eqref{eq:fixed_choi_risk_finite_k} gives the finite-$k$ bound
\begin{equation}\label{eq:kcopy_overhead_finite_k}
\nu_k(\Op{CPTP}_d)-1
\ge
\frac{(1-\alpha)(d^2-1)}
{2k+d(d^2+3)/r^2}.
\end{equation}
Equation~\eqref{eq:kcopy_overhead_finite_k} is the finite-$k$ bound
obtained here. Keeping $\alpha$ and $r$ fixed while $k\to\infty$
gives
\begin{equation*}
\liminf_{k\to\infty}
k\bigl(\nu_k(\Op{CPTP}_d)-1\bigr)
\ge\frac{(1-\alpha)(d^2-1)}{2}.
\end{equation*}
Letting $\alpha\downarrow0$ proves the proposition.
\end{proof}

\begin{proof}
Proposition~\ref{prop:kcopy_upper} bounds the limit superior in
\eqref{eq:kcopy_sharp_limit} by $(d^2-1)/2$, while
Proposition~\ref{prop:kcopy_lower} gives the matching limit inferior.
The limit therefore exists and equals $(d^2-1)/2$, which is equivalent
to \eqref{eq:kcopy_sharp_scaling}.
\end{proof}

\begin{corollary}[Asymptotic disappearance of the overhead]\label{cor:kcopy_compact_limit}
    For every nonempty channel set
    $\cS\subseteq\Op{CPTP}_d$ in fixed dimension,
    $\nu_\infty(\cS)=1$.
    In particular, the conclusion holds for every nonempty compact
    channel set.
\end{corollary}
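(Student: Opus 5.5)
The corollary claims $\nu_\infty(\cS)=1$ for every nonempty $\cS\subseteq\Op{CPTP}_d$. The plan is to sandwich $\nu_k(\cS)$ between the trivial lower bound $1$ and the universal overhead $\nu_k(\Op{CPTP}_d)$, then let $k\to\infty$ using Theorem~\ref{thm:kcopy_scaling}.

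\emph{Upper bound.} A $k$-copy retriever that is exact for the whole family $\Op{CPTP}_d$ is automatically exact for every subfamily $\cS$. So the feasible set for $\Op{CPTP}_d$ in Eq.~\eqref{eq:supp_exact_overhead} sits inside the feasible set for $\cS$. Minimizing the diamond norm over the larger set gives
\begin{equation*}
\nu_k(\cS)\le\nu_k(\Op{CPTP}_d)\qquad\text{for every }k\ge1.
\end{equation*}
This is the $k$-copy version of Lemma~\ref{lem:set_monotonicity}, and the argument is the same.

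\emph{Lower bound.} Every HPTP map has diamond norm at least one, as recorded in Appendix~\ref{appendix:notation}. Hence $\nu_k(\cS)\ge1$ for every $k$.

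\emph{Passing to the limit.} Lemma~\ref{lem:kcopy_mono} already guarantees that $\nu_\infty(\cS)=\lim_k\nu_k(\cS)$ exists and is at least $1$. Theorem~\ref{thm:kcopy_scaling}, or just Proposition~\ref{prop:kcopy_upper}, gives $\nu_k(\Op{CPTP}_d)=1+(d^2-1)/(2k)+o(1/k)$, so the upper bound tends to $1$. Squeezing then yields $\nu_\infty(\cS)=1$.

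Compactness is never used, so the statement about compact channel sets is just a special case. No step is a real obstacle. The only point needing care is that the upper bound compares overheads at the same fixed $d$ and the same product-Choi encoding, which is exactly the setting of Theorem~\ref{thm:kcopy_scaling}.
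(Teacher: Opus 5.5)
Your proposal is correct and matches the paper's proof. Both use the sandwich $1\le\nu_k(\cS)\le\nu_k(\Op{CPTP}_d)$ from feasible-set inclusion, copy monotonicity for the existence of the limit, and the sharp $k$-copy theorem to send the upper bound to $1$. Your remark that only the achievability half (Proposition~\ref{prop:kcopy_upper}) is needed is accurate, since the converse plays no role here.
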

\begin{proof}
    Any retriever feasible for $\Op{CPTP}_d$ is feasible
    for its subset $\cS$, so
    $\nu_k(\cS)\le \nu_k(\Op{CPTP}_d)$ for every $k$. Lemma~\ref{lem:kcopy_mono}
    gives the lower bound $\nu_\infty(\cS)\ge 1$, while
    Theorem~\ref{thm:kcopy_scaling} gives
    $\lim_{k\to\infty}\nu_k(\Op{CPTP}_d)=1$. Taking the limit in
    the sandwich
    $1\le\nu_k(\cS)\le\nu_k(\Op{CPTP}_d)$ proves the claim.
\end{proof}

The finite-$k$ analysis continues with a symmetry reduction. Suppose
that $\cS$ is
$(G,U,V)$-covariant in the sense of
Definition~\ref{def:G_covariance}, with self-conjugate
representations. The $k$-copy analog of the induced
representation~\eqref{eq:induced_rep} is
\begin{equation}\label{eq:kcopy_induced_rep}
\varrho_k(g)\;:=\;(U_g)_S\;\ox\;
\bigl[(U_g^*)_A\ox(V_g)_B\bigr]^{\ox k}\;\ox\;(V_g^*)_{S'}
\end{equation}
on $\cH_{\mathrm{tot}}$, and the associated commutant is
\begin{equation}\label{eq:kcopy_commutant_def}
\mathfrak{C}_G^{(k)}\;:=\;\bigl\{X\in\cB(\cH_{\mathrm{tot}})
\;:\;[\varrho_k(g),X]=0,\;\forall\,g\in G\bigr\}.
\end{equation}

For the all-channel family, $g=(U,V)$ ranges over
$\Op{SU}(d)\times\Op{SU}(d)$. The input matrix $U$ and output matrix
$V$ in this action therefore vary independently. Under the regrouping
below, their action factorizes as $U\ox(U^*)^{\ox k}$ on $\cH_U$ and
$V^{\ox k}\ox V^*$ on $\cH_V$.

Theorem~\ref{thm:G_symmetrisation} applies after replacing the program
representation by its $k$-fold tensor power. Indeed,
$R_P^\dagger J_\cE^T R_P$, where
$R_P:=(U_g^*)_A\ox(V_g)_B$, is the transposed Choi operator of the
same twisted channel as in the single-copy proof. Its $k$-fold power
therefore transforms the $k$ program copies simultaneously. Averaging the
two positive variables separately gives
\[
\bar J_\pm:=\int_G\varrho_k(g)J_\pm\varrho_k(g)^\dagger\,d\mu(g)
\in\mathfrak C_G^{(k)},
\]
without changing $p_\pm$ or feasibility. Combined with
Lemma~\ref{lem:kcopy_symmetric}, and using that the group and copy
permutation actions commute, this shows that an optimal
quasi-decomposition may be chosen in the joint $(G\times S_k)$-commutant.
For the explicit reduction below we retain the larger
$G$-commutant~\eqref{eq:kcopy_commutant_def}. The additional diagonal
$S_k$ fixed-point reduction is not included in the reported variable
or block counts.


The all-channel commutant has a mixed-tensor description. Let
$\cK_U\simeq\CC^d$ and $\cK_V\simeq\CC^d$ be carrier spaces for the
defining representations $U$ and $V$ of the independent input and
output copies of $\Op{SU}(d)$. After grouping input- and output-type
factors, the total Choi space becomes
$\cH_{\mathrm{tot}}\cong\cH_U\ox\cH_V$, where
\begin{equation}\label{eq:kcopy_rep_structure}
\cH_U=\cK_U\ox(\cK_U^*)^{\ox k},
\qquad
\cH_V=\cK_V^{\ox k}\ox\cK_V^*.
\end{equation}
The actions on these sectors are $U\ox(U^*)^{\ox k}$ and
$V^{\ox k}\ox V^*$, respectively.
Both sectors have dimension $D_{\mathrm{sec}}:=d^{k+1}$, and the total
dimension is $D=D_{\mathrm{sec}}^2=d^{2(k+1)}$.

Mixed Schur--Weyl duality describes the first sector through the
walled Brauer algebra
$B_{1,k}(d)$~\cite{brauer1937,stroomer1991,BCHLLS94}. Its abstract
diagram basis consists of pairings of two rows of $k+1$ vertices,
separated by a wall between the single $U$ position and the $k$ dual
positions. Vertical strands remain on one side of the wall, whereas
horizontal contraction strands cross it. After the standard bending
of the vertices on one side, these diagrams are in bijection with
$S_{k+1}$. Hence the abstract algebra has dimension $(k+1)!$ for every
value of $d$.

For a generic $d$-dimensional carrier $W$ and general $r,s$, let
\begin{equation}\label{eq:walled_brauer_rep}
\rho_{r,s}^{(d)}:
B_{r,s}(d)\longrightarrow
\cB\bigl(W^{\ox r}\ox(W^*)^{\ox s}\bigr)
\end{equation}
denote the natural mixed-tensor representation obtained by interpreting
each diagram as permutations and $W$--$W^*$ contractions. This
representation need not be faithful, so we distinguish the abstract
algebra from its represented image as follows.
\begin{equation}\label{eq:kcopy_brauer_image}
\mathcal A_{1,k}^{(d)}
:=\operatorname{im}\rho_{1,k}^{(d)}
\cong B_{1,k}(d)/\ker\rho_{1,k}^{(d)},
\end{equation}
and define $\mathcal A_{k,1}^{(d)}$ analogously. The representation is
faithful exactly when $d\geq k+1$~\cite{DipperDotyStoll2014}. Thus
$\dim\mathcal A_{1,k}^{(d)}=(k+1)!$ in this stable range, whereas
diagram relations can lower the image dimension for $d<k+1$.

\begin{theorem}[All-channel commutant]\label{thm:kcopy_cptp_comm}
For every $d\geq2$,
\begin{equation}\label{eq:kcopy_commutant}
\mathfrak{C}_{\Op{SU}(d)\times\Op{SU}(d)}^{(k)}
\cong
\mathcal A_{1,k}^{(d)}\ox\mathcal A_{k,1}^{(d)}.
\end{equation}
The $S_k$-invariant subalgebra is the fixed-point algebra of the
diagonal permutation action on the $k$ dual slots of $\cH_U$ and the
$k$ fundamental slots of $\cH_V$.
\end{theorem}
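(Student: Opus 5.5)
The plan is to exploit the fact that the group $\Op{SU}(d)\times\Op{SU}(d)$ acts on $\cH_{\mathrm{tot}}\cong\cH_U\ox\cH_V$ through the tensor product of two representations of \emph{independent} factors. The steps are: reduce to each factor separately, identify each factor's commutant with a walled Brauer image by mixed Schur--Weyl duality, then read off the $S_k$ fixed points.

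\emph{Step 1: regrouping.} Write $\varrho_k(U,V)$ from Eq.~\eqref{eq:kcopy_induced_rep} as
\[
W_k^\dagger\bigl[(U\ox(U^*)^{\ox k})\ox(V^{\ox k}\ox V^*)\bigr]W_k,
\]
where $W_k$ is the fixed register permutation collecting $S,A_1,\dots,A_k$ into $\cH_U$ and $B_1,\dots,B_k,S'$ into $\cH_V$, as in Eq.~\eqref{eq:kcopy_rep_structure}. Conjugation by $W_k$ is an algebra isomorphism, so it suffices to compute the commutant of $\tau_U\ox\tau_V$ with $\tau_U(U):=U\ox(U^*)^{\ox k}$ and $\tau_V(V):=V^{\ox k}\ox V^*$.

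\emph{Step 2: commutant of a product action.} Because $U$ and $V$ vary independently, the commutant of $\{\tau_U(U)\ox\tau_V(V)\}$ equals the commutant of $\{\tau_U(U)\ox I\}\cup\{I\ox\tau_V(V)\}$. Set $U=I$ or $V=I$ to get one inclusion; the other is immediate. By the commutation theorem for tensor products of finite-dimensional algebras,
\[
(\mathcal B_U\ox\CC I)'\cap(\CC I\ox\mathcal B_V)'=\mathcal B_U'\ox\mathcal B_V',
\]
where $\mathcal B_U,\mathcal B_V$ are the algebras generated by the two representations. Concretely, I would expand $X=\sum_i a_i\ox b_i$ with the $b_i$ linearly independent. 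Commuting with $\tau_U(U)\ox I$ then forces each $a_i\in\tau_U'$. Repeating with the roles swapped gives $X\in\tau_U'\ox\tau_V'$.

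\emph{Step 3: mixed Schur--Weyl duality.} The commutant of $U\ox(U^*)^{\ox k}$ for $U\in\Op{SU}(d)$ coincides with its commutant over $\Op{GL}(d)$, since $\Op{SU}(d)$ is Zariski-dense up to scalars and scalars act trivially on mixed tensors with one $W$ and $k$ copies of $W^*$ only through a phase that cancels in conjugation. Mixed Schur--Weyl duality~\cite{BCHLLS94} then gives $\tau_U'=\operatorname{im}\rho^{(d)}_{1,k}=\mathcal A^{(d)}_{1,k}$. The same argument applied to $V^{\ox k}\ox V^*$ gives $\tau_V'=\mathcal A^{(d)}_{k,1}$. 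Combining with Step~2 yields Eq.~\eqref{eq:kcopy_commutant} for every $d\ge2$.

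\emph{Step 4: the $S_k$ statement.} The copy permutation $U_\sigma$ on $P^{\ox k}$ becomes, after $W_k$, the simultaneous permutation $\sigma$ of the $k$ dual slots of $\cH_U$ and the $k$ fundamental slots of $\cH_V$. This action commutes with $\varrho_k$, so conjugation by it preserves the commutant. The $S_k$-invariant subalgebra is therefore exactly the fixed-point subalgebra of this diagonal action.

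\emph{Main obstacle.} The delicate point is Step~3 in small dimension $d<k+1$. There one must use that duality identifies the commutant with the \emph{image} $\operatorname{im}\rho^{(d)}_{1,k}$ and not with the abstract algebra $B_{1,k}(d)$. This holds for all $d$ by the first fundamental theorem of invariant theory for $\Op{GL}(d)$: invariants in $\operatorname{End}(W\ox(W^*)^{\ox k})\cong W^{\ox(k+1)}\ox(W^*)^{\ox(k+1)}$ are spanned by contractions, which are precisely the diagram operators. The plan is to cite this statement directly rather than the faithfulness criterion.
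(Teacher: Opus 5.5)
Your proposal is correct and follows the same route as the paper's proof. Mixed Schur--Weyl duality identifies each sector commutant with the represented image $\mathcal A_{1,k}^{(d)}$ or $\mathcal A_{k,1}^{(d)}$ for every $d$, because the center acts by scalars and so the $\Op{SU}(d)$, $\Op{U}(d)$ and $\Op{GL}(d)$ commutants coincide; independence of the two group factors then gives the tensor product. You add explicit detail for three steps the paper states tersely or omits: the identity $(\mathcal B_U\ox\CC I)'\cap(\CC I\ox\mathcal B_V)'=\mathcal B_U'\ox\mathcal B_V'$, the first-fundamental-theorem reason that the image, not the abstract algebra, is the commutant when $d<k+1$, and the $S_k$ fixed-point statement, which the paper's proof does not address.
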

\begin{proof}
Mixed Schur--Weyl duality identifies the commutant on $\cH_U$ with
$\mathcal A_{1,k}^{(d)}$ and that on $\cH_V$ with
$\mathcal A_{k,1}^{(d)}$~\cite{BCHLLS94}. On a fixed mixed tensor
power, the central phase in $\Op{U}(d)$ acts as a scalar. The
$\Op{SU}(d)$, $\Op{U}(d)$, and complex general-linear actions therefore
have the same endomorphism commutant. Independence of the input and
output group factors gives the tensor product in
Eq.~\eqref{eq:kcopy_commutant}.
\end{proof}

For $k=1$, Eq.~\eqref{eq:kcopy_commutant} reproduces
Appendix~\ref{sec:cptp_appendix}. In particular, the represented sector
algebras are $\Op{span}\{I_{SA},\Omega_{SA}\}$ on $\cH_U$ and
$\Op{span}\{I_{BS'},\Omega_{BS'}\}$ on $\cH_V$. Hence the two-sector
commutant has dimension four.

The sector decompositions take the isotypic form
\begin{equation}\label{eq:kcopy_isotypic}
\cH_U
=\bigoplus_\lambda
\mathcal M_{1,\lambda}\ox\mathcal V_{1,\lambda},
\end{equation}
\begin{equation}\label{eq:kcopy_isotypic_V}
\cH_V
=\bigoplus_\mu
\mathcal M_{2,\mu}\ox\mathcal V_{2,\mu}.
\end{equation}
Here $\mathcal V_{i,\lambda}$ is an irreducible group representation,
and $\mathcal M_{i,\lambda}$ is its multiplicity space. Write their
dimensions as $v_{i,\lambda}$ and $m_{i,\lambda}$, respectively. For
$U\ox(U^*)^{\ox k}$, the labels are bipartitions
$(\alpha,\beta)$~\cite{BCHLLS94,CDDM05} for
which
\[
|\alpha|=1-t,
\qquad
|\beta|=k-t,
\qquad
t\in\{0,1\},
\qquad
\ell(\alpha)+\ell(\beta)\leq d.
\]
This condition, rather than the difference
$|\alpha|-|\beta|=1-k$ alone, specifies the components that occur.

Choose unitary intertwiners $\Gamma_1$ and $\Gamma_2$ implementing
Eqs.~\eqref{eq:kcopy_isotypic} and
\eqref{eq:kcopy_isotypic_V}. Schur's lemma gives
\begin{equation}\label{eq:kcopy_sector_blocks}
\Gamma_1\mathcal A_{1,k}^{(d)}\Gamma_1^\dagger
=\bigoplus_\lambda
\cB(\mathcal M_{1,\lambda})\ox I_{\mathcal V_{1,\lambda}},
\end{equation}
and the analogous identity holds for
$\Gamma_2\mathcal A_{k,1}^{(d)}\Gamma_2^\dagger$.
Consequently,
\begin{equation}\label{eq:kcopy_commutant_dimensions}
\ell_1:=\dim\mathcal A_{1,k}^{(d)}
=\sum_\lambda m_{1,\lambda}^2,
\qquad
\ell_2:=\dim\mathcal A_{k,1}^{(d)}
=\sum_\mu m_{2,\mu}^2.
\end{equation}
In the stable range $d\geq k+1$, both dimensions equal $(k+1)!$.

Choose real bases consisting of Hermitian operators,
$\{e^{(1)}_{j_1}\}_{j_1=1}^{\ell_1}$ and
$\{e^{(2)}_{j_2}\}_{j_2=1}^{\ell_2}$ for the Hermitian parts of the
two image algebras. They may be obtained from diagram images by taking
$D+D^\dagger$ and $i(D-D^\dagger)$ and discarding linear
dependencies. The reduced matrices $R_1(\lambda,j_1)$ and
$R_2(\mu,j_2)$ are uniquely defined by
\begin{equation}\label{eq:kcopy_red}
\begin{aligned}
\Gamma_1e^{(1)}_{j_1}\Gamma_1^\dagger
&=\bigoplus_\lambda
R_1(\lambda,j_1)\ox I_{\mathcal V_{1,\lambda}},\\
\Gamma_2e^{(2)}_{j_2}\Gamma_2^\dagger
&=\bigoplus_\mu
R_2(\mu,j_2)\ox I_{\mathcal V_{2,\mu}}.
\end{aligned}
\end{equation}
Because the intertwiners are unitary and the bases are Hermitian, every
$R_i$ is Hermitian and the block maps preserve adjoints and
eigenvalue signs.


The isotypic blocks now reduce the semidefinite constraints.
Let $Q$ be the permutation from the physical ordering to the grouped
ordering,
\[
Q:\ [S,(A_1,B_1),\ldots,(A_k,B_k),S']
\longrightarrow [S,A_1,\ldots,A_k,B_1,\ldots,B_k,S'].
\]
Because a $G$-invariant optimum exists, the variables admit the
expansion
\begin{equation}\label{eq:kcopy_expansion}
J_\pm = \sum_{j_1=1}^{\ell_1}\sum_{j_2=1}^{\ell_2}
  b_\pm(j_1,j_2)\;
  Q^\dagger\bigl(e^{(1)}_{j_1}\ox e^{(2)}_{j_2}\bigr)Q,
\end{equation}
where $b_\pm(j_1,j_2)\in\RR$ because the bases are Hermitian. Thus
$Q^\dagger(e^{(1)}_{j_1}\ox e^{(2)}_{j_2})Q$ is written in
physical coordinates. The number of real coefficients for each sign
is $n_{\rm var}:=\ell_1\ell_2$, instead of $D^2$. Equation
\eqref{eq:kcopy_expansion} uses the full $G$-commutant. Diagonal
$S_k$-averaging remains available, but no additional $S_k$ reduction
is included in the variable and block counts below.

The positivity, partial-trace, and programming constraints
of~\eqref{eq:kcopy_SDP_unreduced} now translate into reduced conditions
(C1$'$)--(C3$'$) on these coefficients.

Consider positivity first. By Schur's lemma for
$\Op{SU}(d)_U\times\Op{SU}(d)_V$, applying the unitary
$\Gamma_1\ox\Gamma_2$ after the permutation $Q$ and canonically regrouping the
multiplicity factors gives
$J_\pm\cong\bigoplus_{\lambda,\mu}J_{\pm;\lambda\mu}
\ox I_{v_{1,\lambda}\,v_{2,\mu}}$, where
\begin{equation}\label{eq:kcopy_Jblock}
J_{\pm;\lambda\mu}
= \sum_{j_1,j_2} b_\pm(j_1,j_2)\;
  R_1(\lambda,j_1)\ox R_2(\mu,j_2)
\end{equation}
is a
$(m_{1,\lambda}m_{2,\mu})\times
(m_{1,\lambda}m_{2,\mu})$ matrix.
Since all changes of frame are unitary, positivity of $J_\pm$ is
equivalent to positivity of this direct sum. The identity factor
$I_{v_{1,\lambda}\,v_{2,\mu}}$ does not affect the eigenvalue signs,
so the semidefinite constraint $J_\pm\ge 0$ reduces to
\begin{equation}\label{eq:kcopy_PSD}
\text{(C1$'$)}\qquad
\forall\,(\lambda,\mu):\quad
J_{\pm;\lambda\mu}\;\ge\;0.
\end{equation}
This replaces a single $D\times D$ semidefinite constraint with
$n_1 n_2$ independent constraints for each sign, of size at most
$(\max_\lambda m_{1,\lambda})(\max_\mu m_{2,\mu})$, where
$n_1$ and $n_2$ count the irreps in sectors~1 and~2.

For the partial-trace constraints, $\tr_{S'}$ acts only
on the $\cK_V^*$ factor of
$\cH_V$. In the grouped arrangement,
$\tr_{S'}[J_\pm]=\sum_{j_1,j_2}b_\pm(j_1,j_2)\,
e^{(1)}_{j_1}\ox T_2(j_2)$,
where
$T_2(j_2):=\tr_{\cK_V^*}[e^{(2)}_{j_2}]
\in\cB(\cK_V^{\ox k})$.
Since $T_2(j_2)$ commutes with $\Op{SU}(d)_V$ on
$\cK_V^{\ox k}$,
standard Schur--Weyl duality identifies it with the represented image
of $\CC[S_k]$ and
yields the decomposition
\begin{equation}
\cK_V^{\ox k}=\bigoplus_\nu\mathcal M'_\nu\ox\mathcal V'_\nu,
\end{equation}
with $\dim\mathcal M'_\nu=m'_\nu$. Let
$R'_2(\nu,j_2)\in M_{m'_\nu}$ denote the action of $T_2(j_2)$ on
this multiplicity block. Projecting
$\tr_{S'}[J_\pm]=p_\pm I$ onto each $(\lambda,\nu)$ block gives
\begin{equation}\label{eq:kcopy_TP}
\text{(C2$'$)}\qquad
\forall\,(\lambda,\nu):\quad
\sum_{j_1,j_2} b_\pm(j_1,j_2)\;
R_1(\lambda,j_1)\ox R'_2(\nu,j_2)
= p_\pm\, I_{m_{1,\lambda}m'_\nu}.
\end{equation}

The exact-programming constraint requires more care. Define the
programming coefficient
\begin{equation}\label{eq:kcopy_Pi}
\Pi(j_1,j_2,\cE) :=
\tr_{P^{\ox k}}\!\bigl[
  Q^\dagger(e^{(1)}_{j_1}\ox e^{(2)}_{j_2})Q\cdot
  (I_S\ox (J_\cE^T)^{\ox k}\ox I_{S'})
\bigr]
\;\in\; M_{d^2},
\end{equation}
where $\tr_{P^{\ox k}}$ traces over the $2k$ program subsystems.
Then~\eqref{eq:kconstraint} becomes
\begin{equation}\label{eq:kcopy_prog}
\text{(C3$'$)}\qquad
\forall\,\cE\in\cS:\quad
\sum_{j_1,j_2}
\bigl(b_+(j_1,j_2)-b_-(j_1,j_2)\bigr)\,
\Pi(j_1,j_2,\cE)
\;=\; d^k\,J_\cE.
\end{equation}
Together with (C2$'$), condition (C3$'$) automatically implies
$p_+-p_-=1$, by the trace argument following
Eq.~\eqref{eq:kcopy_SDP_unreduced}. No additional scalar normalization
constraint is therefore required in the reduced program.
Each $\Pi(j_1,j_2,\cE)$ is a $d^2\times d^2$ matrix whose size is
independent of $D$. The universal constraint is finite-dimensional.
Indeed, define
\[
\mathcal V_{d,k}
:=\operatorname{span}\left\{
(J_\cE^T)^{\ox k}:\cE\in\Op{CPTP}_d
\right\}.
\]
For $X=(J_\cE^T)^{\ox k}$, trace preservation gives
$d[\tr_{2,\ldots,k}X]^T=d^kJ_\cE$, where
$\tr_{2,\ldots,k}$ removes the complete Choi factors numbered
$2,\ldots,k$ and is understood as the identity when $k=1$. The
right-hand side of
Eq.~\eqref{eq:kcopy_prog} is therefore a linear function of the same
tensor power that determines its left-hand side. It is sufficient to
enforce the equality on channels whose Choi powers form a basis of
$\mathcal V_{d,k}$. This observation establishes the existence of a
finite exact formulation. The numerical implementation below instead
imposes the programming equations on finite channel ensembles and uses
the resulting values only to explore finite-copy behavior; no numerical
result is used in the analytic optimality proofs.

The sector-level structure of $\Pi$ can be exposed by writing
$J_\cE^T=\sum_r\sigma_r A_r\ox B_r$ in an operator Schmidt
decomposition across $A/B$, which aligns with the
$\cH_U$/$\cH_V$ grouping. This gives
\begin{equation}\label{eq:kcopy_Pfactor}
\Pi(j_1,j_2,\cE)
= \sum_{r_1,\ldots,r_k}\prod_{i=1}^k\sigma_{r_i}\;
  F_1(j_1;\vec r)\ox F_2(j_2;\vec r),
\end{equation}
with $d\times d$ sector traces
$F_1(j_1;\vec r)=\tr_{(\cK_U^*)^{\ox k}}\bigl[e^{(1)}_{j_1}
(I_{\cK_U}\ox A_{r_1}\ox\cdots\ox A_{r_k})\bigr]$ and
$F_2(j_2;\vec r)=\tr_{\cK_V^{\ox k}}\bigl[e^{(2)}_{j_2}
(B_{r_1}\ox\cdots\ox B_{r_k}\ox I_{\cK_V^*})\bigr]$.
These contractions admit a diagrammatic evaluation from the walled
Brauer basis, providing a route that avoids forming
$d^{k+1}\times d^{k+1}$ matrices.

Collecting (C1$'$)--(C3$'$), the block diagonalization yields the
following reduced program.

\begin{proposition}[Reduced SDP for $\nu_k(\Op{CPTP}_d)$]%
\label{prop:kcopy_reduced_SDP}
The $k$-copy programming overhead for all quantum channels,
$\cS=\Op{CPTP}_d$, is
\begin{equation}\label{eq:kcopy_SDP}
\nu_k(\Op{CPTP}_d) = \min_{\substack{
  b_\pm(j_1,j_2)\in\RR\\
  p_\pm\ge 0}}\; p_+ + p_-
\end{equation}
subject to the semidefinite
constraints~\eqref{eq:kcopy_PSD}, the linear
equalities~\eqref{eq:kcopy_TP}, and the programming
conditions~\eqref{eq:kcopy_prog} imposed on any Choi-power basis of
$\mathcal V_{d,k}$. With such a basis, these are finitely many
semidefinite and linear constraints.
\end{proposition}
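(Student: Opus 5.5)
The plan is to show that the reduced program in Proposition~\ref{prop:kcopy_reduced_SDP} has exactly the same optimal value as the unreduced SDP~\eqref{eq:kcopy_SDP_unreduced} with $\cS=\Op{CPTP}_d$. The argument has two directions. First, every feasible point of the unreduced problem yields a reduced feasible point with the same objective. Second, every reduced feasible point lifts back to an unreduced feasible point. Throughout, I will use that the minimum of~\eqref{eq:kcopy_SDP_unreduced} is attained (Appendix~\ref{appendix:notation}).

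\emph{Step 1: from the unreduced to the reduced problem.} Take an optimal pair $J_\pm$. Twirl it with $\varrho_k(g)$ over $G=\Op{SU}(d)\times\Op{SU}(d)$, using the $k$-copy version of Theorem~\ref{thm:G_symmetrisation} recorded after~\eqref{eq:kcopy_commutant_def}. Lemma~\ref{lem:unitary_invariant_in_cptp} supplies the covariance of $\Op{CPTP}_d$, and self-conjugacy holds because $\Op{SU}(d)$ is closed under transposition. The twirl preserves $p_\pm$, positivity and feasibility, and places $\bar J_\pm$ in $\mathfrak C^{(k)}_G$. By Theorem~\ref{thm:kcopy_cptp_comm}, $Q\bar J_\pm Q^\dagger$ lies in $\mathcal A^{(d)}_{1,k}\ox\mathcal A^{(d)}_{k,1}$. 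Since $\bar J_\pm$ is Hermitian and the Hermitian part of a tensor product of $*$-algebras is spanned by tensor products of Hermitian basis elements, the expansion~\eqref{eq:kcopy_expansion} holds with real coefficients $b_\pm$.

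\emph{Step 2: translating the constraints.} Conjugation by the unitary $(\Gamma_1\ox\Gamma_2)Q$ together with~\eqref{eq:kcopy_red} gives the block form $\bigoplus_{\lambda,\mu}J_{\pm;\lambda\mu}\ox I$. Hence $J_\pm\ge0$ is equivalent to (C1$'$). For (C2$'$), $\tr_{S'}$ acts only on $\cK_V^*$, and $T_2(j_2)$ commutes with $V^{\ox k}$. Schur--Weyl duality on $\cK_V^{\ox k}$ then block-diagonalizes $\tr_{S'}J_\pm$. The identity $\tr_{S'}J_\pm=p_\pm I$ is equivalent to the blockwise equations~\eqref{eq:kcopy_TP}, because $I$ is block-scalar and both frames are unitary.

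For (C3$'$), the constraint~\eqref{eq:kconstraint} is linear in $(J_\cE^T)^{\ox k}$. Its right-hand side $d^kJ_\cE=d[\tr_{2,\ldots,k}(J_\cE^T)^{\ox k}]^T$ is also a linear function of $(J_\cE^T)^{\ox k}$. Both sides therefore extend to a linear identity on $\mathcal V_{d,k}$. If $X=\sum_ic_i(J_{\cE_i}^T)^{\ox k}$ with the $\cE_i$ forming a basis, validity on the basis implies validity for every $\cE$. Conversely, the finitely many basis conditions are among the original conditions. Finiteness follows because $\mathcal V_{d,k}$ sits inside a space of dimension $d^{4k}$, and the set of blocks $(\lambda,\mu)$ and $(\lambda,\nu)$ is finite.

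\emph{Step 3: the reverse direction and the main obstacle.} Given reduced-feasible $(b_\pm,p_\pm)$, define $J_\pm$ by~\eqref{eq:kcopy_expansion}. The equivalences of Step 2 run both ways, so $(J_\pm,p_\pm)$ is feasible for~\eqref{eq:kcopy_SDP_unreduced} with the same objective. The reduced minimum therefore equals $\nu_k$, and it is attained at the twirled optimum. The trace argument following~\eqref{eq:kcopy_SDP_unreduced} gives $p_+-p_-=1$ automatically.

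I expect the delicate step to be (C2$'$). The partial trace mixes the $\cH_V$ sector structure $\mathcal V_{2,\mu}$ with the $\cK_V^{\ox k}$ structure $\mathcal V'_\nu$. One must check that the blocks $R'_2(\nu,j_2)$ are well defined, that is, that $T_2(j_2)$ really lies in the image of $\CC[S_k]$. It does: $T_2(j_2)$ commutes with $V^{\ox k}$ because $\tr_{\cK_V^*}$ is covariant, as $V^*$ acts unitarily on the traced factor.
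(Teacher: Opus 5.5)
Your proposal is correct and follows the paper's proof essentially step for step. You twirl both positive parts into the $\Op{SU}(d)\times\Op{SU}(d)$ commutant and expand in Hermitian product bases with real coefficients. You then translate positivity, the marginal conditions, and the programming conditions blockwise, using the linearity argument on $\mathcal V_{d,k}$, and you reconstruct $J_\pm$ from reduced-feasible coefficients for the converse.
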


\begin{proof}
Starting from a feasible point of
Eq.~\eqref{eq:kcopy_SDP_unreduced}, the sign-wise group average above
places both $J_+$ and $J_-$ in the all-channel commutant without
changing the objective. Expansion in the Hermitian bases then gives
Eq.~\eqref{eq:kcopy_expansion}. The unitary isotypic decomposition
makes positivity equivalent to (C1$'$), while taking the partial trace
and evaluating the programming contraction give (C2$'$) and (C3$'$),
respectively. Enforcing (C3$'$) on a basis of $\mathcal V_{d,k}$ is
equivalent to enforcing it for every channel by the linearity argument
preceding the proposition.
Conversely, coefficients satisfying (C1$'$)--(C3$'$) reconstruct
$J_\pm$ through Eq.~\eqref{eq:kcopy_expansion}. Condition (C1$'$)
gives $J_\pm\geq0$, condition (C2$'$) gives the scaled-channel
marginals, and the chosen Choi-power basis extends (C3$'$) to every
channel. The reconstructed pair is therefore feasible in
Eq.~\eqref{eq:kcopy_SDP_unreduced} with the same value $p_++p_-$,
which proves equality of the two programs.
\end{proof}

The coefficient count per sign,
$n_{\rm var}=\ell_1\ell_2$, and the maximum block size
$\max_{\lambda,\mu}(m_{1,\lambda}m_{2,\mu})$ depend on the
isotypic decompositions of the two mixed-tensor sectors.
Here $\ell_i=\sum_\lambda m_{i,\lambda}^2$, with
$\ell_i=(k+1)!$ when $d\geq k+1$. For smaller $d$, relations in the
mixed-tensor representation can reduce $\ell_i$.

As a concrete illustration, consider $d=5$, $k=5$.
The unreduced SDP has $D=5^{12}=244140625$. In each sector, the natural
representation of the 720-dimensional abstract algebra has a
one-dimensional kernel, so $\ell_1=\ell_2=719$. Each sector has
11 isotypic components and maximum multiplicity $15$. Consequently,
$n_{\rm var}=719^2=516961$ coefficients and 121 positivity blocks of
size at most $225\times225$ occur for each of the positive and negative
variables. Together, they contain 242 positivity blocks in total. These counts concern the
$G$-commutant reduction. They do not include a further diagonal
$S_5$ fixed-point reduction and follow directly from the sector
multiplicities.

\subsection{Finite-copy numerical implementation}

Table~\ref{tab:kcopy_cptp_numerics} reports two complementary finite-copy
calculations. The $k=1$ entries are analytic exact values. The remaining
SDP entries are numerical outputs of the implemented walled-Brauer-reduced
programs with finite channel ensembles; they demonstrate how the reduced
optimization can be evaluated and indicate its finite-copy behavior. They
are presented as numerical results of the implementation, rather than as
independent proofs of exact finite-$k$ optimality, and play no role in
Theorems~\ref{thm:cptp} and~\ref{thm:kcopy_main}. For $k\geq2$, the PBT
entries are independently derived rigorous achievable upper bounds; at
$k=1$, the PBT-inversion construction is unavailable.

\begin{figure}[t]
    \centering
    \includegraphics[width=0.5\linewidth]{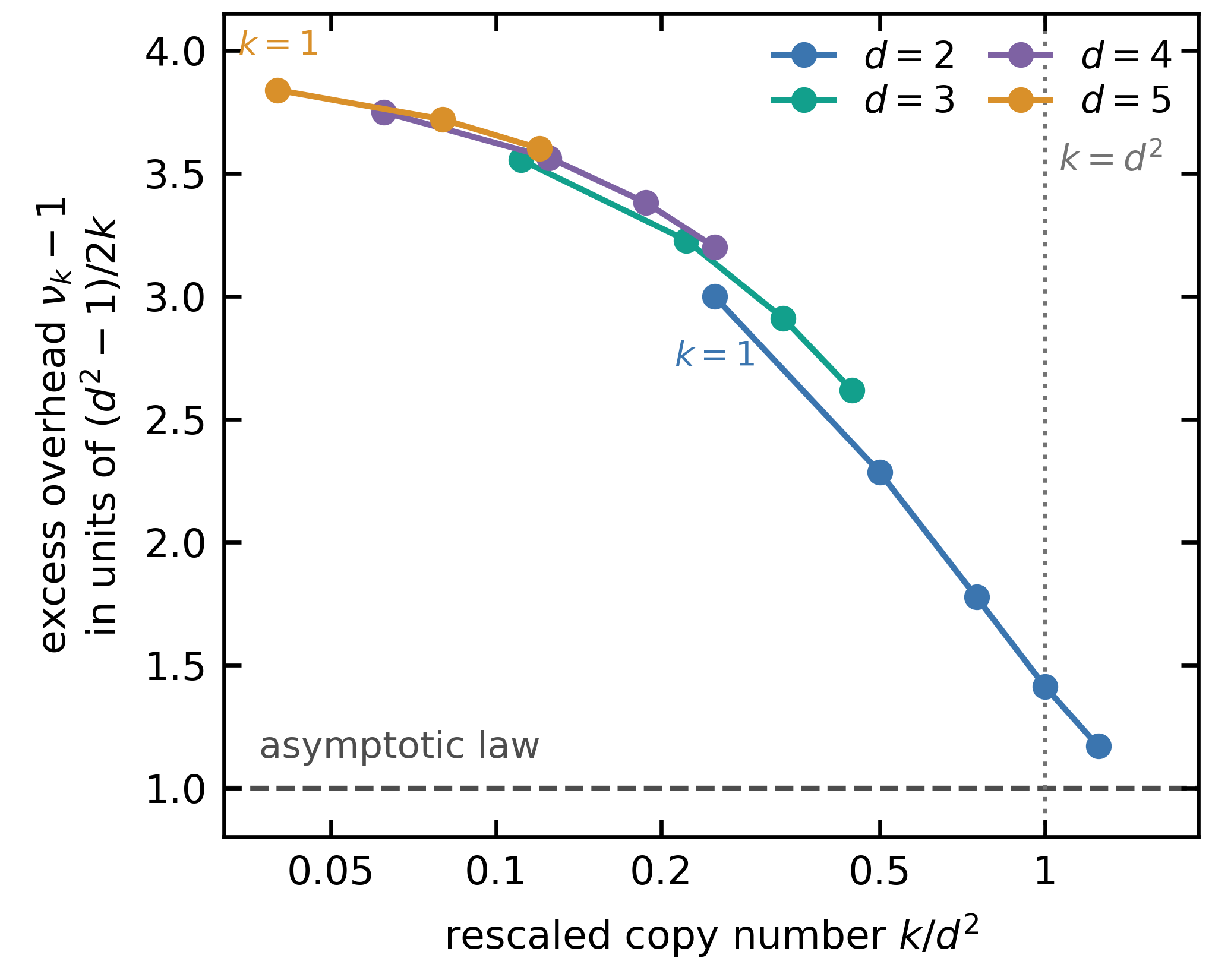}
    \caption{Numerical exploration of finite-copy behavior using the
    walled-Brauer reduction. The vertical coordinate is
    $R_{k,d}=2k(v_{k,d}-1)/(d^2-1)$, where $v_{k,d}$ denotes the
    displayed analytic or numerical SDP value, and the horizontal
    coordinate is $k/d^2$ on a logarithmic scale. The dashed line at
    $R_{k,d}=1$ shows the fixed-dimension asymptotic law of
    Theorem~\ref{thm:kcopy_main}; it is not a finite-$k$ fit. The dotted
    line marks $k=d^2$, and the solid segments only guide the eye. The
    $k=1$ markers are analytic exact values, while all $k\geq2$ markers are
    outputs of the finite-ensemble reduced-SDP implementation. The
    limited data do not establish a dimension-independent collapse or a
    finite-copy crossover. The figure is illustrative and is not used in
    either universal optimality proof. Table~\ref{tab:kcopy_cptp_numerics}
    lists the numerical values together with the rigorous PBT-achievable
    upper bounds available for $k\geq2$.}
    \label{fig:kcopy_decay}
\end{figure}

The archived computations used MATLAB R2024b, CVX 2.2, MOSEK 9.1.9,
and QETLAB 0.9. The standard sampled runs used seed zero and 500 channel
samples. The entries at $(d,k)=(4,4)$ and $(5,3)$ used 128 and 256
samples, respectively. Default solver tolerances were retained, while
the numerical row-rank threshold was
$\max(\operatorname{size}A)\epsilon_{\rm mach}
\|\operatorname{diag}R\|_\infty$ for each constraint matrix $A$ with QR
factor $R$. Increasing the sample count from 500
to 800 at $(d,k)=(3,2)$ with an independent seed changed the numerical
value
by $3\times10^{-9}$, which indicates numerical stability under this
sample increase. At $(d,k)=(3,4)$, the primary YALMIP--MOSEK value is
$3.619643423467$, while a CVX cross-check gives $3.619723965260$.
The table reports four decimals from the primary run. Fresh-channel
programming residuals for the least resolved entries are of order
$10^{-7}$. These diagnostics support the displayed numerical precision.

The upper entries follow from Proposition~\ref{prop:kcopy_upper} without
asymptotic expansion. The identity
$\|\cD_t\|_\diamond=1+2(1-d^{-2})(t-1)$ of
Eq.~\eqref{eq:inverse_depol_norm} is exact, so it suffices to evaluate
$\eta_{k,d}=(d^2F^{\rm std}_d(k)-1)/(d^2-1)$ at finite $k$. In the
Schur--Weyl analysis of the standard protocol, this entanglement fidelity
is a finite sum over Young
diagrams~\cite{ishizaka2008,Christandl2021asymptotic},
\begin{equation}\label{eq:pbt_exact_fidelity}
F^{\rm std}_d(k)
=\frac{1}{d^{k+2}}
\sum_{\alpha\vdash k-1}
\Biggl(\;\sum_{\mu=\alpha+\square}\sqrt{m_\mu\,d_\mu}\;\Biggr)^{2},
\end{equation}
where both partitions are restricted to at most $d$ rows, $\mu$ runs over
the diagrams obtained from $\alpha$ by adding one box, $d_\mu$ is the
dimension of the $S_k$ irreducible representation labeled by $\mu$, and
$m_\mu$ is the dimension of the corresponding $\Op{U}(d)$ irreducible
representation. Two checks fix this evaluation. At $k=1$ it returns
$F^{\rm std}_d(1)=1/d^2$ for every $d$, the entanglement fidelity of the
completely depolarizing channel. Hence $\eta_{1,d}=0$ and
$\cD_{\eta_{1,d}}^{-1}$ does not exist, so the PBT-inversion construction
provides no finite bound at $k=1$. This is a limitation of that
construction, not a statement that the one-copy optimum is infinite:
Theorem~\ref{thm:cptp} instead gives
$\nu_1(\Op{CPTP}_d)=2d^2-3+2/d^2$. Thus the one-copy optimum is not a
limiting case of the many-copy construction. Finally,
$4k\,[1-F^{\rm std}_d(k)]$ tends to $d^2-1$ in agreement with
Eq.~\eqref{eq:std_pbt_fidelity}, reaching $2.9972$ at $d=2$, $k=800$ and
$15.016$ at $d=4$, $k=100$. From $k=2$ onward the resulting values are
rigorous achievable upper bounds.

\begin{table}[htbp]
\centering
\caption{\label{tab:kcopy_cptp_numerics}
Finite-copy numerical SDP results and PBT-achievable upper bounds. The
symbols A and N denote an analytic exact value and a numerical output of
the finite-ensemble reduced-SDP implementation, respectively. The
numerical SDP values illustrate finite-copy behavior and are not used to
establish either universal theorem. PBT-UB gives
$\|\cD_{\eta_{k,d}}^{-1}\|_\diamond$ evaluated from the exact finite-$k$
standard-PBT fidelity and, for $k\geq2$, is a rigorous achievable upper
bound. At $k=1$, ``N/A'' records $\eta_{1,d}=0$ and the consequent
failure of the PBT-inversion construction; it does not indicate an
infinite one-copy optimum. Values are rounded. The symbol ``--'' denotes
an unreported entry.}
\smallskip
\setlength{\tabcolsep}{4pt}
\begin{tabular}{cl|ccccc}
\toprule
 & & $k=1$ & $k=2$ & $k=3$ & $k=4$ & $k=5$ \\
\midrule
\multirow{2}{*}{$d=2$}
  & SDP & $5.5000^{\rm A}$ & $2.7133^{\rm N}$ & $1.8883^{\rm N}$ & $1.5294^{\rm N}$ & $1.3509^{\rm N}$ \\
  & PBT-UB & N/A & 4.6962  & 2.5000  & 1.8300 & 1.5312 \\
\midrule
\multirow{2}{*}{$d=3$}
  & SDP & $15.2222^{\rm A}$ & $7.4564^{\rm N}$ & $4.8822^{\rm N}$ & $3.6196^{\rm N}$ & -- \\
  & PBT-UB & N/A & 14.3072 & 7.0115  & 4.6187 & -- \\
\midrule
\multirow{2}{*}{$d=4$}
  & SDP & $29.1250^{\rm A}$ & $14.3662^{\rm N}$ & $9.4538^{\rm N}$ & $7.0039^{\rm N}$ & -- \\
  & PBT-UB & N/A & 28.1724 & 13.8952 & 9.1445 & -- \\
\midrule
\multirow{2}{*}{$d=5$}
  & SDP & $47.0800^{\rm A}$ & $23.3244^{\rm N}$ & $15.4104^{\rm N}$ & -- & -- \\
  & PBT-UB & N/A & 46.1102 & 22.8425 & -- & -- \\
\bottomrule
\end{tabular}
\end{table}

\end{document}